\definecolor{shadecolor}{rgb}{0.9,0.9,0.9}
\newtheorem{theorem}{Theorem}[section]
\newtheorem{proposition}[theorem]{Proposition}
\newtheorem{corollary}[theorem]{Corollary}
\newtheorem{lemma}[theorem]{Lemma}
\newtheorem{definition}{Definition}[section]
\newtheorem{observation}{Observation}[section]
\newcommand{\qedsymb}{\hfill{\rule{2mm}{2mm}}}
\newenvironment{proof}{\begin{trivlist}
\item[\hspace{\labelsep}{\bf\noindent Proof: }]
}{\qedsymb\end{trivlist}}
\newcommand{\remove}[1]{}
\begin{document}

\title{The Complexity of Equilibria for Risk-Modeling Valuations\thanks{Partially supported
                                                 by the German Research Foundation (DFG)
                                                 within the Collaborative Research Centre
                                                 ``On-the-Fly-Computing'' (SFB 901),
                                                 and by funds for
                                                 the promotion of research
                                                 at University of Cyprus.
                                                }
      }

\author{{\sl Marios Mavronicolas}\thanks{Department of Computer Science,
                                                                   University of Cyprus,
                                                                   Nicosia CY-1678,
                                                                   Cyprus.
                                                                   Email
                                                                   {\tt mavronic@ucy.ac.cy}
                                                                  }
              \and
             {\sl Burkhard Monien}\thanks{Faculty of Electrical Engineering,
                                                             Computer Science
                                                             and Mathematics,
                                                             University of Paderborn,
                                                             33102 Paderborn,
                                                             Germany.
                                                             Email {\tt bm@upb.de}
                                                            }
             }

\date{{\sc (\today)}}

\maketitle

\pagenumbering{arabic}

\begin{abstract}
We study the complexity of deciding
the existence of {\it mixed equilibria}
for minimization games where players
use {\it valuations} other than expectation
to evaluate their {\it costs}.
We consider
{\it risk-averse} players
seeking to minimize
the sum
${\mathsf{V}}
 =
 {\mathsf{E}} +
 {\mathsf{R}}$
of {\it expectation} ${\mathsf{E}}$
and a
{\it risk valuation}
${\mathsf{R}}$
of their costs;
${\mathsf{R}}$ is non-negative
and vanishes exactly when
the cost incurred to a player
is constant over all
choices of strategies
by the other players.
In a {\it ${\mathsf{V}}$-equilibrium,}
no player could
unilaterally reduce her cost.

Say that
${\mathsf{V}}$ has
the {\it Weak-Equilibrium-for-Expectation} property
if all strategies supported
in a player's {\it best-response} mixed strategy
incur the same conditional expectation
of her cost.
We introduce {\it ${\mathsf{E}}$-strict concavity}
and observe that
every ${\mathsf{E}}$-strictly concave valuation
has the
{\it Weak-Equilibrium-for-Expectation} property.
We focus on a broad class of valuations
shown to have
the {\it Weak-Equilibrium-for-Expectation} property,
which we exploit to prove
two main complexity results,
the first of their kind,
for the two simplest cases of the problem:
\begin{itemize}

\item
\underline{Two strategies:}
Deciding the existence of a ${\mathsf{V}}$-equilibrium
is strongly ${\mathcal{NP}}$-hard
for the restricted class of
{\it player-specific scheduling games
      on two ordered links}~\cite{MM12},
when choosing ${\mathsf{R}}$ as
{\sf (1)}
${\mathsf{Var}}$ ({\it variance}),
or
{\sf (2)}
${\mathsf{SD}}$ ({\it standard deviation}),
or
{\sf (3)}
a concave linear sum
of even {\it moments} of small order.

\item
\underline{Two players:}
Deciding the existence of a ${\mathsf{V}}$-equilibrium
is strongly ${\mathcal{NP}}$-hard when
choosing ${\mathsf{R}}$ as
{\sf (1)}
$\gamma \cdot {\mathsf{Var}}$,
or
{\sf (2)}
$\gamma \cdot {\mathsf{SD}}$,
where ${\mathsf{\gamma}} > 0$
is the {\it risk-coefficient,}
or choosing ${\mathsf{V}}$
as {\sf (3)}
a convex combination
of ${\mathsf{E}} + {\mathsf{\gamma}} \cdot {\mathsf{Var}}$
and
the concave {\it ${\mathsf{\nu}}$-valuation}
${\mathsf{\nu}}^{-1}({\mathsf{E}}({\mathsf{\nu}}(\cdot)))$,
where ${\mathsf{\nu}}(x) = x^{r}$,
with $r \geq 2$.
This is a concrete consequence
of a general strong
${\mathcal{NP}}$-hardness result
that only needs the
{\it Weak-Equilibrium-for-Expectation} property
and a few additional properties
for ${\mathsf{V}}$;
its proof involves a reduction
with a single parameter,
which can be chosen efficiently
so that each 
valuation
satisfies the additional properties.

\end{itemize}
\end{abstract}


\section{Introduction}
\label{introduction}

\subsection{The Pros and Cons of Expectation}

In a {\it game,}
each {\it player} is using a {\it mixed strategy,}
a probability distribution
over her {\it strategies};
her {\it cost}
depends on the choices
of all players
and is evaluated by
a {\it valuation}:
a function from probability distributions
to reals.
The most prominent valuation
in {\it Non-Cooperative Game Theory}
is {\it expectation};
each player
minimizes
her expected cost.

A drawback of expectation
is that it may not accomodate {\it risk}
and its impact on strategic decision;
this inadequacy of expectation
was addressed as early as 1738
by Bernoulli~\cite{B1738}.
Indeed,
{\it risk-averse} players~\cite{A71}
are willing to accept a larger amount of payment
rather than gambling
and taking the risk
of a larger cost;
according to~\cite{FS48},
``a risk-averse player is willing
to pay something for certainty''.
So,
valuations other than expectation
have been sought
(cf.~\cite{A71,EGBG07,M52,S63}).
{\it Concave} valuations,
such as {\it variance}
and {\it standard deviation,}
are well-suited
to model risk-averse minimizing players.
Already in 1906,
Fisher~\cite{F06}
proposed attaching
standard deviation
to expectation
as an additive measure of risk.

In his seminal paper~\cite{M52},
Markowitz introduced the {\it Mean-Variance} approach
to portfolio maximization,
advocating the minimization of variance
constrained on some lower bound
on the expected return.
This way,
instead of a single optimal solution,
a class of ``efficient'' solutions,
termed as the {\it Efficient frontier}~\cite{EGBG07},
is defined,
incurring the lowest risk
for a given level of expected return.
Popular valuations
for determining a single maximizing solution
from the {\it Efficient frontier}
are 
{\it (i)}
${\mathsf{E}} - 
     {\mathsf{\gamma}}
     \cdot
     {\mathsf{Var}}$,
where ${\mathsf{E}}$
and ${\mathsf{Var}}$
stand for {\it expectation}
and {\it variance,}
respectively,
and ${\mathsf{\gamma}} > 0$
describes the risk tolerance
(see~\cite{EGBG07}),
and {\it (ii)}
the {\it Sharpe Ratio}
${\mathsf{SR}}
 =
 {\mathsf{E}}/{\mathsf{SD}}$~\cite{S63},
where ${\mathsf{SD}}$
stands for {\it standard deviation}.
The {\it Mean-Variance} paradigm~\cite{M52}
created Modern Portfolio Theory~\cite{EGBG07}
as a new field
and initiated a tremendous amount of research ---
see the surveys~\cite{KZU11,S01}
for an overview.
However,
in the {\it Mean-Variance} paradigm~\cite{M52},
only expectation and variance were used
for evaluating the return;
this choice is justified only if
the return is normally distributed~\cite{HL69}.
Subsequently
this inadequacy led
to risk models
involving higher {\it moments}
so as to accomodate
returns with a more general distribution~\cite{KL76}.

We now switch back
to the minimization setting.
A significant advantage
of expectation is that
it guarantees the existence
of a {\it Nash equilibrium}~\cite{N50,N51},
where each player is playing
a {\it best-response} mixed strategy
and could not unilaterally reduce
her expected cost.
Existence of equilibria
(for minimization games)
extends to
{\it convex} valuations~\cite{D52,F52},
but may fail
for non-convex
and even for {\it concave} ones.
{\it Crawford's game}~\cite[Section 4]{C90}
was the {\em first} counterexample game
with no equilibrium
for a certain valuation;
for more 
counterexamples,
see~\cite{DSS91,MM12}.
The view that {\it mixed} equilibria
get ``endangered''
in games where players are not expectation-optimizers
has been put forward in~\cite{C90}
and adopted further
in~\cite{FP10,MM12}.

Fiat and Papadimitriou~\cite{FP10}
introduced the equilibrium computation problem
in games where risk-averse players
use valuations
more general
than expectation,
and addressed the complexity
of deciding the existence
of such equilibria.\footnote{The work in~\cite{FP10}
                             considered the dual setting
                             where risk-averse players {\it maximize}
                             non-concave valuations;
                             they focused on the convex valuation
                             {\it expectation minus variance}.}
Subsequently,
Mavronicolas and Monien~\cite{MM12}
focused on the concave valuation
{\it expectation plus variance,}
for which
they established structural 
and complexity results
for their introduced class of
{\it player-specific scheduling games}~\cite[Section 3]{MM12};
their results 
provided a solid basis
for the study of more general concave valuations.

\subsection{Valuations More General than Expectation}
\label{sub framework}

In this work,
we shall consider minimization games.
We model
the valuation of each player
as the sum
${\mathsf{V}}
 =
 {\mathsf{E}}
 +
 {\mathsf{R}}$,
where ${\mathsf{E}}$
and ${\mathsf{R}}$ are
the {\it expectation} and {\it risk valuation,}
respectively.
The formulation of
{\it $({\mathsf{E}}
      +
      {\mathsf{R}})$-valuations}
draws motivation from
the {\it Mean-Variance} paradigm~\cite{M52},
and from
the {\it Variance Principle}
and
the {\it Standard Deviation Principle,}
two standard {\it premium principles}
in Actuarial Risk Theory
(cf.~\cite[Section 5.3]{KGDD08}),
by which
${\mathsf{V}}
 =
 {\mathsf{E}}
 +
 {\mathsf{\gamma}}\,
 \cdot
 {\mathsf{Var}}$
(resp.,
${\mathsf{V}}
 =
 {\mathsf{E}}
 +
 {\mathsf{\gamma}}\,
 \cdot
 {\mathsf{SD}}$),
where
${\mathsf{\gamma}} > 0$
is the {\it risk-coefficient}.
We focus on
the associated decision problem,
denoted as
{\sf $\exists {\mathsf{V}}$-EQUILIBRIUM},
asking, given a game ${\mathsf{G}}$,
whether ${\mathsf{G}}$ has
a {\it ${\mathsf{V}}$-equilibrium,}
where no player could unilaterally reduce her cost
(as evaluated by ${\mathsf{V}}$).
{\it What is the impact
     of properties of ${\mathsf{V}}$
     on the complexity of
     {\sf $\exists {\mathsf{V}}$-EQUILIBRIUM}?}\footnote{Strictly speaking, 
                                                   some of the considered properties of ${\mathsf{V}}$ 
                                                   are rather properties of the equilibria of some game ${\mathsf{G}}$
                                                   whose players minimize ${\mathsf{V}}$.
                                                   For ease of presentation,
                                                   we shall omit reference to ${\mathsf{G}}$
                                                   since ${\mathsf{G}}$ will be either fixed
                                                   or clear from context in the settings we shall consider,
                                                   so that the property depends only on ${\mathsf{V}}$.}

We stipulate a very basic
property for ${\mathsf{R}}$,
called {\it Risk-Positivity}:
the value of ${\mathsf{R}}$
is non-negative,
en par with
the {\it Non-negative Loading} property
of the premium principles
in~\cite[Section 5.3.1]{KGDD08};
it is $0$,
yielding no risk,
exactly when
the cost incurred to a player
is constant
over all choices of strategies
by the other players.

We shall focus on concave valuations.
A key property of a concave valuation,
called {\it Optimal-Value},
we prove and exploit
is that
it maintains the same optimal value
over all convex combinations
of strategies supported
in a given best-response mixed strategy
(Proposition~\ref{constant value}).
Unfortunately,
unlike ${\mathsf{Var}}$,
moments
of order higher than $2$
are not concave.
But on the positive side,
all even moments
have the {\it Risk-Positivity} property.
Besides,
recent work in Portfolio Theory~\cite{KL76}
motivates using higher moments
to model risk.

To obtain an enhanced class 
of interesting concave valuations,
we introduce into the context
of equilibrium computation
valuations prominent
for evaluating risk
in 
Actuarial Risk Theory,
which are transferred from the
{\it Mean Value Principle}
(see~\cite[Section 5.3]{KGDD08}).
Specifically,
we shall consider
{\it ${\mathsf{\nu}}$-valuations}
of the form
${\mathsf{V}}^{{\mathsf{\nu}}}
 =
 {\mathsf{\nu}}^{-1}({\mathsf{E}}({\mathsf{\nu}}(\cdot)))$,
for any increasing and
strictly convex function
${\mathsf{\nu}}$,
so that
${\mathsf{\nu}}^{-1}$
is concave.\footnote{For the special case
                                           ${\mathsf{\nu}}(x)
                                            =
                                            e^{x}$,
                                           ${\mathsf{V}}^{{\mathsf{\nu}}}$
                                           corresponds to the {\it moment generating function}
                                           (cf.~\cite[Section 2.4]{KGDD08})
                                           and has gained special attention
                                           as the {\it Exponential Principle} premium
                                           in Actuarial Risk Theory~\cite[Section 5.3]{KGDD08}.}
It is good news that
for 
${\mathsf{\nu}}$ increasing
and strictly convex,
${\mathsf{R}}^{{\mathsf{\nu}}}
 :=
 {\mathsf{V}}^{{\mathsf{\nu}}} - {\mathsf{E}}$
has the {\it Risk-Positivity} property
(Lemma~\ref{paderborn});
so,
${\mathsf{V}}^{{\mathsf{\nu}}}$
is an
$({\mathsf{E}}
  +
  {\mathsf{R}})$-valuation.

\subsection{Weak-Equilibrium-for-Expectation and ${\mathsf{E}}$-Strict Concavity}

As our main tool,
we shall exploit
the {\it Weak-Equilibrium-for-Expectation} property~\cite[Section 2.6]{MM12}:
all strategies supported 
in a player's best-response mixed strategy
induce the same conditional expectation,
taken over all random choices
of the other players,
for her cost;
thus,
the player is holding
a {\em unique} expectation
for her cost
no matter which of her supported strategies
she ends up choosing.
This property
holds vacuously for Nash equilibria~\cite{N50,N51}.
The {\it Weak-Equilibrium-for-Expectation} property
formalizes the most natural intuition
for the players' expectations;
hence,
it is a very natural property
to seek and employ 
in the setting of risk.
We aim at 
an enhanced class of valuations
with the
{\it Weak-Equilibrium-for-Expectation} property.

We introduce an {\it ${\mathsf{E}}$-strictly concave valuation}
as a concave valuation which,
viewed as a function of a single mixed strategy,
fulfills the definition
of a strictly concave function
for any pair of mixed strategies
inducing different expectations
(Definition~\ref{e strict concavity}).
We observe that
a convex combination
of an ${\mathsf{E}}$-strictly concave valuation
and a concave valuation
is ${\mathsf{E}}$-strictly concave
(Corollary~\ref{hilton frappe});
furthermore,
${\mathsf{Var}}$ and ${\mathsf{SD}}$
are ${\mathsf{E}}$-strictly concave
(Lemma~\ref{paderborn 1}).
Hence,
a wide class of concrete instances
of ${\mathsf{E}}$-strictly concave valuations
results by plugging in
the convex combination
{\it (i)}
${\mathsf{E}} + {\mathsf{\gamma}} \cdot {\mathsf{Var}}$
for an ${\mathsf{E}}$-strictly concave valuation,
with ${\mathsf{\gamma}} > 0$,
and
{\it (ii)}
a ${\mathsf{\nu}}$-valuation,
with an increasing and strictly convex function ${\mathsf{\nu}}$,
for a concave valuation
(Corollary~\ref{standart deviation}).
We establish the key fact
that every ${\mathsf{E}}$-strictly concave valuation
has the {\it Weak-Equilibrium-for-Expectation} property
(Proposition~\ref{hilton}).
${\mathsf{E}}$-strictly concave valuations
make the largest subclass of concave valuations
we know of
with the {\it Weak-Equilibrium-for-Expectation} property.

An obstacle to extending
the {\it Weak-Equilibrium-for-Expectation} property
to moments of order higher than $2$
is their non-concavity.
Instead we consider
concave linear sums of even moments.
(Even order is needed
to guarantee the
{\it Risk-Positivity} property.)
We use the {\it Optimal-Value} property
(Proposition~\ref{constant value})
to establish that
such concave valuations
have the
{\it Weak-Equilibrium-for-Expectation} property
(Corollary~\ref{concavity implies weak equilibrium});
this property renders
such concave linear sums of even moments
sufficiently interesting to consider.

\subsection{Complexity Results for More General Valuations}

By exploiting 
the {\it Weak-Equilibrium-for-Expectation} property
for an $({\mathsf{E}} + {\mathsf{R}})$-valuation ${\mathsf{V}}$,
we shall show 
the strong ${\mathcal{NP}}$-hardness of
{\sf $\exists {\mathsf{V}}$-EQUILIBRIUM}
for the two simplest cases:
games with two strategies
and games with two players;
these are the {\em first} complexity results
for deciding the existence of equilibria
in the context of risk-modeling valuations
(cf.~Section~\ref{related work and comparison}).

\subsubsection{Two Strategies}

We discover that
the complexity of
{\sf $\exists {\mathsf{V}}$-EQUILIBRIUM}
is captured by
{\it player-specific scheduling games on two ordered links}~\cite[Section 5.2.2]{MM12},
where the cost of player $i$ on a {\it link} $\ell$
she chooses
is a sum of {\it weights} $\omega (i, i', \ell)$,
each corresponding to player $i$, link $\ell$
and a player $i'$
choosing the same link.\footnote{In the well-known model
                            of {\it weighted congestion games with player-specific latency functions}~\cite{M96},
                            each weighted player may use a different, {\it player-specific} cost function
                            of the total weight
                            on her selected link;
                            in this model,
                            it is the weights that are {\it player-specific,}
                            while each cost function
                            is the identity one.}
{\it Two ordered links}~\cite[Section 5.2.2]{MM12} means that
link $1$ incurs less weight
than link $2$
to a player
due to another player choosing the same link
unless both weights are $0$.
We show that
for an $({\mathsf{E}} + {\mathsf{R}})$-valuation ${\mathsf{V}}$,
{\sf $\exists {\mathsf{V}}$-EQUILIBRIUM}
is strongly ${\mathcal{NP}}$-hard
for the class of player-specific scheduling games
on two ordered links
when
${\mathsf{R}}$ is
{\sf (1)}
${\mathsf{Var}}$,
or
{\sf (2)}
${\mathsf{SD}}$,
or
{\sf (3)}
a concave linear sum
of even moments
of order $k \in \{ 2, 4, 6, 8 \}$
(Theorem~\ref{concrete two ordered links}).

Instrumental to the proof
of Theorem~\ref{concrete two ordered links}
is the key property we show
that a concave linear sum
of even moments
of order $k \in \{ 2, 4, 6, 8 \}$
enjoys the {\it Mixed-Player-Has-Pure-Neighbors} property:
each player $i$ either is pure
or has all of her neighbors
(that is, players $i'$
with $\omega (i, i', 1) \neq 0$)
pure
(Proposition~\ref{the real almost pure property}).
This property
is a quantitative expression
of the view
that mixed equilibria
get ``endangered''
when players are not expectation-optimizers
(cf.~\cite{C90,FP10,MM12}).
The class of
concave linear sum
of even moments
of order $k \in \{ 2, 4, 6, 8 \}$
is the largest class of valuations
we were able to identify with the
{\it Mixed-Player-Has-Pure-Neighbors} property.
We conjecture that
every concave linear sum of even moments
enjoys the property.

\subsubsection{Two Players}

We show that
{\sf $\exists {\mathsf{V}}$-EQUILIBRIUM}
is strongly ${\mathcal{NP}}$-hard
when ${\mathsf{V}}$
is an $({\mathsf{E}} + {\mathsf{R}})$-valuation ${\mathsf{V}}$
with the {\it Weak-Equilibrium-for-Expectation} property
provided that
there is a
polynomial time computable ${\mathsf{\delta}}$,
$0 < {\mathsf{\delta}}
   \leq \frac{\textstyle 1}
                  {\textstyle 4}$,
such that three additional conditions hold
(Theorem~\ref{two players complexity});
the requirement that ${\mathsf{\delta}}$
be polynomial time computable
stems from the fact that
${\mathsf{\delta}}$ enters the reduction
as a parameter.
The first two such conditions
({\sf (2/a)} and {\sf (2/b)})
stipulate a particular inequality
and a particular 
monotonicity property,
respectively.
The third condition
{\sf (2/c)}
refers to
the {\it Crawford game} ${\mathsf{G}}_{C}({\mathsf{\delta}})$,
a generalization of 
a bimatrix game from~\cite[Section 4]{C90},
whose bimatrix involves ${\mathsf{\delta}}$;
it is required that
${\mathsf{G}}_{C}({\mathsf{\delta}})$
has no ${\mathsf{V}}$-equilibrium.
The game ${\mathsf{G}}_{C}({\mathsf{\delta}})$
is used as a ``gadget''
in the reduction.
The proof of Theorem~\ref{two players complexity},
involving a reduction with a single parameter ${\mathsf{\delta}}$,
is very general
since it refers to no particular valuation
but to a class of valuations
enjoying two natural properties,
{\it Risk-Positivity} and
{\it Weak-Equilibrium-for-Expectation}.

Concrete strong ${\mathcal{NP}}$-hardness results
follow as instantiations
of Theorem~\ref{two players complexity}
for three particular valuations
{\sf (1)}
${\mathsf{V}}
 =
 {\mathsf{E}}
 +
 {\mathsf{\gamma}}
 \cdot
 {\mathsf{Var}}$,
{\sf (2)}
${\mathsf{V}}
 =
 {\mathsf{E}}
 +
 {\mathsf{\gamma}}
 \cdot
 {\mathsf{SD}}$,
and
{\sf (3)}
${\mathsf{V}}
 =
 \lambda
 \cdot
 \left( {\mathsf{E}}
        +
        {\mathsf{\gamma}}
        \cdot
        {\mathsf{Var}}
 \right)
 +
 (1 - \lambda)
 \cdot
 {\mathsf{V}}^{{\mathsf{\nu}}}$,
with $0 < \lambda \leq 1$,
where ${\mathsf{\nu}}$
is the increasing and strictly convex function
${\mathsf{\nu}}(x) = x^{r}$,
with $r \geq 2$,
and with
${\mathsf{\gamma}} > 0$
(Theorem~\ref{call from paderborn}).
For all three valuations
in Theorem~\ref{call from paderborn},
we prove that
the three additional conditions
in Theorem~\ref{two players complexity}
hold.
In particular,
for Condition {\sf (2/c)},
we prove 
that ${\mathsf{G}}_{C}({\mathsf{\delta}})$
has no ${\mathsf{V}}$-equilibrium
for {\it any} value
$0 < {\mathsf{\delta}} < 1$;
for the valuation {\sf (3)},
this holds in the more general case 
${\mathsf{\nu}}$
is any increasing
and strictly convex function
(Lemma~\ref{no v equilibrium for crawford}).

\subsection{Summary, Significance and Related Work}
\label{related work and comparison}

For a concave valuation ${\mathsf{V}}$,
there may or may not exist a ${\mathsf{V}}$-equilibrium;
we have identified  
${\mathsf{V}}^{{\mathsf{\nu}}}$,
with an increasing and strictly convex function
${\mathsf{\nu}}$,
as an example of a concave valuation ${\mathsf{V}}$
for which every game has
a ${\mathsf{V}}$-equilibrium
(cf. Section~\ref{two players}).
However,
restricting to a {\it strictly concave} valuation 
excludes the existence
of a mixed ${\mathsf{V}}$-equilibrium.
Restricting instead to an
${\mathsf{E}}$-strictly concave valuation ${\mathsf{V}}$,
a mixed ${\mathsf{V}}$-equilibrium
may or may not exist;
{\it
what this work is revealing
is that it then becomes
strongly ${\mathcal{NP}}$-hard to decide
if there is one
already for the two simplest cases,
games with two strategies or two players
}
(Theorems~\ref{concrete two ordered links},~\ref{two players complexity}
and~\ref{call from paderborn}).

While bringing 
concave valuations from
Actuarial Risk Theory~\cite{KGDD08}
into play,
our framework encompasses
general classes
of 
$({\mathsf{E}}
  +
  {\mathsf{R}})$-valuations,
assuming the
{\it Risk-Positivity} property,
that also enjoy the 
{\it Weak-Equilibrium-for-Expectation} property,
and a few additional properties.
The $({\mathsf{E}} + {\mathsf{Var}})$-equilibria
on which~\cite{FP10,MM12} focused
are a special case of our general framework.

Fiat and Papadimitriou~\cite[Theorem 5]{FP10}
presented a proof sketch to claim
that it is ${\mathcal{NP}}$-hard
to decide the existence of
an $({\mathsf{E}} + {\mathsf{Var}})$-equilibrium
for games with two players;
unfortunately,
their proof had been flawed,
containing several gaps and errors.
In personal communication
with Fiat and Papadimitriou~\cite{FP15},
they state:
{\it
``The proof, as is,
has gaps and errors,
which we believe can be fixed
to yield a proof with the same architecture,
but we have not done it yet.''}
In lack of a proof,
the complexity of deciding the existence
of an $({\mathsf{E}} + {\mathsf{Var}})$-equilibrium
for games with two players
has remained open,
and Theorems~\ref{two players complexity}
and~\ref{call from paderborn}
represent new results,
with Theorem~\ref{call from paderborn}
encompassing
$({\mathsf{E}} +
  {\mathsf{Var}})$-equilibria
as a special case. 
Our reduction
adapts techniques
originally used by Conitzer and Sandholm~\cite[Section 3]{CS08}
to show that deciding the existence of Nash equilibria
with certain properties
for games with two players is ${\mathcal{NP}}$-complete.

Fiat and Papadimitriou~\cite[Section 2]{FP10}
coined a notion termed as {\it strict concavity,}
denoted here as {\it ${\mathsf{FP}}$-strict concavity,}
which is similar to 
but different than
${\mathsf{E}}$-strict concavity.
It turns out that
their difference is essential
since
${\mathsf{E}} + {\mathsf{Var}}$
is {\em not} ${\mathsf{FP}}$-strictly concave
while it is ${\mathsf{E}}$-strictly concave.\footnote{See the Appendix
                                                                                     for the definition of ${\mathsf{FP}}$-strict concavity
                                                                                     and a proof
                                                                                     that ${\mathsf{E}} + {\mathsf{Var}}$
                                                                                     is not ${\mathsf{FP}}$-strictly concave.}
In fact,
no concrete example
of an ${\mathsf{FP}}$-strictly concave valuation
was given in~\cite{FP10}.
Fiat and Papadimitriou~\cite[Theorem 3 \& Observation 4]{FP10}
proved the sparsity
of {\it mixed} ${\mathsf{V}}$-equilibria,
when ${\mathsf{V}}$ is
${\mathsf{FP}}$-strictly concave:
games with a mixed ${\mathsf{V}}$-equilibrium
have measure $0$.
The sparsity of mixed $({\mathsf{E}} + {\mathsf{Var}})$-equilibria
follows from~\cite[Theorem 1]{BKS10}
where it is established that  
$({\mathsf{E}} + {\mathsf{Var}})$
is a {\it Mean-Variance Preference Function}~\cite[Claim 1]{BKS10}.\footnote{See
Section~\ref{wee}
for the definition of Mean-Variance Preference Functions
and their relation to this work.}
Contrary to their sparsity,
this work establishes that
deciding the existence
of mixed $({\mathsf{E}} + {\mathsf{Var}})$-equilibria
is strongly ${\mathcal{NP}}$-hard
(Theorems~\ref{concrete two ordered links}
and~\ref{call from paderborn}).

It was known that
${\mathsf{E}} + {\mathsf{Var}}$,
as well as
${\mathsf{E}} +
 {\mathsf{SD}}$,
have the {\it Weak-Equilibrium-for-Expectation} property~\cite[Theorem 3.5]{MM12};
they were also known
to have the {\it Mixed-Player-Has-Pure-Neighbors} property
for player-specific scheduling games
on two ordered links~\cite[Theorem 5.13]{MM12}.
It was established in~\cite[Theorem 4]{BKS10}
that there is always a {\it correlated equilibrium}~\cite{A87}
for ${\mathsf{E}} + {\mathsf{Var}}$.

\subsection{Paper Organization}
\label{road map}

Section~\ref{framework}
presents the game-theoretic framework
and introduces $({\mathsf{E}} + {\mathsf{R}})$-valuations
and ${\mathsf{E}}$-strict concavity.
Equilibria and their properties
are articulated
in Section~\ref{equilibria}.
The ${\cal NP}$-hardness results
for two strategies and two players
are presented in
Sections~\ref{two ordered links} and~\ref{two players},
respectively.
We conclude,
in Section~\ref{epilogue},
with a discussion of the results
and some open problems.

\section{Games, $({\mathsf{E}} + {\mathsf{R}})$-Valuations and ${\mathsf{E}}$-Strict Concavity}
\label{framework}

\subsection{Games}

For an integer $n \geq 2$,
an {\it $n$-players game}
${\mathsf{G}}$,
or {\it game,}
consists of
{\it (i)}
$n$ finite sets
$\left\{ S_{i}
\right\}_{i \in [n]}$
of {\it strategies,}
and
{\it (ii)}
$n$ {\it cost functions}
$\left\{ {\mathsf{\mu}}_{i}
\right\}_{i \in [n]}$,
each mapping
${\cal S} = \times_{i \in [n]}
              S_{i}$
to the reals.
So,
${\mathsf{\mu}}_{i}({\bf s})$
is the {\it cost}
of player $i \in [n]$
on the {\it profile}
${\bf s}
=
\langle s_{1},
    \ldots,
    s_{n}
\rangle$
of strategies,
one per player.

A {\it mixed strategy}
for player $i \in [n]$
is a probability distribution $p_{i}$
on $S_{i}$;
the {\it support}
of player $i$
in $p_{i}$
is the set
$\sigma (p_{i})
=
\left\{ \ell \in S_{i}\
    \mid\
    p_{i}(\ell) > 0
\right\}$.
Denote as
${\mathsf{\Delta}}_{i}
=
{\mathsf{\Delta}}(S_{i})$
the set of
mixed strategies
for player $i$.
Player $i$
is {\it pure}
if for each strategy $s_{i} \in S_{i}$,
$p_{i}(s_{i}) \in \{ 0, 1 \}$;
else she is {\it non-pure}.
Denote as $p_{i}^{\ell}$
the {\it pure strategy}
of player $i$
choosing the strategy $\ell$
with probability $1$.

A {\it mixed profile}
is a tuple
${\bf p} = \langle p_{1},
         \ldots,
         p_{n}
     \rangle$
of $n$ mixed strategies,
one per player;
denote as
${\mathsf{\Delta}}
=
{\mathsf{\Delta}}(S)
=
\times_{i \in [n]}
 {\mathsf \Delta}_{i}$
the set of mixed profiles.
The mixed profile ${\bf p}$
induces
probabilities
${\bf p}({\bf s})$
for each profile
${\bf s} \in {\cal S}$
with
${\bf p}({\bf s})
=
\prod_{i' \in [n]}
  p_{i'} (s_{i'})$.
For a player $i \in [n]$,
the {\it partial profile}
${\bf s}_{-i}$
(resp.,
{\it partial mixed profile}
${\bf p}_{-i}$)
results by eliminating
the strategy $s_{i}$
(resp.,
the mixed strategy $p_{i}$)
from ${\bf s}$
(resp.,
${\bf p}$).
The partial mixed profile ${\bf p}_{-i}$
induces
probabilities
${\bf p}({\bf s}_{-i})$
for each partial profile
${\bf s}_{-i}
 \in
 {\cal S}_{-i}
 :=
 \times_{i' \in [n] \setminus \{ i \}}
  S_{i'}$
with
${\bf p}({\bf s}_{-i})
=
\prod_{i' \in [n] \setminus \{ i \}}
  p_{i'} (s_{i'})$.

A function
${\mathsf{V}} : {\mathsf{T}}
                \rightarrow
                {\mathbb{R}}$
on a convex set ${\mathsf{T}}$
is {\it concave}
(resp., {\it strictly concave})
if for any two points
$t_{1}, t_{2}
\in
{\mathsf{T}}$
and any number $\delta \in [0, 1]$
(resp., $\delta \in (0, 1)$),
${\mathsf{V}}(\delta\,t_{1} + (1-\delta)\, t_{2})
\geq
\delta\, {\mathsf{V}}(t_{1})
+
(1-\delta)\, {\mathsf{V}}(t_{2})$
(resp.,
${\mathsf{V}}(\delta\,t_{1} + (1-\delta)\, t_{2})
 >
 \delta\, {\mathsf{V}}(t_{1})
 +
 (1-\delta)\, {\mathsf{V}}(t_{2})$).
A function
${\mathsf{V}} : {\mathsf{T}}
                \rightarrow
                {\mathbb{R}}$
on a convex set ${\mathsf{T}}$
is {\it convex}
(resp., {\it strictly convex})
if $-{\mathsf{V}}$ is concave
(resp., strictly concave).

\subsection{$({\mathsf{E}} + {\mathsf{R}})$-Valuations}

For a player $i \in [n]$,
a {\it valuation function,}
or {\it valuation} for short,
${\mathsf V}_{i}$
is a real-valued function
on $\mathsf{\Delta}({\cal S})$,
yielding a value
${\mathsf V}_{i}({\bf p})$
to each mixed profile
${\bf p}$,
so that
in the special case where ${\bf p}$ 
is a profile ${\bf s}$,
${\mathsf V}_{i}({\bf s})
=
{\mathsf{\mu}}_{i}({\bf s})$.
A {\it valuation}
${\mathsf V}
=
\langle {\mathsf V}_{1},
    \ldots,
    {\mathsf V}_{n}
\rangle$
is a tuple of valuations,
one per player;
${\mathsf{G}}^{{\mathsf{V}}}$
denotes ${\mathsf{G}}$
together with ${\mathsf{V}}$.\footnote{We shall mostly treat
                                       a valuation function ${\mathsf{V}}_{i}$
                                       and
                                       a valuation ${\mathsf{V}}$
                                       interchangeably
                                       for an easier notation;
                                       we shall use ${\mathsf{V}}_{i}$
                                       only when $p_{i}$
                                       has some special property.}
We now introduce
a special class of valuations.

\begin{center}
\fbox{
\begin{minipage}{5.9in}
\begin{definition}[$({\mathsf{E}} + {\mathsf{R}}$)-Valuation]
\label{most basic}
An $({\mathsf{E}} + {\mathsf{R}}$)-valuation
is a valuation
of the form
${\mathsf{V}}
 =
 {\mathsf{E}}
 +
 {\mathsf{R}}$,
where ${\mathsf{E}}$
is the {\it expectation valuation}
with
${\mathsf{E}}_{i}({\bf p})
 = 
 \sum_{{\bf s}
       \in
       S}
   {\bf p}({\bf s})\,
   {\mathsf{\mu}}_{i}({\bf s})$
for $i \in [n]$,
and
${\mathsf{R}}$
is the {\it risk valuation,}
a continuous valuation
with the
{\it Risk-Positivity} property:
For each player $i \in [n]$
and mixed profile ${\bf p}$,
{\sf (C.1)}
${\mathsf{R}}_{i}({\bf p})
 \geq
 0$
and
{\sf (C.2)}
${\mathsf{R}}_{i}({\bf p})
 =
 0$
if and only if
for each profile
${\bf s}
 \in
 {\cal S}$
with
${\bf p} ({\bf s}) > 0$,
${\mathsf{\mu}}_{i}({\bf s})$
remains constant
over all choices of strategies
by the other players;
in such case,
${\mathsf{V}}_{i}({\bf p})
 =
 {\mathsf{E}}_{i}({\bf p})
 =
 {\mathsf{\mu}}_{i}({\bf s})$
for any profile
${\bf s}
 \in
 {\cal S}$
with
${\bf p} ({\bf s}) > 0$.
\end{definition}
\end{minipage}
}
\end{center}

\noindent
For each integer $k \geq 0$,
the {\it $k$-moment} valuation
is given by
{
\small
\begin{eqnarray*}
{\mathsf{kM}}_{i}({\bf p})
& = &
   \sum_{{\bf s}
       \in
       S}
   {\bf p}({\bf s})\,
   \left( {\mathsf{\mu}}_{i}({\bf s})
          -
          {\mathsf{E}}_{i}({\bf p})
   \right)^{k}\, ,
\end{eqnarray*}
}
for each player $i \in [n]$;
so,
${\mathsf{1M}} = 0$.
Furthermore,
${\mathsf{2M}}$,
known as {\it variance}
and denoted as ${\mathsf{Var}}$,
is concave;
hence,
also is
the square root of variance,
known as {\it standard deviation}
and denoted as ${\mathsf{SD}}$.
However,
$k$-moments
of order higher than $2$
are not concave.

We shall consider {\it ${\mathsf{\nu}}$-valuations}
${\mathsf{V}}^{{\mathsf{\nu}}}
 =
 {\mathsf{\nu}}^{-1}({\mathsf{E}}({\mathsf{\nu}}(.)))$,
for an increasing and strictly convex function
${\mathsf{\nu}}$;
so,
${\mathsf{\nu}}^{-1}$,
and hence
${\mathsf{V}}^{{\mathsf{\nu}}}$,
is concave.
So,
for a player $i$,
{
\small
\begin{eqnarray*}
            {\mathsf{V}}^{{\mathsf{\nu}}}_{i}\left( {\bf p}
                                       \right)
& = &  {\mathsf{\nu}}^{-1} \left( \sum_{{\bf s}
                                       \in
                                       {\cal S}}
                                   {\bf p} \left( {\bf s}
                                           \right)
                                 \cdot
                                 {\mathsf{\nu}}\left( {\mathsf{\mu}}_{i}({\bf s})
                                               \right)
                          \right)\, ;
\end{eqnarray*}
}
set also
${\mathsf{R}}^{{\mathsf{\nu}}}
 :=
 {\mathsf{V}}^{{\mathsf{\nu}}}
 -
 {\mathsf{E}}$.
We observe:

\begin{lemma}
\label{paderborn}
For an increasing and strictly convex function
${\mathsf{\nu}}$,
the risk valuation
${\mathsf{R}}^{{\mathsf{\nu}}}$
has the {\it Risk-Positivity} property.
\end{lemma}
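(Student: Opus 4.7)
The plan is to reduce both clauses (C.1) and (C.2) of Risk-Positivity for $\mathsf{R}^\nu$ to Jensen's inequality applied to the convex function $\nu$. Given a player $i$ and a mixed profile $\mathbf{p}$, I would view $\mu_i(\mathbf{s})$ as a random variable on $\mathcal{S}$ distributed according to $\mathbf{p}$; its expectation is precisely $\mathsf{E}_i(\mathbf{p}) = \sum_{\mathbf{s}} \mathbf{p}(\mathbf{s})\, \mu_i(\mathbf{s})$.

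For clause (C.1), I would apply Jensen's inequality to the convex function $\nu$, obtaining
\[
\sum_{\mathbf{s} \in \mathcal{S}} \mathbf{p}(\mathbf{s})\, \nu\bigl(\mu_i(\mathbf{s})\bigr) \;\geq\; \nu\!\left(\sum_{\mathbf{s} \in \mathcal{S}} \mathbf{p}(\mathbf{s})\, \mu_i(\mathbf{s})\right) \;=\; \nu\bigl(\mathsf{E}_i(\mathbf{p})\bigr).
\]
Since $\nu$ is increasing, $\nu^{-1}$ is well-defined and increasing on the relevant range, so applying $\nu^{-1}$ to both sides preserves the inequality. The left-hand side becomes $\mathsf{V}^\nu_i(\mathbf{p})$ by the definition of the $\nu$-valuation, while the right-hand side becomes $\mathsf{E}_i(\mathbf{p})$. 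Hence $\mathsf{R}^\nu_i(\mathbf{p}) = \mathsf{V}^\nu_i(\mathbf{p}) - \mathsf{E}_i(\mathbf{p}) \geq 0$, establishing (C.1).

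For clause (C.2), I would invoke the strict form of Jensen's inequality: because $\nu$ is \emph{strictly} convex, equality in the Jensen step above holds if and only if $\mu_i(\mathbf{s})$ takes the same value for every $\mathbf{s}$ with $\mathbf{p}(\mathbf{s}) > 0$. This is precisely the support condition appearing in (C.2) of Definition~\ref{most basic} (which, together with the concluding identity $\mathsf{V}_i(\mathbf{p}) = \mathsf{E}_i(\mathbf{p}) = \mu_i(\mathbf{s})$, forces constancy of $\mu_i$ across the entire support). The reverse direction is immediate: if $\mu_i(\mathbf{s}) = c$ on the support, then $\mathsf{E}_i(\mathbf{p}) = c$ and $\mathsf{V}^\nu_i(\mathbf{p}) = \nu^{-1}(\nu(c)) = c$, so $\mathsf{R}^\nu_i(\mathbf{p}) = 0$ and the concluding identity holds.

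There is essentially no substantive obstacle here. The only care needed is to match the verbal description of the support condition in (C.2) with the equality case of strict Jensen, and to note that the required continuity of $\mathsf{R}^\nu$ is inherited from the continuity of $\nu$ and of $\nu^{-1}$ (both automatic since $\nu$ is strictly convex and increasing on an interval).
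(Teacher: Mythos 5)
Your proof is correct and follows essentially the same route as the paper's: both reduce the claim, via the monotonicity of $\nu$ and $\nu^{-1}$, to Jensen's inequality for the convex function $\nu$, with strict convexity supplying the equality characterization needed for {\sf (C.2)}. Your write-up is in fact slightly more explicit than the paper's about the equality case and the continuity of ${\mathsf{R}}^{{\mathsf{\nu}}}$, but there is no substantive difference.
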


\begin{proof}
Fix a player $i \in [n]$
and a profile
${\bf p}$.
Then,
${\mathsf{R}}^{{\mathsf{\nu}}}_{i}\left( {\bf p}
                                  \right)
 \geq
 0$
if and only if
${\mathsf{V}}^{{\mathsf{\nu}}}_{i}\left( {\bf p}
                                  \right)
 \geq
 {\mathsf{E}}_{i}\left( {\bf p}
                 \right)$
if and only if
{
\small
\begin{eqnarray*}
         {\mathsf{\nu}}^{-1} \left( \sum_{{\bf s}
                                        \in
                                        {\cal S}}
                                      {\bf p} \left( {\bf s}
                                              \right)
                                      \cdot
                                      {\mathsf{\nu}}\left( {\mathsf{\mu}}_{i}({\bf s})
                                                    \right)
                             \right)
& \geq  & \sum_{{\bf s}
               \in
               {\cal S}}
           {\bf p} \left( {\bf s}
                   \right)
           \cdot
           {\mathsf{\mu}}_{i}({\bf s})
\end{eqnarray*}
}
if and only if (since ${\mathsf{\nu}}$ is increasing)
{
\small
\begin{eqnarray*}
         \sum_{{\bf s}
               \in
               {\cal S}}
           {\bf p} \left( {\bf s}
                   \right)
           \cdot
           {\mathsf{\nu}}\left( {\mathsf{\mu}}_{i}({\bf s})
                         \right)
& \geq & {\mathsf{\nu}} \left( \sum_{{\bf s}
                                     \in
                                     {\cal S}
                                    }
                                p \left( {\bf s}
                                  \right)
                                \cdot
                                {\mathsf{\mu}}_{i}({\bf s})
                       \right)\, .
\end{eqnarray*}
}
Now {\sf (C.1)} follows since
${\mathsf{\nu}}$ is convex;
{\sf (C.2)} follows since
${\mathsf{\nu}}$ is strictly convex.
\end{proof}

\noindent
By Lemma~\ref{paderborn},
for an increasing and strictly convex function
${\mathsf{\nu}}$,
${\mathsf{V}}^{{\mathsf{\nu}}}$
is an $({\mathsf{E}} + {\mathsf{R}})$-valuation.

\noindent
We shall deal 
with cases where
for a player $i \in [n]$
and a mixed profile ${\bf p}$,
$\{ {\mathsf{\mu}}_{i}({\bf s})
    \mid
    {\bf p} ({\bf s}) > 0
 \}
 =
 \{ a, b
 \}$
with $a < b$,
so that
${\mathsf{R}}_{i}({\bf p})$
depends on the three parameters
$a$, $b$ and $q$,
where
$q 
 :=
 \sum_{{\bf s} \in {\cal S}\,
       \mid\,
       {\mathsf{\mu}}_{i}({\bf s})
       =
       b}
   {\bf p} ({\bf s})$.
Then, 
denote
{
\small
\begin{eqnarray*}
            \widehat{{\mathsf{V}}}_{i}(a,b,q)
& := & a + q (b-a)
           +
            \widehat{{\mathsf{R}}}_{i}(a,b,q)\, .
\end{eqnarray*}
}

\subsection{\bf ${\mathsf{E}}$-Strict Concavity}

We introduce
a refinement of concavity:

\begin{center}
\fbox{
\begin{minipage}{5.9in}
\begin{definition}[${\mathsf{E}}$-Strict Concavity]
\label{e strict concavity}
Fix a player $i$. 
The $({\mathsf{E}} +
      {\mathsf{R}})$-valuation
${\mathsf{V}}_{i}$
is {\it ${\mathsf{E}}$-strictly concave}
if for every game ${\mathsf{G}}$,
the following conditions hold
for a fixed
partial mixed profile
${\bf p}_{-i}$:

\begin{enumerate}

\item[{\sf (1)}]
${\mathsf{V}}_{i}$
is concave in the mixed strategy $p_{i}$.

\item[{\sf (2)}]
For a pair of mixed strategies
$p_{i}^{\prime},
 p_{i}^{\prime\prime}
 \in
 {\mathsf{\Delta}} (S_{i})$,
if
${\mathsf{E}}_{i} \left( p_{i}^{\prime},
                         {\bf p}_{-i}
                  \right)
 \neq
 {\mathsf{E}}_{i} \left( p_{i}^{\prime\prime},
                         {\bf p}_{-i}
                  \right)$,
then for any $\lambda$ with
$0 < \lambda < 1$,
{
\small
\begin{eqnarray*}
      {\mathsf{V}}_{i} \left( \lambda  p_{i}^{\prime}
                              +
                              (1 - \lambda) p_{i}^{\prime\prime},
                              {\bf p}_{-i}
                       \right)
& > & \lambda
      {\mathsf{V}}_{i} \left( p_{i}^{\prime},
                             {\bf p}_{-i}
                       \right)
      +
      (1 - \lambda)
      {\mathsf{V}}_{i} \left( p_{i}^{\prime\prime},
                              {\bf p}_{-i}
                       \right)\, .
\end{eqnarray*}
}

\end{enumerate}
\end{definition}
\end{minipage}
}
\end{center}

\noindent
Note that
${\mathsf{E}}$-strict concavity
is different from the
{\it strict concavity}
formulated in~\cite[Section 2]{FP10},
and denoted here as
${\mathsf{FP}}$-strict concavity,
by using the
{\it payoff distribution}
${\mathsf{P}}_{i}$,
which is the probability distribution
on the costs
induced by a mixed strategy $p_{i}$
and a partial mixed profile
${\bf p}_{-i}$.\footnote{In Appendix~\ref{fp strict concavity},
                                we provide a counterexample
                                to demonstrate
                                that the valuation
                                ${\mathsf{E}}
                                 +
                                 {\mathsf{Var}}$
                                is {\em not}
                                ${\mathsf{FP}}$-strictly concave.}
A closure property
of ${\mathsf{E}}$-strict concavity 
follows.

\begin{corollary}
\label{hilton frappe}
Fix an arbitrary pair of
an ${\mathsf{E}}$-strictly concave
valuation
${\mathsf{V}}^{(1)}$
and a concave valuation
${\mathsf{V}}^{(2)}$.
Then,
for any $\lambda$,
with $0 < \lambda \leq 1$,
the valuation
${\mathsf{V}}
 =
 \lambda\,
 {\mathsf{V}}^{(1)}
 +
 (1- \lambda)\,
 {\mathsf{V}}^{(2)}$
is
${\mathsf{E}}$-strictly concave.
\end{corollary}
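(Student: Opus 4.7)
The plan is to verify directly the two defining conditions of $\mathsf{E}$-strict concavity (Definition~\ref{e strict concavity}) for the combined valuation ${\mathsf{V}} = \lambda\,{\mathsf{V}}^{(1)} + (1-\lambda)\,{\mathsf{V}}^{(2)}$, fixing an arbitrary player $i$ and an arbitrary partial mixed profile ${\bf p}_{-i}$. Both conditions should fall out of the linearity of convex combinations combined with the fact that $\lambda>0$, so the real content is simply to track where the strict inequality is preserved.

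First I would dispatch Condition~{\sf (1)}. Since ${\mathsf{V}}^{(1)}$ is $\mathsf{E}$-strictly concave it is in particular concave in $p_i$, and ${\mathsf{V}}^{(2)}$ is concave by assumption; hence for any $p_i',p_i''\in{\mathsf{\Delta}}(S_i)$ and any $\delta\in[0,1]$, applying concavity term by term, multiplying by the nonnegative weights $\lambda$ and $1-\lambda$, and adding, yields concavity of ${\mathsf{V}}$ in $p_i$.

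For Condition~{\sf (2)}, fix $p_i',p_i''\in{\mathsf{\Delta}}(S_i)$ with ${\mathsf{E}}_i(p_i',{\bf p}_{-i})\neq{\mathsf{E}}_i(p_i'',{\bf p}_{-i})$ and any $\lambda'\in(0,1)$ (note: I am using $\lambda'$ for the convex combination parameter in the definition to avoid collision with the mixing coefficient $\lambda$ of the two valuations). The $\mathsf{E}$-strict concavity of ${\mathsf{V}}^{(1)}$ gives the strict inequality
\[
{\mathsf{V}}^{(1)}_i\bigl(\lambda' p_i'+(1-\lambda')p_i'',{\bf p}_{-i}\bigr) \;>\; \lambda'\,{\mathsf{V}}^{(1)}_i(p_i',{\bf p}_{-i}) + (1-\lambda')\,{\mathsf{V}}^{(1)}_i(p_i'',{\bf p}_{-i}),
\]
while the concavity of ${\mathsf{V}}^{(2)}$ gives the analogous weak inequality. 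Multiplying the first by $\lambda>0$, the second by $1-\lambda\geq 0$, and summing, the strict inequality survives precisely because the coefficient $\lambda$ attached to ${\mathsf{V}}^{(1)}$ is strictly positive, which is exactly the hypothesis $0<\lambda\leq 1$. This yields Condition~{\sf (2)} for ${\mathsf{V}}$ and completes the verification.

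There is no real obstacle here; the only subtlety — and the reason the corollary requires $\lambda>0$ rather than $\lambda\geq 0$ — is that the strict inequality carried by ${\mathsf{V}}^{(1)}$ must not be annihilated in the combination. One may additionally remark that if ${\mathsf{V}}^{(2)}$ is itself an $({\mathsf{E}}+{\mathsf{R}})$-valuation, the resulting ${\mathsf{V}}$ is again an $({\mathsf{E}}+{\mathsf{R}})$-valuation, since the expectation component is preserved under convex combinations and the $\mathsf{Risk}$-$\mathsf{Positivity}$ property is preserved under nonnegative linear combinations of risk valuations.
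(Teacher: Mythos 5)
Your proof is correct and is precisely the routine verification the paper has in mind: the paper states Corollary~\ref{hilton frappe} without proof, treating it as immediate from the definition, and your argument (concavity is preserved under convex combination; the strict inequality contributed by ${\mathsf{V}}^{(1)}$ survives multiplication by $\lambda>0$ and addition of the weak inequality for ${\mathsf{V}}^{(2)}$) is exactly that omitted argument. Your care in renaming the convex-combination parameter to $\lambda'$ to avoid the clash with the mixing coefficient $\lambda$ is a sensible touch.
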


\noindent
We observe:

\begin{lemma}
\label{paderborn 1}
The valuations
${\mathsf{E}} + {\mathsf{\gamma}} \cdot {\mathsf{Var}}$
and
${\mathsf{E}} + {\mathsf{\gamma}} \cdot {\mathsf{SD}}$,
with ${\mathsf{\gamma}} > 0$,
are
${\mathsf{E}}$-strictly concave.
\end{lemma}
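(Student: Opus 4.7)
The plan is to verify each of the two conditions in Definition~\ref{e strict concavity} for the two valuations; condition {\sf (1)} is routine, while condition {\sf (2)} reduces to a clean identity for the variance.

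Fix a player $i$ and a partial mixed profile ${\bf p}_{-i}$, and view ${\mathsf{V}}_i$ as a function of $p_i$. For condition~{\sf (1)}, ${\mathsf{E}}_i$ is linear in $p_i$ (hence concave), and ${\mathsf{Var}}$ and ${\mathsf{SD}}$ are concave in $p_i$ (these are the standard facts already invoked in the excerpt). Since $\gamma > 0$, both ${\mathsf{E}}+\gamma\cdot {\mathsf{Var}}$ and ${\mathsf{E}}+\gamma\cdot {\mathsf{SD}}$ are concave in $p_i$.

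For condition~{\sf (2)}, fix $p_i', p_i''$ with $a := {\mathsf{E}}_i(p_i',{\bf p}_{-i}) \neq b := {\mathsf{E}}_i(p_i'',{\bf p}_{-i})$ and $\lambda \in (0,1)$, and write $p_i^{\lambda} := \lambda p_i' + (1-\lambda) p_i''$. Observe that the probability distribution on ${\cal S}$ induced by $(p_i^{\lambda}, {\bf p}_{-i})$ is the $\lambda$-mixture of the two distributions induced by $(p_i',{\bf p}_{-i})$ and $(p_i'',{\bf p}_{-i})$, so ${\mathsf{E}}_i(p_i^{\lambda},{\bf p}_{-i}) = \lambda a + (1-\lambda) b$ and the raw second moment is likewise a convex combination. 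A direct expansion (or the law of total variance) then yields the key identity
\begin{eqnarray*}
{\mathsf{Var}}_i(p_i^{\lambda},{\bf p}_{-i})
& = &
\lambda\,{\mathsf{Var}}_i(p_i',{\bf p}_{-i})
+ (1-\lambda)\,{\mathsf{Var}}_i(p_i'',{\bf p}_{-i})
+ \lambda(1-\lambda)(a-b)^{2}.
\end{eqnarray*}
Since $a \neq b$ and $\lambda \in (0,1)$, the last term is strictly positive; multiplying by $\gamma > 0$ and adding the linear (hence equality-preserving) contribution of ${\mathsf{E}}_i$ establishes strict inequality, so ${\mathsf{E}}+\gamma\cdot{\mathsf{Var}}$ is ${\mathsf{E}}$-strictly concave.

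For ${\mathsf{E}}+\gamma\cdot{\mathsf{SD}}$, set $v' := {\mathsf{Var}}_i(p_i',{\bf p}_{-i})$ and $v'' := {\mathsf{Var}}_i(p_i'',{\bf p}_{-i})$. The identity gives ${\mathsf{SD}}_i(p_i^{\lambda},{\bf p}_{-i}) = \sqrt{\lambda v' + (1-\lambda) v'' + \lambda(1-\lambda)(a-b)^{2}}$. Since $a \neq b$ and $\lambda \in (0,1)$, the term under the radical strictly exceeds $\lambda v' + (1-\lambda) v''$, so by strict monotonicity of $\sqrt{\cdot}$,
\begin{eqnarray*}
{\mathsf{SD}}_i(p_i^{\lambda},{\bf p}_{-i})
& > &
\sqrt{\lambda v' + (1-\lambda) v''}
\;\geq\;
\lambda\,\sqrt{v'} + (1-\lambda)\,\sqrt{v''}
\;=\;
\lambda\,{\mathsf{SD}}_i(p_i',{\bf p}_{-i}) + (1-\lambda)\,{\mathsf{SD}}_i(p_i'',{\bf p}_{-i}),
\end{eqnarray*}
where the middle inequality is the concavity of $\sqrt{\cdot}$ on $[0,\infty)$. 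Multiplying by $\gamma > 0$ and adding the linear contribution of ${\mathsf{E}}_i$ again yields strict inequality. The only subtle point—the main (minor) obstacle—is the boundary case $v' = v'' = 0$, where concavity of $\sqrt{\cdot}$ degenerates to equality; but then the strict first inequality $\sqrt{\lambda(1-\lambda)(a-b)^{2}} > 0$ still suffices to force strict concavity. This completes the verification of condition~{\sf (2)} for both valuations.
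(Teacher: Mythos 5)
Your proof is correct, and its core is the same Jensen-gap idea the paper uses: the variance of a mixture exceeds the mixture of the variances by the (strictly positive, when the expectations differ) dispersion of the conditional means. The difference is in execution. The paper decomposes a mixed strategy into its pure support components and invokes strict concavity of $-x^{2}$ on the values ${\mathsf{E}}_{i}(p_{i}^{\ell},{\bf p}_{-i})$, which verifies the inequality only for that particular decomposition; you instead verify Condition {\sf (2)} of Definition~\ref{e strict concavity} directly for an arbitrary pair of mixed strategies $p_{i}^{\prime},p_{i}^{\prime\prime}$, using the explicit identity ${\mathsf{Var}}_{i}(p_{i}^{\lambda})=\lambda\,{\mathsf{Var}}_{i}(p_{i}^{\prime})+(1-\lambda)\,{\mathsf{Var}}_{i}(p_{i}^{\prime\prime})+\lambda(1-\lambda)(a-b)^{2}$, which matches the quantification in the definition more faithfully. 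You also spell out the ${\mathsf{SD}}$ case (strict monotonicity of $\sqrt{\cdot}$ to carry the strict gap through, then concavity of $\sqrt{\cdot}$, with the degenerate case $v^{\prime}=v^{\prime\prime}=0$ handled), where the paper settles for the one-line remark that ${\mathsf{SD}}$ inherits the property ``as the square root of ${\mathsf{Var}}$.'' Both routes buy the same lemma; yours is the more self-contained and definition-faithful write-up, the paper's is the more compact one.
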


\begin{proof}
Fix a game ${\mathsf{G}}$
and a player $i \in [n]$.
Recall that
${\mathsf{Var}}$
and
${\mathsf{SD}}$
are concave in the
mixed strategy $p_{i}$.
Also,
{
\small
\begin{eqnarray*}
{\mathsf{Var}}_{i}\left( p_{i},
                          {\bf p}_{-i}
                   \right)
& = &
 \sum_{{\bf s}
       \in
       {\cal S}}
   {\bf p} ({\bf s})\,
   {\mathsf{\mu}}_{i}^{2}({\bf s})
 -
 \left( \sum_{\ell \in \sigma (p_{i})}
          p_{i}(\ell)\,
          {\mathsf{E}}_{i}\left( p_{i}^{\ell},
                                 {\bf p}_{-i}
                          \right)
 \right)^{2}\, ;
\end{eqnarray*}
}
so,
the non-linear term in
the mixed strategy $p_{i}$
is a function in the variables
${\mathsf{E}}_{i}\left( p_{i}^{\ell},
                              {\bf p}_{-i}
                 \right)$,
with $\ell \in \sigma (p_{i})$.
Assume that
there are strategies
$r, t \in \sigma (p_{i})$
such that
${\mathsf{E}}_{i}\left( p_{i}^{r},
                        {\bf p}_{-i}
                 \right)
 \neq
 {\mathsf{E}}_{i}\left( p_{i}^{t},
                        {\bf p}_{-i}
                 \right)$.
Since the function
${\mathsf{\widehat{\nu}}}(x) = - x^{2}$
is strictly concave,
we get that
{
\small
\begin{eqnarray*}
&   & {\mathsf{Var}}_{i}\left( p_{i},
                                {\bf p}_{-i}
                        \right)                                          \\
& > & \sum_{{\bf s}
            \in
            {\cal S}}
        {\bf p} ({\bf s})
        {\mathsf{\mu}}_{i}^{2}({\bf s})
        -
        \sum_{\ell \in \sigma (p_{i})}
          p_{i}(\ell)\,
          {\mathsf{E}}_{i}^{2}\left( p_{i}^{\ell},
                                     {\bf p}_{-i}
                              \right)                                   \\
& = & \sum_{\ell \in \sigma (p_{i})}
        p_{i} (\ell)
        \sum_{{\bf s}_{-i} \in {\cal S}_{-i}}
          {\bf p}\left( {\bf s}_{-i}
                 \right)
          \cdot
          {\mathsf{\mu}}_{i}^{2}\left( p_{i}^{\ell},
                                       {\bf s}_{-i}
                                \right)
      -
      \sum_{\ell \in \sigma (p_{i})}
          p_{i}(\ell)\,
          {\mathsf{E}}_{i}^{2}\left( p_{i}^{\ell},
                                     {\bf p}_{-i}
                              \right)                                  \\
& = & \sum_{\ell \in \sigma (p_{i})}
        p_{i}(\ell)\,
        \left( \sum_{{\bf s}_{-i} \in {\cal S}_{-i}}
                 {\bf p}\left( {\bf s}_{-i}
                        \right)
                 \cdot
                 {\mathsf{\mu}}_{i}^{2}\left( p_{i}^{\ell},
                                              {\bf s}_{-i}
                                       \right)
                 -
                 {\mathsf{E}}_{i}^{2}\left( p_{i}^{\ell},
                                            {\bf p}_{-i}
                                     \right)
        \right)                                                        \\
& = & \sum_{\ell \in \sigma (p_{i})}
        p_{i} (\ell)
        {\mathsf{Var}}_{i} \left( p_{i}^{\ell},
                                  {\bf p}_{-i}
                           \right)\, ,
\end{eqnarray*}
}
as needed.
Now,
${\mathsf{SD}}$
is ${\mathsf{E}}$-strictly concave
as the square root
of ${\mathsf{Var}}$.
\end{proof}

\noindent
By Corollary~\ref{hilton frappe}
and Lemma~\ref{paderborn 1},
it follows:

\begin{corollary}
\label{standart deviation}
The valuation
${\mathsf{V}}
 =
 \lambda ({\mathsf{E}}
          +
          {\mathsf{\gamma}}
          \cdot
          {\mathsf{Var}})
 +
 (1-\lambda)
 {\mathsf{V}}^{{\mathsf{\nu}}}$,
with $0 < \lambda \leq 1$,
where
${\mathsf{\nu}}$
is increasing and strictly convex,
and with ${\mathsf{\gamma}} > 0$,
is ${\mathsf{E}}$-strictly concave.
\end{corollary}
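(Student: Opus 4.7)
The plan is to derive this as an immediate instantiation of Corollary~\ref{hilton frappe}, taking ${\mathsf{V}}^{(1)} := {\mathsf{E}} + {\mathsf{\gamma}} \cdot {\mathsf{Var}}$ and ${\mathsf{V}}^{(2)} := {\mathsf{V}}^{{\mathsf{\nu}}}$, and verifying the two hypotheses that ${\mathsf{V}}^{(1)}$ is ${\mathsf{E}}$-strictly concave and ${\mathsf{V}}^{(2)}$ is concave. The main content of the corollary is therefore already packaged into the closure statement of Corollary~\ref{hilton frappe}, and the proof reduces to checking that each building block is of the required type.

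First, I would invoke Lemma~\ref{paderborn 1} directly to conclude that ${\mathsf{V}}^{(1)} = {\mathsf{E}} + {\mathsf{\gamma}} \cdot {\mathsf{Var}}$ is ${\mathsf{E}}$-strictly concave; nothing further is required on this side. Then, to handle ${\mathsf{V}}^{(2)} = {\mathsf{V}}^{{\mathsf{\nu}}}$, I would recall the discussion preceding Lemma~\ref{paderborn}: since ${\mathsf{\nu}}$ is increasing and strictly convex, its inverse ${\mathsf{\nu}}^{-1}$ is concave and increasing. For any fixed partial mixed profile ${\bf p}_{-i}$, the quantity $\sum_{{\bf s} \in {\cal S}} {\bf p}({\bf s}) \cdot {\mathsf{\nu}}({\mathsf{\mu}}_{i}({\bf s}))$ is affine in $p_{i}$; applying the concave increasing map ${\mathsf{\nu}}^{-1}$ preserves concavity, so ${\mathsf{V}}^{{\mathsf{\nu}}}_{i}(\cdot, {\bf p}_{-i})$ is concave in $p_{i}$, which is exactly what Corollary~\ref{hilton frappe} demands of ${\mathsf{V}}^{(2)}$.

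With both ingredients verified and $0 < \lambda \leq 1$ matching the hypothesis of Corollary~\ref{hilton frappe}, the conclusion that
\[
{\mathsf{V}} = \lambda \cdot ({\mathsf{E}} + {\mathsf{\gamma}} \cdot {\mathsf{Var}}) + (1 - \lambda) \cdot {\mathsf{V}}^{{\mathsf{\nu}}}
\]
is ${\mathsf{E}}$-strictly concave follows without further work. There is no real obstacle here: the only mild point worth flagging explicitly is that the ${\mathsf{E}}$-strict concavity of the first summand alone (together with $\lambda > 0$) is enough to drag the strict inequality across the convex combination whenever two mixed strategies induce distinct expectations, since the second summand contributes only a non-strict concave inequality.
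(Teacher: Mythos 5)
Your proposal is correct and follows exactly the route the paper takes: Corollary~\ref{standart deviation} is stated there as an immediate consequence of Corollary~\ref{hilton frappe} (closure under convex combination with a concave valuation) together with Lemma~\ref{paderborn 1}, with the concavity of ${\mathsf{V}}^{{\mathsf{\nu}}}$ already noted in the text where ${\mathsf{\nu}}$-valuations are introduced. Your added justification that ${\mathsf{V}}^{{\mathsf{\nu}}}_{i}(\cdot,{\bf p}_{-i})$ is concave (a concave ${\mathsf{\nu}}^{-1}$ composed with an affine function of $p_{i}$) is a correct filling-in of a detail the paper merely asserts.
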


\section{Equilibria and Their Properties}
\label{equilibria}

\subsection{${\mathsf{V}}$-Equilibrium}

Fix a player $i \in [n]$.
The pure strategy $p_{i}^{\ell}$
is a {\it ${\mathsf{V}}_{i}$-best pure response}
to a partial mixed profile ${\bf p}_{-i}$
if
{
\small
\begin{eqnarray*}
{\mathsf{V}}_{i}\left( p_{i}^{\ell},
                       {\bf p}_{-i}
                \right)
& = &
  \min
      \left\{ {\mathsf{V}}_{i}\left( p_{i}^{\ell^{\prime}},
                                     {\bf p}_{-i}
                              \right)
                 \mid
                 \ell^{\prime}
                 \in
                 S_{i}
    \right\}\, ;
\end{eqnarray*}
}
so,
the pure strategy $p_{i}^{\ell}$
minimizes the valuation
${\mathsf{V}}_{i}\left( .,
                        {\bf p}_{-i}
                 \right)$
of player $i$
over her pure strategies.
The mixed strategy $p_{i}$
is a {\it ${\mathsf{V}}_{i}$-best response}
to 
${\bf p}_{-i}$
if
{
\small
\begin{eqnarray*}
{\mathsf{V}}_{i}\left( p_{i},
              {\bf p}_{-i}
           \right)
& = &
 \min
        \left\{ {\mathsf{V}}_{i}\left( p_{i}^{\prime},
                                                      {\bf p}_{-i}
                                             \right)
                   \mid
                   p_{i}^{\prime}
                   \in
                   \mathsf{\Delta}(S_{i})
       \right\}\, ;
\end{eqnarray*}
}
so,
the mixed strategy $p_{i}$
minimizes the valuation
${\mathsf V}_{i}\left( .,
                       {\bf p}_{-i}
                \right)$
of player $i$
over her mixed strategies.
The mixed profile ${\bf p}$
is a {\it ${\mathsf{V}}$-equilibrium}
if for each player $i$, 
the mixed strategy $p_{i}$
is a ${\mathsf{V}}_{i}$-best response
to ${\bf p}_{-i}$;
so,
no player could unilaterally deviate
to another mixed strategy
to reduce her cost.
Denote as
{\sf $\exists {\mathsf{V}}$-EQUILIBRIUM}
the algorithmic problem
of deciding, 
given a game ${\mathsf{G}}$,
the existence of a ${\mathsf{V}}$-equilibrium
for ${\mathsf{G}}^{{\mathsf{V}}}$.

\subsection{The Optimal-Value Property}

We show:


\begin{proposition}[The Optimal-Value Property]
\label{constant value}
Fix a game ${\mathsf{G}}$,
a player $i \in [n]$
and a partial mixed profile
${\bf p}_{-i}$.
Assume that
{\sf (A.1)}
the valuation
${\mathsf{V}}_{i} \left( p_{i},
                         {\bf p}_{-i}
                  \right)$
is concave in $p_{i}$,
and
{\sf (A.2)}
$\widehat{p}_{i}$
is a ${\mathsf{V}}_{i}$-best response
to ${\bf p}_{-i}$.
Then,
for any mixed strategy
$q_{i}$
with
$\sigma \left( q_{i}
        \right)
 \subseteq
 \sigma \left( \widehat{p}_{i}
       \right)$,
${\mathsf{V}}_{i}\left( q_{i},
                        {\bf p}_{-i}
                 \right)
 =
 {\mathsf{V}}_{i}\left( \widehat{p}_{i},
                       {\bf p}_{-i}
                \right)$.
\end{proposition}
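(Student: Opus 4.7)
The plan is to establish the equality by a two-sided inequality argument that combines the best-response property of $\widehat{p}_i$ with concavity of $V_i$. Since $\widehat{p}_i$ is a $V_i$-best response by assumption (A.2), one direction is immediate: $V_i(\widehat{p}_i, \mathbf{p}_{-i}) \leq V_i(q_i, \mathbf{p}_{-i})$ for every mixed strategy $q_i$. The real task is the reverse inequality $V_i(\widehat{p}_i, \mathbf{p}_{-i}) \geq V_i(q_i, \mathbf{p}_{-i})$, which is where the support containment $\sigma(q_i) \subseteq \sigma(\widehat{p}_i)$ enters.

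The key idea is to exhibit $\widehat{p}_i$ as an interior convex combination involving $q_i$. I would pick a sufficiently small $\epsilon > 0$, specifically any $\epsilon$ with $0 < \epsilon \leq \min\{\widehat{p}_i(\ell)/q_i(\ell) : \ell \in \sigma(q_i)\}$; this minimum is strictly positive precisely because $\sigma(q_i) \subseteq \sigma(\widehat{p}_i)$. I would then define $r_i := \frac{1}{1-\epsilon}(\widehat{p}_i - \epsilon q_i)$ and verify it is a bona fide mixed strategy in $\mathsf{\Delta}(S_i)$: componentwise non-negativity follows from the choice of $\epsilon$, and normalization is immediate since $\sum_\ell r_i(\ell) = (1-\epsilon)^{-1}(1 - \epsilon) = 1$. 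By construction $\widehat{p}_i = \epsilon\, q_i + (1-\epsilon)\, r_i$.

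Applying the concavity assumption (A.1) to this decomposition yields
\begin{eqnarray*}
V_i(\widehat{p}_i, \mathbf{p}_{-i}) & \geq & \epsilon\, V_i(q_i, \mathbf{p}_{-i}) + (1-\epsilon)\, V_i(r_i, \mathbf{p}_{-i}).
\end{eqnarray*}
Since $\widehat{p}_i$ is a best response, $V_i(r_i, \mathbf{p}_{-i}) \geq V_i(\widehat{p}_i, \mathbf{p}_{-i})$; substituting and rearranging gives $\epsilon\, V_i(\widehat{p}_i, \mathbf{p}_{-i}) \geq \epsilon\, V_i(q_i, \mathbf{p}_{-i})$, and dividing by $\epsilon > 0$ finishes the reverse inequality. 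Combined with the trivial direction, we obtain the claimed equality.

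I do not foresee any genuine obstacle; the only point that needs care is the construction of $r_i$, which is exactly where the hypothesis $\sigma(q_i) \subseteq \sigma(\widehat{p}_i)$ is used. The trivial special cases $q_i = \widehat{p}_i$ and $V_i(q_i, \mathbf{p}_{-i}) = V_i(\widehat{p}_i, \mathbf{p}_{-i})$ on the nose fit the argument without modification. Note that the proof uses only plain concavity, not $\mathsf{E}$-strict concavity, so the statement applies to all concave $(\mathsf{E}+\mathsf{R})$-valuations the paper later invokes.
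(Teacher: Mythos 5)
Your proof is correct and takes essentially the same route as the paper's: the strategy $r_{i} = \frac{1}{1-\epsilon}\left( \widehat{p}_{i} - \epsilon\, q_{i} \right)$ you construct is precisely the extension point $\widehat{q}_{i}$ that the paper obtains geometrically by prolonging the segment from $q_{i}$ through $\widehat{p}_{i}$ inside ${\mathsf{\Delta}}\left( \sigma\left( \widehat{p}_{i} \right) \right)$, and both arguments then play concavity at the interior point $\widehat{p}_{i}$ against the best-response lower bound at the two endpoints (yours directly, the paper's by contradiction). One small nit: when $q_{i} = \widehat{p}_{i}$ your bound allows $\epsilon = 1$, which makes $r_{i}$ undefined, so that (trivially true) case should be set aside explicitly rather than claimed to fit the construction without modification.
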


\begin{proof}
Set
$A
 :=
 {\mathsf{V}}_{i}\left( \widehat{p}_{i},
                                    {\bf p}_{-i}
                            \right)$.
Since $\widehat{p}_{i}$
is a ${\mathsf{V}}_{i}$-best response to ${\bf p}_{i}$,
it follows that
${\mathsf{V}}_{i}\left( q_{i},
                        {\bf p}_{-i}
                 \right)
 \geq
 A$
for each mixed strategy
$q_{i} \in {\mathsf{\Delta}}(S_{i})$
with
$\sigma \left( q_{i}
        \right)
 \subseteq
 \sigma \left( \widehat{p}_{i}
        \right)$.
Assume,
by way of contradiction,
that there is a mixed strategy
$q_{i} \in {\mathsf{\Delta}}(S_{i})$
with
$\sigma \left( q_{i}
        \right)
 \subseteq
 \sigma \left( p_{i}
        \right)$
such that
${\mathsf{V}}_{i}\left( q_{i},
                        {\bf p}_{-i}
                 \right)
 >
 A$.

Denote as
${\mathsf{\Delta}}\left( \sigma \left( \widehat{p}_{i}
                                \right)
                  \right)$
the set of mixed strategies for player $i$
with supports
contained in
$\sigma \left( \widehat{p}_{i}
        \right)$;
so,
${\mathsf{\Delta}}\left( \sigma \left( \widehat{p}_{i}
                                \right)
                  \right)$
is a subspace in
$[0, 1]^{|\sigma \left( \widehat{p}_{i}
                 \right)|}$.

For each $\lambda \in [0, 1]$,
the strategies
$\lambda \widehat{p}_{i}
 +
 (1 - \lambda) q_{i}$
form a line segment in
${\mathsf{\Delta}}(\sigma \left( \widehat{p}_{i}
                          \right))$.
Extend this line segment
to some strategy
$\widehat{q}_{i}
 \in
 {\mathsf{\Delta}}\left( \sigma \left( \widehat{p}_{i}
                                \right)
                  \right)$
with $\widehat{q}_{i}
      \neq
      \widehat{p}_{i}$
so that
$\widehat{p}_{i}$
is an interior point
on the line segment connecting
$q_{i}$ and $\widehat{q}_{i}$.
The extension is possible since
$\widehat{p}_{i}(j)
 >
 0$
for each strategy $j \in \sigma \left( \widehat{p}_{i}
                                \right)$,
so that
$\widehat{p}_{i}$
is an interior point in
${\mathsf{\Delta}}\left( \sigma \left( \widehat{p}_{i}
                                \right)
                  \right)$.
Since
$\widehat{p}_{i}$
is a ${\mathsf{V}}_{i}$-best response
to ${\bf p}_{-i}$,
it follows that
${\mathsf{V}}_{i}\left( \widehat{q}_{i},
                        {\bf p}_{-i}
                 \right)
 \geq
 A$.
So,
there are points
$\widehat{p}_{i}$,
$q_{i}$,
$\widehat{q}_{i}$
so that:
\begin{itemize}

\item
$\widehat{p}_{i}$
is an interior point on
the line segment connecting
$q_{i}$ and
$\widehat{q}_{i}$.

\item
${\mathsf{V}}_{i}\left( \widehat{p}_{i},
                        {\bf p}_{-i}
                 \right)
 =
 A$,
${\mathsf{V}}_{i}\left( q_{i},
                        {\bf p}_{-i}
                 \right)
 >
 A$
and
${\mathsf{V}}_{i}\left( \widehat{q}_{i},
                        {\bf p}_{-i}
                 \right)
 \geq
 A$.

\end{itemize}
A contradiction to the concavity of
${\mathsf{V}}_{i} \left( .,
                         {\bf p}_{-i}
                  \right)$.
\end{proof}

\subsection{The Strong Equilibrium and Weak Equilibrium Properties}

\noindent
The mixed profile ${\bf p}$
has the {\it Strong Equilibrium} property~\cite[Section 2.6]{MM12}
for player $i$ 
in the game ${\mathsf{G}}^{\mathsf{V}}$
if
for each strategy
$\ell \in \sigma (p_{i})$,
{
\small
\begin{eqnarray*}
      {\mathsf{V}}_{i}\left( p_{i}^{\ell},
                             {\bf p}_{-i}
                      \right)
& = &
   \min \left\{ {\mathsf{V}}_{i}\left( p_{i}^{\ell'},
                                          {\bf p}_{-i}
                                   \right)
                   \mid
                   \ell' \in S_{i}
     \right\}\, ;
\end{eqnarray*}
}
so,
each strategy in the support
of player $i$
is a ${\mathsf{V}}_{i}$-best pure response
to the partial mixed profile
${\bf p}_{-i}$.
Clearly,
the {\it Optimal-Value} property for player $i$
implies the {\it Strong Equilibrium} property for player $i$;
Proposition~\ref{constant value}
extends~\cite[Proposition 2.1]{MM12},
establishing the {\it Strong Equilibrium} property
with the same assumptions.

\noindent
The mixed profile ${\bf p}$
has the {\it Weak Equilibrium} property~\cite[Section 2.6]{MM12}
for player $i \in [n]$
in the game ${\mathsf{G}}^{\mathsf{V}}$
if for each pair of strategies
$\ell, \ell^{\prime} \in \sigma (p_{i})$,
${\mathsf{V}}_{i}\left( p_{i}^{\ell},
              {\bf p}_{-i}
           \right)
= {\mathsf{V}}_{i}\left( p_{i}^{\ell^{\prime}},
                             {\bf p}_{-i}
                  \right)$. 
The mixed profile ${\bf p}$
has the {\it Weak Equilibrium} property~\cite[Section 2.6]{MM12}
in 
${\mathsf G}^{\mathsf V}$
if it has the
{\it Weak Equilibrium} property
for each player $i \in [n]$
in ${\mathsf G}^{\mathsf V}$.
(So,
{\it Strong Equilibrium}
implies {\it Weak Equilibrium}.)

\subsection{The Weak-Equilibrium-for-Expectation Property}
\label{wee}

\noindent
We introduce:

\begin{center}
\fbox{
\begin{minipage}{5.9in}
\begin{definition}[The Weak-Equilibrium-for-Expectation Property]
The valuation ${\mathsf{V}}$
has the
\emph{{\em Weak-Equilibrium-for-Expectation}} property
if the following condition holds
for every game ${\mathsf{G}}$:
For each player $i \in [n]$,
if $p_{i}$ is a
${\mathsf{V}}_{i}$-best-response
to ${\bf p}_{-i}$,
then
${\bf p}$
has the {\it Weak Equilibrium} property
for player $i$
in the game ${\mathsf{G}}^{{\mathsf{E}}}$:
for each pair of strategies
$\ell, \ell^{\prime} \in \sigma (p_{i})$,
$      {\mathsf{E}}_{i}\left( p_{i}^{\ell},
                             {\bf p}_{-i}
                      \right)
= {\mathsf{E}}_{i}\left( p_{i}^{\ell^{\prime}},
                             {\bf p}_{-i}
                      \right)$.
\end{definition}
\end{minipage}
}
\end{center}
\noindent
We now prove that
${\mathsf{E}}$-strict concavity
implies the
{\it Weak-Equilibrium-for-Expectation}
property:

\begin{proposition} 
\label{hilton}
Take a player $i \in [n]$
where
${\mathsf{V}}_{i}$
is {\it ${\mathsf{E}}$-strictly concave}.
Then,
${\mathsf{V}}$
has the
{\it Weak-Equilibrium-for-Expectation} property
for player $i$.
\end{proposition}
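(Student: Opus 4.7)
The plan is to argue by contradiction, leveraging the Optimal-Value property (Proposition~\ref{constant value}) to pin down $\mathsf{V}_i$ as a constant on the simplex spanned by the supported pure strategies, and then use the strict inequality of $\mathsf{E}$-strict concavity to break that constancy.

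Concretely, suppose $p_i$ is a $\mathsf{V}_i$-best response to $\mathbf{p}_{-i}$, and assume toward contradiction that there exist strategies $\ell, \ell' \in \sigma(p_i)$ with $\mathsf{E}_i(p_i^{\ell}, \mathbf{p}_{-i}) \neq \mathsf{E}_i(p_i^{\ell'}, \mathbf{p}_{-i})$. Since $\mathsf{V}_i$ is $\mathsf{E}$-strictly concave it is, in particular, concave in $p_i$ (Condition~{\sf (1)} of Definition~\ref{e strict concavity}), so the hypotheses {\sf (A.1)}, {\sf (A.2)} of Proposition~\ref{constant value} are met with $\widehat{p}_i = p_i$. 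First I would apply the Optimal-Value property to the pure strategies $p_i^{\ell}$ and $p_i^{\ell'}$ (whose supports lie in $\sigma(p_i)$) to conclude
\[
\mathsf{V}_i\!\left(p_i^{\ell}, \mathbf{p}_{-i}\right) \;=\; \mathsf{V}_i\!\left(p_i^{\ell'}, \mathbf{p}_{-i}\right) \;=\; \mathsf{V}_i\!\left(p_i, \mathbf{p}_{-i}\right).
\]

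Next, fix any $\lambda \in (0,1)$ and set $q_i := \lambda\, p_i^{\ell} + (1-\lambda)\, p_i^{\ell'}$. Because the expectations $\mathsf{E}_i(p_i^{\ell}, \mathbf{p}_{-i})$ and $\mathsf{E}_i(p_i^{\ell'}, \mathbf{p}_{-i})$ differ by hypothesis, Condition~{\sf (2)} of Definition~\ref{e strict concavity} (applied with $p_i' = p_i^{\ell}$ and $p_i'' = p_i^{\ell'}$) gives the strict inequality
\[
\mathsf{V}_i\!\left(q_i, \mathbf{p}_{-i}\right) \;>\; \lambda\, \mathsf{V}_i\!\left(p_i^{\ell}, \mathbf{p}_{-i}\right) + (1-\lambda)\, \mathsf{V}_i\!\left(p_i^{\ell'}, \mathbf{p}_{-i}\right) \;=\; \mathsf{V}_i\!\left(p_i, \mathbf{p}_{-i}\right),
\]
using the constancy established above. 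On the other hand, $\sigma(q_i) \subseteq \{\ell, \ell'\} \subseteq \sigma(p_i)$, so a second application of Proposition~\ref{constant value} forces $\mathsf{V}_i(q_i, \mathbf{p}_{-i}) = \mathsf{V}_i(p_i, \mathbf{p}_{-i})$. The two conclusions together yield the desired contradiction, completing the proof.

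There is no real obstacle here beyond lining up the two invocations of the Optimal-Value property correctly; the main conceptual point is that concavity plus best-response already flattens $\mathsf{V}_i$ on the support simplex, while $\mathsf{E}$-strict concavity insists that such flatness can only happen when all supported pure strategies share a single conditional expectation.
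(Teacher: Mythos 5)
Your proof is correct and follows essentially the same route as the paper's: both arguments invoke the Optimal-Value property (Proposition~\ref{constant value}) to force $\mathsf{V}_i$ to be constant on mixtures of supported pure strategies, and then contradict this with the strict inequality from Condition~{\sf (2)} of ${\mathsf{E}}$-strict concavity applied to the two supported strategies with differing conditional expectations. The only cosmetic difference is that you work with a general $\lambda \in (0,1)$ and spell out both invocations of the Optimal-Value property, whereas the paper fixes $\lambda = \frac{1}{2}$.
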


\begin{proof}
Assume,
by way of contradiction,
that
${\mathsf{V}}$
does {\em not} have the
{\it Weak-Equilibrium-for-Expectation} property
for player $i$.
Then,
there is a game ${\mathsf{G}}$,
a partial mixed profile
${\bf p}_{-i}$
and a mixed strategy $p_{i}$
which is a
${\mathsf{V}}_{i}$-best-response to ${\bf p}_{-i}$
such that
for some strategies
$r, t \in \sigma (p_{i})$,
${\mathsf{E}}_{i} \left( p_{i}^{r},
                         {\bf p}_{-i}
                  \right)
 \neq
 {\mathsf{E}}_{i} \left( p_{i}^{t},
                         {\bf p}_{-i}
                  \right)$.
Since ${\mathsf{V}}_{i}$
is ${\mathsf{E}}$-strictly concave,
this implies that
{
\small
\begin{eqnarray*}
{\mathsf{V}}_{i} \left( \frac{\textstyle 1}
                              {\textstyle 2}
                         p_{i}^{r}
                         +
                         \frac{\textstyle 1}
                              {\textstyle 2}
                         p_{i}^{t},
                         {\bf p}_{-i}
                  \right)
& > &
 \frac{\textstyle 1}
      {\textstyle 2}
 {\mathsf{V}}_{i} \left( p_{i}^{r},
                         {\bf p}_{-i}
                  \right)
 +
 \frac{\textstyle 1}
      {\textstyle 2}
 {\mathsf{V}}_{i} \left( p_{i}^{t},
                         {\bf p}_{-i}
                  \right)\, .
\end{eqnarray*}
}
Since ${\mathsf{V}}_{i}$ is concave,
the {\it Optimal-Value} property (Proposition~\ref{constant value})
implies that
{
\small
\begin{eqnarray*}
{\mathsf{V}}_{i} \left( \frac{\textstyle 1}
                              {\textstyle 2}
                         p_{i}^{r}
                         +
                         \frac{\textstyle 1}
                              {\textstyle 2}
                         p_{i}^{t},
                         {\bf p}_{-i}
                  \right)
& = &
 {\mathsf{V}}_{i} \left( p_{i}^{r},
                         {\bf p}_{-i}
                  \right)\ \
=\ \
 {\mathsf{V}}_{i} \left( p_{i}^{t},
                         {\bf p}_{-i}
                  \right)\, .
\end{eqnarray*}
}
A contradiction.
\end{proof}

\noindent
By Lemma~\ref{paderborn 1} and Corollary~\ref{standart deviation},
Proposition~\ref{hilton}
immediately implies:

\begin{corollary}
\label{brand new}
Fix an
$({\mathsf{E}}
  +
  {\mathsf{R}})$-valuation
${\mathsf{V}}$,
where
{\sf (1)}
${\mathsf{R}}
 =
 {\mathsf{\gamma}}
 \cdot
 {\mathsf{Var}}$,
or
{\sf (2)}
${\mathsf{R}}
 =
 {\mathsf{\gamma}}
 \cdot
 {\mathsf{SD}}$,
or
{\sf (3)}
${\mathsf{V}}
 =
 \lambda ({\mathsf{E}}
          +
          {\mathsf{\gamma}}
          \cdot
          {\mathsf{Var}})
 +
 (1-\lambda)
 {\mathsf{V}}^{{\mathsf{\nu}}}$,
with $0 < \lambda \leq 1$,
where
${\mathsf{\nu}}$
is increasing and strictly convex,
and with
${\mathsf{\gamma}} > 0$.
Then,
${\mathsf{V}}$
has the {\it Weak-Equilibrium-for-Expectation} property.
\end{corollary}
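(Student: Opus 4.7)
The plan is to observe that Corollary~\ref{brand new} is a direct assembly of two results already established in the excerpt, so the proof amounts to verifying the hypotheses of Proposition~\ref{hilton} for each of the three valuations in turn. Since Proposition~\ref{hilton} says that \emph{any} $\mathsf{E}$-strictly concave valuation enjoys the \emph{Weak-Equilibrium-for-Expectation} property for every player, the entire task reduces to certifying $\mathsf{E}$-strict concavity in each case.

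For cases \textsf{(1)} and \textsf{(2)}, I would simply invoke Lemma~\ref{paderborn 1}, which states that $\mathsf{E} + \gamma \cdot \mathsf{Var}$ and $\mathsf{E} + \gamma \cdot \mathsf{SD}$ are $\mathsf{E}$-strictly concave whenever $\gamma > 0$. For case \textsf{(3)}, the valuation
\[
\mathsf{V}
 =
 \lambda\,(\mathsf{E} + \gamma \cdot \mathsf{Var})
 +
 (1 - \lambda)\,\mathsf{V}^{\mathsf{\nu}}
\]
with $0 < \lambda \le 1$ is a convex combination of the $\mathsf{E}$-strictly concave valuation $\mathsf{E} + \gamma \cdot \mathsf{Var}$ (again by Lemma~\ref{paderborn 1}) and the concave valuation $\mathsf{V}^{\mathsf{\nu}}$ (concave because $\mathsf{\nu}$ is increasing and strictly convex, so $\mathsf{\nu}^{-1}$ is concave, as noted in Section~\ref{sub framework}). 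The closure property in Corollary~\ref{hilton frappe} — or equivalently the ready-made Corollary~\ref{standart deviation} — then delivers that $\mathsf{V}$ is itself $\mathsf{E}$-strictly concave.

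Having established $\mathsf{E}$-strict concavity in all three cases, a single appeal to Proposition~\ref{hilton}, applied to every player $i \in [n]$, yields the \emph{Weak-Equilibrium-for-Expectation} property, completing the proof. There is no real obstacle here: the content of the corollary lies entirely in Lemma~\ref{paderborn 1}, Corollary~\ref{standart deviation} and Proposition~\ref{hilton}, and Corollary~\ref{brand new} merely packages the three valuations of interest for later use in the complexity theorems of Section~\ref{two players}.
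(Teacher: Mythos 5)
Your proof is correct and follows exactly the route the paper takes: the corollary is stated in the text as an immediate consequence of Lemma~\ref{paderborn 1}, Corollary~\ref{standart deviation} and Proposition~\ref{hilton}, which is precisely the assembly you describe. No gaps.
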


\noindent
We now turn to a particular
concave valuation
and exploit the {\it Optimal-Value} property
(Proposition~\ref{constant value})
to prove:

\begin{proposition}
\label{concavity implies weak equilibrium}
Fix a player $i \in [n]$,
and consider the concave valuation
{
\small
\begin{eqnarray*}
{\mathsf{V}}_{i}
& = & 
  \alpha_{0}
      \cdot
      {\mathsf{E}}_{i}
      +
      \sum_{2 \leq k \leq \ell \mid \mbox{$k$ is even}}
        \alpha_{k}
        \cdot
        {\mathsf{kM}}_{i}\, ,
\end{eqnarray*}
}
for some constants
$\alpha_{k} \geq 0$,
$0 \leq k \leq \ell$.
Then,
${\mathsf{V}}_{i}$
has the
{\it Weak-Equilibrium-for-Expectation} property
for player $i$.
\end{proposition}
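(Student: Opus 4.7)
The plan is to apply the Optimal-Value property (Proposition~\ref{constant value}) to reduce Weak-Equilibrium-for-Expectation to a polynomial non-constancy argument along a line of mixed strategies. Fix a partial mixed profile ${\bf p}_{-i}$ and a ${\mathsf{V}}_{i}$-best response $p_{i}$ to ${\bf p}_{-i}$. If $\alpha_{k} = 0$ for every even $k \in \{2, \ldots, \ell\}$, then ${\mathsf{V}}_{i} = \alpha_{0} \cdot {\mathsf{E}}_{i}$ and the Weak-Equilibrium-for-Expectation property reduces to the classical Nash best-response condition; otherwise, let $K$ be the largest even $k \in \{2, \ldots, \ell\}$ with $\alpha_{K} > 0$. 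Suppose, towards a contradiction, that there exist strategies $\ell_{1}, \ell_{2} \in \sigma(p_{i})$ with $a := {\mathsf{E}}_{i}(p_{i}^{\ell_{1}}, {\bf p}_{-i}) \neq {\mathsf{E}}_{i}(p_{i}^{\ell_{2}}, {\bf p}_{-i}) =: b$, and set $d := b - a \neq 0$. The line $q_{\lambda} := \lambda\, p_{i}^{\ell_{1}} + (1-\lambda)\, p_{i}^{\ell_{2}}$, $\lambda \in [0,1]$, satisfies $\sigma(q_{\lambda}) \subseteq \sigma(p_{i})$. Since ${\mathsf{V}}_{i}$ is concave, Proposition~\ref{constant value} forces $\lambda \mapsto {\mathsf{V}}_{i}(q_{\lambda}, {\bf p}_{-i})$ to be constant on $[0,1]$.

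I would then view this function as a polynomial in $\lambda$ and isolate the coefficient of $\lambda^{K}$. Writing $Y_{j} := {\mathsf{\mu}}_{i}(\ell_{j}, \cdot) - a_{j}$ with $a_{1} = a$ and $a_{2} = b$, and taking expectations under ${\bf s}_{-i} \sim {\bf p}_{-i}$ (so that $\mathbb{E}[Y_{j}] = 0$), one obtains
\[
{\mathsf{kM}}_{i}(q_{\lambda}, {\bf p}_{-i}) \;=\; \lambda\, \mathbb{E}\bigl[(Y_{1} - (1-\lambda)\,d)^{k}\bigr] \;+\; (1-\lambda)\, \mathbb{E}\bigl[(Y_{2} + \lambda\,d)^{k}\bigr].
\]
Expanding each bracket by the binomial theorem and using $\mathbb{E}[Y_{j}] = 0$ to drop the linear-in-$Y_{j}$ terms, the contribution from binomial index $r \in \{0, 2, 3, \ldots, k\}$ is a polynomial in $\lambda$ of degree $k - r + 1$; hence every $r \geq 2$ contributes at most degree $k - 1$. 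The $r = 0$ contribution is $d^{k}\bigl[\lambda(1-\lambda)^{k} + (1-\lambda)\lambda^{k}\bigr]$ (using that $k$ is even), whose $\lambda^{k+1}$ coefficient is $(-1)^{k} - 1 = 0$ and whose $\lambda^{k}$ coefficient equals $1 - k$; consequently the coefficient of $\lambda^{k}$ in ${\mathsf{kM}}_{i}(q_{\lambda}, {\bf p}_{-i})$ equals $d^{k}(1 - k)$.

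Since $\alpha_{0}\, {\mathsf{E}}_{i}(q_{\lambda}, {\bf p}_{-i})$ is linear in $\lambda$ and every ${\mathsf{kM}}_{i}(q_{\lambda}, {\bf p}_{-i})$ with even $k < K$ has degree at most $k < K$, the coefficient of $\lambda^{K}$ in ${\mathsf{V}}_{i}(q_{\lambda}, {\bf p}_{-i})$ equals $\alpha_{K}\, d^{K}\, (1 - K)$, which is non-zero since $\alpha_{K} > 0$, $1 - K \leq -1$, and $d \neq 0$. This contradicts the constancy of ${\mathsf{V}}_{i}(q_{\lambda}, {\bf p}_{-i})$ on $[0,1]$, so $a = b$, establishing the Weak-Equilibrium-for-Expectation property for player $i$. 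The main obstacle I anticipate is the polynomial bookkeeping: verifying that the $\lambda^{k+1}$ contributions from $\lambda(1-\lambda)^{k}$ and $(1-\lambda)\lambda^{k}$ cancel exactly when $k$ is even, and that the $r \geq 2$ binomial terms (carrying the higher central moments $\mathbb{E}[Y_{j}^{r}]$) cannot perturb the $\lambda^{K}$ coefficient of ${\mathsf{V}}_{i}(q_{\lambda}, {\bf p}_{-i})$.
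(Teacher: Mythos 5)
Your proof is correct, and it rests on the same two pillars as the paper's own argument: the \emph{Optimal-Value} property (Proposition~\ref{constant value}) to force ${\mathsf{V}}_{i}$ to be constant over convex combinations of supported strategies, and the observation that the top-degree coefficient of the resulting polynomial is a nonzero multiple of a power of the expectation gap, so constancy forces that gap to vanish. The execution differs in a way worth noting. The paper expands ${\mathsf{2kM}}_{i}({\bf p})$ as a multivariate polynomial in ${\bf p}$ via the binomial theorem in ${\mathsf{\mu}}_{i}({\bf s})$ and ${\mathsf{E}}_{i}({\bf p})$, and isolates the single degree-$2\ell$ term $-(2\ell-1)\left({\mathsf{E}}_{i}({\bf p})\right)^{2\ell}$; concluding from there that ${\mathsf{E}}_{i}({\bf p})$ is constant requires a degree-comparison argument on the simplex that the paper leaves implicit. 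You instead restrict to the one-parameter segment $q_{\lambda}$ between two supported pure strategies and center the costs at their conditional means $a$ and $b$, turning the constancy statement into an honest single-variable polynomial identity on $[0,1]$ whose $\lambda^{K}$ coefficient you compute exactly as $\alpha_{K}\,d^{K}(1-K)$ (after checking that the $\lambda^{K+1}$ contributions cancel because $K$ is even and that the terms carrying $\mathbb{E}[Y_{j}^{r}]$ for $r\geq 2$ have degree at most $K-1$). This univariate reduction is both sufficient --- the property is a statement about pairs of supported strategies --- and makes the leading-coefficient step airtight. The only caveat, shared with the paper, is the degenerate situation where all coefficients vanish (so ${\mathsf{V}}_{i}\equiv 0$ and every strategy is a best response); this is excluded in practice since ${\mathsf{V}}_{i}$ must agree with ${\mathsf{\mu}}_{i}$ on pure profiles, forcing $\alpha_{0}=1$, and your explicit treatment of the case where all even-moment coefficients vanish covers the rest.
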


\begin{proof}
Fix a game ${\mathsf{G}}$.
Clearly,
for a mixed profile ${\bf p}$,
{
\small
\begin{eqnarray*}
&   & {\mathsf{2kM}}_{i}({\bf p})                                            \\
& = & \sum_{{\bf s}
            \in
            S}
        \left( {\mathsf{\mu}}_{i}({\bf s})
               -
               {\mathsf{E}}_{i}({\bf p})
        \right)^{2k}
        p({\bf s})                                                          \\
& = & \sum_{{\bf s}
            \in
            S}
        \left( \sum_{t=0}^{2k}
                 (-1)^{t}\,
                 {2k \choose t}\,
                 \left( {\mathsf{\mu}}_{i}({\bf s})
                 \right)^{t}\,
                 \left( {\mathsf{E}}_{i}({\bf p})
                 \right)^{2k-t}
        \right)
        p({\bf s})                                                         \\
& = & \sum_{t=0}^{2k}
        (-1)^{t}\,
        {2k \choose t}\,
        \left( \sum_{{\bf s}
                     \in
                     S}
                 \left( {\mathsf{\mu}}_{i}({\bf s})
                 \right)^{t}
                 p({\bf s})
        \right)
        \left( {\mathsf{E}}_{i}({\bf p})
        \right)^{2k-t}                                                     \\
& = & \underbrace{\sum_{t=2}^{2k}
        (-1)^{t}\,
        {2k \choose t}\,
        \left( \sum_{{\bf s}
                     \in
                     S}
                 \left( {\mathsf{\mu}}_{i}({\bf s})
                 \right)^{t}
                 p({\bf s})
        \right)
        \left( {\mathsf{E}}_{i}({\bf p})
        \right)^{2k-t}}_{\mbox{polynomial of degree bounded by $2k-1$ in ${\bf p}$}}
      -
      \underbrace{(2k-1)\,
                  \left( {\mathsf{E}}_{i}({\bf p})
                  \right)^{2k}}_{\mbox{polynomial of degree $2k$}}\, .
\end{eqnarray*}
}
Since
\begin{eqnarray*}
{\mathsf{V}}_{i}({\bf p})
& = &
 \alpha_{0}
 \cdot
 {\mathsf{E}}_{i}({\bf p})
 +
 \sum_{2 \leq k \leq \ell}
   \alpha_{2l}
   \cdot
   {\mathsf{2lM}}\, ,
\end{eqnarray*}   
so that
${\mathsf{V}}_{i}({\bf p})$
is the sum of
{\it (i)}
the highest-degree term
$-
(2\ell -1)\,
 \left( {\mathsf{E}}_{i}({\bf p})
 \right)^{2 \ell}$,
which is a polynomial of degree $2 \ell$ in ${\bf p}$,
and
{\it (ii)}
a polynomial of degree
bounded by $2 \ell -1$ in ${\bf p}$.
Since ${\mathsf{V}}_{i}({\bf p})$
is a concave polynomial in ${\bf p}$,
the {\it Optimal-Value} property
(Proposition~\ref{constant value})
implies that
${\mathsf{V}}_{i}({\bf p})$
is a {\em constant} polynomial in ${\bf p}$.
Hence,
it follows that
$-
 (2\, \ell -1)
 \cdot
 \left( {\mathsf{E}}_{i}({\bf p})
 \right)^{2\, \ell}$
is a constant polynomial in ${\bf p}$;
thus,
so is
${\mathsf{E}}_{i}({\bf p})$.
The {\it Weak-Equilibrium-for-Expectation} property
for player $i$ follows.
\end{proof}

\noindent
Brautbar {\em et al.}~\cite[Section 3.1]{BKS10}
study a class of valuations,
coming from the 
{\it Mean-Variance} paradigm
of Markowitz~\cite{M52}
and termed as
{\it Mean-Variance Preference Functions};
we rephrase their definition~\cite[Definition 1]{BKS10}
to fit into the adopted setting of minimization games:
\begin{definition}[\cite{BKS10}]
\label{kearns}
Fix a player $i \in [n]$.
A {\it Mean-Variance Preference Function}
is a valuation
${\mathsf{V}}_{i}\left( p_{i},
                        {\bf p}_{-i}
                 \right)
 :=
 {\mathsf{G}}_{i}\left( {\mathsf{E}}_{i}\left( p_{i},
                                               {\bf p}_{-i}
                                        \right),
                        {\mathsf{Var}}_{i}\left( p_{i},
                                                 {\bf p}_{-i}
                                          \right)
                 \right)$                                                                
which satisfies:
\begin{enumerate}

\item[$({\mathsf{1}})$]
${\mathsf{V}}_{i}\left( p_{i},
                        {\bf p}_{-i}
                 \right)$
is concave in $p_{i}$.

\item[$({\mathsf{2}})$]
${\mathsf{G}}_{i}$
is non-decreasing
in its first argument
(${\mathsf{E}}_{i}\left( p_{i},
                                       {\bf p}_{-i}
                              \right)$).

\item[$({\mathsf{3}})$]
Fix a partial mixed profile
${\bf p}_{-i}$
and a nonempty convex subset
${\mathsf{\Delta}}
 \subseteq
 {\mathsf{\Delta}}_{i}
 =
 {\mathsf{\Delta}}(S_{i})$
such that
${\mathsf{V}}_{i}\left( p_{i},
                        {\bf p}_{-i}
                 \right)$ 
is constant
on ${\mathsf{\Delta}}$.
Then,
both
${\mathsf{E}}_{i}\left( p_{i},
                        {\bf p}_{-i}
                 \right)$
and
${\mathsf{Var}}_{i}\left( p_{i},
                        {\bf p}_{-i}
                 \right)$
are constant
on ${\mathsf{\Delta}}$.

\end{enumerate}
\end{definition}

\noindent
So,
a Mean-Variance Preference Function
simultaneously generalizes
and restricts the
$({\mathsf{E}}
 +
 {\mathsf{R}})$-valuations;
it generalizes sum
to ${\mathsf{G}}$
but restricts ${\mathsf{R}}$
to ${\mathsf{Var}}$.
Note that Condition ${\mathsf{(3)}}$
in Definition~\ref{kearns}
may be seen as a generalization
of the {\it Weak-Equilibrium-for-Expectation} property
conditioned on the assumption that
${\mathsf{V}}_{i}\left( p_{i},
                        {\bf p}_{-i}
                 \right)$ 
is constant
on a nonempty convex set
${\mathsf{\Delta}}
 \subseteq 
 {\mathsf{\Delta}}_{i}$.   
It is proved in~\cite[Claim 1]{BKS10}
that ${\mathsf{E}}
 +
 {\mathsf{Var}}$
is a Mean-Variance Preference Function.
Since
${\mathsf{E}}
 +
 {\mathsf{Var}}$
is ${\mathsf{E}}$-strictly concave
(Lemma~\ref{paderborn 1})
and has the {\it Optimal-Value} property
(Proposition~\ref{constant value}),
their established Condition ${\mathsf{(3)}}$
is a special case
of our general result
that every ${\mathsf{E}}$-strictly concave valuation
has the
{\it Weak-Equilibrium-for-Expectation} property
(Proposition~\ref{hilton}).

\section{Two Strategies}
\label{two ordered links}

\subsection{Player-Specific Scheduling Games}
\label{player specific scheduling games}

\noindent
A {\it player-specific scheduling game}~\cite[Section 3]{MM12}
is equipped with
an integer {\it weight}
$\omega (i, i', \ell)$
for each triple of
a player $i \in [n]$,
a player $i' \in [n]$
and a strategy
$\ell \in S_{i}$,
also called {\it link,}
with $S_{1} =
   \ldots =
   S_{n}
   =
   [m]$;
$\omega (i, i', \ell)$
represents the load
due to player $i'$
incurred to player $i$
on link $\ell$.
Given the collection of weights
$\left\{ \omega (i, i', \ell)
\right\}_{i, i' \in [n],
     \ell \in [m]}$,
the cost function $\mathsf{\mu}_{i}$
is defined
by
${\mathsf{\mu_{i}}}\left( {\bf s}
                        \right)
 =  \sum_{i' \in [n]\
            \mid\
            s_{i'} = s_{i}
           }
    \omega \left( i, i', s_{i}
           \right)$.
In a
{\it player-specific scheduling game on
     two ordered links $1$ and $2$}~\cite[Section 5.2.2]{MM12},
for each pair of players $i, i' \in [n]$,
either
$\omega (i, i', 1)
 =
 \omega (i, i', 2)
 =
 0$
or
$\omega (i, i', 1)
 <
 \omega (i, i', 2)$.

We derive a combinatorial formula
for the {\it $k$-moment} valuation
of the cost of a player choosing
a link $\ell$
in a player-specific scheduling game.
The formula takes the form
of a {\it partition polynomial}~\cite{B27}:
a multivariable polynomial
defined by a sum
over partitions of the integer $k$.
The formula
uses the function
${\mathsf{f}}:
[0, 1]
\times
{\mathbb{N}}_{0}
\rightarrow
{\mathbb{R}}$
with
{
\small
\begin{eqnarray*}
      {\mathsf{f}}(x, j)
& := &
(-x)^{j}(1-x) + (1-x)^{j} x\, .
\end{eqnarray*}
}
\noindent
Note that
{
\small
\begin{eqnarray*}
      {\mathsf{f}}(x, j)
& = & \left\{ \begin{array}{ll}
                 x(1-x)\left( x^{j-1} + (1-x)^{j-1}
                       \right)\, ,
                 &
                 \mbox{for an even integer $j \geq 2$}           \\
                 x(1-x)\left( - x^{j-1} + (1-x)^{j-1}
                       \right)\, ,
                 &
                 \mbox{for an odd integer $j \geq 3$}
             \end{array}
       \right.\, .
\end{eqnarray*}
}
The following simple claim follows
by inspection.

\begin{lemma}
\label{by inspection}
The function
${\mathsf{f}}$
has the following properties:
\begin{enumerate}

\item
${\mathsf{f}}(x, 0) = 1$ for all $x \in [0, 1]$. \\
${\mathsf{f}}(x, 1) = 0$ for all $x \in [0, 1]$. \\
${\mathsf{f}}(0, j) = {\mathsf{f}}(1, j) = 0$
for all integers $j \geq 0$.

\item
For an even integer $j \geq 2$:                 \\
${\mathsf{f}}(x, j) > 0$
for all $0 < x < 1$.                            \\
${\mathsf{f}}(x, j)
 =
 {\mathsf{f}}(1-x, j)$
for all $x \in [0, 1]$.

\item
For an odd integer $j \geq 3$: \\
${\mathsf{f}}\left( \frac{\textstyle 1}
                         {\textstyle 2},
                    j
             \right)
 =
 0$.                                                    \\
${\mathsf{f}}(x, j)
 >
 0$
for all
$0 < x <
 \frac{\textstyle 1}
      {\textstyle 2}$.                                  \\
${\mathsf{f}}(x, j)
 <
 0$
for all
$\frac{\textstyle 1}
      {\textstyle 2}
 < x < 1$.                                              \\
${\mathsf{f}}(x, j)
 =
 -
 {\mathsf{f}}(1-x, j)$
for all $x \in [0, 1]$.

\end{enumerate}
\end{lemma}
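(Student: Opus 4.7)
The plan is to prove all three parts by direct computation, leveraging the factored forms of ${\mathsf{f}}(x,j)$ stated immediately before the lemma. The whole argument is routine verification, so the proposal is to lay out the cases cleanly rather than look for any clever unifying argument.

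For Part 1, I would simply substitute into the definition ${\mathsf{f}}(x,j)=(-x)^{j}(1-x)+(1-x)^{j}x$. For $j=0$, both exponents vanish to $1$, yielding ${\mathsf{f}}(x,0)=(1-x)+x=1$. For $j=1$, one gets $-x(1-x)+(1-x)x=0$. The boundary identities ${\mathsf{f}}(0,j)={\mathsf{f}}(1,j)=0$ (for $j\ge 1$, which is clearly the intended range since $j=0$ gives value $1$) follow because each of the two summands contains a factor of $x$ or $1-x$ that vanishes.

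For Part 2, I would invoke the factored form
\[
{\mathsf{f}}(x,j)\;=\;x(1-x)\bigl(x^{j-1}+(1-x)^{j-1}\bigr)
\]
valid for even $j\ge 2$ (where $(-x)^{j}=x^{j}$). For $0<x<1$, each of the three factors is strictly positive, giving ${\mathsf{f}}(x,j)>0$. The symmetry ${\mathsf{f}}(x,j)={\mathsf{f}}(1-x,j)$ is immediate from the invariance of $x(1-x)$ and of $x^{j-1}+(1-x)^{j-1}$ under the swap $x\leftrightarrow 1-x$.

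For Part 3, I would use the companion factored form
\[
{\mathsf{f}}(x,j)\;=\;x(1-x)\bigl((1-x)^{j-1}-x^{j-1}\bigr)
\]
valid for odd $j\ge 3$ (since $(-x)^{j}=-x^{j}$, so the sign of the first summand flips). At $x=\tfrac12$ the second factor equals $(1/2)^{j-1}-(1/2)^{j-1}=0$. For $0<x<\tfrac12$ we have $1-x>x>0$, so $(1-x)^{j-1}>x^{j-1}$ (raising a strict inequality between positive numbers to an integer power $\ge 2$ preserves it), giving ${\mathsf{f}}(x,j)>0$; symmetrically, for $\tfrac12<x<1$, the bracket is negative, so ${\mathsf{f}}(x,j)<0$. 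Antisymmetry ${\mathsf{f}}(1-x,j)=-{\mathsf{f}}(x,j)$ follows because $x(1-x)$ is invariant while the bracket $(1-x)^{j-1}-x^{j-1}$ changes sign under $x\leftrightarrow 1-x$.

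There is no real obstacle here; the only thing to be careful about is the interpretation of $j=0$ in the third bullet of Part 1 (where ${\mathsf{f}}(0,0)={\mathsf{f}}(1,0)=1\ne 0$), so the bullet should be read as applying to $j\ge 1$, consistent with the preceding two bullets that handle $j=0$ and $j=1$ explicitly.
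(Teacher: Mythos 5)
Your verification is correct and matches the paper's intent exactly: the paper offers no written proof beyond stating the two factored forms of ${\mathsf{f}}$ and declaring that the lemma ``follows by inspection,'' and your computation is precisely that inspection carried out. Your side remark about the third bullet of Part 1 is also right --- since ${\mathsf{f}}(x,0)=1$ identically, the claim ${\mathsf{f}}(0,j)={\mathsf{f}}(1,j)=0$ only holds for $j\ge 1$, a minor slip in the paper's statement that does not affect any of its later uses.
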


\noindent
We show:

\begin{proposition}
\label{formula for k moment}
Consider a player-specific
scheduling game
with $n$ players.
Then,
for each player $i \in [n]$,
for a link $\ell \in [m]$
and a mixed profile ${\bf p}$,
{
\small
\begin{eqnarray*}
      {\mathsf{kM}}_{i}\left( p_{i}^{\ell},
                              {\bf p}_{-i}
                       \right)
& = & \sum_{\begin{array}{c}
              \langle r_{1}, \ldots, r_{n}
              \rangle
              \in
              {\mathbb{N}}_{0}^{n}
              \mid
              \\
              \sum_{j \in [n]}
                r_{j}
              =
              k,
              r_{i} = 0
              \\
              \&\,
              \forall j \in [n]: r_{j} \neq 1
            \end{array}
           }
       \frac{k!}
            {r_{1}! \ldots r_{n}!}\,
      \prod_{j \in [n] \setminus \{ i \}}
                {\mathsf{f}} \left( p_{j}(\ell),
                                    r_{j}
                             \right)
                \left( \omega (i, j, \ell)
                \right)^{r_{j}}\, .
\end{eqnarray*}
}
\end{proposition}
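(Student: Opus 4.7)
The proof plan is to exploit the additive structure of player $i$'s cost on link $\ell$, combined with the independence of the other players' random choices, and then apply the multinomial theorem to $k$-th central moments.

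First, I would write player $i$'s cost on link $\ell$ as a sum of independent contributions. For each $j \in [n]\setminus\{i\}$, define the random variable $X_j = \omega(i,j,\ell)\cdot\mathbb{1}[s_j=\ell]$, so that, under $(p_i^{\ell},{\bf p}_{-i})$,
\[
{\mathsf{\mu}}_i({\bf s}) \;=\; \omega(i,i,\ell) + \sum_{j\neq i} X_j,
\qquad
{\mathsf{E}}_i(p_i^{\ell},{\bf p}_{-i}) \;=\; \omega(i,i,\ell) + \sum_{j\neq i} p_j(\ell)\,\omega(i,j,\ell).
\]
Setting $Y_j := X_j - {\mathbb{E}}[X_j] = \omega(i,j,\ell)\bigl(\mathbb{1}[s_j=\ell]-p_j(\ell)\bigr)$ yields ${\mathsf{\mu}}_i({\bf s}) - {\mathsf{E}}_i(p_i^{\ell},{\bf p}_{-i}) = \sum_{j\neq i} Y_j$, and the $Y_j$ are mutually independent because the mixed strategies $p_j$ are independent.

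Next, I apply the multinomial expansion:
\[
{\mathsf{kM}}_i(p_i^{\ell},{\bf p}_{-i})
\;=\;{\mathbb{E}}\Bigl[\bigl(\textstyle\sum_{j\neq i} Y_j\bigr)^{k}\Bigr]
\;=\; \sum_{\substack{(r_j)_{j\neq i}\in{\mathbb{N}}_0^{n-1}\\ \sum r_j = k}}
      \frac{k!}{\prod_{j\neq i} r_j!}\,
      {\mathbb{E}}\!\left[\prod_{j\neq i} Y_j^{r_j}\right].
\]
Independence gives ${\mathbb{E}}\!\left[\prod Y_j^{r_j}\right]=\prod {\mathbb{E}}[Y_j^{r_j}]$, and a direct computation on the Bernoulli variable $\mathbb{1}[s_j=\ell]$ (which equals $1$ w.p.\ $p_j(\ell)$ and $0$ w.p.\ $1-p_j(\ell)$) yields
\[
{\mathbb{E}}[Y_j^{r_j}]
\;=\; \omega(i,j,\ell)^{r_j}\Bigl[\,p_j(\ell)\bigl(1-p_j(\ell)\bigr)^{r_j} + \bigl(1-p_j(\ell)\bigr)\bigl(-p_j(\ell)\bigr)^{r_j}\Bigr]
\;=\; \omega(i,j,\ell)^{r_j}\,{\mathsf{f}}\!\left(p_j(\ell),r_j\right),
\]
by the definition of ${\mathsf{f}}$. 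Substituting and extending the index to an $n$-tuple $(r_1,\ldots,r_n)$ with $r_i=0$ (which contributes a factor $1/0!=1$ and no product term) produces precisely the claimed sum.

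Finally, I would observe that the restriction $r_j\neq 1$ is a harmless cosmetic simplification: by Lemma~\ref{by inspection}(1), ${\mathsf{f}}(x,1)=0$ identically, so any monomial with some $r_j=1$ contributes $0$ and may be dropped from the summation without affecting the value. There is no real obstacle here; the only care needed is keeping the bookkeeping of indices correct and verifying that the formula collapses the $r_i$-coordinate trivially because $Y_i\equiv 0$ when player $i$ plays the pure strategy $p_i^{\ell}$.
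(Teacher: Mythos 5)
Your proof is correct, but it takes a genuinely different route from the paper's. The paper proves the formula by induction on the number of players: it peels off player $n$, conditions on whether she chooses link $\ell$, expands $({\mathsf{B}} \pm \text{const})^{k}$ by the binomial theorem to recognize the factor ${\mathsf{f}}(p_{n}, k-t)$, and then invokes the induction hypothesis on the remaining $n-1$ players (after first disposing of the cases $k=0,1$ separately). You instead observe that the centered cost is a sum $\sum_{j\neq i} Y_{j}$ of \emph{independent} centered random variables $Y_{j} = \omega(i,j,\ell)\bigl(\mathbb{1}[s_{j}=\ell]-p_{j}(\ell)\bigr)$, apply the multinomial theorem once, factor the mixed moment by independence, and compute ${\mathbb{E}}[Y_{j}^{r}] = \omega(i,j,\ell)^{r}\,{\mathsf{f}}(p_{j}(\ell),r)$ directly from the two-point distribution — which is exactly the paper's definition of ${\mathsf{f}}$. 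Your handling of the side conditions is also right: terms with some $r_{j}=1$ vanish because ${\mathsf{f}}(x,1)=0$ (equivalently, the $Y_{j}$ are centered), and the coordinate $r_{i}=0$ is vacuous since player $i$ is pure on $\ell$. The two approaches buy different things: your argument is shorter, needs no case split on $k$, and makes transparent \emph{why} the partition-polynomial structure appears (it is just the multinomial expansion of a $k$-th central moment of a sum of independent variables); the paper's induction is more elementary in that it never appeals to independence or factorization of expectations explicitly, deriving everything by direct summation over profiles — at the cost of a considerably longer computation.
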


\begin{proof}
We shall first consider
the special cases
$k=0$ and $k=1$.

\noindent
\underline{The case $k = 0$:}\\
Clearly,
   $   {\mathsf{0M}}_{i}\left( p_{i}^{\ell}, {\bf p}_{-i}
                       \right)
=
\sum_{{\bf s} \in S}
        \left( {\mathsf{\mu}}_{i}({\bf s}) - {\mathsf{E}}_{i}({\bf p})
        \right)^{0}\,
        p({\bf s})                   
=
 \sum_{{\bf s} \in S}
        p({\bf s})                      
= 1$.
The formula has value 
{
\small
\begin{eqnarray*}
      \sum_{\begin{array}{c}
              r_{1}, \ldots, r_{n}
              \mid
              \\
              \sum_{j \in [n]}
                r_{j}
                =
                0,
                r_{i} = 0
              \\
              \&\,
              \forall j \in [n]: r_{j} \neq 1
            \end{array}
           }
        \frac{\textstyle 0!}
             {\textstyle r_{1}! \ldots r_{n}!}\,
        \prod_{j \in [n] \setminus \{ i \}}
           {\mathsf{f}} \left( p_{j}(\ell),
                               r_{j}
                        \right)\,
           \left( \omega (i, j, \ell)
           \right)^{r_{j}}
& = & 1\, .
\end{eqnarray*}
}

\noindent
\underline{The case $k = 1$:}
Clearly,
{
\small
\begin{eqnarray*}
      {\mathsf{1M}}_{i}\left( p_{i}^{\ell}, {\bf p}_{-i}
                       \right)
& = & \sum_{{\bf s} \in S}
        \left( {\mathsf{\mu}}_{i}({\bf s}) - {\mathsf{E}}_{i}({\bf p})
        \right)^{1}\,
        p({\bf s})\
       \
      =\
      \                     
      0\, .
\end{eqnarray*}
}
The value of the formula is
{
\small
\begin{eqnarray*}
      \sum_{\begin{array}{c}
              r_{1}, \ldots, r_{n}
              \mid
              \\
              \sum_{j \in [n]}
                r_{j}
                =
              1,
            r_{i} = 0
            \\
            \&\,
            \forall j \in [n]: r_{j} \neq 1
            \end{array}
           }
        \frac{\textstyle 1!}
             {\textstyle r_{1}! \ldots r_{n}!}\,
        \prod_{j \in [n] \setminus \{ i \}}
           {\mathsf{f}} \left( p_{j}(\ell),
                               r_{j}
                        \right)\,
           \left( \omega (i, j, \ell)
           \right)^{r_{j}}
& = & 0\, ,
\end{eqnarray*}
}
since there is no term to add.

Assume now that $k \geq 2$.
The proof is
by induction on
the number of players $n$.
For the basis case
where $n = 1$,
${\mathsf{kM}}_{1}(p_{1}^{\ell}, .)
=
0$,
and $0$ is also the value
given by the formula
(since $r_{1} = 0$
implies
$\sum_{j} r_{j}
\neq
k$,
and there is no term to add).

Assume inductively that
the formula holds
for $n-1$ players.
For the induction step,
we shall establish
the formula for
$n$ players.
Without loss of generality,
fix $i := 1$.
For simplicity,
write $p_{j}$
and $\omega_{j}$
for $p_{j}(\ell)$
and $\omega (i, j, \ell)$,
respectively.
For any integer
$t \leq k$,
denote as
${\bf p} \mid t$
the restriction of ${\bf p}$
to the players
$1, \ldots, t$;
so,
${\bf p} \mid n
=
{\bf p}$.
Set
${\mathsf{kM}}_{1}(t)
:=
{\mathsf{kM}}_{1}\left( p_{1}^{\ell},
            ({\bf p} \mid t)_{-1}
         \right)$;
so,
${\mathsf{kM}}_{1}(n)
 =
 {\mathsf{kM}}_{1}\left( p_{1}^{\ell},
                        {\bf p}_{-1}
                  \right)$.
Clearly,
{
\small
\begin{eqnarray*}
&    & {\mathsf{kM}}_{1}(n)                                                                 \\
& =  & \sum_{{\bf s} \in S}
         p ({\bf s})\,
         \left( {\mathsf{\mu}}_{1}({\bf s})
                -
                {\mathbb{E}}_{1}({\bf p})
         \right)^{k}                                                                        \\
& =  & \sum_{{\bf s}_{-n} \in S_{-n}}
         p ({\bf s}_{-n})
         \cdot
         \left( \left( {\mathsf{\mu}}_{1}({\bf s}_{-n})
                       + w_{n}
                       -
                       {\mathbb{E}}_{1}({\bf p})
                 \right)^{k}\,
                 p_{n}
                 +
                 \left( {\mathsf{\mu}}_{1}({\bf s}_{-n})
                        -
                        {\mathbb{E}}_{1}({\bf p})
                 \right)^{k}\,
                 (1 - p_{n})
         \right)                                                                           \\
& = & \sum_{{\bf s}_{-n} \in S_{-n}}
        p ({\bf s}_{-n})
        \cdot                                                                              \\
&   & \left( \left( {\mathsf{\mu}}_{1}({\bf s}_{-n})
                    + w_{n}
                    -
                    {\mathbb{E}}_{1}({\bf p} \mid (n-1))
                    -
                    p_{n} w_{n}
             \right)^{k}\,
             p_{n}
             +
             \left( {\mathsf{\mu}}_{1}({\bf s}_{-n})
                    -
                    {\mathbb{E}}_{1}({\bf p} \mid (n-1))
                    -
                    p_{n} w_{n}
             \right)^{k}\,
             (1 - p_{n})
      \right)                                                                              \\
& = & \sum_{{\bf s}_{-n} \in S_{-n}}
        p ({\bf s}_{-n})
        \cdot                                                                              \\
&   &  \left( \left( {\mathsf{\mu}}_{1}({\bf s}_{-n})
                     -
                     {\mathbb{E}}_{1}({\bf p} \mid (n-1))
                     +
                     w_{n}
                     (1 - p_{n})
             \right)^{k}\,
             p_{n}
             +
             \left( {\mathsf{\mu}}_{1}({\bf s}_{-n})
                    -
                    {\mathbb{E}}_{1}({\bf p} \mid (n-1))
                    -
                    p_{n} w_{n}
             \right)^{k}\,
             (1 - p_{n})
       \right)\, .
\end{eqnarray*}
}
Set
${\mathsf{B}}
 =
 {\mathsf{B}}({\bf s}_{-n})
 :=
 {\mathsf{\mu}}_{1}({\bf s}_{-n})
 -
 {\mathbb{E}}_{1}({\bf p} \mid (n-1))$.
Then,
{
\small
\begin{eqnarray*}
&    & {\mathsf{kM}}_{1}(n)                                                                            \\
& = & \sum_{{\bf s}_{-n} \in S_{-n}}
        p ({\bf s}_{-n})
        \left( \left( {\mathsf{B}}
                      +
                      (1 - p_{n})\,
                      w_{n}
              \right)^{k}\,
              p_{n}
              +
              \left( {\mathsf{B}}
                     -
                     p_{n} w_{n}
              \right)^{k}\,
              (1 - p_{n})
       \right)                                                                                       \\
& = & \sum_{{\bf s}_{-n} \in S_{-n}}
        p ({\bf s}_{-n})
        \left( \left( \sum_{t=0}^{k}
                        {k \choose t}
                        {\mathsf{B}}^{t}
                        (1 - p_{n})^{k-t}
                        \left( w_{n}
                        \right)^{k-t}
               \right)
               p_{n}
               +
              \left( \sum_{t=0}^{k}
                       {k \choose t}
                       {\mathsf{B}}^{t}
                       (- p_{n})^{k-t}
                       \left( w_{n}
                       \right)^{k-t}
              \right)
              (1 - p_{n})
       \right)                                                                                      \\
& = & \sum_{{\bf s}_{-n} \in S_{-n}}
        p ({\bf s}_{-n})
        \sum_{t=0}^{k}
          {k \choose t}
          {\mathsf{B}}^{t}
          \left( p_{n} (1 - p_{n})^{k-t}
                 +
                 (1 - p_{n}) (- p_{n})^{k-t}
          \right)
          \left( w_{n}
          \right)^{k-t}                                                                                \\
& = & \sum_{{\bf s}_{-n} \in S_{-n}}
        p ({\bf s}_{-n})
        \sum_{t=0}^{k}
          {k \choose t}
          {\mathsf{B}}^{t}
          {\mathsf{f}}\left( p_{n},
                             k-t
                      \right)
          \left( w_{n}
          \right)^{k-t}                                                                               \\
& = & \sum_{t=0}^{k}
        {k \choose t}\,
        \left( \sum_{{\bf s}_{-n} \in S_{-n}}
                 p ({\bf s}_{-n})
                 \cdot
                 \left( {\mathsf{B}}({\bf s}_{-n})
                 \right)^{t}
        \right)\,
        {\mathsf{f}}\left( p_{n},
                           k-t
                    \right)\,
        \left( w_{n}
        \right)^{k-t}                                                                                 \\
& = & \sum_{t=0}^{k}
        {k \choose t}\,
        {\mathsf{tM}}_{1}(n-1)\,
        {\mathsf{f}}\left( p_{n},
                           k-t
                    \right)\,
        \left( w_{n}
        \right)^{k-t}\, .
\end{eqnarray*}
}
By induction hypothesis,
it follows that
{
\small
\begin{eqnarray*}
&    & {\mathsf{kM}}_{1}(n)                                                                            \\
& = & \sum_{t=0}^{k}
        {k \choose t}\,
        \left( \sum_{\begin{array}{c}
                       r_{1}, \ldots, r_{n-1}
                       \mid
                       \\
                       \sum_{j \in [n-1]}
                         r_{j}
                       =
                       t,
                       r_{1} = 0
                       \\
                       \&\,
                       \forall j \in [n-1]: r_{j} \neq 1
                      \end{array}
                    }
                 \frac{t!}
                      {r_{1}! \ldots r_{n-1}!}\,
                 \prod_{j \in [n-1] \setminus \{ 1 \}}
                    {\mathsf{f}} \left( p_{j},
                                        r_{j}
                                 \right)
                 \left( \omega_{j}
                 \right)^{r_{j}}
        \right)
        {\mathsf{f}}\left( p_{n},
                           k-t
                    \right)\,
        \left( w_{n}
        \right)^{k-t}                                                                                   \\
& = & \sum_{\begin{array}{c}
              r_{1}, \ldots, r_{n-1}
              \mid
              \\
              \sum_{j \in [n-1]}
                r_{j}
              =
              t,
              r_{1} = 0
              \\
              \&\,
              \forall j \in [n-1]: r_{j} \neq 1 
            \end{array}
           }
        \sum_{t=0}^{k}
          {k \choose t}\,
          \frac{t!}
               {r_{1}! \ldots r_{n-1}!}\,
          \cdot
          \prod_{j \in [n-1] \setminus 1}
             {\mathsf{f}} \left( p_{j},
                                 r_{j}
                          \right)\,
             {\mathsf{f}}\left( p_{n},
                                k-t
                         \right)
          \cdot
          \left( \omega_{j}
          \right)^{r_{j}}\,
          \left( w_{n}
          \right)^{k-t}                                                                               \\
& = & \sum_{\begin{array}{c}
              r_{1}, \ldots, r_{n-1}
              \mid
              \\
              \sum_{j \in [n-1]}
                r_{j}
              =
              t,
              r_{1} = 0
              \\
              \&\,
              \forall j \in [n-1]: r_{j} \neq 1
            \end{array}
           }
        \sum_{t=0}^{k}
          \frac{k!}
               {r_{1}! \ldots r_{n-1}! (k-t)!}\,
          \cdot
          \prod_{j \in [n-1] \setminus \{ 1 \}}
             {\mathsf{f}} \left( p_{j},
                                 r_{j}
                          \right)\,
             {\mathsf{f}}\left( p_{n},
                                k-t
                         \right)
          \cdot
          \left( \omega_{j}
          \right)^{r_{j}}\,
          \left( w_{n}
          \right)^{k-t}                                                                            \\
& = & \sum_{\begin{array}{c}
              r_{1}, \ldots, r_{n-1}, r_{n}
              \mid
              \\
              \sum_{j \in [n]}
                r_{j}
              =
              k,
              r_{1} = 0
              \\
              \&\,
              \forall j \in [n-1]: r_{j} \neq 1
            \end{array}
           }
          \frac{k!}
               {r_{1}! \ldots r_{n-1}! r_{n}!}\,
          \cdot
          \prod_{j \in [n] \setminus \{ 1 \}}
             {\mathsf{f}} \left( p_{j},
                                 r_{j}
                          \right)\,
          \cdot
          \left( \omega_{j}
          \right)^{r_{j}}\, .
\end{eqnarray*}
}
Since
${\mathsf{f}}(p_{n}, r_{n})
 =
 0$
when $r_{n} = 1$,
it follows that
\begin{eqnarray*}
      {\mathsf{kM}}_{1}(n)
& = & \sum_{\begin{array}{c}
              r_{1}, \ldots, r_{n-1}, r_{n}
              \mid
              \\
              \sum_{j \in [n]}
                r_{j}
              =
              k,
              r_{1} = 0
              \\
              \&\,
              \forall j \in [n]: r_{j} \neq 1
            \end{array}
           }
          \frac{k!}
               {r_{1}! \ldots r_{n}!}\,
          \cdot
          \prod_{j \in [n] \setminus \{ 1 \}}
             {\mathsf{f}} \left( p_{j},
                                 r_{j}
                          \right)\,
          \cdot
          \left( \omega_{j}
          \right)^{r_{j}}\, ,
\end{eqnarray*}
as needed.
By Lemma~\ref{by inspection},
{
\small
\begin{eqnarray*}
      {\mathsf{f}} \left( p_{j}(1),
                     r_{j}
              \right)
& = & \left\{ \begin{array}{ll}
                {\mathsf{f}} \left( 1 - p_{j}(1),
                                    r_{j}
                             \right)\, ,                    &
                \mbox{for even $r_{j}$}                                 \\
                -
                {\mathsf{f}} \left( 1 - p_{j}(1),
                                    r_{j}
                             \right)                        &
                \mbox{if $r_{j}$ is odd}
              \end{array}
      \right.          
\end{eqnarray*}
}
Since
$\sum_{j \in [n] \setminus \{ i \}}
   r_{j}
 =
 k$
and $k$ is even,
the number of odd $r_{j}$'s is even,
and this implies that
$\alpha_{i}^{1}
 =
 \alpha_{i}^{2}$.
\end{proof}

\subsection{The Mixed-Player-Has-Pure-Neighbors Property}
\label{the mixed player has pure neighbors property}

We show:

\begin{proposition}[The Mixed-Player-Has-Pure-Neighbors Property]
\label{the real almost pure property}
Fix a player-specific scheduling game
on two ordered links $1$ and $2$.
Fix an
$({\mathsf{E}}
  +
  {\mathsf{R}})$-valuation
${\mathsf{V}}$,
where
${\mathsf{R}}$ is either
{\sf (1)}
${\mathsf{Var}}$,
or
{\sf (2)}
${\mathsf{SD}}$,
or 
{\sf (3)}
a concave linear sum
$\sum_{k \in \{ 2, 4, 6, 8 \}}
   \alpha_{k}
   \cdot
   {\mathsf{kM}}$,
with $\alpha_{k}
      \geq
      0$
for $k \in \{ 2, 4, 6, 8 \}$.
Fix a ${\mathsf{V}}$-equilibrium ${\bf p}$
and a player $i \in [n]$.
Then,
either
{\sf (C.1)}
player $i$ is pure,
or
{\sf (C.2)}
all players $i' \in [n] \setminus \{ i \}$
with
$\omega (i, i', 1)
 \neq
 0$
are pure.
\end{proposition}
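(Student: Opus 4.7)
The plan is to combine the Weak-Equilibrium-for-Expectation property, the Optimal-Value property (Proposition~\ref{constant value}), and the partition-polynomial formula for $\mathsf{kM}_i$ (Proposition~\ref{formula for k moment}). Fix a $\mathsf{V}$-equilibrium ${\bf p}$ and a player $i \in [n]$; suppose (C.1) fails, so $\sigma(p_i) = \{1, 2\}$. For choices {\sf (1)} and {\sf (2)} Corollary~\ref{brand new} shows that $\mathsf{V}$ has the \emph{Weak-Equilibrium-for-Expectation} property; for choice {\sf (3)} the same follows from Proposition~\ref{concavity implies weak equilibrium}. Hence $\mathsf{E}_i(p_i^1, {\bf p}_{-i}) = \mathsf{E}_i(p_i^2, {\bf p}_{-i})$. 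Since $p_i$ is a $\mathsf{V}_i$-best response to ${\bf p}_{-i}$, Proposition~\ref{constant value} gives $\mathsf{V}_i(p_i^1, {\bf p}_{-i}) = \mathsf{V}_i(p_i^2, {\bf p}_{-i})$, whence $\mathsf{R}_i(p_i^1, {\bf p}_{-i}) = \mathsf{R}_i(p_i^2, {\bf p}_{-i})$. Condition (C.2) then reduces to showing that this equality forces every neighbor of $i$ to be pure.

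For choice {\sf (1)} $\mathsf{R} = \mathsf{Var}$, the independence of the random choices of the players in ${\bf p}_{-i}$, together with $p_j(2)(1 - p_j(2)) = p_j(1)(1 - p_j(1))$, gives
\[
\mathsf{Var}_i(p_i^\ell, {\bf p}_{-i}) = \sum_{j \neq i} p_j(1)(1 - p_j(1))\, \omega(i, j, \ell)^{2},
\]
so the equality of variances becomes
\[
0 = \sum_{j \neq i} p_j(1)(1 - p_j(1)) \bigl( \omega(i,j,2)^{2} - \omega(i,j,1)^{2} \bigr).
\]
By the two-ordered-link assumption, each summand is non-negative and is strictly positive unless $\omega(i,j,1) = 0$ or $j$ is pure; this yields (C.2). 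Choice {\sf (2)} $\mathsf{R} = \mathsf{SD}$ reduces to choice {\sf (1)} since $\mathsf{SD}$-equality is equivalent to $\mathsf{Var}$-equality.

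For choice {\sf (3)} I would invoke Proposition~\ref{formula for k moment} and the identity $\mathsf{f}(p_j(2), r_j) = (-1)^{r_j} \mathsf{f}(p_j(1), r_j)$ from Lemma~\ref{by inspection}. Since $\sum_j r_j = k$ is even, the product of signs cancels and
\[
\mathsf{kM}_i(p_i^2, {\bf p}_{-i}) - \mathsf{kM}_i(p_i^1, {\bf p}_{-i}) = \sum_{\text{tuples}} \frac{k!}{\prod_j r_j!}\, \biggl( \prod_{j \neq i} \mathsf{f}(p_j(1), r_j) \biggr)\, \biggl( \prod_{j \neq i} \omega(i,j,2)^{r_j} - \prod_{j \neq i} \omega(i,j,1)^{r_j} \biggr).
\]
Each weight-difference factor is non-negative and vanishes precisely when $\omega(i,j,1) = 0$ for every $j$ with $r_j > 0$. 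For tuples in which every $r_j$ is even --- all tuples for $k \in \{2,4\}$ and most tuples for $k \in \{6,8\}$ --- the $\mathsf{f}$-product is also non-negative (Lemma~\ref{by inspection}), so these summands behave exactly as in the variance case and their vanishing alone already delivers (C.2).

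The hard part is handling the tuples with odd entries, which occur only in the shapes $(3,3)$ for $k = 6$ and $(3,5), (3,3,2)$ for $k = 8$. A direct factorisation gives $\mathsf{f}(x, 2t+1) = x(1-x)(1-2x)\, g_{2t+1}(x)$ with $g_{2t+1}>0$ on $[0,1]$, so the $\mathsf{f}$-product of such a tuple carries the sign-ambiguous factor $\prod (1 - 2 p_j(1))$. To finish, I would pair each odd-shape tuple with its companion even-shape tuples of the same order~$k$ and, exploiting that $\sum_{k} \alpha_{k} \mathsf{kM}$ is assumed concave with $\alpha_{k} \geq 0$, show that the aggregate contribution reassembles into a non-negative combination of terms of the form $p_j(1)(1 - p_j(1))(\omega(i,j,2)^{2s} - \omega(i,j,1)^{2s})$, each vanishing only under (C.2). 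The bookkeeping relies on the fact that only three odd-shape orbits arise for $k \leq 8$, so a finite case analysis suffices --- and this is both why the restriction to $k \in \{2,4,6,8\}$ is imposed and where the technical weight of the proof lies.
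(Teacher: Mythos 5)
Your overall architecture matches the paper's: use the \emph{Weak-Equilibrium-for-Expectation} property to equate the conditional expectations on the two links, use the \emph{Optimal-Value} property (Proposition~\ref{constant value}) to equate the values of ${\mathsf{V}}_{i}$ (hence of ${\mathsf{R}}_{i}$) on the two links, and then show via the partition-polynomial formula (Proposition~\ref{formula for k moment}) that this equality forces the neighbors to be pure. Your treatment of ${\mathsf{Var}}$, of ${\mathsf{SD}}$, and of the partitions with all parts even is correct and coincides with the paper's handling of $k\in\{2,4\}$.

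The gap is exactly where you say the ``technical weight'' lies: the partitions with odd parts for $k\in\{6,8\}$. You assert that pairing each odd-shape tuple with companion even-shape tuples ``reassembles into a non-negative combination of terms of the form $p_j(1)(1-p_j(1))\bigl(\omega(i,j,2)^{2s}-\omega(i,j,1)^{2s}\bigr)$,'' but you give no argument, and no such clean reassembly is available: the sign-indefinite factor $\prod_j (1-2p_j(1))$ you correctly identify cannot be absorbed termwise. What the paper actually proves (Lemma~\ref{embracing helps}, the \emph{Embracing Lemma}) is that the grouped contribution of three adjacent partitions $(r-1,s+1),(r,s),(r+1,s-1)$, viewed as a polynomial $a\,y^{r-1}-b\,y^{r}+c\,y^{r+1}$ in a weight $y$, is \emph{strictly increasing} in $y$; this rests on the inequality ${\mathsf{f}}(x,r-1)\,{\mathsf{f}}(x,r+1)>({\mathsf{f}}(x,r))^{2}$ (Observation~\ref{geometric mean}) and a discriminant bound $4ac/b^{2}>r^{2}/(r^{2}-1)$, plus a careful double-counting of the $(4,4)$ partition for $k=8$ (the coefficients $\alpha=1$, $\beta=\tfrac12$). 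Strict monotonicity in the weights, not termwise non-negativity, is what lets one conclude from ${\mathsf{kM}}_{i}(p_{i}^{1},{\bf p}_{-i})\geq{\mathsf{kM}}_{i}(p_{i}^{2},{\bf p}_{-i})$ that the relevant ${\mathsf{f}}$-factors vanish and the neighbors are pure. Note also that your appeal to concavity of $\sum_{k}\alpha_{k}\,{\mathsf{kM}}$ cannot substitute for this: concavity is needed only for the \emph{Optimal-Value} step, and it gives no control over the signs of individual partition terms inside a single ${\mathsf{kM}}$; each moment must be shown monotone separately before the non-negative linear combination argument can be applied. Until the monotonicity of ${\mathsf{6M}}$ and ${\mathsf{8M}}$ is actually established, the proof is incomplete at its decisive step.
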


\begin{proof}
We first prove a key property
of moment valuations
${\mathsf{kM}}$,
where
$k \in \{ 2, 4, 6, 8 \}$.

\begin{lemma}
\label{even moments}
Fix a player-specific scheduling game
on two ordered links $1$ and $2$.
Fix the valuation ${\mathsf{kM}}$,
for an even integer
$k \in \{ 2, 4, 6, 8 \}$.
Fix a  mixed profile ${\bf p}$
and a player $i \in [n]$.
Then,
one of the conditions
{\sf (C.1)},
{\sf (C.2)}
and
{\sf (C.3)}
holds:
{\sf (C.1)}
Player $i$ is pure.
{\sf (C.2)}
All players
$i' \in [n] \setminus \{ i \}$
with
$\omega \left( i, i', 1 \right)
 \neq
 0$
are pure.
{\sf (C.3)}
Player $i$ is mixed
with
${\mathsf{kM}}_{i}\left( p_{i}^{1}, {\bf p}_{-i}
                 \right)
 <
 {\mathsf{kM}}_{i}\left( p_{i}^{2}, {\bf p}_{-i}
                 \right)$.
\end{lemma}

\noindent
For the proof of Lemma~\ref{even moments},
we shall employ a combinatorial
{\it Embracing Lemma}
(Lemma~\ref{embracing helps})
to establish that
the $k$-moment valuation
increases strictly monotone
for $k \in \{ 6, 8 \}$;
it seems that a different technique
is needed to extend
Lemma~\ref{even moments}
beyond $k=8$.

\begin{lemma}[Embracing Lemma]
\label{embracing helps}
Fix a pair of odd integers
$r \geq 3$ and $s \geq 3$.
Fix probabilities
$p$ and $q$
with
$0 < p < \frac{\textstyle 1}
              {\textstyle 2}$
and
$\frac{\textstyle 1}
      {\textstyle 2}
 <
 q
 <
 1$.
Fix a weight $w$
and a pair of numbers
$\alpha, \beta \in {\mathbb{R}}^{+}$
with
$\alpha
 \cdot
 \beta
 \geq
 \frac{\textstyle 1}
      {\textstyle 2}$.
Then,
the function
{
\small
\begin{eqnarray*}
       {\mathsf{F}}(y)
& := & \alpha
       \cdot
       \frac{\textstyle 1}
            {\textstyle (r-1)!}
       \cdot
       \frac{\textstyle 1}
            {\textstyle (s+1)!}\,
       {\mathsf{f}}(p, r-1)
       \cdot
       {\mathsf{f}}(q, s+1)\,
       w^{s+1}
       y^{r-1}                                                            \\
&    & +
       \frac{\textstyle 1}
            {\textstyle r!}
       \cdot
       \frac{\textstyle 1}
            {\textstyle s!}\,
       {\mathsf{f}}(p, r)
       \cdot
       {\mathsf{f}}(q, s)\,
       w^{s}
       y^{r}                                                             \\
&    & +
       \beta
       \cdot
       \frac{\textstyle 1}
            {\textstyle (r+1)!}
       \cdot
       \frac{\textstyle 1}
            {\textstyle (s-1)!}\,
       {\mathsf{f}}(p, r+1)
       \cdot
       {\mathsf{f}}(q, s-1)\,
       w^{s-1}
       y^{r+1}
\end{eqnarray*}
}
increases strictly monotone in $y$.
\end{lemma}

\noindent
The {\it Embracing Lemma}
establishes that a triple of ``adjacent''
partitions
in the formula
from Proposition~\ref{formula for k moment}
increases strictly monotone;
intuitively,
the terms corresponding to two of the partitions
in the triple
are positive
and ``help out''
by embracing
the third negative term
to counterbalance
its negative effect to increasing monotonicity.

\begin{proof}
The proof of Lemma~\ref{embracing helps}
uses an elementary observation:

\begin{observation}
\label{geometric mean}
Consider an odd integer $r \geq 3$.
Then,
for all $x \in [0, 1]$,
\begin{eqnarray*}
      {\mathsf{f}}(x, r-1)
      \cdot
      {\mathsf{f}}(x, r+1)
& > & \left( {\mathsf{f}}(x, r)
      \right)^{2}\, .
\end{eqnarray*}
\end{observation}

\begin{proof}
The claim amounts to
\begin{eqnarray*}
      x^{2}\,
      (1-x)^{2}
      \left( x^{r-2} + (1-x)^{r-2}
      \right)
      \cdot
      \left( x^{r} + (1-x)^{r}
      \right)
& > & x^{2}
      (1-x)^{2}
      \left( (1-x)^{r} - x^{r}
      \right)\, ,
\end{eqnarray*}
which holds trivially.
\end{proof}

\noindent
We continue with the proof
of the
{\it Embracing Lemma}.
Write
\begin{eqnarray*}
      {\mathsf{F}}(y)
& := & a
      \cdot
      y^{r-1}
      -
      b
      \cdot
      y^{r}
      +
      c
      \cdot
      y^{r+1}\, ,
\end{eqnarray*}
with
\begin{eqnarray*}
       a
& := &  \alpha
       \cdot
       \frac{\textstyle 1}
            {\textstyle (r-1)!}
       \cdot
       \frac{\textstyle 1}
            {\textstyle (s+1)!}\,
       {\mathsf{f}}(p, r-1)
       \cdot
       {\mathsf{f}}(q, s+1)\,
       w^{s+1}\, ,                                              \\
       b
& :=  & -
       \frac{\textstyle 1}
            {\textstyle r!}
       \cdot
       \frac{\textstyle 1}
            {\textstyle s!}\,
       {\mathsf{f}}(p, r)
       \cdot
       {\mathsf{f}}(q, s)\,
       w^{s}\, ,                                               \\
       c
& :=  & \beta
       \cdot
       \frac{\textstyle 1}
            {\textstyle (r+1)!}
       \cdot
       \frac{\textstyle 1}
            {\textstyle (s-1)!}\,
       {\mathsf{f}}(p, r+1)
       \cdot
       {\mathsf{f}}(q, s-1)\,
       w^{s-1}\, .
\end{eqnarray*}
Note that
$a, b, c > 0$.
Clearly,
\begin{eqnarray*}
      {\mathsf{F}}^{\prime}(y)
& = & (r-1)\, a\, y^{r-2}
      -
      r\, b\, y^{r-1}
      +
      (r+1)\, c\, y^{r}                             \\
& = & y^{r-2}
      \cdot
      \underbrace{\left( (r-1)\, a
                         - r\,b\,y
                         + (r+1)\,c\,y^{2}
                  \right)}_{:= {\mathsf{h}}(y)}\, .
\end{eqnarray*}
We shall prove that
${\mathsf{h}}^{\prime}(y) > 0$
for all
$y \geq 0$.
Clearly,
\begin{eqnarray*}
      {\mathsf{h}}^{\prime}(y)
& = & -r\, b
      +
      2\,(r+1)\,c\,y\, ,
\end{eqnarray*}
and
\begin{eqnarray*}
      {\mathsf{h}}^{\prime\prime}(y)
& = & 2\, (r+1) y\ \
      >\ \
      0\, .
\end{eqnarray*}
Thus,
${\mathsf{h}}(y)$
takes its minimum value for
$y_{0}
 =
 \frac{\textstyle rb}
      {\textstyle 2(r+1)c}$.
Hence,
it suffices to prove that
${\mathsf{h}}(y_{0})
 >
 0$.
Clearly,
\begin{eqnarray*}
      {\mathsf{h}}(y_{0})
& = & (r-1)\, a
      -
      r\, b\,
      \frac{\textstyle rb}
           {\textstyle 2(r+1)c}
      +
      (r+1)\, c\,
      \left( \frac{\textstyle rb}
                  {\textstyle 2(r+1)c}
      \right)^{2}                                                          \\
& = & (r-1)\, a
      -
      \frac{\textstyle r^{2}}
           {\textstyle 4(r+1)}
      \cdot
      \frac{\textstyle b^{2}}
           {\textstyle c}\, ;
\end{eqnarray*}
thus,
${\mathsf{h}}(y_{0}) > 0$
if and only if
\begin{eqnarray*}
      \frac{\textstyle 4\, ac}
           {\textstyle b^{2}}
& > & \frac{\textstyle r^{2}}
           {\textstyle r^{2} - 1}\, .
\end{eqnarray*}
We now verify the latter condition.
Note that
{
\small
\begin{eqnarray*}
      \frac{\textstyle 4\, ac}
           {\textstyle b^{2}}
& = & 4\,
      \alpha
      \beta
      \frac{\textstyle (r!)^{2}}
           {\textstyle (r-1)!
                       (r+1)!}
      \frac{\textstyle (s!)^{2}}
           {\textstyle (s-1)!
                       (s+1)!}
      \frac{\textstyle {\mathsf{f}}(p, r-1)
                       {\mathsf{f}}(p, r+1)}
           {\textstyle ({\mathsf{f}}(p, r))^{2}}
      \frac{\textstyle {\mathsf{f}}(q, s-1)
                       {\mathsf{f}}(q, s+1)}
           {\textstyle ({\mathsf{f}}(q, s))^{2}}\, .
\end{eqnarray*}
}
By Observation~\ref{geometric mean},
it follows that
\begin{eqnarray*}
      \frac{\textstyle 4\, ac}
           {\textstyle b^{2}}
& >  & 2
       \cdot
       \frac{\textstyle r}
            {\textstyle r+1}
       \cdot
       \frac{\textstyle s}
            {\textstyle s+1}                                             \\
&\geq& \frac{\textstyle r}
            {\textstyle r+1}
       \cdot
       \frac{\textstyle 3}
            {\textstyle 2}                                               \\
&\geq& \frac{\textstyle r^{2}}
            {\textstyle r^{2} - 1}\, ,
\end{eqnarray*}
since $s \geq 3$
and
$r \geq 3$.
\end{proof}

\noindent
We now continue with the proof of Lemma~\ref{even moments}.

\begin{proof}
If Condition {\sf (C.1)} holds,
then we are done.
So,
assume that
Condition {\sf (C.1)} does {\em not} hold,
so that player $i$
is mixed.
If Condition {\sf (C.3)} holds,
then we are done.
So,
assume that
Condition {\sf (C.3)}
does {\em not} hold.
Since player $i$ is mixed,
this implies that
${\mathsf{kM}}_{i}\left( p_{i}^{1},
                         {\bf p}_{-i}
                 \right)
 \geq
 {\mathsf{kM}}_{i}\left( p_{i}^{2},
                         {\bf p}_{-i}
                 \right)$.
We shall establish Condition {\sf (C.2)}.
We proceed by case analysis.

\noindent
\underline{{\sf The cases $k=2$ and $k=4$:}}
By Proposition~\ref{formula for k moment},
for each link $\ell \in [2]$,
\begin{eqnarray*}
      {\mathsf{2M}}_{i}\left( p_{i}^{\ell},
                              {\bf p}_{-i}
                       \right)
& = & \sum_{j \in [n] \setminus \{ i \} \mid \omega(i, j, \ell) \neq 0}
        {\mathsf{f}}\left( p_{j}(\ell),
                           2
                    \right)
        \left( \omega (i, j, \ell)
        \right)^{2}
\end{eqnarray*}
and
\begin{eqnarray*}
&   & {\mathsf{4M}}_{i}\left( p_{i}^{\ell},
                              {\bf p}_{-i}
                       \right)                                                                    \\
& = &  \sum_{j \in [n] \setminus \{ i \} \mid \omega (i, j, \ell) \neq 0}                         
        {\mathsf{f}}\left( p_{j}(\ell),
                           4
                    \right)
        \left( \omega (i, j, \ell)
        \right)^{4}                                                                               \\
&   &  +                                                                                          \\
&   &  \sum_{j, k \in [n] \setminus \{ i \}, j \neq k
             \mid
             \omega (i, j, \ell) \neq 0,
             \omega (i, k, \ell) \neq 0
            }
        {\mathsf{f}}\left( p_{j}(\ell),
                           2
                    \right)
        \cdot
        {\mathsf{f}}\left( p_{k}(\ell),
                           2
                    \right)\,
       \left( \omega (i, j, \ell)
              \omega (i, k, \ell)
       \right)^{2}\, .
\end{eqnarray*}
Since
{\it (i)}
$p_{j}(\overline{\ell})
 =
 1 - p_{j}(\ell)$,
{\it (ii)}
${\mathsf{f}}(x, 2)
 =
 {\mathsf{f}}(1-x, 2)$
for $x \in [0, 1]$,
\& 
{\it (iii)}
for each player $j \in [n] \setminus \{ i \}$,
$\omega (i, j, 1)
 \neq
 0$
if and only if
$\omega (i, j, 2)
 \neq
 0$,
it follows that
{
\small
\begin{eqnarray*}
&   & \underbrace{{\mathsf{2M}}_{i}\left( p_{i}^{1},
                                         {\bf p}_{-i}
                                  \right)
                 -
                 {\mathsf{2M}}_{i}\left( p_{i}^{2},
                                         {\bf p}_{-i}
                                  \right)
                }_{\geq 0}                                                  \\
& = & \sum_{j \in [n] \setminus \{ i \}
            \mid
            \omega (i, j, 1) \neq 0,
            \omega (i, j, 2) \neq 0
           }
        {\mathsf{f}}\left( p_{j}(\ell),
                           2
                    \right)
        \cdot
        \underbrace{\left( \left( \omega (i, j, 1)
                           \right)^{2}
                           -
                           \left( \omega (i, j, 2)
                           \right)^{2}
                    \right)}_{< 0}\, ,
\end{eqnarray*}
}
and
{\small
\begin{eqnarray*}
&   & \underbrace{{\mathsf{4M}}_{i}\left( p_{i}^{1},
                                          {\bf p}_{-i}
                                   \right)
                   -
                   {\mathsf{4M}}_{i}\left( p_{i}^{2},
                                          {\bf p}_{-i}
                                   \right)}_{\geq 0}                         \\
& = & \sum_{j \in [n] \setminus \{ i \}
            \mid
            \omega (i, j, 1) \neq 0,
            \omega (i, j, 2) \neq 0}
        {\mathsf{f}}\left( p_{j}(1),
                           4
                    \right)
        \cdot
        \underbrace{\left( \left( \omega (i, j, 1)
                           \right)^{4}
                           -
                           \left( \omega (i, j, 2)
                           \right)^{4}
                    \right)}_{< 0}                                          \\
&   & +
      \sum_{j, k \in [n] \setminus \{ i \}, j \neq k
              \mid
              \omega (i, j, 1)
              \cdot
              \omega (i, j, 2)
              \cdot
              \omega (i, k, 1)
              \cdot
              \omega (i, k, 2)
              \neq 0
           }
        {\mathsf{f}}\left( p_{j}(1),
                           2
                    \right)
        \cdot
        {\mathsf{f}}\left( p_{k}(1),
                           2
                    \right)\,                                               \\
&   & \cdot
      \underbrace{\left( \left( \omega (i, j, 1)
                           \right)^{2}\,
                           \left( \omega (i, k, 1)
                           \right)^{2}
                           -
                           \left( \omega (i, j, 2)
                           \right)^{2}\,
                           \left( \omega (i, k, 2)
                           \right)^{2}
                    \right)}_{< 0}\, .
\end{eqnarray*}
}
Since the two links are ordered,
it follows that
for each player
$j \in [n] \setminus \{ i \}$,
${\mathsf{f}}\left( p_{j}(1),
                    2
             \right)
 =
 {\mathsf{f}}\left( p_{j}(1),
                    4
             \right)
 =
 0$,
which implies that
$p_{j}(1) = 1$
and player $j$ is pure,
as needed for
Condition {\sf (C.2)}.

For the remaining cases
$k \in \{ 6, 8 \}$,
we shall be establishing
that
${\mathsf{kM}}_{i}$
increases strictly monotone
in the weights.
Since the two links are ordered,
the inequality
${\mathsf{kM}}_{i}\left( p_{i}^{1},
                         {\bf p}_{-i}
                 \right)
 \geq
 {\mathsf{kM}}_{i}\left( p_{i}^{2},
                         {\bf p}_{-i}
                 \right)$
implies
that for each player $j \in [n] \setminus \{ i \}$
with
$\omega (i, j, 1)
 \neq
 0$
and
$\omega (i, j, 2)
 \neq
 0$,
${\mathsf{f}}\left( p_{j}(1),
                    r_{j}
             \right)
 =
 {\mathsf{f}}\left( p_{j}(1),
                    r_{j}
             \right)
 =
 0$,
which implies that
$p_{j}(1) = 1$
and player $j$ is pure,
as needed for
Condition {\sf (C.2)}.
We shall be referring again
to the formula
from Proposition~\ref{formula for k moment}.

\noindent
\underline{The case $k=6$:}
Note that $6$
can be partitioned
as
$6$,
$(4,2)$,
$(3,3)$,
$(2,4)$
and
$(2,2,2)$.
The terms of the formula
corresponding to
the partitions
$6$
and
$(2,2,2)$
have strictly positive coefficients;
so,
they are increasing strictly monotone
in the weights.
Now
group together
the terms of the formula
corresponding to
the partitions
with
$(r_{j}, r_{k})
 \in
 \left\{ (4,2),
         (3,3),
         (2,4)
 \right\}$
in a single sum.
Lemma~\ref{embracing helps}
(with $\alpha = \beta = 1$)
implies that the sum
increases strictly monotone
in the weights.

\noindent
\underline{The case $k=8$:}
The only partitions of $8$
that use odd numbers
are
{\it (i)}
$(5,3)$ and $(3,5)$,
and
{\it (ii)}
$(3,3,2)$,
$(3,2,3)$
and
$(2,3,3)$.
Partitions in {\it (i)}
involve {\em two} strategies,
while
partitions in {\it (ii)}
involve {\em three} strategies.

\noindent
\underline{Case {\it (i)}:}
Let $j$ and $k$
be the two strategies
with
$0 < p_{j}(\ell) < \frac{1}
                        {2}$
and
$\frac{\textstyle 1}
      {\textstyle 2}
 <
 p_{k}(\ell)
 <
 1$
for a link $\ell \in [2]$;
so,
${\mathsf{f}}\left( p_{j}(\ell),
                    3
             \right)
 \cdot
 {\mathsf{f}}\left( p_{k}(\ell),
                    5
             \right)
 <
 0$
and
${\mathsf{f}}\left( p_{j}(\ell),
                    5
             \right)
 \cdot
 {\mathsf{f}}\left( p_{k}(\ell),
                    3
             \right)
 <
 0$.

Consider now
the terms of the formula
corresponding to the partitions
with
\begin{eqnarray*}
        (r_{j}, r_{k})
& \in & \left\{ (6,2),
                (5,3),
                (4,4),
                (3,5),
                (2,6)
        \right\}\, .
\end{eqnarray*}
First
group together
the terms
corresponding to
the partitions
with
\begin{eqnarray*}
        (r_{j}, r_{k})
& \in & \left\{ (6,2),
                (5,3),
                (4,4)
        \right\}
\end{eqnarray*}
in a single sum,
and invoke Lemma~\ref{embracing helps}
with $\alpha = 1$ and $\beta = \frac{\textstyle 1}
                                    {\textstyle 2}$;
it follows that
the sum
increases strictly monotone
in the weights.
Then,
group together
the terms
corresponding to
the partitions
with
\begin{eqnarray*}
        (r_{j}, r_{k})
& \in & \left\{ (2,6),
                (3,5),
                (4,4)
        \right\}
\end{eqnarray*}
in a single sum,
and invoke Lemma~\ref{embracing helps}
with $\alpha = 1$ and $\beta = \frac{\textstyle 1}
                                    {\textstyle 2}$;
it follows that
the sum
increases strictly monotone
in the weights.

\noindent
\underline{Case {\it (ii)}:}
Let $j$, $k$ and $t$
be the three strategies
with
$0 < p_{j}(1) < \frac{\textstyle 1}
                     {\textstyle 2}$
and
$\frac{\textstyle 1}
      {\textstyle 2}
 <
 p_{k}(1)
 <
 1$
for a link $\ell \in [2]$;
assume,
without loss of generality,
that
$\frac{\textstyle 1}
      {\textstyle 2}
 <
 p_{k}(\ell)
 <
 1$;
so,
${\mathsf{f}}\left( p_{j}(\ell),
                    3
             \right)
 \cdot
 {\mathsf{f}}\left( p_{k}(\ell),
                    3
             \right)
 <
 0$
and
${\mathsf{f}}\left( p_{j}(\ell),
                    3
             \right)
 \cdot
 {\mathsf{f}}\left( p_{t}(\ell),
                    3
             \right)
 <
 0$,
while
${\mathsf{f}}\left( p_{j}(\ell),
                    3
             \right)
 \cdot
 {\mathsf{f}}\left( p_{t}(\ell),
                    3
             \right)
 >
 0$.
Consider now
the terms of the formula
corresponding to the partitions
with
\begin{eqnarray*}
        (r_{j}, r_{k}, r_{t})
& \in & \left\{ (4,2,2),
                (3,3,2),
                (2,4,2),
                (3,2,3),
                (2,2,4)
        \right\}\, .
\end{eqnarray*}
In the same way as for Case {\it (i)},
Lemma~\ref{embracing helps}
implies that
the sum of the terms
corresponding to these partitions
increases strictly monotone
in the weights.
\end{proof}

\noindent
We continue with the proof of Proposition~\ref{the real almost pure property}.
Consider the $k$-moment valuation
${\mathsf{kM}}$
with $k \in \{ 2, 4, 6, 8 \}$.
So,
Lemma~\ref{even moments} applies.
Assume,
by way of contradiction,
that Condition {\sf (C.3)} holds;
then,
${\mathsf{kM}}_{i}\left( p_{i}^{1},
                         {\bf p}_{-i}
                 \right)
 <
 {\mathsf{kM}}_{i}\left( p_{i}^{2},
                        {\bf p}_{-i}
                 \right)$,
which implies,
by definition of ${\mathsf{R}}$,
that
${\mathsf{R}}_{i}\left( p_{i}^{1},
                        {\bf p}_{-i}
                 \right)
 <
 {\mathsf{R}}_{i}\left( p_{i}^{2},
                       {\bf p}_{-i}
                 \right)$.
By the {\it Weak-Equilibrium-for-Expectation} property,
${\mathsf{E}}_{i}\left( p_{i}^{1},
                        {\bf p}_{-i}
                 \right)
 =
 {\mathsf{E}}_{i}\left( p_{i}^{2},
                       {\bf p}_{-i}
                 \right)$.
Hence,
${\mathsf{V}}_{i}\left( p_{i}^{1},
                        {\bf p}_{-i}
                 \right)
 <
 {\mathsf{V}}_{i}\left( p_{i}^{2},
                       {\bf p}_{-i}
                 \right)$.
Since ${\mathsf{V}}$ is concave,
the {\it Optimal-Value} property
(Proposition~\ref{constant value})
implies that
${\mathsf{V}}_{i}\left( p_{i}^{1},
                        {\bf p}_{-i}
                 \right)
 =
 {\mathsf{V}}_{i}\left( p_{i}^{2},
                       {\bf p}_{-i}
                 \right)$.
A contradiction.
Hence,
by Lemma~\ref{even moments},
either
{\sf (C.1)}
player $i$ is pure,
or
{\sf (C.2)}
all players
$i' \in [n] \setminus \{ i \}$
with $\omega (i, i', 1) \neq 0$
are pure.
\end{proof}

\subsection{${\mathcal{NP}}$-Hardness Result}
\label{two strategies strong np hardness}

\noindent
We give an example
of a player-specific scheduling game ${\mathsf{G}}$
with $n = 3$ players $0$, $1$ and $2$
on two ordered links $1$ and $2$,
with no ${\mathsf{V}}$-equilibrium
for an
$({\mathsf{E}}
  +
  {\mathsf{R}})$-valuation 
${\mathsf{V}}$,
where ${\mathsf{R}}$
is 
{\sf (1)}
${\mathsf{Var}}$,
or
{\sf (2)}
${\mathsf{SD}}$,
or
{\sf (3)}
a concave linear sum
$\sum_{k \in \{ 2, 4, 6, 8 \}}
   \alpha_{k}
   \cdot
   {\mathsf{kM}}$,
with $\alpha_{k} \geq 0$
for $k \in \{ 2, 4, 6, 8 \}$.
For $i \in \{ 0, 1, 2 \}$
and $\ell \in [2]$,
set
\begin{eqnarray*}
       \omega (i, i, \ell)
& := & 0\, ;                                                     \\
       \omega (i, (i+1)\ \mbox{mod}\ 3, \ell)
& := & \delta_{\ell\, 2}\, ;                                      \\
       \omega (i, (i+2)\ \mbox{mod}\ 3, \ell)
& := & 2 + \delta_{\ell\, 2}\, .
\end{eqnarray*}
($\delta$ is the {\it Kronecker delta}:
$\delta_{\ell 2} = 1$ if $\ell = 2$
and $0$ otherwise.)
Assume,
by way of contradiction,
that ${\mathsf{G}}$ has a pure equilibrium.
If the three players are on the same link,
then each player has cost greater than $0$
and can reduce her cost
by switching to the other link.
If two of the players are
on the same link $\ell \in [2]$
while the third player
is on link $\overline{\ell}$.
($\overline{\ell}$ denotes the link
other than $\ell$.)
Then,
there is a player $i \in \{ 0, 1, 2 \}$
such that player $i$ is on link $\ell$
while player $(i+1)\ \mbox{mod}\ 3$
is on link $\overline{\ell}$;
her cost on link $\ell$
is $2 + \delta_{\ell\, 2}$,
and she can reduce her cost to
$\delta_{\overline{\ell}\, 2}$
by switching to link $\overline{\ell}$.
A contradiction in both cases.
Finally,
assume that there is a ${\mathsf{V}}$-equilibium ${\bf p}$
where some player $i \in \{ 0, 1, 2 \}$ is mixed.
By the {\it Mixed-Player-Has-Pure-Neighbors} property
(Proposition~\ref{the real almost pure property}),
players $(i+1)\ \mbox{mod}\ 3$
and
$(i+2)\ \mbox{mod}\ 3$
are pure.
By the {\it Weak-Equilibrium-for-Expectation} property
(Corollary~\ref{brand new}
and Proposition~\ref{concavity implies weak equilibrium}),
${\mathsf{E}}_{i}\left( p_{i}^{1},
                       {\bf p}_{-i}
                \right)
 =
 {\mathsf{E}}_{i}\left( p_{i}^{2},
                       {\bf p}_{-i}
                \right)$
or
$\sum_{i' \neq i \mid p_{i'} = p_{i'}^{1}}
   \omega (i, i', 1)
 = 
 \sum_{i' \neq i \mid p_{i'} = p_{i'}^{2}}
   \omega (i, i', 2)$. 
By the definition of weights,
a contradiction follows.
Thus,
{\sf $\exists {\mathsf{V}}$-EQUILIBRIUM}
is non-trivial
in the considered setting.
We show:

\begin{theorem}
\label{concrete two ordered links}
Fix an
$({\mathsf{E}}
  +
  {\mathsf{R}})$-valuation
${\mathsf{V}}$,
where
${\mathsf{R}}$
is 
{\sf (1)}
${\mathsf{Var}}$,
or
{\sf (2)}
${\mathsf{SD}}$,
or
{\sf (3)}
a concave linear sum
$\sum_{k \in \{ 2, 4, 6, 8 \}}
   \alpha_{k}
   \cdot
   {\mathsf{kM}}$,
with $\alpha_{k} \geq 0$
for $k \in \{ 2, 4, 6, 8 \}$.
Then,
{\sf $\exists {\mathsf{V}}$-EQUILIBRIUM}
is strongly ${\mathcal{NP}}$-hard
for player-specific scheduling games
on two ordered links.
\end{theorem}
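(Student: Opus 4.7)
The plan is to reduce from a strongly $\mathcal{NP}$-hard partition problem (most naturally 3-PARTITION, whose bounded weights align with the \emph{strong} flavor of the hardness we want) to $\exists {\mathsf{V}}$-EQUILIBRIUM on the restricted class of player-specific scheduling games on two ordered links. The three ingredients developed earlier in the paper, namely the \emph{Mixed-Player-Has-Pure-Neighbors} property (Proposition~\ref{the real almost pure property}), the \emph{Weak-Equilibrium-for-Expectation} property (Corollary~\ref{brand new} and Proposition~\ref{concavity implies weak equilibrium}), and the \emph{Optimal-Value} property (Proposition~\ref{constant value}), hold \emph{uniformly} over the three families of ${\mathsf{V}}$ listed in the statement, so a single reduction will establish all three cases simultaneously.

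First, I would use those three properties to derive a very rigid structural characterization of any ${\mathsf{V}}$-equilibrium ${\bf p}$. If some player $i$ is mixed, then (i) every neighbor $i' \in [n] \setminus \{ i \}$ of $i$ with $\omega(i, i', 1) \neq 0$ is pure, and (ii) the expectations on her two supported links coincide, which on a player-specific scheduling game on two ordered links amounts to the arithmetic equality
$\sum_{i' \neq i \,\mid\, p_{i'} = p_{i'}^{1}} \omega(i, i', 1) = \sum_{i' \neq i \,\mid\, p_{i'} = p_{i'}^{2}} \omega(i, i', 2)$.
Because the links are ordered ($\omega(i, i', 1) < \omega(i, i', 2)$ whenever nonzero), this equation is extremely constraining; in particular, it forces a precise partition of the pure neighbors of $i$ between the two links that matches weights on link $1$ against strictly larger weights on link $2$.

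Next, I would build the reduction. Given a 3-PARTITION instance with items $a_{1}, \ldots, a_{3m}$ of total weight $m B$, the constructed game would contain one ``item player'' per $a_{j}$ together with a small number of ``selector'' players that play the role of bins, with weights engineered so that a \emph{pure} ${\mathsf{V}}$-equilibrium exists exactly when the items admit a valid 3-PARTITION into triples of weight $B$. To rule out ``cheating'' via a mixed equilibrium when no valid partition exists, I would attach a scaled copy of the 3-player cyclic gadget described immediately before the theorem statement. The gadget has no ${\mathsf{V}}$-equilibrium in isolation (as shown there using the \emph{Mixed-Player-Has-Pure-Neighbors} and \emph{Weak-Equilibrium-for-Expectation} properties), and would be wired into the rest of the game so that its three players are triggered into being mixed precisely when the item/selector players fail to realize a valid partition, making the rigid balance equation above unsolvable.

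The main obstacle will be simultaneously satisfying all three engineering constraints on the weights: (a) preserving the \emph{two ordered links} condition $0 = \omega(i, i', 1) = \omega(i, i', 2)$ or $\omega(i, i', 1) < \omega(i, i', 2)$ for every pair; (b) encoding 3-PARTITION into the expectation-balance equation so that a solution corresponds to a pure equilibrium; and (c) arranging the scaling of the cyclic gadget so that it is activated exactly in the ``no'' case and its internal no-equilibrium obstruction survives the interaction with the rest of the game. Since the weights in a 3-PARTITION instance are polynomially bounded and our reduction would use only polynomially many additional gadget weights of the same order of magnitude, the resulting game would have polynomially bounded integer weights, which yields the claimed \emph{strong} $\mathcal{NP}$-hardness; and because every step appeals only to the three properties listed above, the argument goes through uniformly for ${\mathsf{R}} \in \{ {\mathsf{Var}}, {\mathsf{SD}}, \sum_{k \in \{ 2,4,6,8 \}} \alpha_{k} \cdot {\mathsf{kM}} \}$.
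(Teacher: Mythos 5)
Your high-level ingredients are the right ones (a strongly $\mathcal{NP}$-hard partition-type source problem, the \emph{Mixed-Player-Has-Pure-Neighbors} property to localize mixing, and the \emph{Weak-Equilibrium-for-Expectation} property to extract a balance equation, all holding uniformly for the three choices of ${\mathsf{R}}$), but the architecture you propose has gaps that would make the reduction fail. The central problem is that you want a \emph{pure} equilibrium in the ``yes'' case while using the expectation-balance equation only to kill mixed equilibria in the ``no'' case. The balance equation $\sum_{i' \mid p_{i'}=p_{i'}^{1}} \omega(i,i',1) = \sum_{i' \mid p_{i'}=p_{i'}^{2}} \omega(i,i',2)$ is an \emph{indifference} condition that only applies to a \emph{mixed} player; a pure equilibrium never invokes it, so your construction has no mechanism that forces the item players' link choices to realize an \emph{exact} partition in the yes case, nor anything preventing a pure equilibrium from existing in a no-instance. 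The paper's reduction (from a tailored problem, {\sf MULTIBALANCED PARTITION}, itself reduced from {\sf 3-DIMENSIONAL MATCHING}) inverts your logic: each constraint $k$ gets five players $[k,0],\ldots,[k,4]$, where the first four form a cycle with no pure equilibrium among themselves, forcing $[k,4]$ to be mixed in \emph{every} equilibrium (Lemmas~\ref{hilton fontana} and~\ref{hilton percola}); the mixedness of $[k,4]$ then forces, via \emph{Weak-Equilibrium-for-Expectation}, exactly the balance equation that encodes the partition constraint on the (pure) item players. The equilibrium that exists in the yes case is therefore necessarily mixed, and constructing it requires a nontrivial choice of the mixing probability using a monotonicity/continuity argument (the \emph{Two-Values Risk-Monotonicity} property), which your proposal does not address.

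Two further concrete obstacles. First, your ``conditionally activated'' copy of the 3-player cyclic gadget is not realizable in this model: interaction is purely additive through the weights $\omega(i,i',\ell)$ on two shared links, so a sub-game cannot be switched on or off by the behavior of other players; either the gadget players are coupled to the rest (and then their internal no-equilibrium argument must be redone in the presence of that coupling) or they are decoupled (all cross weights zero) and the whole game trivially has no equilibrium regardless of the instance. Second, the two-ordered-links condition requires $\omega(i,i',1) < \omega(i,i',2)$ whenever the weights are nonzero, so a mixed player's indifference equation is inherently \emph{asymmetric} (smaller weights on link $1$ must balance strictly larger weights on link $2$); a plain 3-PARTITION equality $\sum = \sum$ cannot be encoded directly, which is precisely why the paper introduces {\sf MULTIBALANCED PARTITION} with the target condition $\sum_{i \in {\mathsf{I}}} a_{ij} = 3 + 2 \sum_{i \notin {\mathsf{I}}} a_{ij}$ matching weights $a_{ij}$ on link $1$ against $2a_{ij}$ on link $2$ plus gadget offsets.
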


\noindent
The proof will use a reduction
from {\sf MULTIBALANCED PARTITION},
a problem we introduce and show
strongly ${\mathcal{NP}}$-complete:

\noindent
\begin{tabular}{lp{14cm}l}
\hline
{\sc I.:} & $\langle n, m,
                           {\mathsf{A}}
                   \rangle$,
                  with integers $n$, $m$
                  and
                  a set
                  ${\mathsf{A}}
                   =
                   \left\{ a_{ij}
                           \mid
                           i \in [n], j \in [m]
                   \right\}$.
\\
{\sc Q.:} & Is there a subset
                  ${\mathsf{I}}
                   \subset
                   [n]$
                  such that
                  for each $j \in [m]$,
                  $\sum_{i \in {\mathsf{I}}}
                     a_{ij}
                  =
                  3
                  +
                  2\, \sum_{i \not\in {\mathsf{I}}}
                        a_{ij}$?
\\
\hline
\end{tabular}

\noindent
We show:

\begin{proposition}
\label{little shantry}
{\sf MULTIBALANCED PARTITION}
is strongly ${\mathcal{NP}}$-complete.
\end{proposition}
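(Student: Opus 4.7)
The plan is a reduction from a $6$-regular variant of EXACT COVER BY 3-SETS. Membership in $\mathcal{NP}$ is routine: a candidate subset $\mathsf{I}$ is a polynomial-size certificate, and verifying the $m$ column equations takes polynomial time.

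For hardness, I first rewrite the target condition in a more convenient form. Setting $\mathsf{J} := [n] \setminus \mathsf{I}$ and $T_j := \sum_{i \in [n]} a_{ij}$, the MULTIBALANCED condition $\sum_{i \in \mathsf{I}} a_{ij} = 3 + 2\sum_{i \notin \mathsf{I}} a_{ij}$ rearranges to $\sum_{i \in \mathsf{J}} a_{ij} = T_j/3 - 1$. So the question is equivalent to deciding whether some $\mathsf{J} \subseteq [n]$ has column sums all equal to $T_j/3 - 1$.

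I reduce from RESTRICTED EXACT COVER BY 3-SETS (RX3C), where every universe element lies in exactly three sets, a classical strongly $\mathcal{NP}$-complete problem. To align with the target formula I first ``double'' every set (replace each $c$ by two copies $c^{(1)}, c^{(2)}$); the resulting collection $\mathcal{C}$ has each element in exactly six sets, and exact covers of the doubled instance are in natural bijection with those of the original: since two copies of the same set cover identical triples, any exact cover of the doubled instance contains at most one copy from each pair. Call this problem X3C$_6$; it is strongly $\mathcal{NP}$-complete. Given an X3C$_6$ instance with universe $X$, $|X| = 3q$, and collection $\mathcal{C}$, $|\mathcal{C}| = 6q$, build the MULTIBALANCED instance with $n := |\mathcal{C}|$, $m := |X|$, and $a_{c, x} := 1$ if $x \in c$ and $0$ otherwise. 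Every column satisfies $T_x = 6$, so the complement target becomes $T_x/3 - 1 = 1$; hence a complementary $\mathsf{J}$ is feasible iff each $x \in X$ lies in exactly one set of $\mathsf{J}$, i.e., iff $\mathsf{J}$ is an exact cover of $X$. Note $|\mathsf{I}| = 5q$, so $\mathsf{I}$ is a proper nonempty subset. The reduction is polynomial and uses only $0/1$ entries, yielding strong $\mathcal{NP}$-hardness, and combined with $\mathcal{NP}$-membership gives strong $\mathcal{NP}$-completeness.

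The principal design obstacle is arranging the uniform target $T_j/3 - 1 = 1$ across all columns so that feasibility corresponds precisely to single-coverage; starting from a $6$-regular X3C instance via the doubling trick resolves this cleanly. Any non-uniform $T_j$ would require column-specific padding elements — extra rows that contribute only to a single column in order to bring $T_j$ into the required residue class modulo $3$ and the correct magnitude — and would substantially complicate the correctness argument by introducing interactions between padding choices and genuine set choices.
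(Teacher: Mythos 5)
Your proof is correct, and it takes a genuinely different route from the paper's. The paper reduces from {\sf 3-DIMENSIONAL MATCHING} and appends a single heavy row $a_{k+1,j} := 2b_j$ (twice the column sum of the matching rows); its correctness argument must first show that this extra row necessarily belongs to $\mathsf{I}$ before the equation collapses to the single-coverage condition ${\mathsf{\Delta}}_{j}=1$. You instead normalize the instance so that every column total equals the same constant $6$, whereupon the complement reformulation $\sum_{i \in \mathsf{J}} a_{ij} = T_j/3 - 1 = 1$ reads off directly as ``$\mathsf{J}$ is an exact cover,'' with no special row and no case analysis; your check that $|\mathsf{I}| = 5q$ is a proper nonempty subset closes the only remaining loophole. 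The price is that you launch from the $3$-regular variant of exact cover (each element in exactly three sets), which is indeed NP-complete (Gonzalez, 1985) and, being purely combinatorial, trivially strongly so --- but it is a slightly less textbook-standard starting point than 3DM and should be cited explicitly. Your doubling step is sound: an exact cover of the doubled instance cannot contain both copies of a set, so existence transfers in both directions (though ``natural bijection'' is a mild overstatement, since each original cover lifts to $2^{q}$ covers of the doubled instance; only the existence equivalence is needed). Both reductions use polynomially bounded entries, so both yield strong ${\mathcal{NP}}$-hardness; yours buys a shorter and more transparent correctness argument, while the paper's buys a more classical source problem at the cost of the extra-row analysis.
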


\noindent
The proof of Proposition~\ref{little shantry}
employs a reduction from
{\sf 3-DIMENSIONAL MATCHING}~\cite[SP1]{GJ79}:

\noindent
\begin{tabular}{lp{14cm}l}
\hline
{\sc I.:} & $\langle {\mathsf{W}},
                           {\mathsf{X}},
                           {\mathsf{Y}},
                           {\mathsf{M}}
                   \rangle$,
                  where ${\mathsf{M}}$
                  is a set with
                  ${\mathsf{M}}
                   \subseteq
                   {\mathsf{W}}
                   \times
                   {\mathsf{X}}
                   \times
                   {\mathsf{Y}}$
                  and
                  ${\mathsf{W}}$,
                  ${\mathsf{X}}$
                  and
                  ${\mathsf{Y}}$
                  are disjoint                                                           \\
          &      sets with
                  $|{\mathsf{W}}|
                   =
                   |{\mathsf{X}}|
                   =
                   |{\mathsf{Y}}|
                   =
                   q$.
\\
{\sc Q.:} & Does ${\mathsf{M}}$
                  contain a {\it matching,}
                  i.e.,
                  a subset
                  ${\mathsf{M}'}
                   \subset
                   {\mathsf{M}}$
                  with
                  $|{\mathsf{M}'}|
                   =
                   q$
                  such that no two                                                     \\
          &     elements of
                  ${\mathsf{M}'}$
                  agree in any coordinate?
\\
\hline
\end{tabular}


\begin{proof}
Clearly,
{\sf MULTIBALANCED PARTITION}
belongs to
${\mathcal{NP}}$.
For the ${\mathcal{NP}}$-hardness
we shall employ a
reduction from
{\sf 3-DIMENSIONAL MATCHING}.\footnote{The reduction is very similar
                                       to the one used
                                       in the proof of~\cite[Theorem 3.5]{GJ79}.}
Given an instance
$\langle {\mathsf{W}},
         {\mathsf{X}},
         {\mathsf{Y}},
         {\mathsf{M}}
 \rangle$
of {\sf 3-DIMENSIONAL MATCHING},
with
${\mathsf{W}}
 =
 \left\{ w_{1}, \ldots, w_{q}
 \right\}$,
${\mathsf{X}}
 =
 \left\{ x_{1}, \ldots, x_{q}
 \right\}$,
${\mathsf{Y}}
 =
 \left\{ y_{1}, \ldots, y_{q}
 \right\}$
and
${\mathsf{M}}
 =
 \left\{ m_{1}, \ldots, m_{k}
 \right\}$,
where for each $i \in [k]$,
$m_{i} = (w_{{\mathsf{f}}(i)}, x_{{\mathsf{g}}(i)}, y_{{\mathsf{h}}(i)})$
with functions
${\mathsf{f}},
 {\mathsf{g}},
 {\mathsf{h}}:
 [k]
 \rightarrow
 [q]$
giving the first, second and third coordinate,
respectively,
of each element
$m_{i} \in {\mathsf{M}}$,
we construct
an instance
$\langle n, m, {\mathsf{A}}
 \rangle$
of {\sf MULTIBALANCED PARTITION}
as follows:
\begin{quote}
\begin{itemize}

\item
$n := k+1$
and
$m := 3q$.

\item
For $1 \leq i \leq k$
and
$1 \leq j \leq m$,
{
\small
\begin{eqnarray*}
       a_{ij}
& := & \left\{ \begin{array}{ll}
                1\, , & \mbox{if ($1 \leq j \leq q$
                                  and
                                  $j = {\mathsf{f}}(i)$)}          \\
                      & \mbox{or ($q+1 \leq j \leq 2q$
                                  and
                                  $j-q = {\mathsf{g}}(i)$)}        \\
                      & \mbox{or ($2q+1 \leq j \leq 3q$
                                  and
                                  $j-2q = {\mathsf{h}}(i)$)}         \\
               0\, , & \mbox{otherwise}
              \end{array}
       \right.\, ,
\end{eqnarray*}
}
and for $1 \leq j \leq m$,
{
\small
\begin{eqnarray*}
       a_{k+1,j}
& := & 2\, b_{j}\, ,
\end{eqnarray*}
}
where
{
\small
\begin{eqnarray*}
       b_{j} 
& := & \sum_{i \in [k]}
            a_{ij}\, .
\end{eqnarray*}            
}

\end{itemize}
\end{quote}
We prove:

\begin{lemma}
\label{hotel arosa}
$\langle {\mathsf{W}},
         {\mathsf{X}},
         {\mathsf{Y}},
         {\mathsf{M}}
 \rangle$
has a solution
if and only if
$\langle n, m, {\mathsf{A}}
 \rangle$
has a solution.
\end{lemma}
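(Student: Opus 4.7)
The plan is to reformulate the MULTIBALANCED PARTITION conditions for the constructed instance $\langle n,m,\mathsf{A}\rangle$ in terms of column sums, and then show that the only way to satisfy them is for the $(k+1)$-st row to lie in $\mathsf{I}$ and for $\mathsf{I}\cap[k]$ to pick out a $3$-dimensional matching.

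First, for each column $j\in[m]$, compute the total column sum. Since $a_{k+1,j}=2b_{j}$ with $b_{j}=\sum_{i\in[k]}a_{ij}$, the total is $\sum_{i\in[n]}a_{ij}=b_{j}+2b_{j}=3b_{j}$. Setting $S_{j}:=\sum_{i\in\mathsf{I}}a_{ij}$ and $T_{j}:=\sum_{i\notin\mathsf{I}}a_{ij}$, the two requirements $S_{j}=3+2T_{j}$ and $S_{j}+T_{j}=3b_{j}$ solve uniquely to
\[
S_{j}=2b_{j}+1,\qquad T_{j}=b_{j}-1,
\]
for every $j\in[m]$. Note that $T_{j}\ge 0$ is needed for feasibility.

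Next I would carry out a case distinction on whether $k+1\in\mathsf{I}$. If $k+1\notin\mathsf{I}$, then $T_{j}\ge a_{k+1,j}=2b_{j}$, which together with $T_{j}=b_{j}-1$ forces $b_{j}\le -1$, impossible. Hence $k+1\in\mathsf{I}$, and then $S_{j}=2b_{j}+\sum_{i\in\mathsf{I}\cap[k]}a_{ij}$, so the condition $S_{j}=2b_{j}+1$ becomes
\[
\sum_{i\in\mathsf{I}\cap[k]}a_{ij}\;=\;1\qquad\text{for every }j\in[m].
\]
By construction of the $a_{ij}$'s, this says that $\mathsf{I}\cap[k]$ indexes a subfamily $\mathsf{M}'\subseteq\mathsf{M}$ of triples whose $\mathsf{W}$-, $\mathsf{X}$- and $\mathsf{Y}$-coordinates each hit every element of $\mathsf{W}$, $\mathsf{X}$, $\mathsf{Y}$ exactly once; that is precisely a $3$-dimensional matching, and in particular forces $|\mathsf{M}'|=q$.

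For the converse, given a matching $\mathsf{M}'\subseteq\mathsf{M}$, define $\mathsf{I}:=\{i\in[k]:m_{i}\in\mathsf{M}'\}\cup\{k+1\}$. Then for each column $j$ exactly one $i\in\mathsf{I}\cap[k]$ has $a_{ij}=1$, giving $S_{j}=2b_{j}+1$ and $T_{j}=b_{j}-1$, which satisfies $S_{j}=3+2T_{j}$. This completes both directions of the equivalence. There is no real obstacle here: the only delicate point is recognising that the row $k+1$ is forced into $\mathsf{I}$, which is immediate once one computes the total column sum $3b_{j}$ and uses the positivity of $T_{j}$; the remaining verification is a direct matching of coefficients.
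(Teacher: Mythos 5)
Your proof is correct and follows essentially the same route as the paper: both directions hinge on showing that row $k+1$ is forced into $\mathsf{I}$ and that the remaining condition reduces to $\sum_{i\in\mathsf{I}\cap[k]}a_{ij}=1$ for every column $j$, which characterizes a matching. Your preliminary step of solving the linear system for $S_{j}$ and $T_{j}$ via the total column sum $3b_{j}$ is just a tidier packaging of the same algebra the paper performs with $\Delta_{j}$.
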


\begin{proof}
``$\Rightarrow$'':
Assume first that
$\langle {\mathsf{W}},
         {\mathsf{X}},
         {\mathsf{Y}},
         {\mathsf{M}}
 \rangle$
has a solution
${\mathsf{M}}'$.
Set
\begin{eqnarray*}
      {\mathsf{I}}'
& := &
 \left\{ i \in [k]
         \mid
         m_{i} \in {\mathsf{M}}'
 \right\}\, .
\end{eqnarray*} 
Then,
clearly,
for each $j \in [m]$,
$\sum_{i \in {\mathsf{I}}'}
   a_{ij}
 =
 1$.
Hence,
for each $j \in [m]$,
\begin{eqnarray*}
      \sum_{i \in [k] \setminus {\mathsf{I}}'}
        a_{ij}
& = & \sum_{i \in [k]}
        a_{ij}
      -
      \sum_{i \in {\mathsf{I}}'}
        a_{ij}\ \
  =\ \ b_{j} - 1\, .
\end{eqnarray*}
Set now
${\mathsf{I}}
 :=
 {\mathsf{I}}'
 \cup
 \{ k+1 \}$.
Then,
for each $j \in [m]$,
\begin{eqnarray*}
      \sum_{i \in {\mathsf{I}}}
        a_{ij}
& = & \sum_{i \in {\mathsf{I}}'}
        a_{ij}
      +
      a_{k+1,j}                                       \\
& = & 1 + 2\, b_{j}                                   \\
& = & 3 + 2\, (b_{j} - 1)                             \\
& = & 3 + 2\, \sum_{i \in [k] \setminus {\mathsf{I}}'}
                a_{ij}                                \\
& = & 3 + 2\, \sum_{i \not\in {\mathsf{I}}}
                a_{ij}\, ,
\end{eqnarray*}
so that
${\mathsf{I}}$
is a solution of
$\langle n, m, {\mathsf{A}}
 \rangle$.

``$\Leftarrow$'':
Assume now that
$\langle n, m, {\mathsf{A}}
 \rangle$
has a solution
${\mathsf{I}}$.
We claim that
$k+1 \in {\mathsf{I}}$:
Assume,
by way of contradiction,
that
$k+1 \not \in {\mathsf{I}}$.
Then,
for each $j \in [m]$,
\begin{eqnarray*}
       \lefteqn{3 + 2\, \sum_{i \not\in {\mathsf{I}}}
                          a_{ij}}                                                  \\
\geq & 3 + 2\, a_{k+1,j}
     & \mbox{(since $k+1 \not\in {\mathsf{I}}$)}                                  \\
=    & 3 + 4 b_{j}
     & \mbox{(by definition of $a_{k+1,j}$)}                                       \\
>    & b_{j}
     &                                                                            \\
=    & \sum_{i \in [k]}
         a_{ij}
     & \mbox{(by definition of $b_{j}$)}                                          \\
\geq & \sum_{i \in {\mathsf{I}}}
         a_{ij}\, ,
     &
\end{eqnarray*}
a contradiction to the assumption that
${\mathsf{I}}$
is a solution of
$\langle n, m, {\mathsf{A}}
 \rangle$.
So,
$k+1 \in {\mathsf{I}}$.

Fix now an arbitrary $j \in [m]$.
Set
${\mathsf{\Delta}}_{j}
 :=
 \sum_{i \in {\mathsf{I}} \setminus \{ k+1 \}}
   a_{ij}$.
Since ${\mathsf{I}}$ is a solution of
$\langle n, m, {\mathsf{A}}
 \rangle$,
{
\small
\begin{eqnarray*}
      \sum_{i \in {\mathsf{I}}}
        a_{ij}
& = & 3 +
      2\, \sum_{i \not\in {\mathsf{I}}}
            a_{ij}\, .
\end{eqnarray*}
}
Since $k+1 \in {\mathsf{I}}$,
this implies that
{
\small
\begin{eqnarray*}
      a_{k+1,j}
      +
      \sum_{i \in {\mathsf{I}} \setminus \{ k+1 \}}
        a_{ij}
& = & 3 +
      2\, \left( \sum_{i \in [k]}
                   a_{ij}
                 -
                 \sum_{i \in {\mathsf{I}} \setminus \{ k+1 \}}
                   a_{ij}
          \right)\, .
\end{eqnarray*}
}
Hence,
{
\small
\begin{eqnarray*}
      2\, b_{j} +
      {\mathsf{\Delta}}_{j}
& = & 3 +
      2\,
      \left( b_{j} - {\mathsf{\Delta}}_{j}
      \right)\, .
\end{eqnarray*}
}
It follows that
${\mathsf{\Delta}}_{j}
 =
 1$,
so that
${\mathsf{I}} \setminus \{ k+1 \}$
is a solution of
$\langle {\mathsf{W}},
         {\mathsf{X}},
         {\mathsf{Y}},
         {\mathsf{M}}
 \rangle$.
\end{proof}

\noindent
Lemma~\ref{hotel arosa}
establishes the reduction
for the ${\mathcal{NP}}$-hardness;
since the number involved in the reduction
are polynomially bounded,
strong ${\mathcal{NP}}$-hardness follows.
\end{proof}

\noindent
For the proof 
of the reduction
for Theorem~\ref{concrete two ordered links},
we shall use the {\it Mixed-Player-Has-Pure-Neighbors} property
(Proposition~\ref{the real almost pure property})
to identify the mixed players;
in turn,
we shall apply the {\it Weak-Equilibrium-for-Expectation} property
to the mixed players 
to obtain a solution to {\sf MULTIBALANCED PARTITION}
or a ${\mathsf{V}}$-equilibrium.
We continue with the proof
of Theorem~\ref{concrete two ordered links}.

\begin{proof}
We shall employ
a reduction from
{\sf MULTIBALANCED PARTITION}.
Given an instance
$\langle n, m, {\mathsf{A}}
 \rangle$
of {\sf MULTIBALANCED PARTITION},
we construct an instance
$\langle {\mathsf{G}}
 \rangle$
of
{\sf $\exists {\mathsf{V}}$-EQUILIBRIUM}
as follows.
${\mathsf{G}}$
is a player-specific scheduling game
on two ordered links
with $n+5m$ players
in the player set
${\mathsf{\Pi}}
 :=
 [n]
 \cup
 \{ [k, j]
    \mid
    k \in [m],
    j \in \{ 0, \ldots, 4
          \}
 \}$. 
Set $M := \max_{k} \sum_{j \in [n]} a_{kj}$.
We assume that $M \geq 4$.
We now define
the weights
$\omega (\pi_{1}, \pi_{2}, \ell)$
where
$\pi_{1}, \pi_{2}
 \in
 {\mathsf{\Pi}}$
and
$\ell \in [2]$,
where $\delta$
is the {\it Kronecker delta}:
$\delta_{\ell\ell'}
 =
  1$
if $\ell = \ell'$
and $0$
otherwise.

\begin{enumerate}

\item
\underline{$\pi_{1} = [k, j]$
           with
           $k \in [m]$ and
           $ j \in \{ 0, \ldots, 3
                  \}$:}
Then,
{
\small
\begin{eqnarray*}
            \omega ([k, j], \pi_{2}, \ell)  
& := & \left\{ \begin{array}{ll}
                 M + \delta_{\ell 2}\, ,     & \mbox{if $\pi_{2} = [k, i]$
                                                     with $i = 4$
                                                     or $i \neq (j+1)\ \mbox{mod}\ 4$}          \\
                 M - 4 + \delta_{\ell 2}\, , & \mbox{if $\pi_{2} = [k, (j+1)\ \mbox{mod}\ 4]$}  \\
                 0\, ,                       & \mbox{otherwise}
               \end{array}
       \right.\, .
\end{eqnarray*}
}

\item
\underline{$\pi_{1} = [k, 4]$
           with
           $k \in [m]$:}
Then,
{
\small
\begin{eqnarray*}
       \omega ([k, 4], \pi_{2}, \ell)
& := & \left\{ \begin{array}{ll}
                 M + \delta_{\ell 2}\, ,     & \mbox{if $\pi_{2} = [k, i]$
                                                     with $i \in [4]$}                          \\
                 a_{kj}\, ,                  & \mbox{if $\pi_{2} = j \in [n]$
                                                     and $\ell = 1$}                            \\
                 2\, a_{kj}\, ,              & \mbox{if $\pi_{2} = j \in [n]$
                                                     and $\ell = 2$}                            \\
                 0\, ,                       & \mbox{otherwise}
               \end{array}
       \right.\, .
\end{eqnarray*}
}

\item
\underline{$\pi_{1} = i \in [n]$:}
Then,
$\omega (i, \pi_{2}, \ell) = 0$
for all $\pi_{2} \in {\mathsf{\Pi}}$
and $\ell \in [2]$.

\end{enumerate}
Note that
${\mathsf{G}}$
is a player-specific scheduling game
on two ordered links.
We now prove:

\begin{lemma}
\label{hilton fontana}
In a ${\mathsf{V}}$-equilibrium ${\bf p}$,
for each $k \in [m]$,
there is an index
$j \in \{ 0, \ldots, 4 \}$
such that
player $[k, j]$
is non-pure.
\end{lemma}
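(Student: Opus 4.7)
The plan is to argue by contradiction. Suppose all five players $[k,0],\ldots,[k,4]$ are pure in ${\bf p}$. For each $j \in \{0,1,2,3\}$, Case~1 of the weight definition gives $\omega([k,j], \pi, \ell) = 0$ whenever $\pi \notin \{[k,0],\ldots,[k,4]\}$, so ${\mathsf{\mu}}_{[k,j]}$ depends only on the strategies of the five group-$k$ players. Since those strategies are pure, ${\mathsf{\mu}}_{[k,j]}({\bf s})$ is constant on the support of ${\bf p}$, and condition {\sf (C.2)} of the $({\mathsf{E}}+{\mathsf{R}})$-valuation yields ${\mathsf{R}}_{[k,j]}({\bf p}) = 0$ and ${\mathsf{V}}_{[k,j]}({\bf p}) = {\mathsf{\mu}}_{[k,j]}({\bf p})$. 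Since ${\mathsf{V}}$ is concave, the ${\mathsf{V}}_{[k,j]}$-best-response property reduces to an ordinary pure best-response condition in the five-player subgame on $\{[k,0],\ldots,[k,4]\}$: each cycle player's chosen link must cost no more than the alternative.

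The second step is to show that this five-player subgame has no pure equilibrium. Modulo swapping the two links and the cyclic symmetry $[k,0]\to[k,1]\to[k,2]\to[k,3]\to[k,0]$ (which preserves the game because $\omega([k,4],[k,i],\ell)$ is the same for all $i\in\{0,1,2,3\}$), the pure profiles on $\{[k,0],\ldots,[k,4]\}$ fall into a handful of classes: (i) all four cycle players on one link, (ii) a $(3,1)$ split of the cycle players, (iii) the adjacent $(2,2)$ split $\{[k,0],[k,1]\}\mid\{[k,2],[k,3]\}$, and (iv) the alternating $(2,2)$ split $\{[k,0],[k,2]\}\mid\{[k,1],[k,3]\}$, each combined with both placements of $[k,4]$. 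In every class, a short comparison of the sums of weights seen by a suitably chosen cycle player $[k,j^{*}]$ on its two possible links---exploiting that the ``successor'' weight is $M-4+\delta_{\ell 2}$ while every non-successor cycle weight is $M+\delta_{\ell 2}$---exhibits a strict improvement of at least $2$ obtainable by switching, using only the assumption $M \geq 4$.

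The main obstacle is the bookkeeping in classes (iii) and (iv) combined with the placement of $[k,4]$: the extra weight $M+\delta_{\ell 2}$ contributed by $[k,4]$ can alter which cycle player benefits from deviating. Concretely, for class (iii) the deviator is $[k,1]$ when $[k,4]$ sits on link~$1$ (current cost $2M$ vs.\ alternative $2M-2$) and $[k,3]$ when $[k,4]$ sits on link~$2$ (current cost $2M+2$ vs.\ alternative $2M-4$); rotations of the $4$-cycle reduce to these by symmetry, and class (iv) is handled analogously. The external weights $\omega([k,4], j, \ell)$ for $j \in [n]$ never enter the argument, because no cycle player has nonzero weight toward players outside $\{[k,0],\ldots,[k,4]\}$. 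Once a strictly beneficial pure deviation is located for some $[k,j]$, it violates the pure best-response condition from the first step, contradicting the ${\mathsf{V}}$-equilibrium assumption and proving the lemma.
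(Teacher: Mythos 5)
Your proof is correct and follows essentially the same route as the paper's: assume all five group-$k$ players are pure, observe that their costs are then deterministic (so, by {\it Risk-Positivity}, the ${\mathsf{V}}$-best-response condition reduces to a pure best-response condition in the five-player gadget), and exhibit an improving link switch for a cycle player whose successor sits on the other link, using the discount of $4$ against $M \geq 4$. The only discrepancies are cosmetic: the paper organizes the case analysis by the five-player link counts (at least four together versus a $3$--$2$ split) rather than by the cycle-player split combined with the placement of $[k,4]$, and its cost formula includes the self-weight $\omega([k,j],[k,j],\ell) = M + \delta_{\ell 2}$, so its figures read $3M$ versus $3M-1$ where yours read $2M$ versus $2M-2$ --- neither difference affects the sign of any comparison, since the successor discount of $4$ always dominates the at-most-$3$ penalty from the $\delta_{\ell 2}$ terms.
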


\begin{proof}
Assume,
by way of contradiction,
that there is an index
$\widehat{k} \in [m]$
such that
all players
$[\widehat{k}, j]$
with $j \in \{ 0, \ldots, 4 \}$
are pure;
so,
$p_{[\widehat{k}, j]}[1]
 \in
 \{ 0, 1 \}$
for all
$j \in \{ 0, \ldots, 4 \}$.
If there are at least four players
$[\widehat{k}, j]$
with $j \in \{ 0, \ldots, 4 \}$
choosing the same link,
then
at least one player
$[\widehat{k}, j]$
with $j \in \{ 0, \ldots, 3 \}$
can improve her cost
by switching to the other link.
So,
assume that there are three players
$[\widehat{k}, j]$
with $j \in \{ 0, \ldots, 4 \}$
choosing some link $\ell \in [2]$,
while the remaining two players
choose the other link.
Then,
there is an index
$j_{0} \in \{ 0, \ldots, 3 \}$
such that
player $[\widehat{k}, j_{0}]$
chooses link $\ell$
while player
$[\widehat{k}, (j_{0} +1)\ \mbox{mod}\ 4]$
chooses link $\overline{\ell}$.
So,
${\mathsf{\mu}}_{[\widehat{k}, j_{0}]}({\bf p})
 =
 3M + 3 \delta_{\ell 2}$,
and player $[\widehat{k}, j_{0}]$
can improve
by switching to link $\overline{\ell}$
where her cost becomes
$3M - 4 + 3 \delta_{\overline{\ell} 2}
 <
 3M + 3 \delta_{\ell 2}$.
A contradiction
to the assumption that
${\bf p}$
is a ${\mathsf{V}}$-equilibrium.
\end{proof}

\noindent
We next prove:

\begin{lemma}
\label{hilton percola}
In a ${\mathsf{V}}$-equilibrium,
each player
$[k, j]$,
with
$k \in [m]$ and $j \in \{ 0, \ldots, 3 \}$,
is pure.
\end{lemma}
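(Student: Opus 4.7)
My plan is to argue by contradiction, combining the two main structural tools already in hand: the Mixed-Player-Has-Pure-Neighbors property (Proposition~\ref{the real almost pure property}) forces the relevant neighbors to be pure, after which the Weak-Equilibrium-for-Expectation property (Corollary~\ref{brand new}, Proposition~\ref{concavity implies weak equilibrium}) delivers an arithmetic identity that cannot be satisfied.

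Assume for contradiction that some $[\widehat{k},\widehat{j}]$ with $\widehat{j} \in \{0,\ldots,3\}$ is non-pure in the $\mathsf{V}$-equilibrium $\mathbf{p}$. First I would read off the weight definitions of the reduction: the link-$1$ weight $\omega([\widehat{k},\widehat{j}],\pi,1)$ vanishes on every player outside the group $\{[\widehat{k},i] : i \in \{0,\ldots,4\}\}$, so by Proposition~\ref{the real almost pure property} each of the other four members of this group must be pure (in the boundary case $M = 4$ the ``short-weight'' neighbor $[\widehat{k},(\widehat{j}+1)\bmod 4]$ is not strictly a neighbor and will be kept as a free mixing parameter; see below). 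I would then parameterize the configuration by the number $a_\ell$ of these four members choosing link $\ell$, together with an indicator $b_\ell = 1$ iff $(\widehat{j}+1)\bmod 4$ is one of them, so that $a_1 + a_2 = 4$ and $b_1 + b_2 = 1$.

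The next step is to invoke Weak-Equilibrium-for-Expectation at $[\widehat{k},\widehat{j}]$, forcing her link-$1$ and link-$2$ conditional expected costs to coincide. A direct tally of the contributing weights, three of which equal $M + \delta_{\ell 2}$ and one of which equals $M - 4 + \delta_{\ell 2}$, collapses the equality to a single linear constraint of the shape $a_1 (2M+1) = 4M + 8 b_1$. The final step is then arithmetic: splitting on $b_1 \in \{0,1\}$ yields $a_1 = 2 - 2/(2M+1)$ or $a_1 = 2 + 6/(2M+1)$, and since $2M+1$ is odd and at least $9$ for $M \geq 4$, neither value is an integer in $\{0,\ldots,4\}$. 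This yields the desired contradiction.

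The main obstacle will be the weight bookkeeping in the second step, namely reading off the correct multiplicities of each weight value $M$ versus $M-4$ (and their link-$2$ counterparts) from the cyclic pattern of the reduction in order to arrive at the clean linear expression above. The auxiliary subtlety is the boundary $M = 4$: there $[\widehat{k},(\widehat{j}+1)\bmod 4]$ may retain a free mixing probability $p \in [0,1]$ because its link-$1$ weight is exactly zero, but repeating the expectation computation with $p$ carried as an extra variable reduces to $9 a'_1 + p = 16$ with $a'_1 \in \{0,\ldots,3\}$, which again admits no feasible solution. With both cases disposed of, the assumed non-pure $[\widehat{k},\widehat{j}]$ cannot exist, which is the claim.
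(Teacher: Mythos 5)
Your proposal is correct and follows essentially the same route as the paper: invoke the \emph{Mixed-Player-Has-Pure-Neighbors} property to make the other four members of the group $\{[\widehat{k},i]\}_{i\in\{0,\ldots,4\}}$ pure, then equate the two conditional expectations of $[\widehat{k},\widehat{j}]$ via \emph{Weak-Equilibrium-for-Expectation} and derive an arithmetic contradiction; your single linear constraint $a_{1}(2M+1)=4M+8b_{1}$ is just a compressed, equivalent form of the paper's three-way case split on $|{\mathsf{\Phi}}_{1}|$ versus $|{\mathsf{\Phi}}_{2}|$. One genuine merit of your write-up is the explicit treatment of the boundary $M=4$: there $\omega([\widehat{k},\widehat{j}],[\widehat{k},(\widehat{j}+1)\bmod 4],1)=M-4=0$, so Proposition~\ref{the real almost pure property} does \emph{not} force that neighbor to be pure, a case the paper's proof silently glosses over; your patched equation $9a_{1}'+p=16$ with $a_{1}'\in\{0,\ldots,3\}$ and $p\in[0,1]$ correctly disposes of it.
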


\begin{proof}
Assume,
by way of contradiction,
that there is a non-pure player
$\pi = [\widehat{k}, \widehat{j}]$
with $\widehat{k} \in [m]$
and
$\widehat{j} \in \{ 0, \ldots, 3 \}$.
Then,
by the {\it Mixed-Player-Has-Pure-Neighbors} property
(Proposition~\ref{the real almost pure property}),
all players
$[\widehat{k}, j]$
with
$j \in \{ 0, \ldots, 4 \} \setminus \{ \widehat{j} \}$
are pure.
By the {\it Weak-Equilibrium-for-Expectation} property
for player $\pi$,
${\mathsf{E}}_{\pi}\left( p_{\pi}^{1},
                          {\bf p}_{-\pi}
                   \right)
 =
 {\mathsf{E}}_{\pi}\left( p_{\pi}^{2},
                          {\bf p}_{-\pi}
                   \right)$.
For each link $\ell \in \{ 1, 2 \}$,
denote as
${\mathsf{\Phi}}_{\ell}$
the set of players
$[\widehat{k},j]$
with $j \in \{ 0, \ldots, 4 \}
            \setminus
            \{ \widehat{j} \}$
choosing link $\ell$;
set
$y_{\ell} := 1$
if
$[\widehat{k}, (\widehat{j}+1)\ \mbox{mod}\ 4]
 \in
 {\mathsf{\Phi}}_{\ell}$
and $0$ otherwise.
Clearly,
$|y_{1} - y_{2}|
 =
 1$.
Then,
\begin{eqnarray*}
      {\mathsf{E}}_{\pi}\left( p_{\pi}^{1},
                              {\bf p}_{-\pi}
                        \right)
& = & |{\mathsf{\Phi}}_{1}|
      \cdot
      M
      -
      4\, y_{1}\, ,
\end{eqnarray*}
and
\begin{eqnarray*}
      {\mathsf{E}}_{\pi}\left( p_{\pi}^{2},
                              {\bf p}_{-\pi}
                        \right)
& = & |{\mathsf{\Phi}}_{2}|
      \cdot
      (M+1)
      -
      4\, y_{2}\, .
\end{eqnarray*}
We proceed by case analysis.
\begin{enumerate}

\item
Assume first that
$|{\mathsf{\Phi}}_{1}|
 >
 |{\mathsf{\Phi}}_{2}|$.
So,
$|{\mathsf{\Phi}}_{1}|
 \geq
 3$
and
$|{\mathsf{\Phi}}_{2}|
 \leq
 1$.
Then,
\begin{eqnarray*}
         {\mathsf{E}}_{\pi}\left( p_{\pi}^{1},
                                 {\bf p}_{-\pi}
                           \right)
& \geq & 3\, M - 4\, y_{1}\ \
         >\ \
         M+1 - 4\, y_{2}\ \
         \geq\ \
         {\mathsf{E}}_{\pi}\left( p_{\pi}^{2},
                                 {\bf p}_{-\pi}
                           \right)\, ,
\end{eqnarray*}
since
$3 M > M+5 \geq M+1 + 4\, (y_{1} - y_{2})$.
A contradiction.

\item
Assume now that
$|{\mathsf{\Phi}}_{1}|
 <
 |{\mathsf{\Phi}}_{2}|$.
So,
$|{\mathsf{\Phi}}_{2}|
 \geq
 3$
and
$|{\mathsf{\Phi}}_{1}|
 \leq
 1$.
Then,
\begin{eqnarray*}
         {\mathsf{E}}_{\pi}\left( p_{\pi}^{2},
                                 {\bf p}_{-\pi}
                           \right)
& \geq & 3\, M + 3 - 4\, y_{2}\ \
         >\ \
         M - 4\, y_{1}\ \
         \geq\ \
         {\mathsf{E}}_{\pi}\left( p_{\pi}^{1},
                                 {\bf p}_{-\pi}
                           \right)\, ,
\end{eqnarray*}
since
$3 M + 3 > M+4 \geq M + 4\, (y_{2} - y_{1})$.
A contradiction.

\item
Assume finally that
$|{\mathsf{\Phi}}_{1}|
 =
 |{\mathsf{\Phi}}_{2}|$.
Then,
\begin{eqnarray*}
         {\mathsf{E}}_{\pi}\left( p_{\pi}^{1},
                                 {\bf p}_{-\pi}
                           \right)
& = & 2\, M - 4\, y_{1}\, ,
\end{eqnarray*}
and
\begin{eqnarray*}
         {\mathsf{E}}_{\pi}\left( p_{\pi}^{2},
                                 {\bf p}_{-\pi}
                           \right)
& = & 2\, M + 2 - 4\, y_{2}\, .
\end{eqnarray*}
Since
${\mathsf{E}}_{\pi}\left( p_{\pi}^{2},
                         {\bf p}_{-\pi}
                   \right)
 =
 {\mathsf{E}}_{\pi}\left( p_{\pi}^{2},
                         {\bf p}_{-\pi}
                  \right)$,
it follows that
$4 (y_{2} - y_{1})
 =
 2$.
A contradiction.

\end{enumerate}
The claim follows.
\end{proof}

\noindent
We are now ready to prove:

\begin{lemma}
\label{one direction}
If ${\mathsf{G}}$ has a
${\mathsf{V}}$-equilibrium,
then $\langle n, m, {\mathsf{A}}
      \rangle$
has a solution.
\end{lemma}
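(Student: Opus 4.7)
The plan is to extract a solution $I \subseteq [n]$ to the partition instance from the pure choices of the $[n]$-players in the ${\mathsf{V}}$-equilibrium ${\bf p}$. Combining Lemmas~\ref{hilton fontana} and~\ref{hilton percola}, only $[k, 4]$ can be non-pure among $[k, 0], \ldots, [k, 4]$, so $[k, 4]$ is mixed for each $k \in [m]$. I would then apply the {\it Mixed-Player-Has-Pure-Neighbors} property (Proposition~\ref{the real almost pure property}) to $[k, 4]$ to conclude that every $j \in [n]$ with $a_{kj} = \omega([k, 4], j, 1) \neq 0$ is pure; players $j \in [n]$ with $a_{kj} = 0$ for all $k \in [m]$ have identically zero cost and may be taken pure without loss of generality. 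Thus $I := \{j \in [n] : p_j = p_j^{1}\}$ is well-defined, and each $\alpha_k := \sum_{j \in I} a_{kj}$ is an integer.

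Next I would apply the {\it Weak-Equilibrium-for-Expectation} property (Corollary~\ref{brand new} together with Proposition~\ref{concavity implies weak equilibrium}) to $[k, 4]$, whose two supported strategies must yield equal conditional expectations. Writing $f_1^{(k)}$ for the number of players in $\{[k, 0], \ldots, [k, 3]\}$ on link $1$ and $b_k := \sum_{j \in [n]} a_{kj}$, a direct calculation from the weights $[k, 4]$ places on her group peers (each $M + \delta_{\ell 2}$) and on the $[n]$-players ($a_{kj}$ on link $1$ and $2 a_{kj}$ on link $2$) reduces the indifference of $[k, 4]$ to the single linear equation $3 \alpha_k = (4 M + 5) - (2 M + 1)\, f_1^{(k)} + 2 b_k$; in particular, $f_1^{(k)} = 2$ yields exactly the partition condition $3 \alpha_k = 3 + 2 b_k$, equivalently $\sum_{j \in I} a_{kj} = 3 + 2 \sum_{j \not\in I} a_{kj}$.

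The technical core is to force $f_1^{(k)} = 2$. The constraints $0 \leq \alpha_k \leq b_k$ together with $b_k \leq M$ immediately rule out $f_1^{(k)} \in \{0, 1, 4\}$; the remaining alternative $f_1^{(k)} = 3$ would force $b_k \in \{M - 1, M\}$ and $\alpha_k \in [0, 2/3]$, hence $\alpha_k = 0$ by integrality. I would rule this out via the best-response condition for the pure players $[k, 0], \ldots, [k, 3]$, exploiting the cyclic ``reduced-weight'' structure $\omega([k, j], [k, (j + 1) \bmod 4], \ell) = M - 4 + \delta_{\ell 2}$: if three of them were on one link and one on the other, then the link-$1$ player whose reduced-weight neighbor is the lone link-$2$ player would save $2 M + 2 - (2 M + 1)\, q_k$ by deviating to link $2$, where $q_k := p_{[k, 4]}(2)$, and this saving is strictly positive for every $q_k \in [0, 1]$ (since $q_k \leq 1 < (2 M + 2)/(2 M + 1)$), contradicting her best response. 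I expect this cycle-based case analysis, which must carefully track the mixing probability $q_k$ of $[k, 4]$, to be the main obstacle; once $f_1^{(k)} = 2$ is established for every $k \in [m]$, $I$ is immediately a solution to $\langle n, m, {\mathsf{A}} \rangle$.
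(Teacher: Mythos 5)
Your overall architecture coincides with the paper's: Lemmas~\ref{hilton fontana} and~\ref{hilton percola} make $[k,4]$ the unique non-pure player of each group, the \emph{Mixed-Player-Has-Pure-Neighbors} property forces the relevant $[n]$-players to be pure, and the \emph{Weak-Equilibrium-for-Expectation} property applied to $[k,4]$ yields exactly your indifference equation $3\alpha_{k} = (4M+5) - (2M+1)f_{1}^{(k)} + 2b_{k}$ (the paper writes the same two conditional expectations with $|{\mathsf{\Phi}}_{1}| + |{\mathsf{\Phi}}_{2}| = 4$ and concludes $|{\mathsf{\Phi}}_{1}| = |{\mathsf{\Phi}}_{2}| = 2$ ``by the choice of $M$''). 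Your observation that $f_{1}^{(k)} = 3$ survives the magnitude-and-integrality argument precisely when $b_{k} = M-1$ and $\alpha_{k} = 0$ is correct, and it is a case the paper's one-line dismissal does not visibly cover; the extra work you undertake there is warranted.

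The gap is in how you close that case. Your deviation argument for the link-$1$ player whose reduced-weight neighbor is the lone link-$2$ player compares only conditional \emph{expectations} (and the saving is $2M+3 - (2M+1)q_{k}$ rather than $2M+2 - (2M+1)q_{k}$, though that is immaterial). Under an $({\mathsf{E}}+{\mathsf{R}})$-valuation the deviation also changes the risk term: it replaces a two-valued cost of spread $M$ (values $2M$ and $3M$) by one of spread $M+1$ (values $M-3$ and $2M-2$), so by the \emph{Two-Values Risk-Monotonicity} property the risk strictly increases, by $\widetilde{{\mathsf{R}}}(M+1, q_{k}) - \widetilde{{\mathsf{R}}}(M, q_{k})$. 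For ${\mathsf{Var}}$ and ${\mathsf{SD}}$ this increase is at most $q_{k}(1-q_{k})(2M+1) \leq (2M+1)/4$, respectively at most $1/2$, and one can verify the expectation saving still dominates (e.g., $2M+3 - q_{k}(2M+1)(2-q_{k}) \geq 2$); but for a concave linear sum containing a moment of order $k \in \{4,6,8\}$ with a fixed positive coefficient, $\widetilde{{\mathsf{R}}}(M+1, q_{k}) - \widetilde{{\mathsf{R}}}(M, q_{k})$ grows like $M^{k-1}$ for $q_{k}$ bounded away from $0$ and $1$ and can far exceed $2M+3$ --- this is exactly why the paper's proof of Lemma~\ref{the other direction} must choose the mixing probability of order $1/(2M+1)$. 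Since $q_{k}$ is handed to you by the hypothesized equilibrium and cannot be assumed small, the positivity of the expectation saving does not establish that the deviation reduces ${\mathsf{V}}$, and the case $f_{1}^{(k)} = 3$ remains open for the higher-moment valuations of Theorem~\ref{concrete two ordered links}. You would need either a deviation whose risk change is also controlled, or an argument exploiting the additional structural consequences of this case ($A_{1}=0$, $A_{2}=b_{k}=M-1$) to reach a contradiction.
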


\begin{proof}
Consider a ${\mathsf{V}}$-equilibrium
${\bf p}$.
By Lemmas~\ref{hilton fontana}
and~\ref{hilton percola},
all players $[k,4]$
with $k \in [m]$
are non-pure.
Hence,
the {\it Mixed-Player-Has-Pure-Neighbors} property
(Proposition~\ref{the real almost pure property})
and the definition of the weights
imply together that all players
$i \in [n]$
and
$[k, j]$
with $k \in [m]$
and $j \in \{ 0, \ldots, 3 \}$
are pure.
Fix now an arbitrarily chosen player
${\mathsf{\pi}} = [\widehat{k}, 4]$.
For each link $\ell \in [2]$,
denote
${\mathsf{I}}_{\ell}
 :=
 \{ i \in [n]
    \mid
    p_{i}(\ell) = 1
 \}$
and
${\mathsf{\Phi}}_{\ell}
 :=
 \{ [\widehat{k}, j]
    \mid
    j \in \{ 0, \ldots, 3 \}\
    \mbox{and}\
    p_{[\widehat{k}, j]}(\ell) = 1
 \}$.
Clearly,
\begin{eqnarray*}
      {\mathsf{E}}_{{\mathsf{\pi}}}\left( p_{{\mathsf{\pi}}}^{1},
                                        {\bf p}_{-{\mathsf{\pi}}}
                                 \right)
& = & (1 + \left|{\mathsf{\Phi}}_{1}
           \right|)
      \cdot
      M
      +
      \sum_{i \in {\mathsf{I}}_{1}}
        a_{i}\, ,
\end{eqnarray*}
and
\begin{eqnarray*}
      {\mathsf{E}}_{{\mathsf{\pi}}}\left( p_{{\mathsf{\pi}}}^{2},
                                        {\bf p}_{-{\mathsf{\pi}}}
                                 \right)
& = & (1 + \left| {\mathsf{\Phi}}_{2}
           \right|)
      \cdot
      (M+1)
      +
      2\, \sum_{i \in {\mathsf{I}}_{2}}
            a_{i}\, .
\end{eqnarray*}
The {\it Weak-Equilibrium-for-Expectation} property
(Corollary~\ref{brand new}
and Proposition~\ref{concavity implies weak equilibrium})
implies that
${\mathsf{E}}_{{\mathsf{\pi}}}\left( p_{{\mathsf{\pi}}}^{1},
                                   {\bf p}_{-{\mathsf{\pi}}}
                            \right)
 =
 {\mathsf{E}}_{{\mathsf{\pi}}}\left( p_{{\mathsf{\pi}}}^{2},
                                   {\bf p}_{-{\mathsf{\pi}}}
                            \right)$.
By the choice of $M$,
this implies that
$|{\mathsf{\Phi}}_{1}|
 =
 |{\mathsf{\Phi}}_{2}|
 =
 2$,
so that
{
\small
\begin{eqnarray*}
      \sum_{i \in {\mathsf{I}}_{1}}
        a_{i}
& = & 3 +
      2\, \sum_{i \in {\mathsf{I}}_{2}}
            a_{i}\, .
\end{eqnarray*}
}
Hence,
${\mathsf{I}}$
is a solution of $\langle n, m, {\mathsf{A}} \rangle$.
\end{proof}

\noindent
We finally prove:

\begin{lemma}
\label{the other direction}
If $\langle n, m, {\mathsf{A}}
      \rangle$
has a solution,
then ${\mathsf{G}}$ has a
${\mathsf{V}}$-equilibrium.
\end{lemma}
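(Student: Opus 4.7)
The plan is to exhibit an explicit candidate mixed profile ${\bf p}^{*}$ constructed from a solution ${\mathsf{I}}$ of $\langle n, m, {\mathsf{A}} \rangle$ and verify that it is a ${\mathsf{V}}$-equilibrium. Guided by the forward direction (Lemma~\ref{one direction}), I would set: for each $i \in [n]$, $p_{i}^{*} := p_{i}^{1}$ if $i \in {\mathsf{I}}$ and $p_{i}^{*} := p_{i}^{2}$ otherwise; for each $k \in [m]$, players $[k,0]$ and $[k,2]$ pure on link $1$, players $[k,1]$ and $[k,3]$ pure on link $2$, and player $[k,4]$ mixed with probability $p_{k} \in (0,1)$ on link $1$, where $p_{k}$ is a parameter to be chosen in the last step.

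First, each partition player $i \in [n]$ has $\omega(i, \cdot, \cdot) = 0$, so her cost is identically $0$ and she cannot strictly improve by deviating. Second, consider the mixed player $\pi = [k,4]$. Since every other player interacting with $\pi$ is pure under ${\bf p}^{*}$, the cost ${\mathsf{\mu}}_{\pi}(p_{\pi}^{\ell}, {\bf p}^{*}_{-\pi})$ is a deterministic constant on each link $\ell$, computed as $C_{1} = 3M + \sum_{i \in {\mathsf{I}}} a_{ki}$ and $C_{2} = 3M + 3 + 2\sum_{i \notin {\mathsf{I}}} a_{ki}$; the partition identity $\sum_{i \in {\mathsf{I}}} a_{ki} = 3 + 2 \sum_{i \notin {\mathsf{I}}} a_{ki}$ forces $C_{1} = C_{2} =: C$. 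Then for any mixing $p_{k}$, the cost of $\pi$ remains the constant $C$ on every profile ${\bf s}$ with ${\bf p}^{*}({\bf s}) > 0$; the \emph{Risk-Positivity} property gives ${\mathsf{R}}_{\pi}({\bf p}^{*}) = 0$, and hence ${\mathsf{V}}_{\pi}(p_{\pi}, {\bf p}^{*}_{-\pi}) = C$ for every $p_{\pi} \in {\mathsf{\Delta}}(S_{\pi})$. So the chosen mixed strategy of $[k,4]$ is a ${\mathsf{V}}_{\pi}$-best response regardless of $p_{k}$.

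Third and most delicate, consider a pure gadget player $\pi' = [k,j]$ with $j \in \{0,1,2,3\}$, currently on link $\ell_{j}$. Her (random) cost depends on $p_{k}$ through the single ${\mathsf{V}}$-independent-variable $X \in \{0,1\}$ indicating whether $[k,4]$ shares her link. The deviation difference $\Phi_{j}(p_{k}) := {\mathsf{V}}_{\pi'}(p_{\pi'}^{\ell_{j}}, {\bf p}^{*}_{-\pi'}) - {\mathsf{V}}_{\pi'}(p_{\pi'}^{\bar{\ell}_{j}}, {\bf p}^{*}_{-\pi'})$ is continuous in $p_{k}$. At $p_{k} = 0$ a direct computation gives $\Phi_{j}(0) < 0$ for $j \in \{0,2\}$ (their pure link is strictly better) while $\Phi_{j}(0) > 0$ for $j \in \{1,3\}$; at $p_{k}=1$ the signs flip. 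Hence by the intermediate value theorem there are thresholds $p_{k}^{\mathrm{up}}$ (the smallest root of $\Phi_{0}$ and $\Phi_{2}$) and $p_{k}^{\mathrm{lo}}$ (the largest root of $\Phi_{1}$ and $\Phi_{3}$); any $p_{k} \in [p_{k}^{\mathrm{lo}}, p_{k}^{\mathrm{up}}]$ is admissible, provided this interval is nonempty. I would verify nonemptiness via the \emph{anchor} choice $p_{k} = 2M/(2M+1)$: at this value the expectation contribution to $\Phi_{j}$ vanishes for $[k,0], [k,2]$ and equals $-(2M-7) \cdot (2M+1)^{-1}$ for $[k,1], [k,3]$ up to a small correction, so the $M \geq 4$ hypothesis gives a positive slack on the expectation side. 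The risk correction to $\Phi_{j}$ at this anchor is then bounded by an $O(1)$ quantity in each of the three cases, using $\mathsf{Var}(X) = p_{k}(1-p_{k})$, $\mathsf{SD}(X) = \sqrt{p_{k}(1-p_{k})}$, and Proposition~\ref{formula for k moment} to bound higher even moments.

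The main obstacle is exactly this third step, i.e., showing that the risk perturbation is small enough not to destroy the slack produced by $M \geq 4$ uniformly over the three valuation families. The expectation-level bookkeeping is clean and symmetric (the $+3$ gap in the partition is absorbed by the $+1$ inflation of each link-$2$ weight via $\delta_{\ell 2}$, with the multiplicity matching since $|\Phi_{1}| = |\Phi_{2}| = 2$); the subtlety is that the risk term enters with the \emph{same} sign as the deviation constraint on both sides of the interval, so the interval shrinks but (by $M \geq 4$) does not vanish. Once such a $p_{k}$ is selected, concavity of ${\mathsf{V}}$ upgrades each pure best response of $[k,j]$, $j \in \{0,1,2,3\}$, to a best response against all mixed deviations, and ${\bf p}^{*}$ is a ${\mathsf{V}}$-equilibrium for ${\mathsf{G}}$, completing the reduction.
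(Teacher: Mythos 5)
Your plan coincides with the paper's proof in all essentials: the same candidate profile (partition players pure according to ${\mathsf{I}}$; $[k,0],[k,2]$ on link $1$; $[k,1],[k,3]$ on link $2$; $[k,4]$ mixed), the same trivial dismissal of the partition players, and the same anchor probability — your $p_{k}=2M/(2M+1)$ is exactly the paper's $\widehat{x}=1/(2M+1)$ in the complementary coordinate $x=p_{[k,4]}(2)$. Your treatment of $[k,4]$ is in fact a little cleaner than the paper's: you note that the partition identity makes the cost of $[k,4]$ the deterministic constant $C$ under \emph{every} strategy, so Risk-Positivity kills the risk term outright, whereas the paper routes the same conclusion through the {\it Weak-Equilibrium-for-Expectation} property. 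Your closing remark that concavity lets you check only pure deviations is also correct and is implicitly used by the paper.

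The one genuine problem is the claim that at the anchor the risk correction is ``bounded by an $O(1)$ quantity in each of the three cases.'' That holds for ${\mathsf{Var}}$ and ${\mathsf{SD}}$, but fails for the concave linear sums of even moments: since $\widehat{{\mathsf{kM}}}(a,b,q)=q(1-q)(b-a)^{k}\left(q^{k-1}+(1-q)^{k-1}\right)$, the correction $\widetilde{{\mathsf{R}}}(M+1,x)-\widetilde{{\mathsf{R}}}(M,x)$ at $x=1/(2M+1)$ is of order $M^{k-2}$, which for $k\in\{4,6,8\}$ swamps the available slack $2M-6$; so nonemptiness of your interval $[p_{k}^{\mathrm{lo}},p_{k}^{\mathrm{up}}]$ cannot be certified at the anchor alone. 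The paper's repair is available inside your IVT framework: both deviation constraints collapse to $1\leq{\mathsf{h}}(x)\leq 2M-6$ for the single function ${\mathsf{h}}(x)=x(2M+1)+\widetilde{{\mathsf{R}}}(M+1,x)-\widetilde{{\mathsf{R}}}(M,x)$, which satisfies ${\mathsf{h}}(0)=0$ and ${\mathsf{h}}(\widehat{x})>1$ by the {\it Two-Values Risk-Monotonicity} property; if ${\mathsf{h}}(\widehat{x})>2M-6$, continuity supplies some $\widetilde{x}\in(0,\widehat{x})$ with ${\mathsf{h}}(\widetilde{x})\in[1,2M-6]$. (Two minor slips: the expectation gap for $[k,1],[k,3]$ at the anchor is $-(2M-7)$, not $-(2M-7)(2M+1)^{-1}$, and the paper uses one common $x$ for all $k$, which suffices since ${\mathsf{h}}$ does not depend on $k$.)
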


\noindent
For the proof of Lemma~\ref{the other direction},
we shall use
a particular monotonicity property 
of the risk valuation
${\mathsf{R}}
 =
 {\mathsf{kM}}$
with an even integer $k \geq 2$.
Note that
{
\small
\begin{eqnarray*}
      \widehat{{\mathsf{kM}}}(a,b,q)
& = & (1-q)
      \cdot
      (a - ((1-q) \cdot a + q \cdot b))^{k}
      +
      q
      \cdot
      (b - ((1-q) \cdot a + q \cdot b))^{k}                                        \\
& = & (1-q)
      \cdot
      q^{k}
      \cdot
      (b-a)^{k}
      +
      q
      \cdot
      (1-q)^{k}
      \cdot
      (b-a)^{k}                                                                   \\
& = & q
      \cdot
      (1-q)
      \cdot
      (b-a)^{k}
      \cdot
      \left( q^{k-1} + (1-q)^{k-1}
      \right)\, .
\end{eqnarray*}
}
This implies that
all valuations ${\mathsf{R}}$
addressed in Theorem~\ref{concrete two ordered links}
incur $\widehat{{\mathsf{R}}}(a,b,q)$
which increases monotonically
in $b-a$
for a fixed probability $q \in (0, 1)$.
In fact,
the function
$\widehat{{\mathsf{R}}}(a, b, q)$
with $a \leq b$
is a function
$\widetilde{{\mathsf{R}}}(b-a, q)
 :=
 \widehat{{\mathsf{R}}}(a, b, q)$
in $b-a$;
the function
$\widetilde{{\mathsf{R}}}(b-a, q)$
increases monotonically in $b-a$
for a fixed $q \in (0, 1)$
and satisfies
$\widetilde{{\mathsf{R}}}(b-a, q)
 =
 \widetilde{{\mathsf{R}}}(b-a, 1-q)$.
For the proof of Lemma~\ref{the other direction},
we shall refer to these
two properties together
as the 
{\it Two-Values Risk-Monotonicity} property.
We continue with the proof of
Lemma~\ref{the other direction}.

\begin{proof}
Consider a solution
${\mathsf{I}}
 \subset
 [n]$
of $\langle n, m, {\mathsf{A}}
 \rangle$.
Define a mixed profile
${\bf p}$
as follows:
\begin{itemize}

\item
For each $i \in [n]$,
$p_{i}(1)
 := 1$
if $i \in {\mathsf{I}}$,
and $0$ otherwise.

\item
For $1 \leq k \leq m$:
$p_{[k, 0]}(1), p_{[k, 2]}(1)
 := 1$;
$p_{[k, 1]}(1), p_{[k,3]}(1)
 := 0$;
$p_{[k, 4]}(2)
 := x \in (0, 1)$.

\end{itemize}
We now prove that
$x$ can be chosen so that
${\bf p}$ is a ${\mathsf{V}}$-equilibrium.
We proceed by case analysis.
\begin{enumerate}

\item
Players $i \in [n]$
cannot improve since
$\omega (i, {\mathsf{\pi}}, \ell)
 =
 0$
for all ${\mathsf{\pi}}
         \in
         {\mathsf{\Pi}}$
and $\ell \in [2]$.

\item
Consider now players
$[k, r]$
with $k \in [m]$
and $r \in \{ 0, 1, 2, 3 \}$.
Note that for fixed $k$ and $r$,
${\mathsf{\mu}}_{[k,r]}
         ({\bf s})$
takes only two values over all profiles
${\bf s} \in {\cal S}$
with
${\bf p}({\bf s}) > 0$.
(Observe for this that
player ${\mathsf{\pi}} = [k, 4]$
is the only player
with $0 < p_{{\mathsf{\pi}}}(1) < 1$
and
$\omega \left( [k,r], {\mathsf{\pi}}, \ell
        \right)
 \neq
 0$
for $r \in \{ 0, 1, 2, 3 \}$
and
$\ell \in [2]$.)
Denote as
${\mathsf{A}}_{kr}$
and
${\mathsf{B}}_{kr}$
the two values taken by
${\mathsf{\mu}}_{[k,r]}
         ({\bf s})$
over all profiles
${\bf s} \in {\cal S}$
with
${\bf p}({\bf s}) > 0$,
with
${\mathsf{A}}_{kr}
 >
 {\mathsf{B}}_{kr}$.
By the cost functions
and the definition of ${\bf p}$,
we get:
\begin{itemize}

\item
\underline{For $r \in \{ 0, 2 \}$:}
${\mathsf{A}}_{kr}
 =
 3M$
and
${\mathsf{B}}_{kr}
 =
 2M$,
so that
{
\small
\begin{eqnarray*}
      {\mathsf{E}}_{[k,r]}({\bf p})
& = & (1-x)
      \cdot
      {\mathsf{A}}_{kr}
      +
      x
      \cdot
      {\mathsf{B}}_{kr}                  \\
& = & (3-x)
      \cdot
      M\, ,
\end{eqnarray*}
}
and,
using 
the {\it Two-Values Risk-Monotonicity} property,
{
\small
\begin{eqnarray*}
      {\mathsf{R}}_{[k,r]}({\bf p})
& = & \widehat{{\mathsf{R}}}_{[kr]}\left( {\mathsf{A}}_{kr},
                                         {\mathsf{B}}_{kr},
                                         1-x
                                  \right)                                                     \\
& = & \widehat{{\mathsf{R}}}_{[kr]}\left( 3M,
                                         2M,
                                         1-x
                                  \right)                                                     \\
& = & \widetilde{{\mathsf{R}}}_{[kr]}\left( M,
                                           x
                                  \right)\, .
\end{eqnarray*}
}

\item
\underline{For $r \in \{ 1, 3 \}$:}
${\mathsf{A}}_{kr}
 =
 3\, (M+1)$
and
${\mathsf{B}}_{kr}
 =
 2\, (M+1)$,
so that
{
\small
\begin{eqnarray*}
      {\mathsf{E}}_{[k,r]}({\bf p})
& = & x
      \cdot
      {\mathsf{A}}_{kr}
      +
      (1-x)
      \cdot
      {\mathsf{B}}_{kr}                  \\
& = & (2+x)
      \cdot
      (M+1)\, ,
\end{eqnarray*}
}
and,
using the {\it Two-Values Risk-Monotonicity} property,
{
\small
\begin{eqnarray*}
      {\mathsf{R}}_{[k,r]}({\bf p})
& = & \widehat{{\mathsf{R}}}_{[kr]}\left( {\mathsf{A}}_{kr},
                                         {\mathsf{B}}_{kr},
                                         x
                                  \right)                                              \\
& = & \widehat{{\mathsf{R}}}_{[kr]}\left( 3\, (M+1),
                                         2\, (M+1),
                                         x
                                  \right)                                              \\
& = & \widetilde{{\mathsf{R}}}_{[kr]}\left( M+1,
                                         x
                                  \right)\, .
\end{eqnarray*}
}

\end{itemize}
Consider now the cost of player $[k, r]$
with $r \in \{ 0, 1, 2, 3 \}$
when she switches to the other link.
Denote as
$\widehat{{\bf p}}$
the corresponding mixed profile.
As in the case of the mixed profile ${\bf p}$,
${\mathsf{\mu}}_{[k,r]}({\bf s})$
takes only two values
over all profiles
${\bf s} \in {\cal S}$
with
$\widehat{{\bf p}}({\bf s})
 >
 0$.
Denote as
${\mathsf{C}}_{kr}$
and
${\mathsf{D}}_{kr}$
the two values taken by
${\mathsf{\mu}}_{[k,r]}
         ({\bf s})$
over all profiles
${\bf s} \in {\cal S}$
with
$\widehat{{\bf p}}({\bf s}) > 0$,
with
${\mathsf{C}}_{kr}
 >
 {\mathsf{D}}_{kr}$.
By the cost functions
and the definition of $\widehat{{\bf p}}$,
we get:
\begin{itemize}

\item
\underline{For $r \in \{ 0, 2 \}$:}
${\mathsf{C}}_{kr}
 =
 4\, (M+1) - 4 = 4M$
and
${\mathsf{D}}_{kr}
 =
 3\, (M+1) - 4 = 3M-1$,
so that
{
\small
\begin{eqnarray*}
      {\mathsf{E}}_{[k,r]}(\widehat{{\bf p}})
& = & x
      \cdot
      {\mathsf{C}}_{kr}
      +
      (1-x)
      \cdot
      {\mathsf{D}}_{kr}                  \\
& = & 3M-1
      +
      x
      \cdot
      (M+1)\, ,
\end{eqnarray*}
}
and,
using the {\it Two-Values Risk-Monotonicity} property,
{
\small
\begin{eqnarray*}
      {\mathsf{R}}_{[k,r]}(\widehat{{\bf p}})
& = & \widehat{{\mathsf{R}}}_{[kr]}\left( {\mathsf{C}}_{kr},
                                         {\mathsf{D}}_{kr},
                                         1-x
                                  \right)                                                     \\
& = & \widehat{{\mathsf{R}}}_{[kr]}\left( 4M,
                                         3M-1,
                                         1-x
                                  \right)                                                     \\
& = & \widetilde{{\mathsf{R}}}_{[kr]}\left( M+1,
                                           x
                                  \right)\, .
\end{eqnarray*}
}

\item
\underline{For $r \in \{ 1, 3 \}$:}
${\mathsf{C}}_{kr}
 =
 4M-4$
and
${\mathsf{D}}_{kr}
 =
 3M-4$,
so that
{
\small
\begin{eqnarray*}
      {\mathsf{E}}_{[k,r]}(\widehat{{\bf p}})
& = & x
      \cdot
      {\mathsf{D}}_{kr}
      +
      (1-x)
      \cdot
      {\mathsf{C}}_{kr}                  \\
& = & 4M-4
      -
      x
      \cdot
      M\, ,
\end{eqnarray*}
}
and,
using the {\it Two-Values Risk-Monotonicity} property,
{
\small
\begin{eqnarray*}
      {\mathsf{R}}_{[k,r]}(\widehat{{\bf p}})
& = & \widehat{{\mathsf{R}}}_{[kr]}\left( {\mathsf{C}}_{kr},
                                         {\mathsf{D}}_{kr},
                                         x
                                  \right)                                              \\
& = & \widehat{{\mathsf{R}}}_{[kr]}\left( 4M - 4,
                                         3M - 4,
                                         x
                                  \right)                                              \\
& = & \widetilde{{\mathsf{R}}}_{[kr]}\left( M,
                                           x
                                    \right)\, .
\end{eqnarray*}
}

\end{itemize}
We now determine a probability $x \in (0, 1)$
so that for all players
$[k, r]$
with $r \in \{ 0, 1, 2, 3 \}$,
${\mathsf{V}}_{[k,r]}({\bf p})
 \leq
 {\mathsf{V}}_{[k,r]}(\widehat{{\bf p}})$.
We proceed by case analysis.
\begin{itemize}

\item
\underline{For $r \in \{ 0, 2 \}$:}
Then,
${\mathsf{V}}_{[k,r]}({\bf p})
 \leq
 {\mathsf{V}}_{[k,r]}(\widehat{{\bf p}})$
if and only if
{
\small
\begin{eqnarray*}
         (3-x) \cdot M
         +
         \widetilde{{\mathsf{R}}}_{[k,r]}(M, x)
& \leq & 3M-1
         +
         x
         \cdot
         (M+1)
         +
         \widetilde{{\mathsf{R}}}_{[k,r]}(M+1, x)
\end{eqnarray*}
}
if and only if
{
\small
\begin{eqnarray*}
         1
& \leq & x
         \cdot
         (2M+1)
         +
         \widetilde{{\mathsf{R}}}_{[k,r]}(M+1, x)
         -
         \widetilde{{\mathsf{R}}}_{[k,r]}(M, x)\, .
\end{eqnarray*}
}

\item
\underline{For $r \in \{ 1, 3 \}$:}
Then,
${\mathsf{V}}_{[k,r]}({\bf p})
 \leq
 {\mathsf{V}}_{[k,r]}(\widehat{{\bf p}})$
if and only if
{
\small
\begin{eqnarray*}
         (2+x) \cdot (M+1)
         +
         \widetilde{{\mathsf{R}}}_{[k,r]}(M+1, x)
& \leq & 4M-4
         -
         x
         \cdot
         M
         +
         \widetilde{{\mathsf{R}}}_{[k,r]}(M, x)
\end{eqnarray*}
}
if and only if
{
\small
\begin{eqnarray*}
         x
         \cdot
         (2M+1)
         +
         \widetilde{{\mathsf{R}}}_{[k,r]}(M+1, x)
         -
         \widetilde{{\mathsf{R}}}_{[k,r]}(M, x)
& \leq & 2M-6\, .
\end{eqnarray*}
}

\end{itemize}
Set
{
\small
\begin{eqnarray*}
         {\mathsf{h}}(x)
 & := &  x
         \cdot
         (2M+1)
         +
         \widetilde{{\mathsf{R}}}_{[k,r]}(M+1, x)
         -
         \widetilde{{\mathsf{R}}}_{[k,r]}(M, x)\, .
\end{eqnarray*}
}
We shall determine a probability $x$
so that
$1 \leq {\mathsf{h}}(x) \leq 2M-6$.
(Since $M \geq 4$,
$2M-6 \geq 1$.)
Observe that ${\mathsf{h}}(0) = 0$.
Set
{
\small
\begin{eqnarray*}
       \widehat{x}
& := & \frac{\textstyle 1}
            {\textstyle 2M+1}\, .
\end{eqnarray*}
}
By the {\it Two-Values Risk-Monotonicity} property,
$\widetilde{{\mathsf{R}}}_{[k,r]}(M+1, x)
 >
 \widetilde{{\mathsf{R}}}_{[k,r]}(M, x)$.
This implies that
$ {\mathsf{h}}(\widehat{x}) > 1$.
If
${\mathsf{h}}(\widehat{x})
 \leq
 2M-6$,
then set
$x := \widehat{x}$
and we are done.
If
${\mathsf{h}}(\widehat{x})
 >
 2M-6$,
then
the continuity of ${\mathsf{R}}$
implies that
there is a probability
$\widetilde{x}
 \in
 (0, \widehat{x})$
such that
$1 \leq {\mathsf{h}}(\widetilde{x}) \leq 2M-6$,
and we are done.

\item
Finally consider
players $[k, 4]$
with $k \in [m]$.
By the definition of ${\bf p}$,
{
\small
\begin{eqnarray*}
    {\mathsf{E}}_{[k,4]}\left( p_{[k, 4]}^{1},
                              {\bf p}_{-[k,4]}
                       \right)
& = 3\, M + \sum_{i \in {\mathsf{I}}}
              a_{ki}\, ,
\end{eqnarray*}
}
and
{
\small
\begin{eqnarray*}
    {\mathsf{E}}_{[k,4]}\left( p_{[k, 4]}^{2},
                              {\bf p}_{-[k,4]}
                       \right)
& = 3\, (M+1) + 2
                \sum_{i \not\in {\mathsf{I}}}
                  a_{ki}\, .
\end{eqnarray*}
}
Since ${\mathsf{I}}$ is a solution of
$\langle n, m, {\mathsf{A}}
 \rangle$,
we get that
{
\small
\begin{eqnarray*}
{\mathsf{E}}_{[k,4]}\left( p_{[k, 4]}^{1},
                           {\bf p}_{-[k,4]}
                    \right)
& = &
 {\mathsf{E}}_{[k,4]}\left( p_{[k, 4]}^{1},
                           {\bf p}_{-[k,4]}
                    \right)\, .
\end{eqnarray*}
}
By the {\it Risk-Positivity} property,
this implies that
${\mathsf{R}}_{[k,4]}({\bf p})
 =
 0$
so that
${\mathsf{V}}_{[k,4]}({\bf p})
 =
 {\mathsf{E}}_{[k,4]}({\bf p})$.
Since ${\bf p}$ is a fully mixed profile
(since $0 < x < 1$)
and ${\mathsf{V}}$ has the
{\it Weak-Equilibrium-for-Expectation} property
(Corollary~\ref{brand new}
and Proposition~\ref{concavity implies weak equilibrium}),
${\mathsf{E}}_{[k,4]}({\bf p})$
cannot decrease;
hence,
neither ${\mathsf{V}}_{[k,4]}({\bf p})$ can.

\end{enumerate}
It follows from the case analysis that
${\bf p}$ is a ${\mathsf{V}}$-equilibrium.
\end{proof}

\noindent
Lemmas~\ref{one direction}
and~\ref{the other direction}
establish the reduction
for the ${\mathcal{NP}}$-hardness;
since the numbers involved
in the reduction
are polynomially bounded,
strong ${\mathcal{NP}}$-hardness follows.
\end{proof}

\section{Two Players}
\label{two players}

\noindent
Consider a concave valuation ${\mathsf{V}}^{{\mathsf{\nu}}}$,
for an increasing and strictly convex function ${\mathsf{\nu}}$.
Since ${\mathsf{\nu}}^{-1}$ is also increasing,
a mixed profile ${\bf p}$
is a ${\mathsf{V}}^{{\mathsf{\nu}}}$-equilibrium
for a game ${\mathsf{G}}$
if and only if
${\bf p}$
is an ${\mathsf{E}}$-equilibrium
for the game ${\mathsf{G}}^{{\mathsf{\nu}}}$
constructed from ${\mathsf{G}}$
by setting
for each player $i \in [n]$
and profile
${\bf s} \in {\cal S}$,
${\mathsf{\mu}}_{i}^{{\mathsf{\nu}}}({\bf s})
 :=
 {\mathsf{\nu}}({\mathsf{\mu}}_{i}({\bf s}))$.
Since every game has an ${\mathsf{E}}$-equilibrium~\cite{N50,N51},
this implies that there is a
${\mathsf{V}}^{{\mathsf{\nu}}}$-equilibrium 
for ${\mathsf{G}}$,
and the associated search problem
for a ${\mathsf{V}}^{{\mathsf{\nu}}}$-equilibrium
is total;
it is in ${\cal PPAD}$~\cite{P94}
for 2-players games.
Nevertheless,
we shall show that there are other (concave) valuations ${\mathsf{V}}$
for which deciding the existence
of a ${\mathsf{V}}$-equilibrium
is strongly ${\mathcal{NP}}$-hard
for 2-players games.

\subsection{General ${\mathcal{NP}}$-Hardness Result}
\label{general np hardness}

\noindent
We show:

\begin{theorem}
\label{two players complexity}
Fix an
$({\mathsf{E}}
  +
  {\mathsf{R}})$-valuation
${\mathsf{V}}$
such that:
\begin{enumerate}


\item[{\sf (1)}]
${\mathsf{V}}$
has the {\it Weak-Equilibrium-for-Expectation}
property.

\item[{\sf (2)}]
There is a polynomial time computable
${\mathsf{\delta}}$
with
$0
 <
 {\mathsf{\delta}}
 \leq
 \frac{\textstyle 1}
        {\textstyle 4}$
such that:
\begin{enumerate}

\item[{\sf (2/a)}]
$\widehat{{\mathsf{R}}}(1, 1 + 2 {\mathsf{\delta}}, q)
 <
 \frac{\textstyle 1}
      {\textstyle 2}$
for each probability
$q \in [0, 1]$.

\item[{\sf (2/b)}]
$\widehat{{\mathsf{V}}}(1, 1 + 2 {\mathsf{\delta}}, r)
 <
 \widehat{{\mathsf{V}}}(1, 2, q)$
for all
$0 \leq r \leq q \leq 1$.

\item[{\sf (2/c)}]
The Crawford game
${\mathsf{G}}_{C}({\mathsf{\delta}})$
with bimatrix
$\left( \begin{array}{ll}
          \langle 1 + {\mathsf{\delta}},
                  1 + {\mathsf{\delta}}
          \rangle                            & \langle 1, 1 + 2{\mathsf{\delta}}
                                               \rangle                                        \\
          \langle 1, 1 + 2{\mathsf{\delta}}
          \rangle                            & \langle 1 + 2{\mathsf{\delta}},
                                                       1
                                               \rangle
        \end{array}
 \right)$
has no ${\mathsf{V}}$-equilibrium.

\end{enumerate}

\end{enumerate}
Then,
{\sf $\exists {\mathsf{V}}$-EQUILIBRIUM}
is strongly ${\mathcal{NP}}$-hard
for 2-players games.
\end{theorem}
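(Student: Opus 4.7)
My plan is to reduce from a strongly $\mathcal{NP}$-hard problem such as $\mathsf{SAT}$, adapting the Conitzer--Sandholm construction that the paper cites. Given a $\mathsf{SAT}$ instance $\varphi$ with $n$ variables and $m$ clauses, construct a 2-player game $\mathsf{G}_{\varphi}$ in which each player has one strategy per literal of $\varphi$ (giving $2n$ literal strategies per player), together with two additional \emph{gadget} strategies $g_{1}, g_{2}$. The cost functions take values only in $\{1,\, 1+{\mathsf{\delta}},\, 1+2{\mathsf{\delta}},\, 2\}$, with the sub-bimatrix induced by $\{g_{1}, g_{2}\} \times \{g_{1}, g_{2}\}$ being exactly the Crawford bimatrix of ${\mathsf{G}}_{C}({\mathsf{\delta}})$. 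Literal-vs-literal entries are tuned so that pure profiles in which both players play matching literals witnessing a satisfying assignment attain cost $1+{\mathsf{\delta}}$, while profiles corresponding to mismatched literals or to literals jointly falsifying some clause give at least one player a strictly improving pure deviation. Literal-vs-gadget entries use only the values $\{1, 2\}$, activating condition {\sf (2/b)}.

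\textbf{Forward direction.} Assume $\varphi$ is satisfiable and let both players play the literals of a satisfying assignment. By the \emph{Risk-Positivity} property, this pure profile has ${\mathsf{R}}_{i} = 0$, so ${\mathsf{V}}_{i}$ equals the deterministic cost $1+{\mathsf{\delta}}$. Any pure deviation into a gadget or a clause-falsifying literal induces an oscillation between $1$ and $2$ in the deviator's cost, which by {\sf (2/b)} strictly exceeds the reference value $\widehat{{\mathsf{V}}}(1, 1+2{\mathsf{\delta}}, \cdot)$. Any mixed deviation is handled analogously: {\sf (2/a)} bounds the risk contribution by $\tfrac{1}{2}$, and combined with the expectation estimate this suffices to rule out improvements, giving a pure ${\mathsf{V}}$-equilibrium.

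\textbf{Reverse direction.} Assume $\varphi$ is unsatisfiable and, for contradiction, suppose ${\mathsf{G}}_{\varphi}$ has a ${\mathsf{V}}$-equilibrium ${\bf p}$. By the \emph{Weak-Equilibrium-for-Expectation} property (hypothesis {\sf (1)}), for each player $i$ the expectation ${\mathsf{E}}_{i}(p_{i}^{\ell}, {\bf p}_{-i})$ is constant across every literal $\ell \in \sigma(p_{i})$. A case analysis on the supports of $p_{1}$ and $p_{2}$ proceeds as follows. If either support contains any literal strategy, the equal-expectation identity combined with the unsatisfiability of $\varphi$ forces the presence of some clause-falsifying literal pair in the joint support, and {\sf (2/b)} then exhibits a strictly improving deviation, contradicting equilibrium. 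Thus both supports must be confined to $\{g_{1}, g_{2}\}$; but then ${\bf p}$ is a ${\mathsf{V}}$-equilibrium of ${\mathsf{G}}_{C}({\mathsf{\delta}})$, contradicting {\sf (2/c)}.

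\textbf{Main obstacle.} The hard part will be engineering the literal-vs-literal payoffs so that three competing requirements hold simultaneously: (a) in the satisfiable case the ``good'' pure profile resists every pure and mixed deviation using only {\sf (2/a)}, {\sf (2/b)}, and \emph{Weak-Equilibrium-for-Expectation}; (b) in the unsatisfiable case, every candidate mixed-support either encodes a clause-falsifying literal pair (giving a strict deviation via {\sf (2/b)}) or is pinned to the Crawford gadget; and (c) all payoffs have polynomial bit-length. The polynomial-time computability of ${\mathsf{\delta}}$ guarantees the reduction runs in polynomial time, and since the payoff matrix uses only values in $\{1, 1+{\mathsf{\delta}}, 1+2{\mathsf{\delta}}, 2\}$, the numbers appearing are polynomially bounded, yielding strong $\mathcal{NP}$-hardness.
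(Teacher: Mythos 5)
Your high-level architecture matches the paper's: a reduction from {\sf SAT} to a 2-player game whose strategy sets contain literal strategies plus a two-strategy Crawford gadget, with {\it Risk-Positivity} and {\it Weak-Equilibrium-for-Expectation} doing the structural work and Conditions {\sf (2/a)}--{\sf (2/c)} ruling out escapes into or out of the gadget. However, there is a genuine gap: you have omitted the clause strategies and variable strategies, and these are not optional decoration --- they are the mechanism by which unsatisfiability destroys equilibria. In the paper's construction, a player facing an opponent whose literal support fails to hit some clause ${\mathsf{c}}$ can deviate to the pure strategy ${\mathsf{c}}$ and obtain cost $0$ (versus the equilibrium cost $1$); this is the only place unsatisfiability enters, and it has nothing to do with Condition {\sf (2/b)}, which you instead invoke to ``punish clause-falsifying literal pairs.'' In the paper, {\sf (2/b)} serves a different purpose entirely: it shows that a player mixing between the formula part and the gadget would strictly prefer to move wholly into the gadget, which is how the support is forced onto $\Lambda$. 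Without clause strategies (or an equivalent device), your reverse direction has no lever: equal conditional expectations over supported literals do not by themselves produce an improving deviation when $\varphi$ is unsatisfiable.

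The omission also breaks your forward direction. You claim that when $\varphi$ is satisfiable the two players can sit at a \emph{pure} profile of matching literals. But any single literal fails to appear in most clauses, so once clause strategies exist (as they must, per the above), a pure literal profile admits a profitable clause deviation. The paper resolves this by taking the equilibrium to be the \emph{mixed} profile uniform over the $m$ true literals of a satisfying assignment: every clause deviation then costs at least $m\cdot\frac{1}{m}=1$, every literal-vs-literal cost in the support equals $1$ so {\it Risk-Positivity} gives ${\mathsf{R}}_{i}=0$, and no deviation improves. Finally, your ``main obstacle'' paragraph explicitly defers the payoff engineering that makes requirements (a)--(c) compatible; that engineering is precisely the content of the proof (the cost table, the lemmas forcing $p_{i}(\Lambda)=1$, $p_{i}({\mathsf{L}})=1$, the complementary-literal exclusion, and the induced truth assignment), so the proposal as written is an outline of the right strategy rather than a proof.
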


\noindent
We present a general proof
with a reduction
involving the parameter 
${\mathsf{\delta}}$
from Condition {\sf (2)},
required to be polynomial time computable.
The reduction uses 
the {\it Crawford game}
${\mathsf{G}}_{C}(\delta)$
as a ``gadget'';
for any ${\mathsf{\delta}}$
with $0 < {\mathsf{\delta}} < 1$,
${\mathsf{G}}_{C}(\delta)$
is an adapted generalization
of a bimatrix game
from~\cite[Section 4]{C90}.
The parameter ${\mathsf{\delta}}$
enters the reduction
through ${\mathsf{G}}_{C}(\delta)$.

\noindent
Specifically,
the proof of Theorem~\ref{two players complexity}
employs a reduction from ${\sf SAT}$~\cite[L01]{GJ79}.
An instance of {\sf SAT}
is a propositional formula ${\mathsf{\phi}}$
in the form of
a conjunction of {\it clauses}
${\mathsf{C}} = \{ {\mathsf{c}}_{1},
                   \ldots,
                   {\mathsf{c}}_{k}
                \}$
over a set of {\it variables}
${\cal V} = \{ {\mathsf{v}}_{1},
               \ldots,
               {\mathsf{v}}_{m}
            \}$.
Denote as
${\mathsf{L}}
 =
 \{ \ell_{1}, \overline{\ell}_{1},
    \ldots,
    \ell_{m}, \overline{\ell}_{m}
 \}$
the set of {\it literals}
corresponding to the variables in ${\cal V}$.
We shall use lower-case letters
${\mathsf{c}},
 {\mathsf{c}}_{1},
 {\mathsf{c}}_{2}, \ldots$,
${\mathsf{v}},
 {\mathsf{v}}_{1},
 {\mathsf{v}}_{2}, \ldots$,
and
$\ell, \ell_{1}, \ell_{2}, \ldots$
to denote
clauses from ${\mathsf{C}}$,
variables from ${\cal V}$
and
 literals from ${\mathsf{L}}$,
respectively.
Denote
${\mathsf{\Lambda}}
 := {\mathsf{C}}
    \cup
    {\cal V}
    \cup
    {\mathsf{L}}$.
We shall use
the {\it Crawford set}
${\cal F}
 =
 \{ {\mathsf{f}}_{1}, {\mathsf{f}}_{2}
 \}$
with two strategies
${\mathsf{f}}_{1}$
and
${\mathsf{f}}_{2}$;
${\mathsf{f}}$ denotes
either
${\mathsf{f}}_{1}$ or ${\mathsf{f}}_{2}$.
The cost values
are chosen judiciously
so as to carefully
assure or exclude the existence
of a ${\mathsf{V}}$-equilibrium.
We continue with the proof
of Theorem~\ref{two players complexity}.

\begin{proof}
\noindent
Given an instance
${\mathsf{\phi}}$
of {\sf SAT},
construct a game
${\mathsf{G}}
 =
 {\mathsf{G}}({\mathsf{\phi}})
 =
 \langle [2],
         \{ S_{i}
         \}_{i \in [2]},
         \{ {\mathsf{\mu}}_{i}
         \}_{i \in [2]}
 \rangle$
as follows.
For each player
$i \in [2]$,
$S_{i}
 :=
 {\mathsf{\Lambda}}
 \cup
 {\cal F}$.
The cost functions
$\left\{ {\mathsf{\mu}}_{i}
 \right\}_{i \in [2]}$
are given
in Figure~\ref{ridiculous}.

\begin{figure}[ht]
\begin{center}
\begin{tabular}{||l|l|l||}
\hline
\hline
\hline
{\sf Profile} ${\bf s}$& {\sf Condition} on ${\bf s}$      & $\left\langle {\mathsf{\mu}}_{1}({\bf s}),
                                                                                                                {\mathsf{\mu}}_{2}({\bf s})
                                                                                             \right\rangle$                                                                  \\
\hline
\hline
$\langle \ell_{1},
         \ell_{2}
 \rangle$              & $\ell_{1}
                          \neq
                          \overline{\ell}_{2}$             & $\langle 1, 1
                                                              \rangle$                                                         \\
\hline
$\langle \ell_{1},
         \ell_{2}
 \rangle$              & $\ell_{1}
                          =
                          \overline{\ell}_{2}$             & $\langle 2,
                                                                      2
                                                              \rangle$                                                        \\
\hline
\hline
$\langle \ell,
         {\mathsf{v}}
 \rangle$              & $\ell$ is a literal for ${\mathsf{v}}$ & $\langle 2,
                                                                           m
                                                                   \rangle$                                                        \\
\hline
$\langle \ell,
         {\mathsf{v}}
 \rangle$             & $\ell$ is not a literal for ${\mathsf{v}}$ & $\langle 2,
                                                                              0
                                                                      \rangle$                                                         \\
\hline
$\langle \ell,
         {\mathsf{c}}
 \rangle$             & $\ell \not\in {\mathsf{c}}$        & $\langle 2, 0
                                                              \rangle$                                                         \\
\hline
$\langle \ell,
         {\mathsf{c}}
 \rangle$             & $\ell \in {\mathsf{c}}$            & $\langle 2, m
                                                              \rangle$                                                         \\
\hline
$\langle \ell, {\mathsf{f}}
 \rangle$             & ---                                & $\langle 2, 1
                                                              \rangle$                                                         \\
\hline
\hline
$\langle {\mathsf{v}}_{1},
         {\mathsf{v}}_{2}
 \rangle$ or
$\langle {\mathsf{c}}_{1},
         {\mathsf{c}}_{2}
 \rangle$ or
$\langle {\mathsf{v}},
         {\mathsf{c}}
 \rangle$

                      & ---                                & $\langle 2, 2
                                                              \rangle$                                                         \\
\hline
$\langle {\mathsf{v}},
         {\mathsf{f}}
 \rangle$ or
$\langle {\mathsf{c}},
         {\mathsf{f}}
 \rangle$             & ---                                & $\langle 2, 1
                                                              \rangle$                                                         \\
\hline
\hline
$\langle {\mathsf{f}}_{1},
         {\mathsf{f}}_{1}
 \rangle$             & ---                                & $\langle 1 + {\mathsf{\delta}},
                                                                      1 + {\mathsf{\delta}}
                                                              \rangle$                                                          \\
\hline
$\langle {\mathsf{f}}_{1},
         {\mathsf{f}}_{2}
 \rangle$ or
$\langle {\mathsf{f}}_{2},
         {\mathsf{f}}_{1}
 \rangle$             & ---                                & $\langle 1, 1 + 2 {\mathsf{\delta}}
                                                              \rangle$                                                          \\
\hline
$\langle {\mathsf{f}}_{2},
         {\mathsf{f}}_{2}
 \rangle$             & ---                                & $\langle 1 + 2 {\mathsf{\delta}},
                                                                      1

                                                              \rangle$                                                          \\
\hline
\hline
\hline
\end{tabular}
\caption{The cost functions for the game ${\mathsf{G}}({\mathsf{\phi}})$.
For a profile
$\langle s_{1}, s_{2}
 \rangle$
not in the table,
set
${\mathsf{\mu}}_{i}(s_{1}, s_{2})
 :=
 {\mathsf{\mu}}_{\overline{i}}(s_{2}, s_{1})$,
with $i \in [2]$
and 
$\overline{i}
     \neq
     i$;
so,
$\overline{i}$
is the player other than $i$.}
\label{ridiculous}
\end{center}
\end{figure}

\noindent
For a player $i \in [2]$,
denote
$p_{i}({\cal F})
 :=  \sum_{{\mathsf{f}} \in {\cal F}}
         p_{i}({\mathsf{f}})$,
$p_{i}({\mathsf{L}})
 :=  \sum_{\ell \in {\mathsf{L}}}
         p_{i}(\ell)$
and
$p_{i}({\mathsf{\Lambda}})
 :=  \sum_{{\mathsf{\lambda}} \in {\mathsf{\Lambda}}}
         p_{i}({\mathsf{\lambda}})$;
note that
$p_{i}({\cal F})
 +
 p_{i}({\mathsf{\Lambda}})
 =
 1$.
We prove a sequence of technical claims:

\begin{lemma}
\label{both players play the formula}
In a ${\mathsf{V}}$-equilibrium
$\langle p_{1},
         p_{2}
 \rangle$
for ${\mathsf{G}}$,
$p_{1} ({\mathsf{\Lambda}})
 \cdot
 p_{2} ({\mathsf{\Lambda}})
 >
 0$.
\end{lemma}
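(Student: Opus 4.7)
I argue by contradiction: suppose $p_1(\Lambda) \cdot p_2(\Lambda) = 0$, and without loss of generality assume $p_1(\Lambda) = 0$, so that $p_1$ is supported on the Crawford set ${\cal F}$. The plan is first to rule out any support of $p_2$ on $\Lambda$ (using Condition (2/a) together with the bound ${\mathsf{\delta}} \le \tfrac{1}{4}$), and then to reduce to Condition (2/c) on ${\mathsf{G}}_C({\mathsf{\delta}})$.

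First I would read off Figure~\ref{ridiculous} and apply the symmetry convention ${\mathsf{\mu}}_i(s_1,s_2) = {\mathsf{\mu}}_{\overline{i}}(s_2,s_1)$ at its bottom to see that ${\mathsf{\mu}}_2(f, {\mathsf{\lambda}}) = {\mathsf{\mu}}_1({\mathsf{\lambda}}, f) = 2$ for every $f \in {\cal F}$ and every ${\mathsf{\lambda}} \in \Lambda = {\mathsf{L}} \cup {\cal V} \cup {\mathsf{C}}$. Hence, when player~2 plays the pure strategy $p_2^{\mathsf{\lambda}}$ against $p_1$, her cost is identically $2$ across all supported profiles; the \emph{Risk-Positivity} property then forces ${\mathsf{R}}_2(p_2^{\mathsf{\lambda}}, p_1) = 0$, so that ${\mathsf{V}}_2(p_2^{\mathsf{\lambda}}, p_1) = 2$ for every ${\mathsf{\lambda}} \in \Lambda$.

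Next I bound the valuation of player~2's alternative deviation $f_2$. The cost ${\mathsf{\mu}}_2(\cdot, f_2)$ takes only the two values ${\mathsf{\mu}}_2(f_1, f_2) = 1 + 2{\mathsf{\delta}}$ and ${\mathsf{\mu}}_2(f_2, f_2) = 1$, the larger value occurring with probability $p_1(f_1)$. In the $\widehat{{\mathsf{V}}}$ notation of Section~\ref{framework},
\[
{\mathsf{V}}_2(p_2^{f_2}, p_1) \;=\; \widehat{{\mathsf{V}}}\bigl(1,\, 1+2{\mathsf{\delta}},\, p_1(f_1)\bigr) \;=\; 1 + 2{\mathsf{\delta}}\,p_1(f_1) + \widehat{{\mathsf{R}}}\bigl(1,\, 1+2{\mathsf{\delta}},\, p_1(f_1)\bigr).
\]
Condition (2/a) yields $\widehat{{\mathsf{R}}}(1, 1+2{\mathsf{\delta}}, q) < \tfrac{1}{2}$ for every $q \in [0,1]$, while ${\mathsf{\delta}} \le \tfrac{1}{4}$ forces $1 + 2{\mathsf{\delta}}\,p_1(f_1) \le 1 + 2{\mathsf{\delta}} \le \tfrac{3}{2}$; summing these gives the strict bound ${\mathsf{V}}_2(p_2^{f_2}, p_1) < 2$. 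Since $f_2$ therefore strictly beats every strategy in $\Lambda$, the best-response requirement on $p_2$ excludes all of $\Lambda$ from $\sigma(p_2)$, so $p_2$ is also supported on ${\cal F}$.

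Finally, since $\langle p_1, p_2 \rangle$ is now supported on ${\cal F} \times {\cal F}$, whose cost block is precisely the bimatrix of ${\mathsf{G}}_C({\mathsf{\delta}})$, and since every deviation available in ${\mathsf{G}}_C({\mathsf{\delta}})$ is also available in ${\mathsf{G}}$, $\langle p_1, p_2 \rangle$ is a ${\mathsf{V}}$-equilibrium of ${\mathsf{G}}_C({\mathsf{\delta}})$, contradicting Condition (2/c). The main obstacle is not the reduction to the Crawford game but the bookkeeping in Step~2: one must recognize that the two-valued distribution produced by the deviation $f_2$ is exactly the instance of $\widehat{{\mathsf{R}}}(1, 1+2{\mathsf{\delta}}, \cdot)$ that (2/a) controls, and then combine the two quantitative inputs (${\mathsf{\delta}} \le \tfrac{1}{4}$ and $\widehat{{\mathsf{R}}} < \tfrac{1}{2}$) to obtain the \emph{strict} inequality ${\mathsf{V}}_2(p_2^{f_2}, p_1) < 2$ needed to evict every ${\mathsf{\lambda}} \in \Lambda$ from $\sigma(p_2)$.
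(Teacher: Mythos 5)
Your computations are all correct, and your overall architecture (evict ${\mathsf{\Lambda}}$ from $\sigma(p_2)$, then land in the Crawford game and invoke Condition {\sf (2/c)}) is sound in outline, but the eviction step has a genuine gap. You write that since ${\mathsf{V}}_2(p_1, p_2^{{\mathsf{f}}_2}) < 2 = {\mathsf{V}}_2(p_1, p_2^{{\mathsf{\lambda}}})$ for every ${\mathsf{\lambda}} \in {\mathsf{\Lambda}}$, ``the best-response requirement on $p_2$ excludes all of ${\mathsf{\Lambda}}$ from $\sigma(p_2)$.'' That inference --- a strategy whose pure version is strictly beaten by another pure strategy cannot appear in the support of a best response --- is essentially the \emph{Strong Equilibrium} property, which holds for valuations concave in $p_i$ via the \emph{Optimal-Value} property (Proposition~\ref{constant value}) but is \emph{not} among the hypotheses of Theorem~\ref{two players complexity}: that theorem assumes only \emph{Risk-Positivity} and \emph{Weak-Equilibrium-for-Expectation}, and for a general $({\mathsf{E}}+{\mathsf{R}})$-valuation a best-response mixed strategy may place weight on strategies that are individually suboptimal as pure responses. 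Concretely, if $\sigma(p_2)$ meets both ${\mathsf{\Lambda}}$ and ${\cal F}$, your argument yields no contradiction: the best-response condition only gives ${\mathsf{V}}_2(p_1,p_2) \leq {\mathsf{V}}_2(p_1,p_2^{{\mathsf{f}}_2}) < 2$, which is perfectly consistent with some ${\mathsf{\lambda}} \in \sigma(p_2)$ carrying positive probability.

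The paper closes exactly this mixed-support case with the \emph{Weak-Equilibrium-for-Expectation} property, which you never invoke: a supported ${\mathsf{\lambda}} \in {\mathsf{\Lambda}}$ has conditional expectation ${\mathsf{E}}_2(p_1, p_2^{{\mathsf{\lambda}}}) = 2$, while a supported ${\mathsf{f}} \in {\cal F}$ has ${\mathsf{E}}_2(p_1, p_2^{{\mathsf{f}}}) \leq 1 + 2{\mathsf{\delta}} < 2$, so the property forbids $\sigma(p_2)$ from meeting both sets. Your deviation argument then correctly disposes of the case $\sigma(p_2) \subseteq {\mathsf{\Lambda}}$ (there ${\mathsf{V}}_2(p_1,p_2) = 2$ exactly, by \emph{Risk-Positivity}, and ${\mathsf{f}}_2$ strictly improves), and Condition {\sf (2/c)} disposes of $\sigma(p_2) \subseteq {\cal F}$, just as in the paper. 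Inserting the Weak-Equilibrium-for-Expectation step to split these cases restores the proof; as written, it is incomplete for the general class of valuations the theorem addresses.
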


\begin{proof}
Assume,
by way of contradiction,
that
$p_{i}({\mathsf{\Lambda}})
 =
 0$
for some player
$i \in [2]$;
so,
$p_{i}({\cal F})
 =
 1$.
For easier notation,
fix
$i := 1$.
(The proof is the same
for $i := 2$.)
By Condition {\sf (2/c)},
the Crawford game ${\mathsf{G}}_{C}({\mathsf{\delta}})$
has no ${\mathsf{V}}$-equilibrium.
Hence,
it follows that
$p_{2}({\mathsf{\Lambda}})
 >
 0$.
We proceed by case analysis
on
$p_{2}({\cal F})$.
\begin{enumerate}

\item
Assume first that
$p_{2}({\cal F})
 >
 0$;
so,
$p_{2}({\mathsf{\Lambda}})
 <
 1$.
Fix a strategy
${\mathsf{f}} \in {\cal F}$
with
$p_{2}({\mathsf{f}}) > 0$.
By the cost functions,
${\mathsf{\mu}}_{2}
          ({\mathsf{f}}', {\mathsf{f}})
 \leq
 1 + 2\, {\mathsf{\delta}}
 <
 2$
for each strategy
${\mathsf{f}}' \in {\cal F}$.
Hence,
${\mathsf{E}}_{2}
          (p_{1}, p_{2}^{{\mathsf{f}}})
 <
 2$
since
${\mathsf{\delta}}
 <
 \frac{\textstyle 1}
      {\textstyle 2}$. 
Fix now a strategy
${\mathsf{\lambda}}
 \in
 {\mathsf{\Lambda}}$
with
$p_{2}({\mathsf{\lambda}})
 >
  0$.
By the cost functions,
${\mathsf{\mu}}_{2}
          ({\mathsf{f}}_{1}, {\mathsf{\lambda}})
 =
 2$
for each strategy
${\mathsf{f}}_{1} \in {\cal F}$.
Hence,
${\mathsf{E}}_{2}
          (p_{1}, p_{2}^{{\mathsf{\lambda}}})
 =
 2$.
By the
{\it Weak-Equilibrium-for-Expectation} property
for player $2$,
${\mathsf{E}}_{2}
          (p_{1}, p_{2}^{{\mathsf{f}}})
 =
 {\mathsf{E}}_{2}
          (p_{1}, p_{2}^{{\mathsf{\lambda}}})$.
A contradiction.

\item
Assume now that
$p_{2}({\cal F})
 =
 0$;
so,
$p_{2}({\mathsf{\Lambda}})
 =
 1$.
By the cost functions,
${\mathsf{\mu}}_{2}\left( {\mathsf{f}}, {\mathsf{\lambda}}
                   \right)
 =
 2$
for each pair
of strategies
${\mathsf{f}} \in {\cal F}$
and
${\mathsf{\lambda}}
 \in
 {\mathsf{\Lambda}}$.
Hence,
${\mathsf{E}}_{2}\left( p_{1}, p_{2}
                 \right)
 =
 2$.
It follows
by the {\it Risk-Positivity} property
that
${\mathsf{R}}_{2}\left( p_{1}, p_{2}
                 \right)
 =
 0$,
so that
${\mathsf{V}}_{2}\left( p_{1}, p_{2}
                 \right)
 =
 2$.
Consider now player $2$
switching to
the strategy
${\mathsf{f}}_{2} \in {\cal F}$.
By Condition {\sf (2/a)},
{
\small
\begin{eqnarray*}
      {\mathsf{R}}_{2}\left( p_{1},
                             p_{2}^{{\mathsf{f}}_{2}}
                      \right)
& = & \widehat{{\mathsf{R}}}_{2}\left( 1,
                                       1 + 2\, {\mathsf{\delta}},
                                       p_{1}({\mathsf{f}}_{1})
                                \right)\
  \
  <\
  \
  \frac{\textstyle 1}
       {\textstyle 2}\, .
\end{eqnarray*}
}
Thus,
{
\small
\begin{eqnarray*}
      {\mathsf{E}}_{2}\left( p_{1}, p_{2}^{{\mathsf{f}}_{2}}
                      \right)
& = & \underbrace{{\mathsf{\mu}}_{2}({\mathsf{f}}_{2}, {\mathsf{f}}_{2})}_{= 1}
      \cdot\,
      p_{1}({\mathsf{f}}_{2})
      +
      \underbrace{{\mathsf{\mu}}_{2}({\mathsf{f}}_{1}, {\mathsf{f}}_{2})}_{= 1 + 2 {\mathsf{\delta}}}
      \cdot
      p_{1}({\mathsf{f}}_{1})                                                                  \\
& = & 1
      \cdot
      p_{1}({\mathsf{f}}_{2})
      +
      (1 + 2 {\mathsf{\delta}})
      \cdot\,
      p_{1}({\mathsf{f}}_{1})                                                                  \\
&\leq& 1 + 2 {\mathsf{\delta}}\, ,
\end{eqnarray*}
}
so that
{
\small
\begin{eqnarray*}
      \lefteqn{{\mathsf{V}}_{2}\left( p_{1},
                                      p_{2}^{{\mathsf{f}}_{2}}
                               \right)}                                                      \\
= & {\mathsf{E}}_{2}\left( p_{1},
                           p_{2}^{{\mathsf{f}}_{2}}
                    \right)
    +
    {\mathsf{R}}_{2}\left( p_{1},
                           p_{2}^{{\mathsf{f}}_{2}}
                      \right)                                                                
  &                                                                                          \\
< & 1 + 2 {\mathsf{\delta}}
    +
    \frac{\textstyle 1}
           {\textstyle 2}
    &                                                                                         \\
\leq& 2
    & \mbox{(since ${\mathsf{\delta}}
                    \leq
                    \frac{\textstyle 1}
                         {\textstyle 4}$)}\, .
\end{eqnarray*}
}
A contradiction to the assumption
that
$\langle p_{1}, p_{2}
 \rangle$
is a ${\mathsf{V}}$-equilibrium.
\end{enumerate}
The claim follows.
\end{proof}

\begin{lemma}
\label{if a player plays only the formula then the other also does}
In a ${\mathsf{V}}$-equilibrium
$\langle p_{1},
         p_{2}
 \rangle$
for ${\mathsf{G}}$,
if
$p_{i}({\mathsf{\Lambda}})
 =
 1$
then
 $p_{\overline{i}}({\mathsf{\Lambda}})
    =
    1$.
\end{lemma}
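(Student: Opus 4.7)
The plan is to argue by contradiction: assume $p_i(\mathsf{\Lambda}) = 1$ but $p_{\overline{i}}({\cal F}) > 0$. The two cases $i = 1$ and $i = 2$ are handled by the same reasoning with players' roles exchanged, so I take $i = 1$ and fix some ${\mathsf{f}} \in \sigma(p_2) \cap {\cal F}$. First I would show that player~$2$'s valuation at $\langle p_1, p_2 \rangle$ is exactly $1$ with zero risk. Since $p_1({\mathsf{\Lambda}}) = 1$, inspection of Figure~\ref{ridiculous} gives ${\mathsf{\mu}}_2({\mathsf{\lambda}}, {\mathsf{f}}) = 1$ for every ${\mathsf{\lambda}} \in {\mathsf{\Lambda}}$; hence ${\mathsf{E}}_2(p_1, p_2^{{\mathsf{f}}}) = 1$ and, by Risk-Positivity, ${\mathsf{V}}_2(p_1, p_2^{{\mathsf{f}}}) = 1$. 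Because $p_2$ is a best response, ${\mathsf{V}}_2(p_1, p_2) \leq 1$; the Weak-Equilibrium-for-Expectation property applied to player~$2$ equates ${\mathsf{E}}_2(p_1, p_2^{s_2})$ across $s_2 \in \sigma(p_2)$ with ${\mathsf{E}}_2(p_1, p_2^{{\mathsf{f}}}) = 1$, so ${\mathsf{E}}_2(p_1, p_2) = 1$, and Risk-Positivity then forces ${\mathsf{R}}_2(p_1, p_2) = 0$.

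Next, clause~{\sf (C.2)} of Risk-Positivity upgrades the latter to the pointwise statement that ${\mathsf{\mu}}_2(s_1, s_2) = 1$ on the entire product $\sigma(p_1) \times \sigma(p_2)$. A case analysis on $s_2 \in \sigma(p_2)$ against ${\mathsf{\Lambda}}$-strategies rules out variables and clauses from $\sigma(p_2)$: in those columns ${\mathsf{\mu}}_2$ takes only values in $\{0, 2, m\}$, none equal to $1$ once $m \geq 2$ is assumed (the reduction can trivially handle $m = 1$). Hence $\sigma(p_2) \subseteq {\cal F} \cup {\mathsf{L}}$, and Lemma~\ref{both players play the formula} supplies some $\ell^{\ast} \in \sigma(p_2) \cap {\mathsf{L}}$. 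Enforcing ${\mathsf{\mu}}_2(\cdot, \ell^{\ast}) = 1$ across $\sigma(p_1)$ rules out every variable, every clause, and the literal $\overline{\ell}^{\ast}$ (each would give cost $2$), whence $\sigma(p_1) \subseteq {\mathsf{L}} \setminus \{ \overline{\ell}^{\ast} \}$; applying the same exclusion to every $\ell \in \sigma(p_2) \cap {\mathsf{L}}$ and taking contrapositives yields $p_2(\overline{\ell}_1) = 0$ for every $\ell_1 \in \sigma(p_1)$.

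The last step is to exhibit a profitable deviation for player~$1$ using Condition~{\sf (2/b)} of Theorem~\ref{two players complexity}. From the structure above, whenever player~$1$ plays any $\ell_1 \in \sigma(p_1)$ against $p_2$, her cost is $1$ on $\sigma(p_2) \cap {\mathsf{L}}$ (none of those literals is $\overline{\ell}_1$) and $2$ on $\sigma(p_2) \cap {\cal F}$; the resulting two-point distribution is independent of the choice of $\ell_1$, so ${\mathsf{V}}_1(p_1, p_2) = \widehat{{\mathsf{V}}}(1, 2, p_2({\cal F}))$. Switching player~$1$ to the pure strategy ${\mathsf{f}}_2$, Figure~\ref{ridiculous} gives cost $1 + 2\delta$ exactly when player~$2$ plays ${\mathsf{f}}_2$ and $1$ otherwise, so ${\mathsf{V}}_1(p_1^{{\mathsf{f}}_2}, p_2) = \widehat{{\mathsf{V}}}(1, 1 + 2\delta, p_2({\mathsf{f}}_2))$. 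Since $p_2({\mathsf{f}}_2) \leq p_2({\cal F})$, Condition~{\sf (2/b)} applied with $r = p_2({\mathsf{f}}_2)$ and $q = p_2({\cal F})$ yields ${\mathsf{V}}_1(p_1^{{\mathsf{f}}_2}, p_2) < {\mathsf{V}}_1(p_1, p_2)$, contradicting that $p_1$ is a best response. The main obstacle I anticipate is the bookkeeping in the middle paragraph: ruling out variables and clauses from $\sigma(p_2)$ and then pinning down the exact support structure on both sides requires walking through every relevant row of Figure~\ref{ridiculous}; once that structural picture is in place, Condition~{\sf (2/b)} closes the argument mechanically.
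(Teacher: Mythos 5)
Your argument is correct and follows essentially the same route as the paper's proof: use the Weak-Equilibrium-for-Expectation and Risk-Positivity properties to force ${\mathsf{\mu}}_{2} \equiv 1$ on the support, read off from the cost table that both supports lie in non-conflicting literals (plus ${\cal F}$ for player $2$), conclude ${\mathsf{V}}_{1}(p_{1},p_{2}) = \widehat{{\mathsf{V}}}(1,2,p_{2}({\cal F}))$, and close with Condition {\sf (2/b)} via a deviation into the Crawford set. The one small divergence is that you deviate player $1$ to ${\mathsf{f}}_{2}$ (cost pair $\{1, 1+2{\mathsf{\delta}}\}$), which matches the literal statement of {\sf (2/b)}, whereas the paper deviates to ${\mathsf{f}}_{1}$ (cost pair $\{1, 1+{\mathsf{\delta}}\}$) and must additionally note that $1+{\mathsf{\delta}} < 2$ — your choice is, if anything, marginally cleaner.
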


\begin{proof}
Set $i := 1$
so that $p_{1}({\mathsf{\Lambda}})
         =
         1$.
By Lemma~\ref{both players play the formula},
$p_{2}({\mathsf{\Lambda}})
 >
 0$.
If
$p_{2}({\mathsf{\Lambda}})
 =
 1$,
then we are done.
So assume
$p_{2}({\mathsf{\Lambda}})
 <
 1$.
This implies that
$p_{2}({\cal F})
 >
 0$.

By the cost functions,
for each strategy
${\mathsf{f}} \in {\cal F}$,
${\mathsf{\mu}}_{2}\left( {\mathsf{\lambda}},
                          {\mathsf{f}}
                  \right)
 =
 1$
for each strategy
${\mathsf{\lambda}}
 \in
 {\mathsf{\Lambda}}$.
Hence,
for each strategy
${\mathsf{f}} \in {\cal F}$
with
$p_{2}({\mathsf{f}})
 >
 0$,
{\it (i)}
${\mathsf{E}}_{2}\left( p_{1},
                        p_{2}^{f}
                 \right)
 =
 1$,
so that
the {\it Weak-Equilibrium-for-Expectation} property
for player $2$
implies that
${\mathsf{E}}_{2}\left( p_{1},
                        p_{2}^{{\mathsf{\lambda}}}
                 \right)
 =
 1$
for each strategy
${\mathsf{\lambda}}
 \in
 {\mathsf{\Lambda}}$
with
$p_{2}({\mathsf{\lambda}})
 >
 0$,
and
{\it (ii)}
${\mathsf{R}}_{2}(p_{1}, p_{2}^{{\mathsf{f}}})
 =
 0$,
by the {\it Risk-Positivity} property.
Hence,
${\mathsf{V}}_{2}\left( p_{1},
                        p_{2}^{{\mathsf{f}}}
                 \right)
 =
 1$.

Assume,
by way of contradiction,
that there is a strategy
${\mathsf{\lambda}}'
 \in
 {\mathsf{\Lambda}}$
with
$p_{2}({\mathsf{\lambda}}')
 >
 0$
such that
${\mathsf{\mu}}_{2}({\mathsf{\lambda}},
                    {\mathsf{\lambda}}^{\prime})
 \neq
 1$
for some
${\mathsf{\lambda}}
 \in
 {\mathsf{\Lambda}}$
with
$p_{1}({\mathsf{\lambda}})
 >
 0$.
It follows
by the {\it Risk-Positivity} property
that
${\mathsf{R}}_{2}(p_{1}, p_{2})
 >
 0$.
Then,
{
\small
\begin{eqnarray*}
      {\mathsf{V}}_{2}(p_{1}, p_{2})
& = & {\mathsf{E}}_{2}(p_{1}, p_{2})
      +
      {\mathsf{R}}_{2}(p_{1}, p_{2})                    \\
& = & \sum_{{\mathsf{\lambda}}
            \in
            {\mathsf{\Lambda}}
           }
        \underbrace{{\mathsf{E}}_{2}
                    (p_{1}, p_{2}^{{\mathsf{\lambda}}})}_{= 1}
        \cdot
        p_{2}({\mathsf{\lambda}})
      +
      \sum_{{\mathsf{f}} \in {\cal F}}
        \underbrace{{\mathsf{E}}_{2}
                    (p_{1}, p_{2}^{{\mathsf{f}}})}_{= 1}
        \cdot
        p_{2}({\mathsf{f}})
      +
      {\mathsf{R}}_{2}(p_{1}, p_{2})                    \\
& = & 1 + {\mathsf{R}}_{2}(p_{1}, p_{2})\, .
\end{eqnarray*}
}
Since
${\mathsf{V}}_{2}\left( p_{1},
                        p_{2}^{{\mathsf{f}}}
                 \right)
 =
 1$,
player $2$
improves her cost by
switching to
a strategy ${\mathsf{f}} \in {\cal F}$.
A contradiction to
the assumption that
$\langle p_{1}, p_{2}
 \rangle$
is a ${\mathsf{V}}$-equilibrium.

Hence,
it follows that
${\mathsf{\mu}}_{2}({\mathsf{\lambda}},
                    {\mathsf{\lambda}}')
 =
 1$
for each pair of strategies
${\mathsf{\lambda}},
 {\mathsf{\lambda}}'
 \in
 {\mathsf{\Lambda}}$
with
$p_{1}({\mathsf{\lambda}})
 >
 0$
and
$p_{2}({\mathsf{\lambda}}')
 >
 0$.
By the cost functions,
it follows that
for each pair of distinct strategies
${\mathsf{\lambda}},
 {\mathsf{\lambda}}'
 \in
 {\mathsf{\Lambda}}$
with
$p_{1}({\mathsf{\lambda}})
 >
 0$
and
$p_{2}({\mathsf{\lambda}}')
 >
 0$,
${\mathsf{\lambda}},
 {\mathsf{\lambda}}'
 \in
 {\mathsf{L}}$
with
${\mathsf{\lambda}}
 \neq
 \overline{{\mathsf{\lambda}}'}$,
${\mathsf{\mu}}_{1}({\mathsf{\lambda}},
                    {\mathsf{\lambda}}')
 =
 1$.

Consider now player $1$.
Note that
for each strategy
${\mathsf{\lambda}}
 \in
 {\mathsf{\Lambda}}$
with $p_{1}({\mathsf{\lambda}})
      >
      0$,
{\it (i)}
${\mathsf{\mu}}_{1}({\mathsf{\lambda}},
                    {\mathsf{\lambda}}'
                   )
 =
 1$
for each strategy
 ${\mathsf{\lambda}}'
 \in
 {\mathsf{\Lambda}}$
with $p_{2}({\mathsf{\lambda}}')
      >
      0$,
and
{\it (ii)}
${\mathsf{\mu}}_{1}({\mathsf{\lambda}},
                    {\mathsf{f}}
                   )
 =
 2$
for each strategy
 ${\mathsf{f}}
 \in
 {\cal F}$
with $p_{2}({\mathsf{f}})
      >
      0$.
Hence,
$\left\{ {\mathsf{\mu}}_{1}({\bf s})
         \mid
         {\bf p}({\bf s}) > 0
 \right\}
 =
 \{ 1, 2 \}$,
so that
${\mathsf{V}}_{1}(p_{1}, p_{2})
 =
 \widehat{{\mathsf{V}}}_{1}(1, 2, q)$,
where
$q = p_{2}({\cal{F}})$.

Consider player $1$
switching to the pure strategy
$p_{1}^{{\mathsf{f}}_{1}}$.
By the cost functions,
${\mathsf{\mu}}_{1}({\mathsf{f}}_{1},
                    {\mathsf{\lambda}})
 =
 1$
for each strategy
${\mathsf{\lambda}}
 \in
 {\mathsf{\Lambda}}$,
${\mathsf{\mu}}_{1}({\mathsf{f}}_{1},
                    {\mathsf{f}}_{2})
 =
 1$
and
${\mathsf{\mu}}_{1}({\mathsf{f}}_{1},
                    {\mathsf{f}}_{1})
 =
 1 + {\mathsf{\delta}}$.                     
Hence,
for
$\widehat{{\bf p}}
 =
 \langle p_{1}^{{\mathsf{f}}_{1}},
         p_{2}
 \rangle$,
$\{ {\mathsf{\mu}}_{1}({\bf s})
    \mid
    \widehat{{\bf p}}({\bf s})
    >
    0
 \}
 =
 \{ 1, 1 + {\mathsf{\delta}} \}$,
so that
${\mathsf{V}}_{1}\left( p_{1}^{{\mathsf{f}}_{1}},
                        p_{2}
                 \right)       
 =
 \widehat{{\mathsf{V}}}_{1}\left( 1, 1 + {\mathsf{\delta}}, r
                           \right)$,
with $r = p_{2}({\mathsf{f}}_{1})$.                           
So,
{
\small
\begin{eqnarray*}
      \lefteqn{\widehat{{\mathsf{V}}}_{1}\left( 1, 1 + {\mathsf{\delta}}, r
                                         \right)}                                      \\
<   & \widehat{{\mathsf{V}}}_{1}\left( 1, 2, q
                                         \right)
    & \mbox{(by Condition {\sf (2/b)}, 
             since $r \leq q$ and $1 + {\mathsf{\delta}} < 2$)}                        \\
=   & {\mathsf{V}}_{1}(p_{1}, p_{2})\, .
    &             
\end{eqnarray*}
}
So,
player $1$ improves her cost by switching to
the pure strategy $p_{1}^{{\mathsf{f}}_{1}}$.
A contradiction to the assumption that
$\langle p_{1},
         p_{2}
 \rangle$
is a ${\mathsf{V}}$-equilibrium.
The proof for $i := 2$ is identical
except that in the last stage
it uses the inequality
$\widehat{{\mathsf{V}}}_{2}\left( 1, 1 + 2\, {\mathsf{\delta}}, r
                           \right)
 <
 \widehat{{\mathsf{V}}}_{2}\left( 1, 2, q
                           \right)$,
holding by Condition {\sf (2/b)}.
\end{proof}

\begin{lemma}
\label{both players play only the formula}
In a ${\mathsf{V}}$-equilibrium
$\langle p_{1},
         p_{2}
 \rangle$
for ${\mathsf{G}}$,
$p_{1}({\mathsf{\Lambda}})
 =
 p_{2}({\mathsf{\Lambda}})
 =
 1$.
\end{lemma}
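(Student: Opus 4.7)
The plan is to argue by contradiction. Suppose $\langle p_{1}, p_{2} \rangle$ is a ${\mathsf{V}}$-equilibrium with $p_{i}({\mathsf{\Lambda}}) < 1$ for some $i$. First I would observe that we may assume $p_{1}({\cal F}) > 0$ and $p_{2}({\cal F}) > 0$: if instead $p_{1}({\cal F}) > 0$ while $p_{2}({\cal F}) = 0$, then $p_{2}({\mathsf{\Lambda}}) = 1$ and Lemma~\ref{if a player plays only the formula then the other also does} would force $p_{1}({\mathsf{\Lambda}}) = 1$, contradicting $p_{1}({\cal F}) > 0$; the symmetric case is identical. By Lemma~\ref{both players play the formula}, also $p_{1}({\mathsf{\Lambda}}), p_{2}({\mathsf{\Lambda}}) > 0$.

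The first key step is to show that $p_{1}$ puts no mass on literals. Assume otherwise: fix $\ell_{0} \in \sigma(p_{1}) \cap {\mathsf{L}}$. Reading off Figure~\ref{ridiculous}, ${\mathsf{\mu}}_{1}(\ell_{0}, \cdot)$ attains only the values $1$ (when $s_{2} \in {\mathsf{L}} \setminus \{\overline{\ell}_{0}\}$) and $2$ (otherwise), so ${\mathsf{V}}_{1}(p_{1}^{\ell_{0}}, p_{2}) = \widehat{{\mathsf{V}}}_{1}(1, 2, q_{\ell_{0}})$ with $q_{\ell_{0}} := p_{2}(\overline{\ell}_{0}) + p_{2}({\cal V}) + p_{2}({\mathsf{C}}) + p_{2}({\cal F})$. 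Similarly ${\mathsf{\mu}}_{1}({\mathsf{f}}_{2}, \cdot)$ takes only the values $1$ and $1 + 2{\mathsf{\delta}}$, with probability $p_{2}({\mathsf{f}}_{2})$ on $1 + 2{\mathsf{\delta}}$, so ${\mathsf{V}}_{1}(p_{1}^{{\mathsf{f}}_{2}}, p_{2}) = \widehat{{\mathsf{V}}}_{1}(1, 1 + 2{\mathsf{\delta}}, p_{2}({\mathsf{f}}_{2}))$. Since $p_{2}({\mathsf{f}}_{2}) \leq p_{2}({\cal F}) \leq q_{\ell_{0}}$, Condition {\sf (2/b)} yields the strict inequality ${\mathsf{V}}_{1}(p_{1}^{{\mathsf{f}}_{2}}, p_{2}) < {\mathsf{V}}_{1}(p_{1}^{\ell_{0}}, p_{2})$. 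However, the Optimal-Value property (Proposition~\ref{constant value}) gives ${\mathsf{V}}_{1}(p_{1}, p_{2}) = {\mathsf{V}}_{1}(p_{1}^{\ell_{0}}, p_{2})$, and the best-response property of $p_{1}$ gives ${\mathsf{V}}_{1}(p_{1}, p_{2}) \leq {\mathsf{V}}_{1}(p_{1}^{{\mathsf{f}}_{2}}, p_{2})$; chaining these yields the contradiction ${\mathsf{V}}_{1}(p_{1}^{\ell_{0}}, p_{2}) < {\mathsf{V}}_{1}(p_{1}^{\ell_{0}}, p_{2})$. Hence $p_{1}({\mathsf{L}}) = 0$, so $\sigma(p_{1}) \subseteq {\cal V} \cup {\mathsf{C}} \cup {\cal F}$.

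The final step invokes the \emph{Weak-Equilibrium-for-Expectation} property for player $2$, granted by Condition {\sf (1)}. Pick any $f' \in \sigma(p_{2}) \cap {\cal F}$ and $\lambda' \in \sigma(p_{2}) \cap {\mathsf{\Lambda}}$, both nonempty by the preceding paragraph. Inspection of Figure~\ref{ridiculous}, using the symmetry rule ${\mathsf{\mu}}_{i}(s_{1}, s_{2}) = {\mathsf{\mu}}_{\overline{i}}(s_{2}, s_{1})$ for profiles not listed, shows ${\mathsf{\mu}}_{2}(s_{1}, \lambda') = 2$ for every $s_{1} \in {\cal V} \cup {\mathsf{C}} \cup {\cal F}$ and every $\lambda' \in {\mathsf{\Lambda}}$; hence ${\mathsf{E}}_{2}(p_{1}, p_{2}^{\lambda'}) = 2$. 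On the other hand, ${\mathsf{\mu}}_{2}(s_{1}, f') \leq 1 + 2{\mathsf{\delta}}$ for every $s_{1}$, so ${\mathsf{E}}_{2}(p_{1}, p_{2}^{f'}) \leq 1 + 2{\mathsf{\delta}} \leq \frac{3}{2} < 2$. This contradicts the WEE equality ${\mathsf{E}}_{2}(p_{1}, p_{2}^{f'}) = {\mathsf{E}}_{2}(p_{1}, p_{2}^{\lambda'})$, completing the argument.

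The main obstacle I expect is the careful bookkeeping around the cost functions: one must verify that ${\mathsf{\mu}}_{2}(s_{1}, \lambda') = 2$ uniformly over $s_{1} \in {\cal V} \cup {\mathsf{C}} \cup {\cal F}$ and $\lambda' \in {\mathsf{\Lambda}}$, invoking the symmetry rule wherever the profile is not in the table, and ensuring Condition {\sf (2/b)} is applied with the strict inequality in force; this is precisely why the preliminary reduction to $p_{2}({\cal F}) > 0$ is important, as it ensures $q_{\ell_{0}} \geq p_{2}({\cal F}) > 0$.
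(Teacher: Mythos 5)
Your argument takes a genuinely different route from the paper's. The paper never isolates the literals: it runs an averaging argument, noting that ${\mathsf{\mu}}_{1}({\mathsf{\lambda}},{\mathsf{\lambda}}')+{\mathsf{\mu}}_{2}({\mathsf{\lambda}},{\mathsf{\lambda}}')\geq 2$ on ${\mathsf{\Lambda}}\times{\mathsf{\Lambda}}$, so some player $i$ has conditional expectation $b\geq 1$ on her supported ${\mathsf{\Lambda}}$-strategies, while on a supported ${\cal F}$-strategy her conditional expectation is $a\cdot p_{\overline{i}}({\cal F})+p_{\overline{i}}({\mathsf{\Lambda}})$ with $|a|<2$; the {\it Weak-Equilibrium-for-Expectation} equality then forces $(2-a)\,p_{\overline{i}}({\cal F})=(1-b)\,p_{\overline{i}}({\mathsf{\Lambda}})$, a sign contradiction. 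Your second step --- applying {\it Weak-Equilibrium-for-Expectation} to player $2$ once $\sigma(p_{1})\subseteq{\cal V}\cup{\mathsf{C}}\cup{\cal F}$ is known, comparing the constant cost $2$ against costs at most $1+2{\mathsf{\delta}}\leq\frac{3}{2}$ --- is correct and arguably cleaner than the paper's bookkeeping; your table verifications check out.

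However, your first step has a genuine gap. You invoke the {\it Optimal-Value} property (Proposition~\ref{constant value}) to get ${\mathsf{V}}_{1}(p_{1},p_{2})={\mathsf{V}}_{1}(p_{1}^{\ell_{0}},p_{2})$, but that proposition requires ${\mathsf{V}}_{1}(\cdot,p_{2})$ to be concave in $p_{1}$ (assumption {\sf (A.1)}), and concavity is \emph{not} among the hypotheses of Theorem~\ref{two players complexity} --- the theorem deliberately assumes only {\it Risk-Positivity}, {\it Weak-Equilibrium-for-Expectation} and Conditions {\sf (2/a)}--{\sf (2/c)}, and the paper advertises this generality. Without concavity, the best-response property only yields ${\mathsf{V}}_{1}(p_{1},p_{2})\leq{\mathsf{V}}_{1}(p_{1}^{\ell_{0}},p_{2})$, and your chain ${\mathsf{V}}_{1}(p_{1}^{\ell_{0}},p_{2})={\mathsf{V}}_{1}(p_{1},p_{2})\leq{\mathsf{V}}_{1}(p_{1}^{{\mathsf{f}}_{2}},p_{2})<{\mathsf{V}}_{1}(p_{1}^{\ell_{0}},p_{2})$ collapses. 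The gap is repairable within your own architecture: since $p_{1}({\cal F})>0$, pick ${\mathsf{f}}\in\sigma(p_{1})\cap{\cal F}$ and apply {\it Weak-Equilibrium-for-Expectation} to player $1$ instead of {\it Optimal-Value}; then ${\mathsf{E}}_{1}(p_{1}^{\ell_{0}},p_{2})=1+q_{\ell_{0}}\geq 1+p_{2}({\cal F})$ while ${\mathsf{E}}_{1}(p_{1}^{{\mathsf{f}}},p_{2})\leq 1+2{\mathsf{\delta}}\cdot p_{2}({\cal F})<1+p_{2}({\cal F})$, a contradiction that also dispenses with Condition {\sf (2/b)} in this step.
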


\begin{proof}
By Lemma~\ref{both players play the formula},
$p_{1}({\mathsf{\Lambda}})
 >
 0$
and
$p_{2}({\mathsf{\Lambda}})
 >
 0$.
If
$p_{i}({\mathsf{\Lambda}})
 =
 1$
for some player $i \in [2]$,
then,
by Lemma~\ref{if a player plays only the formula then the other also does},
$p_{\overline{i}}({\mathsf{\Lambda}})
 =
 1$,
and we are done.
So assume that
$p_{1}({\mathsf{\Lambda}})
 <
 1$
and
$p_{2}({\mathsf{\Lambda}})
 <
 1$;
this implies that
$p_{1}({\cal F})
 >
 0$
and
$p_{2}({\cal F})
 >
 0$.
By the cost functions,
for each pair
${\mathsf{\lambda}},
 {\mathsf{\lambda}}'
 \in
 {\mathsf{\Lambda}}$,
\begin{eqnarray*}
         {\mathsf{\mu}}_{1}({\mathsf{\lambda}},
                            {\mathsf{\lambda}}')
         +
         {\mathsf{\mu}}_{2}({\mathsf{\lambda}},
                            {\mathsf{\lambda}}')
& \geq & 2\, .
\end{eqnarray*}
So,
{
\small
\begin{eqnarray*}
         \frac{\textstyle 1}
              {\textstyle p_{1}\left( {\mathsf{\Lambda}}
                               \right)\,
                          p_{2}\left( {\mathsf{\Lambda}}
                               \right)}
         \sum_{{\mathsf{\lambda}},
               {\mathsf{\lambda}}'
               \in
               {\mathsf{\Lambda}}}
           \left( {\mathsf{\mu}}_{1}({\mathsf{\lambda}},
                                     {\mathsf{\lambda}}'
                                    )
                  +
                  {\mathsf{\mu}}_{2}({\mathsf{\lambda}},
                                     {\mathsf{\lambda}}'
                                    )
           \right)\,
           p_{1} ({\mathsf{\lambda}})\,
           p_{2} ({\mathsf{\lambda}}')
& \geq & 2 \, .
\end{eqnarray*}
}
So,
there is a player $i \in [2]$
with
{
\small
\begin{eqnarray*}
       b
& := & \frac{\textstyle 1}
           {\textstyle p_{1}\left( {\mathsf{\Lambda}}
                            \right)\,
                       p_{2}\left( {\mathsf{\Lambda}}
                            \right)}\,
         \sum_{{\mathsf{\lambda}},
               {\mathsf{\lambda}}'
               \in
               {\mathsf{\Lambda}}}
           {\mathsf{\mu}}_{i}({\mathsf{\lambda}},
                              {\mathsf{\lambda}}'
                             )\,
           p_{1} ({\mathsf{\lambda}})\,
           p_{2} ({\mathsf{\lambda}}')\ \
\geq\ \ 1\, .
\end{eqnarray*}
}
Without loss of generality,
set $i := 1$.
By the
{\it Weak-Equilibrium-for-Expectation} property
for player $1$,
${\mathsf{E}}_{1}\left( p_{1}^{{\mathsf{\lambda}}_{1}},
                             p_{2}
                     \right)
 =  {\mathsf{E}}_{1}\left( p_{1}^{{\mathsf{\lambda}}_{2}},
                             p_{2}
                     \right)$
for all ${\mathsf{\lambda}}_{1},
         {\mathsf{\lambda}}_{2}
         \in
         {\mathsf{\Lambda}}$
with
$p_{1} \left( {\mathsf{\lambda}}_{1}
       \right)
 >
 0$
and
$p_{1} \left( {\mathsf{\lambda}}_{2}
       \right)
 >
 0$.
For ${\mathsf{\lambda}}
     \in
     {\mathsf{\Lambda}}$,
it holds that
{
\small
\begin{eqnarray*}
      {\mathsf{E}}_{1}\left( p_{1}^{{\mathsf{\lambda}}},
                             p_{2}
                     \right)
& = & 2\, p_{2}\left( {\cal F}
               \right)
      +
      \sum_{{\mathsf{\lambda}}'
            \in
            {\mathsf{\Lambda}}}
        {\mathsf{\mu}}_{1}({\mathsf{\lambda}},
                           {\mathsf{\lambda}}'
                          )\,
        p_{2} \left( {\mathsf{\lambda}}'
              \right)\, .
\end{eqnarray*}
}
Hence,
{
\small
\begin{eqnarray*}
      \sum_{{\mathsf{\lambda}}'
            \in
            {\mathsf{\Lambda}}}
        {\mathsf{\mu}}_{1}({\mathsf{\lambda}}_{1},
                           {\mathsf{\lambda}}'
                          )\,
        p_{2} \left( {\mathsf{\lambda}}'
              \right)
& = & \sum_{{\mathsf{\lambda}}'
            \in
            {\mathsf{\Lambda}}}
        {\mathsf{\mu}}_{1}({\mathsf{\lambda}}_{2},
                           {\mathsf{\lambda}}'
                          )\,
        p_{2} \left( {\mathsf{\lambda}}'
              \right)
\end{eqnarray*}
}
for all ${\mathsf{\lambda}}_{1},
         {\mathsf{\lambda}}_{2}
         \in
         {\mathsf{\Lambda}}$
with
$p_{1} \left( {\mathsf{\lambda}}_{1}
       \right)
 >
 0$
and
$p_{1} \left( {\mathsf{\lambda}}_{2}
       \right)
 >
 0$.
Hence,
it follows that
for each ${\mathsf{\lambda}}
          \in
          {\mathsf{\Lambda}}$
with
$p_{1}({\mathsf{\lambda}})
 >
 0$,
{
\small
\begin{eqnarray*}
      \sum_{{\mathsf{\lambda}}'
            \in
            {\mathsf{\Lambda}}}
        {\mathsf{\mu}}_{1}({\mathsf{\lambda}},
                           {\mathsf{\lambda}}'
                          )\,
        p_{2} \left( {\mathsf{\lambda}}'
              \right)
& = & b
      \cdot
      p_{2} \left( {\mathsf{\Lambda}}
            \right)\, .
\end{eqnarray*}
}
So,
{
\small
\begin{eqnarray*}
      {\mathsf{E}}_{1}\left( p_{1}^{{\mathsf{\lambda}}},
                             p_{2}
                     \right)
& = & 2\, p_{2}\left( {\cal F}
              \right)
      +
      \underbrace{b}_{\geq 1}
      \cdot
      p_{2} \left( {\mathsf{\Lambda}}
            \right)\, .
\end{eqnarray*}
}
But for each strategy
${\mathsf{f}} \in {\cal F}$
with $p_{1}({\mathsf{f}}) > 0$,
\begin{eqnarray*}
{\mathsf{E}}_{1}\left( p_{1}^{{\mathsf{f}}},
                       p_{2}
                \right)
& = &
  a\,
  \cdot
  p_{2}\left( {\cal F}
       \right)
  +
  p_{2} \left( {\mathsf{\Lambda}}
        \right)\, ,
\end{eqnarray*}
for some $a$
with $|a| < 2$.
Hence,
the {\it Weak-Equilibrium-for-Expectation} property
for player $1$
implies that
${\mathsf{E}}_{1}\left( p_{1}^{{\mathsf{\lambda}}},
                        p_{2}
                 \right)
 =
 {\mathsf{E}}_{1}\left( p_{1}^{{\mathsf{f}}},
                        p_{2}
                \right)$,
or
\begin{eqnarray*}
      \underbrace{(2 - a)}_{> 0}
      \cdot
      p_{2}\left( {\cal F}
           \right)
& = & \underbrace{(1 - b)}_{\leq 0}
      \cdot
      p_{2} \left( {\mathsf{\Lambda}}
            \right)\, .
\end{eqnarray*}
A contradiction.
\end{proof}

\noindent
We now establish
some stronger properties
for a ${\mathsf{V}}$-equilibrium.

\begin{lemma}
\label{both players play only literals}
Consider a ${\mathsf{V}}$-equilibrium
$\langle p_{1},
         p_{2}
 \rangle$
for the game ${\mathsf{G}}$.
Then,
$p_{1}({\mathsf{L}})
 =
 p_{2}({\mathsf{L}})
 =
 1$,
and for every player $i \in [2]$,
for every literal
${\mathsf{\lambda}}
 \in
 {\mathsf{L}}$,
$p_{i}({\mathsf{\lambda}})
 \cdot
 p_{\overline{i}}(\overline{{\mathsf{\lambda}}})
 =
 0$.
Moreover,
${\mathsf{E}}_{1}(p_{1}, p_{2})
 =
 {\mathsf{E}}_{2}(p_{1}, p_{2})
 =
 1$ 
and
${\mathsf{R}}_{1}(p_{1}, p_{2})
 =
 {\mathsf{R}}_{2}(p_{1}, p_{2})
 =
 0$.
\end{lemma}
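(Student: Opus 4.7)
The plan is to combine two levers: a universal deviation to a Crawford strategy ${\mathsf{f}}_{2}$ that caps the equilibrium ${\mathsf{V}}$-cost, together with the \emph{Weak-Equilibrium-for-Expectation} identities among the pure-strategy expected costs. By Lemma~\ref{both players play only the formula} we have $\sigma(p_{1}), \sigma(p_{2}) \subseteq {\mathsf{\Lambda}}$, so no supported profile involves ${\cal F}$. Applying the symmetry rule in Figure~\ref{ridiculous} to the rows of the form $\langle \lambda, {\mathsf{f}}\rangle$ with $\lambda \in {\mathsf{\Lambda}}$ yields ${\mathsf{\mu}}_{i}({\mathsf{f}}_{2}, \lambda) = {\mathsf{\mu}}_{\overline{i}}(\lambda, {\mathsf{f}}_{2}) = 1$ for every $\lambda \in {\mathsf{\Lambda}}$. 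Hence the deviation $p_{i} \mapsto p_{i}^{{\mathsf{f}}_{2}}$ achieves ${\mathsf{E}}_{i} = 1$ with ${\mathsf{\mu}}_{i}$ constant on supported profiles, so ${\mathsf{R}}_{i} = 0$ by \emph{Risk-Positivity} and ${\mathsf{V}}_{i}(p_{i}^{{\mathsf{f}}_{2}}, p_{\overline{i}}) = 1$. The equilibrium condition therefore forces ${\mathsf{V}}_{i}(p_{1}, p_{2}) \leq 1$ for both $i \in [2]$.

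Next I read off the pure expected costs on ${\mathsf{\Lambda}}$ directly from the table: for $s = \ell \in {\mathsf{L}}$, ${\mathsf{E}}_{i}(p_{i}^{\ell}, p_{\overline{i}}) = 2 - p_{\overline{i}}({\mathsf{L}}) + p_{\overline{i}}(\overline{\ell})$; for $s = {\mathsf{v}} \in {\cal V}$, ${\mathsf{E}}_{i}(p_{i}^{{\mathsf{v}}}, p_{\overline{i}}) = m \cdot p_{\overline{i}}(\{\ell_{{\mathsf{v}}}, \overline{\ell_{{\mathsf{v}}}}\}) + 2\, p_{\overline{i}}({\cal V} \cup {\mathsf{C}})$; and for $s = {\mathsf{c}} \in {\mathsf{C}}$, ${\mathsf{E}}_{i}(p_{i}^{{\mathsf{c}}}, p_{\overline{i}}) = m \cdot p_{\overline{i}}(L_{{\mathsf{c}}}) + 2\, p_{\overline{i}}({\cal V} \cup {\mathsf{C}})$, where $L_{{\mathsf{c}}}$ denotes the literals occurring in ${\mathsf{c}}$. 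By \emph{Weak-Equilibrium-for-Expectation}, these three expressions collapse on $\sigma(p_{i})$ to a common value $E_{i} = {\mathsf{E}}_{i}(p_{1}, p_{2})$.

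The heart of the argument will eliminate ${\cal V}$- and ${\mathsf{C}}$-strategies from both supports. First, if $\sigma(p_{i}) \subseteq {\cal V} \cup {\mathsf{C}}$, inspection of the table gives ${\mathsf{\mu}}_{\overline{i}}(s_{1}, s_{2}) = 2$ on every profile of $\sigma(p_{1}) \times \sigma(p_{2})$, so ${\mathsf{V}}_{\overline{i}}(p_{1}, p_{2}) = 2$, contradicting the cap from the first paragraph. Hence $\sigma(p_{i}) \cap {\mathsf{L}} \neq \emptyset$, and picking any $\ell \in \sigma(p_{i}) \cap {\mathsf{L}}$, the $\ell$-equation combined with $E_{i} \leq {\mathsf{V}}_{i}(p_{1}, p_{2}) \leq 1$ forces $p_{\overline{i}}({\mathsf{L}}) = 1$ and $p_{\overline{i}}(\overline{\ell}) = 0$, whence $E_{i} = 1$ and ${\mathsf{R}}_{i}(p_{1}, p_{2}) = 0$. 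Suppose for contradiction that some ${\mathsf{v}} \in \sigma(p_{i})$ (the ${\mathsf{c}} \in \sigma(p_{i})$ case being identical with $\{\ell_{{\mathsf{v}}}, \overline{\ell_{{\mathsf{v}}}}\}$ replaced by $L_{{\mathsf{c}}}$). The ${\mathsf{v}}$-equation with $E_{i} = 1$ and $p_{\overline{i}}({\cal V} \cup {\mathsf{C}}) = 0$ gives $p_{\overline{i}}(\{\ell_{{\mathsf{v}}}, \overline{\ell_{{\mathsf{v}}}}\}) = 1/m$; however ${\mathsf{R}}_{i}(p_{1}, p_{2}) = 0$ together with the \emph{Risk-Positivity} characterisation~({\sf C.2}) requires ${\mathsf{\mu}}_{i}({\mathsf{v}}, \cdot)$ to be constant on $\sigma(p_{\overline{i}})$. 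Since ${\mathsf{\mu}}_{i}({\mathsf{v}}, \cdot)$ only takes the values $0$ and $m$ on ${\mathsf{L}}$, constancy forces $p_{\overline{i}}(\{\ell_{{\mathsf{v}}}, \overline{\ell_{{\mathsf{v}}}}\}) \in \{0, 1\}$, contradicting $1/m$ for $m \geq 2$.

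Thus $\sigma(p_{i}) \subseteq {\mathsf{L}}$ and no complementary literal of any $\ell \in \sigma(p_{i})$ belongs to $\sigma(p_{\overline{i}})$, which is exactly the condition $p_{i}(\lambda) \cdot p_{\overline{i}}(\overline{\lambda}) = 0$ for every $\lambda \in {\mathsf{L}}$. Consequently every $(s_{1}, s_{2}) \in \sigma(p_{1}) \times \sigma(p_{2})$ consists of non-complementary literals with ${\mathsf{\mu}}_{1}(s) = {\mathsf{\mu}}_{2}(s) = 1$, giving ${\mathsf{E}}_{1}(p_{1}, p_{2}) = {\mathsf{E}}_{2}(p_{1}, p_{2}) = 1$ and, by~({\sf C.2}), ${\mathsf{R}}_{1}(p_{1}, p_{2}) = {\mathsf{R}}_{2}(p_{1}, p_{2}) = 0$. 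The main obstacle is the third paragraph, where the cap ${\mathsf{V}}_{i}(p_{1}, p_{2}) \leq 1$, the \emph{Weak-Equilibrium-for-Expectation} identity, and the \emph{Risk-Positivity} characterisation must be orchestrated to simultaneously rule out ${\cal V}$- and ${\mathsf{C}}$-strategies; the coefficient $m$ in the ${\mathsf{v}}$- and ${\mathsf{c}}$-equations is what makes the argument quantitatively tight.
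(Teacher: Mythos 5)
Your proof is correct, and it reaches the conclusion by a route that differs in its details from the paper's. Both arguments share the main lever: after Lemma~\ref{both players play the formula} gives $p_{1}({\mathsf{\Lambda}}) = p_{2}({\mathsf{\Lambda}}) = 1$, a unilateral move to a Crawford strategy has cost identically $1$ on all supported profiles, so by {\it Risk-Positivity} it achieves ${\mathsf{V}}_{i} = 1$ and thereby caps the equilibrium value. The paper, however, never invokes {\it Weak-Equilibrium-for-Expectation} inside this lemma: it notes that ${\mathsf{\mu}}_{1}({\mathsf{\lambda}},{\mathsf{\lambda}}') + {\mathsf{\mu}}_{2}({\mathsf{\lambda}},{\mathsf{\lambda}}') \geq 2$ pointwise on ${\mathsf{\Lambda}} \times {\mathsf{\Lambda}}$, hence ${\mathsf{E}}_{1} + {\mathsf{E}}_{2} \geq 2$, locates a player with ${\mathsf{E}}_{i} \geq 1$, uses the deviation to force ${\mathsf{E}}_{i} = 1$ and ${\mathsf{R}}_{i} = 0$, repeats for the other player, and then deduces from ${\mathsf{E}}_{1} + {\mathsf{E}}_{2} = 2$ together with ${\mathsf{R}}_{i} = 0$ (which, via {\sf (C.2)}, pins ${\mathsf{\mu}}_{i} \equiv 1$ on supported profiles) that every supported profile is a pair of non-complementary literals. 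You instead cap ${\mathsf{V}}_{i} \leq 1$ for both players at once, write out the three pure-strategy expectation formulas explicitly, and let {\it Weak-Equilibrium-for-Expectation} equate the supported ones to ${\mathsf{E}}_{i}(p_{1},p_{2}) \leq 1$; the literal equation $2 - p_{\overline{i}}({\mathsf{L}}) + p_{\overline{i}}(\overline{\ell}) \leq 1$ then forces $p_{\overline{i}}({\mathsf{L}}) = 1$ and $p_{\overline{i}}(\overline{\ell}) = 0$ in one stroke. This is a legitimate trade: the paper's sum trick is shorter and avoids any appeal to {\it Weak-Equilibrium-for-Expectation} here, while your computation makes the quantitative role of the cost table in Figure~\ref{ridiculous} transparent. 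One remark: your third-paragraph contradiction ruling out ${\mathsf{v}} \in \sigma(p_{i})$ (and likewise ${\mathsf{c}}$) is redundant --- once the literal equation has been applied for both $i = 1$ and $i = 2$ you already have $p_{1}({\mathsf{L}}) = p_{2}({\mathsf{L}}) = 1$, so both supports are contained in ${\mathsf{L}}$ and no variable or clause strategy can be supported. The extra argument is not wrong, just unnecessary.
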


\begin{proof}
By Lemma~\ref{both players play only the formula},
$p_{1}({\mathsf{\Lambda}})
 =
 p_{2}({\mathsf{\Lambda}})
 =
 1$,
so that
$p_{1}({\cal F})
 =
 p_{2}({\cal F})
 =
 0$.
By the cost functions,
for each pair
${\mathsf{\lambda}},
 {\mathsf{\lambda}}
 \in
 {\mathsf{\Lambda}}$,
${\mathsf{\mu}}_{1}({\mathsf{\lambda}},
                    {\mathsf{\lambda}}')
 +
 {\mathsf{\mu}}_{2}({\mathsf{\lambda}},
                    {\mathsf{\lambda}}')
 \geq
 2$.
So,
${\mathsf{E}}_{1}(p_{1}, p_{2})
 +
 {\mathsf{E}}_{2}(p_{1}, p_{2})
 \geq
 2$.
It follows that
there is a player
$i \in [2]$
such that
${\mathsf{E}}_{i}({\mathsf{\Lambda}},
                  {\mathsf{\Lambda}})
 \geq
 1$.
For easier notation,
take $i := 1$.

By the cost functions,
for each
${\mathsf{\lambda}}
 \in
 {\mathsf{\Lambda}}$,
${\mathsf{\mu}}_{1}({\mathsf{f}}_{1},
                    {\mathsf{\lambda}})
 =
 1$.
If player $1$ switches to
the pure strategy $p_{1}^{{\mathsf{f}}_{1}}$,
then
{\it (i)}
${\mathsf{E}}_{1}(p_{1}^{{\mathsf{f}}_{1}},
                  p_{2})
 =
 1
 \leq
 {\mathsf{E}}_{1}(p_{1},
                  p_{2})$,
and
{\it (ii)}
${\mathsf{R}}_{1}(p_{1}^{{\mathsf{f}}_{1}},
                  p_{2})
 =
 0$,
by the {\it Risk-Positivity} property.
Since
${\mathsf{R}}_{1}(p_{1}, p_{2})
 \geq
 0$
(by the {\it Risk-Positivity} property),
it follows that
if either
${\mathsf{E}}_{1}(p_{1}, p_{2})
 >
 1$
or
${\mathsf{R}}_{1}(p_{1}, p_{2})
 >
 0$,
then player $1$ improves her cost
by switching to
$p_{1}^{{\mathsf{f}}_{1}}$.
Since
$\langle p_{1}, p_{2}
 \rangle$
is a ${\mathsf{V}}$-equilibrium,
it follows that both
${\mathsf{E}}_{1}(p_{1}, p_{2})
 =
 1$
and
${\mathsf{R}}_{1}(p_{1}, p_{2})
 =
 0$.

Now,
${\mathsf{E}}_{1}(p_{1}, p_{2})
 +
 {\mathsf{E}}_{2}(p_{1}, p_{2})
 \geq
 2$
and
${\mathsf{E}}_{1}(p_{1}, p_{2})
 =
 1$
together imply that
${\mathsf{E}}_{2}(p_{1}, p_{2})
 \geq
 1$.
In the same way as above,
${\mathsf{E}}_{2}(p_{1}, p_{2})
 =
 1$
and
${\mathsf{R}}_{2}(p_{1}, p_{2})
 =
 0$
follow.

By the cost functions,
for each pair
${\mathsf{\lambda}},
 {\mathsf{\lambda}}'
 \in
 {\mathsf{\Lambda}}$,
${\mathsf{\mu}}_{1}({\mathsf{\lambda}},
                    {\mathsf{\lambda}}')
 +
 {\mathsf{\mu}}_{1}({\mathsf{\lambda}},
                    {\mathsf{\lambda}}')
 \geq
 2$.
We have just shown that
${\mathsf{E}}_{1}(p_{1}, p_{2})
 +
 {\mathsf{E}}_{2}(p_{1}, p_{2})
 =
 2$.
This implies that
${\mathsf{\mu}}_{1}({\mathsf{\lambda}},
                    {\mathsf{\lambda}}')
 =
 {\mathsf{\mu}}_{2}({\mathsf{\lambda}},
                    {\mathsf{\lambda}}')
 =
 1$
for each pair
${\mathsf{\lambda}},
 {\mathsf{\lambda}}'
 \in
 {\mathsf{\Lambda}}$
with
$p_{1}({\mathsf{\lambda}})
 >
 0$
and
$p_{2}({\mathsf{\lambda}}')
 >
 0$.
Thus,
by the cost functions,
${\mathsf{\lambda}},
 {\mathsf{\lambda}}'
 \in
 {\mathsf{L}}$,
so that
$p_{1}({\mathsf{L}})
 =
 p_{2}({\mathsf{L}})
 =
 1$,
and  
$p_{i}({\mathsf{\lambda}})
 >
 0$
implies that
$p_{\overline{i}}(\overline{{\mathsf{\lambda}}})
 =
 0$
for all pairs of a player
$i \in [2]$
and a literal
${\mathsf{\lambda}} \in {\mathsf{L}}$.
\end{proof}

\noindent
We continue to prove:

\begin{lemma} 
\label{literals are played uniformly}
In a ${\mathsf{V}}$-equilibrium
$\langle p_{1},
         p_{2}
 \rangle$
for ${\mathsf{G}}$,
for each player $i \in [2]$
and
for each literal
$\ell \in {\mathsf{L}}$,
$p_{i}(\ell)
 +
 p_{i}(\overline{\ell})
 >
 0$.
\end{lemma}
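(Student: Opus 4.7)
The plan is a short proof by contradiction. Suppose there is a player $i \in [2]$ and a variable ${\mathsf{v}} \in {\cal V}$ with associated literals $\ell$ and $\overline{\ell}$ such that $p_{i}(\ell) = p_{i}(\overline{\ell}) = 0$. I would then consider player $\overline{i}$ deviating to the pure strategy $p_{\overline{i}}^{{\mathsf{v}}}$ and show this strictly lowers her cost, contradicting the assumption that $\langle p_{1}, p_{2} \rangle$ is a ${\mathsf{V}}$-equilibrium.

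The first step is to invoke Lemma~\ref{both players play only literals} to restrict the support of $p_{i}$ to literals, and to record that ${\mathsf{V}}_{\overline{i}}(p_{1}, p_{2}) = {\mathsf{E}}_{\overline{i}}(p_{1}, p_{2}) + {\mathsf{R}}_{\overline{i}}(p_{1}, p_{2}) = 1 + 0 = 1$. Next, I would compute ${\mathsf{V}}_{\overline{i}}(p_{i}, p_{\overline{i}}^{{\mathsf{v}}})$ by inspecting Figure~\ref{ridiculous}: for every literal $\ell' \in {\mathsf{L}}$ in the support of $p_{i}$, $\ell'$ is, by assumption, not a literal for ${\mathsf{v}}$, so the cost table (together with the symmetry ${\mathsf{\mu}}_{i}(s_{1},s_{2}) := {\mathsf{\mu}}_{\overline{i}}(s_{2},s_{1})$ used when $i=2$) yields ${\mathsf{\mu}}_{\overline{i}}({\bf s}) = 0$ on every profile ${\bf s}$ in the support of $\langle p_{i}, p_{\overline{i}}^{{\mathsf{v}}} \rangle$.

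Hence ${\mathsf{E}}_{\overline{i}}(p_{i}, p_{\overline{i}}^{{\mathsf{v}}}) = 0$; since the cost of player $\overline{i}$ is constant (equal to $0$) over all random choices of player $i$ under her support, the {\it Risk-Positivity} property (Definition~\ref{most basic}, condition {\sf (C.2)}) forces ${\mathsf{R}}_{\overline{i}}(p_{i}, p_{\overline{i}}^{{\mathsf{v}}}) = 0$, so ${\mathsf{V}}_{\overline{i}}(p_{i}, p_{\overline{i}}^{{\mathsf{v}}}) = 0 < 1 = {\mathsf{V}}_{\overline{i}}(p_{1}, p_{2})$. This contradicts the equilibrium property and completes the argument.

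There is no real obstacle here: all the technical content has already been extracted by the preceding lemmas (supports are on literals, expectation equals $1$, risk vanishes in equilibrium), and the conclusion follows from a single cost-table inspection combined with {\it Risk-Positivity}. The only point requiring care is that the argument must be written symmetrically for $i=1$ and $i=2$, which is handled uniformly by the profile-symmetry convention at the bottom of Figure~\ref{ridiculous}.
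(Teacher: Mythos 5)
Your proposal is correct and follows essentially the same route as the paper: assume some player's support misses both literals of a variable $\mathsf{v}$, have the other player deviate to the pure strategy on $\mathsf{v}$, read off from the cost table that her cost is identically $0$ on the support, and conclude via \emph{Risk-Positivity} that her valuation drops from $1$ to $0$, a contradiction. The paper merely fixes $i:=2$ by symmetry where you carry the index $\overline{i}$ throughout; the substance is identical.
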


\begin{proof}
Assume,
by way of contradiction,
that there is
a player $i \in [2]$
and a literal
$\ell \in {\mathsf{L}}$
such that
$p_{i}(\ell)
 =
 p_{i}(\overline{\ell})
 =
 0$.
Without loss of generality,
set $i := 2$.
Recall that,
by Lemma~\ref{both players play only literals},
${\mathsf{V}}_{1}(p_{1}, p_{2})
 =
 1$
and
$p_{2}({\mathsf{L}})
 =
 1$.

Consider player $1$
switching to the pure strategy
$p_{1}^{{\mathsf{v}}}$
for the variable ${\mathsf{v}}$
such that
$\ell$ and $\overline{\ell}$
are literals for ${\mathsf{v}}$.
By the cost functions,
${\mathsf{\mu}}_{1}\left( {\mathsf{v}},
                          {\mathsf{\lambda}}
                   \right)
 =
 m$
if ${\mathsf{\lambda}}
    \in
    \{ \ell, \overline{\ell}
    \}$,
and $0$
if ${\mathsf{\lambda}}
    \in
    {\mathsf{L}}
    \setminus
    \{ \ell, \overline{\ell}
    \}$.
Since
$p_{2}({\mathsf{L}})
 =
 1$
and
$p_{2}(\ell)
 =
 p_{2}(\overline{\ell})
 =
 0$,
it follows that
${\mathsf{\mu}}_{1}\left( {\mathsf{v}},
                          {\mathsf{\lambda}}
                   \right)
 =
 0$ 
for all strategies ${\mathsf{\lambda}}$
with
$p_{2}({\mathsf{\lambda}})
 >
 0$.
Thus,
{\it (i)}
${\mathsf{E}}_{1}\left( p_{1}^{{\mathsf{v}}},
                        p_{2}
                 \right)       
 =
 0$,
and
{\it (ii)}
${\mathsf{R}}_{1}\left( p_{1}^{{\mathsf{v}}},
                        p_{2}
                 \right)       
 =
 0$
(by the {\it Risk-Positivity} property). 
Hence,
${\mathsf{V}}_{1}\left( p_{1}^{{\mathsf{v}}},
                        p_{2}
                 \right)       
 =
 0$,
so that player $1$ improves her cost
by switching to
the pure strategy
$p_{1}^{{\mathsf{v}}}$.
A contradiction to the assumption that
$\langle p_{1},
         p_{2}
 \rangle$ 
is a ${\mathsf{V}}$-equilibrium.
\end{proof}

\noindent
Finally, we prove:

\begin{lemma}
\label{induce truth assignment}
A ${\mathsf{V}}$-equilibrium
$\langle p_{1}, p_{2}
 \rangle$
for ${\mathsf{G}}$
induces a unique truth assignment
for ${\mathsf{\phi}}$.
\end{lemma}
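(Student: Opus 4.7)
The plan is to combine the structural consequences already extracted in Lemmas~\ref{both players play only literals} and~\ref{literals are played uniformly}: together they pin down the supports of $p_1$ and $p_2$ to lie inside $\mathsf{L}$, forbid ``literal/complementary-literal'' overlap across the two players, and yet force each player to play at least one literal from each complementary pair. I expect the proof to be short and essentially a pigeonhole/case-analysis argument rather than requiring any new quantitative estimate.

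Concretely, I would proceed as follows. Fix an arbitrary variable $\mathsf{v} \in {\cal V}$, with literals $\ell,\overline{\ell} \in {\mathsf{L}}$. From Lemma~\ref{literals are played uniformly}, applied to both $i=1$ and $i=2$, I get $p_1(\ell)+p_1(\overline{\ell}) > 0$ and $p_2(\ell)+p_2(\overline{\ell}) > 0$. From Lemma~\ref{both players play only literals}, I also have the disjointness relations $p_1(\ell)\cdot p_2(\overline{\ell}) = 0$ and $p_1(\overline{\ell})\cdot p_2(\ell) = 0$. Suppose, for contradiction, that player~$1$ mixes over the variable, i.e., both $p_1(\ell)>0$ and $p_1(\overline{\ell})>0$. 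Then the disjointness relations force $p_2(\ell) = p_2(\overline{\ell}) = 0$, contradicting Lemma~\ref{literals are played uniformly} for player~$2$. Hence exactly one of $p_1(\ell),p_1(\overline{\ell})$ is positive; by symmetry, the same holds for player~$2$, and the disjointness relations then force both players to put positive probability on the \emph{same} literal of the pair $\{\ell,\overline{\ell}\}$.

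This allows a well-defined assignment $\tau:{\cal V} \to \{\mathrm{true},\mathrm{false}\}$ by
\[
\tau({\mathsf{v}}) \;:=\; \mathrm{true} \iff p_1({\mathsf{v}}) > 0 \iff p_2({\mathsf{v}}) > 0,
\]
where ${\mathsf{v}}$ is identified with its positive literal. Uniqueness of $\tau$ is immediate from the construction, since the value $\tau({\mathsf{v}})$ is read off unambiguously from the supports of $p_1$ and $p_2$ at the literals of $\mathsf{v}$; no two distinct assignments can arise from the same equilibrium.

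Because all heavy lifting has already been done in Lemmas~\ref{both players play only literals} and~\ref{literals are played uniformly}, I do not expect a genuine obstacle here. The only small subtlety to keep in mind is the direction of the disjointness in Lemma~\ref{both players play only literals}: it is stated as $p_i(\lambda) \cdot p_{\overline{i}}(\overline{\lambda}) = 0$ for every $i \in [2]$ and $\lambda \in \mathsf{L}$, which must be used for both players to rule out the ``both-players-mix'' case above. Once that is handled cleanly, the conclusion follows in a handful of lines.
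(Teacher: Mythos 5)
Your proposal is correct and follows essentially the same route as the paper's own (very terse) proof: combine Lemma~\ref{literals are played uniformly} (each player supports at least one literal of every complementary pair) with the disjointness from Lemma~\ref{both players play only literals} to conclude that, for each variable, both players support exactly the same single literal of the pair. You merely spell out the short pigeonhole step that the paper leaves implicit, so there is nothing to correct.
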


\begin{proof}
For each pair of a player
$i \in [2]$
and a literal $\ell \in {\mathsf{L}}$,
it holds that
{\it (i)}
$p_{1}(\ell) +
 p_{1}(\overline{\ell})
 >
 0$
(by Lemma~\ref{literals are played uniformly}),
and
{\it (ii)}
if $p_{i}(\ell) > 0$,
then
$p_{\overline{i}}(\overline{\ell}) = 0$
(by Lemma~\ref{both players play only literals}).
Thus,
for each variable
${\mathsf{v}}$,
there is a literal
$\ell$ for ${\mathsf{v}}$
such that
$p_{1}(\ell),
 p_{2}(\ell)
 >
 0$
and
$p_{1}(\overline{\ell})
 =
 p_{2}(\overline{\ell})
 =
 0$.
\end{proof}

\noindent
We are now ready to prove:

\begin{lemma}
\label{unsat implies no equilibrium}
${\mathsf{\phi}}$
is satisfiable
if and only if
${\mathsf{G}}({\mathsf{\phi}})$
has a
${\mathsf{V}}$-equilibrium.
\end{lemma}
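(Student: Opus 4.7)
\emph{Plan.} I will prove each direction separately, relying on the structural results already established for a $\mathsf{V}$-equilibrium of $\mathsf{G}$ (Lemmas~\ref{both players play only the formula}--\ref{induce truth assignment}), together with the {\it Risk-Positivity} property of ${\mathsf{R}}$.

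\emph{Forward direction ($\Leftarrow$).} Assume $\mathsf{G}({\mathsf{\phi}})$ has a $\mathsf{V}$-equilibrium $\langle p_{1}, p_{2} \rangle$. By Lemma~\ref{induce truth assignment} it induces a (unique) truth assignment $\tau$ for ${\mathsf{\phi}}$; I will argue $\tau$ satisfies every clause. Fix ${\mathsf{c}} \in {\mathsf{C}}$ and consider the deviation of player~$2$ to the pure strategy $p_{2}^{{\mathsf{c}}}$. By Lemma~\ref{both players play only literals}, $p_{1}({\mathsf{L}}) = 1$ and $\mathsf{V}_{2}(p_{1}, p_{2}) = 1$. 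From the cost table, $\mu_{2}(\ell, {\mathsf{c}}) = m$ if $\ell \in {\mathsf{c}}$ and $0$ otherwise, so
$$\mathsf{E}_{2}(p_{1}, p_{2}^{{\mathsf{c}}}) \;=\; m\,\sum_{\ell \in {\mathsf{c}}} p_{1}(\ell).$$
If ${\mathsf{c}}$ were unsatisfied by $\tau$, then every $\ell \in {\mathsf{c}}$ is false under $\tau$, so $p_{1}(\ell) = 0$ for each such $\ell$ by Lemma~\ref{induce truth assignment}. Then $\mu_{2}(\ell, {\mathsf{c}}) = 0$ is constant over the support of $p_{1}$, so by {\it Risk-Positivity} we have $\mathsf{R}_{2}(p_{1}, p_{2}^{{\mathsf{c}}}) = 0$, whence $\mathsf{V}_{2}(p_{1}, p_{2}^{{\mathsf{c}}}) = 0 < 1$, contradicting the equilibrium property. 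Hence every clause is satisfied and ${\mathsf{\phi}}$ is satisfiable.

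\emph{Reverse direction ($\Rightarrow$).} Assume ${\mathsf{\phi}}$ has a satisfying assignment $\tau$, and let $T \subseteq {\mathsf{L}}$ be the set of $m$ true literals (one per variable). Define $p_{1} = p_{2} =: p$ to be uniform on $T$. For every pair $\ell, \ell' \in T$ we have $\ell \neq \overline{\ell'}$, so $\mu_{i}(\ell, \ell') = 1$ for $i \in [2]$; hence $\mathsf{E}_{i}(p, p) = 1$ and, by {\it Risk-Positivity}, $\mathsf{R}_{i}(p, p) = 0$, giving $\mathsf{V}_{i}(p, p) = 1$. I will check no unilateral deviation improves $\mathsf{V}_{i}$ by case analysis on the deviation target (using $i = 2$; the case $i = 1$ is entirely analogous using the symmetry $\mu_{1}(s_{1},s_{2}) = \mu_{2}(s_{2},s_{1})$ for the non-tabulated profiles). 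Deviations to $\mathsf{f}_{1}$ or $\mathsf{f}_{2}$ yield $\mu_{2}(\ell, \mathsf{f}) = 1$ for every $\ell \in T$, so $\mathsf{E}_{2} = 1$, $\mathsf{R}_{2} = 0$, and $\mathsf{V}_{2} = 1$. A deviation to a true literal leaves the valuation at $1$. A deviation to a false literal $\ell$ gives $\mu_{2}(\overline{\ell}, \ell) = 2$ and $\mu_{2}(\ell', \ell) = 1$ for the remaining $\ell' \in T$, so $\mathsf{E}_{2} = 1 + 1/m > 1$. A deviation to a variable ${\mathsf{v}}$ gives cost $m$ if player $1$ plays the unique true literal $\ell_{{\mathsf{v}}}$ for ${\mathsf{v}}$ (probability $1/m$) and $0$ otherwise; hence $\mathsf{E}_{2} = 1$ but the cost takes two distinct values with positive probability, so {\it Risk-Positivity} forces $\mathsf{R}_{2} > 0$ and $\mathsf{V}_{2} > 1$. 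Finally, a deviation to a clause ${\mathsf{c}}$: let $t_{{\mathsf{c}}} \geq 1$ count true literals in ${\mathsf{c}}$; then $\mathsf{E}_{2} = t_{{\mathsf{c}}} \geq 1$, and if $t_{{\mathsf{c}}} = 1$ while some literal in $T$ lies outside ${\mathsf{c}}$ the cost takes the two distinct values $m$ and $0$ with positive probability, so again $\mathsf{R}_{2} > 0$; in the corner case $t_{{\mathsf{c}}} = m$ we already have $\mathsf{E}_{2} = m > 1$. In every case $\mathsf{V}_{2}(p, p_{2}^{\text{dev}}) \geq 1 = \mathsf{V}_{2}(p, p)$, so $\langle p, p \rangle$ is a $\mathsf{V}$-equilibrium.

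\emph{Main obstacle.} The forward direction is essentially immediate once Lemmas~\ref{both players play only literals}--\ref{induce truth assignment} are in hand: the clause-deviation ``probe'' directly forces satisfiability via {\it Risk-Positivity}. The more delicate direction is the reverse one, specifically the borderline cases (variables, and clauses with exactly one true literal) where the expected deviation cost equals $1$ and strictness must be recovered from $\mathsf{R}_{2} > 0$ via {\it Risk-Positivity}; this is where the cost table's structure (two-valued deviations) must be used carefully.
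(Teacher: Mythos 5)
Your proof is correct and follows essentially the same route as the paper's: the same clause-deviation probe (combined with Lemmas~\ref{both players play only literals} and~\ref{induce truth assignment} and {\it Risk-Positivity}) for one direction, and the same uniform-on-true-literals profile with the same pure-deviation case analysis for the other. The only differences are cosmetic — you phrase the first direction contrapositively and add unnecessary strictness arguments ($\mathsf{R}_{2}>0$) in the borderline cases, where $\mathsf{V}_{2}\geq\mathsf{E}_{2}\geq 1$ already suffices since an equilibrium only requires that no deviation \emph{strictly} improve the cost.
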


\begin{proof}
``$\Leftarrow$:''
Assume first that
${\mathsf{\phi}}$
is {\em not} satisfiable.
Assume,
by way of contradiction,
that
${\mathsf{G}}({\mathsf{\phi}})$
has a
${\mathsf{V}}$-equilibrium
$\langle p_{1}, p_{2}
 \rangle$.
By Lemma~\ref{induce truth assignment},
$\langle p_{1}, p_{2}
 \rangle$
induces a unique truth assignment
$\gamma$
for ${\mathsf{\phi}}$.
Since
${\mathsf{\phi}}$
is not satisfiable,
there is a clause ${\mathsf{c}}$
such that
for each literal $\ell$
with $p_{1}(\ell), p_{2}(\ell) > 0$,
$\ell \not\in {\mathsf{c}}$.
Consider now player $1$
switching to
the pure strategy $p_{1}^{{\mathsf{c}}}$.
Then,
{
\small
\begin{eqnarray*}
      {\mathsf{E}}_{1}\left( p_{1}^{{\mathsf{c}}},
                             p_{2}
                      \right)
& = & \sum_{\ell \in {\mathsf{L}}
            \mid
            p_{2}(\ell) > 0}
        \underbrace{{\mathsf{\mu}}_{1}({\mathsf{c}},
                                       \ell)}_{= 0}\,
        \cdot
        p_{2}(\ell)\
       \
      =\
      \
      0\, ,
\end{eqnarray*}
}
and
${\mathsf{R}}_{1}\left( p_{1}^{{\mathsf{c}}},
                        p_{2}
                 \right)
 =
 0$
(by the {\it Risk-Positivity} property),
so that
${\mathsf{V}}_{1}\left( p_{1}^{{\mathsf{c}}},
                        p_{2}
                 \right)
 =
 0$.                 
By Lemma~\ref{both players play only literals},
${\mathsf{V}}_{1}(p_{1}, p_{2})
 =
 1$.
So,
player $1$ improves her cost
by switching to
the pure strategy
$p_{1}^{{\mathsf{c}}}$.
A contradiction to the assumption that
$\langle p_{1}, p_{2}
 \rangle$
is a
${\mathsf{V}}$-equilibrium.

``$\Rightarrow$:''
Assume now that
${\mathsf{\phi}}$
is satisfiable.
For a satisfying assignment
$\gamma$ of ${\mathsf{\phi}}$,
set
$p_{i}(\ell)
 :=
 \frac{\textstyle 1}
      {\textstyle m}$
for each literal $\ell \in {\mathsf{L}}$
with
$\gamma (\ell)
 =
 1$.
We shall prove that
$\langle p_{1}, p_{2}
 \rangle$
is a
${\mathsf{V}}$-equilibrium.
Fix a player $i \in [2]$.
By the cost functions,
${\mathsf{\mu}}_{i}(\ell_{j},
                    \ell_{k})
 =
 1$
for each pair of literals
$\ell_{j},
 \ell_{k}$
with
$p_{1}(\ell_{j})
 \cdot
 p_{2}(\ell_{k})
 >
 0$.
Hence, 
the {\it Risk-Positivity} property
implies that
${\mathsf{R}}_{i}(p_{1}, p_{2})
 =
 0$.
Furthermore,
{
\small
\begin{eqnarray*}
      {\mathsf{E}}_{i}(p_{1}, p_{2})
& = &
\sum_{\langle \ell_{j},
                    \ell_{k}
            \rangle
            \in
            {\mathsf{L}}
            \mid
            \gamma(\ell_{j})
            =
            \gamma(\ell_{k})
            =
            1}
        {\mathsf{\mu}}_{i}(\ell_{j},
                           \ell_{k})\,
                   p_{1}(\ell_{j})
                   \cdot
                   p_{2}(\ell_{k})\
      \
      =\
      \
      m^{2}
      \cdot
      1
      \cdot
      \frac{\textstyle 1}
           {\textstyle m}
      \cdot
      \frac{\textstyle 1}
           {\textstyle m}\
      \
      =\
      \
      1\, .
\end{eqnarray*}
}
So,
player $i$
may decrease
${\mathsf{V}}_{i}$
only if she decreases
${\mathsf{E}}_{i}$.
It suffices to prove that
player $i$ cannot decrease ${\mathsf{E}}_{i}$
by switching to a pure strategy.
We proceed by case analysis.
\begin{enumerate}

\item
Consider player $i$ switching
to the pure strategy $p_{1}^{{\mathsf{v}}}$
for a variable
${\mathsf{v}} \in {\cal V}$
with literals
$\ell, \overline{\ell}$.
By the cost functions,
for each literal
${\mathsf{\lambda}}
 \in
 {\mathsf{L}}$,
${\mathsf{\mu}}_{i}\left( {\mathsf{v}},
                          {\mathsf{\lambda}}
                   \right)
 =
 m$
if ${\mathsf{\lambda}}
    \in
    \{ \ell, \overline{\ell}
    \}$
and $0$ otherwise.
By construction,
$p_{i}(\ell)
 +
 p_{i}(\overline{\ell})
 =
 \frac{\textstyle 1}
      {\textstyle m}$.
So,
{
\small
\begin{eqnarray*}
      {\mathsf{E}}_{i}\left( p_{i}^{{\mathsf{v}}},
                             p_{\overline{i}}
                      \right)
& = & 0
      \cdot
      \left( 1 - \frac{\textstyle 1}
                      {\textstyle m}
      \right)
      +
      m
      \cdot
      \frac{\textstyle 1}
           {\textstyle m}\
     \
    =\
     \
    1\, .
\end{eqnarray*}
}

\item
Consider player $i$ switching
to the pure strategy
$p_{i}^{{\mathsf{c}}}$
for a clause
${\mathsf{c}} \in {\cal C}$.
By the cost functions,
for each literal
${\mathsf{\lambda}}
 \in
 {\mathsf{L}}$,
${\mathsf{\mu}}_{i}\left( {\mathsf{c}},
                          {\mathsf{\lambda}}
                   \right)
 =
 m$
if ${\mathsf{\lambda}}
    \in
    {\mathsf{c}}$
and $0$ otherwise.
Since ${\mathsf{\phi}}$
is satisfiable,
there is at least one literal
$\ell
 \in
 {\mathsf{c}}$
with ${\mathsf{\gamma}}(\ell)
      =
      1$;
hence,
by construction of $p_{\overline{i}}$,
there is at least one literal
$\ell
 \in
 {\mathsf{c}}$
with
$p_{\overline{i}}(\ell)
 =
 \frac{\textstyle 1}
      {\textstyle m}$.
Thus,
{
\small
\begin{eqnarray*}
      {\mathsf{E}}_{i}\left( p_{i}^{{\mathsf{c}}},
                             p_{\overline{i}}
                      \right)
& = & \sum_{\ell \in {\mathsf{c}}
            \mid
            p_{\overline{i}}(\ell) > 0}
        \underbrace{{\mathsf{\mu}}_{i}\left( {\mathsf{c}},
                                             \ell
                                      \right)}_{=m}
        \cdot
        p_{\overline{i}}(\ell)\
        \
       \geq\
       \
       m
       \cdot
       \frac{\textstyle 1}
            {\textstyle m}\
       \
      =\
      \
      1\, .
\end{eqnarray*}
}

\item
Consider player $i$ switching
to the pure strategy
$p_{i}^{{\mathsf{f}}}$
for some ${\mathsf{f}} \in
          {\cal F}$.
By construction
of the cost functions,
for each literal
${\mathsf{\lambda}}
 \in
 {\mathsf{L}}$,
${\mathsf{\mu}}_{i}\left( {\mathsf{f}},
                          \ell
                   \right)
 =
 1$.
It follows that
${\mathsf{E}}_{i}\left( p_{i}^{{\mathsf{f}}},
                        p_{\overline{i}}
                 \right)
 =
 1$.

\item
Finally,
consider player $i$ switching
to the pure strategy
$p_{i}^{\ell}$
for some literal
$\ell \in {\mathsf{L}}$.
Assume first that
${\mathsf{\gamma}}(\ell) = 1$.
Then,
$p_{\overline{i}}(\overline{\ell})
 =
 0$.
Hence, 
by the cost functions,
${\mathsf{\mu}}_{i}(\ell, \ell^{\prime})
 =
 1$
for each literal
$\ell^{\prime} \in
 {\mathsf{L}}$
with
$p_{\overline{i}}(\overline{\ell})
 >
 0$.
It follows that
${\mathsf{E}}_{i}\left( p_{i}^{\ell},
                        p_{\overline{i}}
                 \right)
 =
 1$.
Assume now that
${\mathsf{\gamma}}(\ell) = 0$.
Then,
$p_{\overline{i}}(\overline{\ell})
 =
 \frac{\textstyle 1}
      {\textstyle m}$.
By the cost functions,
${\mathsf{\mu}}_{i}(\ell, \overline{\ell})
 =
 2$
and
${\mathsf{\mu}}_{i}(\ell, \ell^{\prime})
 =
 1$  
for
$\ell^{\prime} \in {\mathsf{L}} \setminus \{ \overline{\ell}
                                          \}$
with
$p_{\overline{i}}(\ell^{\prime})
 >
 0$.
It follows that
{
\small
\begin{eqnarray*} 
      {\mathsf{E}}_{i}\left( p_{i}^{\ell},
                             p_{\overline{i}}
                      \right)                           
& = & 2 \cdot \frac{\textstyle 1}
                   {\textstyle m}
      +
      1 \cdot \left( 1 - \frac{\textstyle 1}
                              {\textstyle m}
              \right)\
      \
     =\
     \
     1 
      +
      \frac{\textstyle 1}
           {\textstyle m}\, .                                     
\end{eqnarray*}
}

\end{enumerate}
The claim now follows.
\end{proof}

\noindent
Lemma~\ref{unsat implies no equilibrium}
establishes the reduction
for the ${\mathcal{NP}}$-hardness;
since the numbers involved in the reduction
are polynomially bounded,
strong ${\mathcal{NP}}$-hardness follows.
\end{proof}

\subsection{Concrete ${\mathcal{NP}}$-Hardness Result}
\label{concrete np hardness}

We remark that the proof of
the reduction for Theorem~\ref{two players complexity}
is modular
in treating ${\mathsf{V}}$ in an abstract way
through using {\it Risk-Positivity,}
{\it Weak-Equilibrium-for-Expectation,}
and the properties
in Condition {\sf (2)}.
This modularity yields an extension
of Theorem~\ref{two players complexity}
to concrete $({\mathsf{E}} +
              {\mathsf{R}})$-valuations
enjoying these properties.   
We shall verify Conditions
{\sf (1)} and {\sf (2)}
from Theorem~\ref{two players complexity}
for the
$({\mathsf{E}}
 +
 {\mathsf{R}})$-valuations
${\mathsf{V}}$,
where
{\sf (1)}
${\mathsf{R}}
 =
 {\mathsf{\gamma}}
 \cdot
 {\mathsf{Var}}$,
or
{\sf (2)}
${\mathsf{R}}
 =
 {\mathsf{\gamma}}
 \cdot
 {\mathsf{SD}}$,
or a valuation
{\sf (3)}
${\mathsf{V}}
 =
 \lambda\,
 ({\mathsf{E}} + 
 {\mathsf{\gamma}}
 \cdot
 {\mathsf{Var}})
 +
 (1-\lambda)\,
 {\mathsf{V}}^{{\mathsf{\nu}}}$,
with $0 < \lambda \leq 1$,
where
${\mathsf{\nu}}(x)
 =
 x^{r}$,
with $r \geq 2$,
and with ${\mathsf{\gamma}} > 0$.
The {\it Weak-Equilibrium-for-Expectation} property
in Condition {\sf (1)}
follows from Corollary~\ref{brand new}.
For Condition {\sf (2)},
we shall prove three technical claims
associated with
Conditions {\sf (2/a)},
{\sf (2/b)} and
{\sf (2/c)},
respectively.
We start with Condition {\sf (2/a)}.
We remark that the
existence of a ${\mathsf{\delta}}$
such that
$\widehat{{\mathsf{R}}}(1, 1 + 2 {\mathsf{\delta}}, q)
 <
 \frac{\textstyle 1}
      {\textstyle 2}$
for each probability
$q \in [0, 1]$
follows already from the continuity
of ${\mathsf{R}}$;
but its polynomial time computation
is bound to depend
on each particular
$\widehat{{\mathsf{R}}}$.
We prove:

\begin{lemma}
\label{little tiny}
Fix an
$({\mathsf{E}}
 +
 {\mathsf{R}})$-valuation
${\mathsf{V}}$,
where
{\sf (1)}
${\mathsf{R}}
 =
 {\mathsf{\gamma}}
 \cdot
 {\mathsf{Var}}$,
or
{\sf (2)}
${\mathsf{R}}
 =
 {\mathsf{\gamma}}
 \cdot
 {\mathsf{SD}}$,
or
{\sf (3)}
${\mathsf{V}}
 =
 \lambda\,
 ({\mathsf{E}} + {\mathsf{\gamma}} \cdot {\mathsf{Var}})
 +
 (1-\lambda)\,
 {\mathsf{V}}^{{\mathsf{\nu}}}$,
with $0 < \lambda \leq 1$,
where
${\mathsf{\nu}}(x)
 =
 x^{r}$
with $r \geq 2$,
and with ${\mathsf{\gamma}} > 0$.
Then,
there is a polynomial time computable
${\mathsf{\Delta}}$
with
$0 < {\mathsf{\Delta}}
   \leq \frac{\textstyle 1}
                  {\textstyle 4}$
such that
$\widehat{{\mathsf{R}}}(1, 1 + 2 {\mathsf{\delta}}, q)
 <
 \frac{\textstyle 1}
      {\textstyle 2}$
for all
$q \in [0, 1]$
and
$0 \leq {\mathsf{\delta}} < {\mathsf{\Delta}}$.   
\end{lemma}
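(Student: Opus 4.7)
The plan is to compute $\widehat{\mathsf{R}}(1,1+2\delta,q)$ in closed form for each of the three valuations, identify an upper bound that depends only on $\delta$ (through $\gamma$, $\lambda$, $r$), and then choose $\Delta$ explicitly as a polynomial-time computable function of these parameters. Throughout, I shall exploit the two elementary inequalities $q(1-q) \leq \frac{\textstyle 1}{\textstyle 4}$ and $\sqrt{q(1-q)} \leq \frac{\textstyle 1}{\textstyle 2}$ on $[0,1]$ to eliminate the dependence on $q$.

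For the two-point distribution taking value $1$ with probability $1-q$ and value $1+2\delta$ with probability $q$, a direct calculation gives $\widehat{\mathsf{Var}}(1,1+2\delta,q) = 4\delta^{2}\,q(1-q)$ and $\widehat{\mathsf{SD}}(1,1+2\delta,q) = 2\delta\sqrt{q(1-q)}$. Hence for Case {\sf (1)}, $\widehat{{\mathsf{R}}}(1,1+2\delta,q) \leq \gamma\delta^{2}$; taking $\Delta := \min\{\frac{\textstyle 1}{\textstyle 4},\,\frac{\textstyle 1}{\textstyle 2\sqrt{\gamma+1}}\}$ ensures $\gamma\Delta^{2} < \frac{\textstyle 1}{\textstyle 2}$ and $\Delta$ is polynomial-time computable from $\gamma$. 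For Case {\sf (2)}, $\widehat{{\mathsf{R}}}(1,1+2\delta,q) \leq \gamma\delta$, so $\Delta := \min\{\frac{\textstyle 1}{\textstyle 4},\,\frac{\textstyle 1}{\textstyle 2\gamma+1}\}$ works.

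For Case {\sf (3)}, the risk decomposes as $\widehat{{\mathsf{R}}}(1,1+2\delta,q) = \lambda\gamma\cdot 4\delta^{2}q(1-q) + (1-\lambda)\,\widehat{{\mathsf{R}}}^{{\mathsf{\nu}}}(1,1+2\delta,q)$. The variance term is bounded by $\gamma\delta^{2}$ as before. For the ${\mathsf{\nu}}$-term, I write explicitly
\begin{eqnarray*}
\widehat{{\mathsf{V}}}^{{\mathsf{\nu}}}(1,1+2\delta,q) & = & \bigl((1-q) + q(1+2\delta)^{r}\bigr)^{1/r}\, .
\end{eqnarray*}
Since $(1-q) + q(1+2\delta)^{r} \leq (1+2\delta)^{r}$ and ${\mathsf{\nu}}^{-1}(x)=x^{1/r}$ is monotone, $\widehat{{\mathsf{V}}}^{{\mathsf{\nu}}} \leq 1+2\delta$, whence $\widehat{{\mathsf{R}}}^{{\mathsf{\nu}}}(1,1+2\delta,q) = \widehat{{\mathsf{V}}}^{{\mathsf{\nu}}} - (1+2\delta q) \leq 2\delta(1-q) \leq 2\delta$. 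Combining, $\widehat{{\mathsf{R}}}(1,1+2\delta,q) \leq \gamma\delta^{2} + 2\delta$, and choosing $\Delta := \min\{\frac{\textstyle 1}{\textstyle 8},\,\frac{\textstyle 1}{\textstyle 2\sqrt{2\gamma+1}}\}$ yields $\gamma\Delta^{2} \leq \frac{\textstyle 1}{\textstyle 8}$ and $2\Delta \leq \frac{\textstyle 1}{\textstyle 4}$, so the sum is below $\frac{\textstyle 1}{\textstyle 2}$; crucially, this $\Delta$ is independent of $\lambda$ and polynomial-time computable from $\gamma$ and $r$.

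The main obstacle is Case {\sf (3)}, where one must simultaneously control the variance and ${\mathsf{\nu}}$-risk contributions without exact knowledge of $\lambda$; the trick is to bound $\lambda\gamma \leq \gamma$ and $(1-\lambda) \leq 1$ so that $\Delta$ need not depend on $\lambda$. A secondary subtlety is that a sharper expansion of $\widehat{{\mathsf{R}}}^{{\mathsf{\nu}}}$ would yield $O(\delta^{2})$ behavior (the leading-order Taylor term around $\delta = 0$ involves the second derivative of ${\mathsf{\nu}}$), but the crude linear bound $\widehat{{\mathsf{R}}}^{{\mathsf{\nu}}} \leq 2\delta$ is sufficient and keeps the proof elementary.
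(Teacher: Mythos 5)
Your proof is correct, and for Cases {\sf (1)} and {\sf (2)} it coincides with the paper's argument up to the cosmetic choice of $\Delta$ (the paper takes $\Delta$ as a rational no larger than $\frac{1}{4}\min\{1/\sqrt{\gamma},1\}$, resp.\ $\frac{1}{4}\min\{1/\gamma,1\}$; you should likewise say ``a rational number no larger than'' your expressions involving $\sqrt{\gamma+1}$, since those are irrational in general --- a trivial fix). Where you genuinely diverge is Case {\sf (3)}. The paper fixes $\Delta:=\frac{1}{4}$ for the $\nu$-part, uses monotonicity of $\widehat{{\mathsf{R}}}^{{\mathsf{\nu}}}(1,1+2\delta,q)$ in $\delta$ to reduce to $\delta=\frac{1}{4}$, then a monotonicity-in-$q$ argument with a case split $q=0$ versus $q>0$ to get $\widehat{{\mathsf{R}}}^{{\mathsf{\nu}}}<\frac{1}{2}$, and finally treats the combined risk as a convex combination of two quantities each below $\frac{1}{2}$. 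You instead bound $\widehat{{\mathsf{V}}}^{{\mathsf{\nu}}}$ by the larger of the two cost values, obtaining the clean estimate $\widehat{{\mathsf{R}}}^{{\mathsf{\nu}}}\leq 2\delta(1-q)\leq 2\delta$, discard the weights $\lambda$ and $1-\lambda$ by monotonicity, and add the two bounds; this is looser than the paper's convex-combination bookkeeping but entirely sufficient once $\Delta\leq\frac{1}{8}$, and it avoids both the monotonicity computation and the case split on $q$. The trade-off: the paper's route keeps $\Delta=\frac{1}{4}$ for the $\nu$-component and exhibits the sharper $O(\delta)$-with-vanishing-at-$q=1$ behavior, while yours is shorter, more elementary, and makes the independence of $\Delta$ from $\lambda$ and $r$ completely transparent.
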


\begin{proof}
For each valuation
${\mathsf{V}}$,
we shall choose a suitable ${\mathsf{\Delta}}$.

\begin{enumerate}

\item[{\sf (1)}]
\underline{${\mathsf{R}} = {\mathsf{\gamma}} \cdot {\mathsf{Var}}$:}
Then,
${\mathsf{\gamma}}
 \cdot
 \widehat{{\mathsf{R}}}\left( 1, 1 + 2 {\mathsf{\delta}}, q
                       \right)
 =
 {\mathsf{\gamma}}
 \cdot
 q (1-q)
 \cdot
 4 {\mathsf{\delta}}^{2}
 \leq
 {\mathsf{\gamma}}
 \cdot
 \frac{\textstyle 1}
      {\textstyle 4}
 \cdot
 4 {\mathsf{\delta}}^{2}
 =
 {\mathsf{\gamma}}
 \cdot
 {\mathsf{\delta}}^{2}$.
Choose ${\mathsf{\Delta}}$
as a rational number 
no larger than
$\frac{\textstyle 1}
      {\textstyle 4}
 \cdot
 \min \left\{ \frac{\textstyle 1}
                   {\textstyle \sqrt{{\mathsf{\gamma}}}},
              1
      \right\}$.                   
This choice
satisfies that
for each 
${\mathsf{\delta}} < {\mathsf{\Delta}}$,
${\mathsf{\gamma}}
 \cdot
 {\mathsf{\delta}}^{2}
 <
 \frac{\textstyle 1}
      {\textstyle 2}$.

\item[{\sf (2)}]
\underline{${\mathsf{R}} = {\mathsf{\gamma}} \cdot {\mathsf{SD}}$:}
Then,
${\mathsf{\gamma}}
 \cdot
 \widehat{{\mathsf{R}}}\left( 1, 1 + 2 {\mathsf{\delta}}, q
                       \right)
 =
 {\mathsf{\gamma}}
 \cdot
 \sqrt{q (1-q)}
 \cdot
 2 {\mathsf{\delta}}
 \leq
 {\mathsf{\gamma}}
 \cdot
 \frac{\textstyle 1}
      {\textstyle 2}
 \cdot
 2 {\mathsf{\delta}}
 =
 {\mathsf{\gamma}}
 \cdot
 {\mathsf{\delta}}$.
Choose
${\mathsf{\Delta}}
 :=
 \frac{\textstyle 1}
      {\textstyle 4}
 \cdot
 \min \left\{ \frac{\textstyle 1}
                   {\textstyle {\mathsf{\gamma}}},
              1
      \right\}$. 
This choice satisfies
that for each
${\mathsf{\delta}}
 <
 {\mathsf{\Delta}}$,
${\mathsf{\gamma}}
 \cdot
 {\mathsf{\delta}}
 <
 \frac{\textstyle 1}
      {\textstyle 2}$.

\item[{\sf (3)}]
\underline{${\mathsf{V}} =
            {\mathsf{\lambda}}
            ({\mathsf{E}} + {\mathsf{\gamma}} \cdot {\mathsf{Var}})
            +
            (1 - {\mathsf{\lambda}})
            {\mathsf{V}}^{{\mathsf{\nu}}}$
            where ${\mathsf{\nu}}(x) = x^{r}$ with $r \geq 2$:}\\
Consider first the risk valuation
${\mathsf{R}} = {\mathsf{V}}^{{\mathsf{\nu}}} - {\mathsf{E}}$,
where ${\mathsf{\nu}}(x) = x^{r}$ with $r \geq 2$.
Note that
{
\small
\begin{eqnarray*}
      \widehat{{\mathsf{R}}}\left( 1, 1 + 2 {\mathsf{\delta}}, q
                       \right)
& = & \sqrt[{\textstyle r}]{q \cdot \left( 1 + 2 {\mathsf{\delta}})
                                    \right)^{r}
                            +
                            (1-q) 
              }
      -
      1
      -
      q \cdot (2 {\mathsf{\delta}})\, .
\end{eqnarray*}
}
Set
${\mathsf{\Delta}}
 :=
 \frac{\textstyle 1}
      {\textstyle 4}$.
Then,
for every
${\mathsf{\delta}} < {\mathsf{\Delta}}$,
the fact that
$\widehat{{\mathsf{R}}}\left( 1, 1 + 2 {\mathsf{\delta}}, q
                       \right)$
increases monotonically in ${\mathsf{\delta}}$
implies that                       
{
\small
\begin{eqnarray*}
      \widehat{{\mathsf{R}}}\left( 1, 1 + 2 {\mathsf{\delta}}, q
                            \right)
&\leq& \sqrt[{\textstyle r}]{1 + q
                   \cdot
                   \left( \left( \frac{\textstyle 3}
                                      {\textstyle 2}
                          \right)^{r}
                          -
                          1
                   \right)
              }
      -
      1
      -
      q
      \cdot
      \frac{\textstyle 1}
           {\textstyle 2}\, ;
\end{eqnarray*}
}
for $q = 0$,
$\sqrt[{\textstyle r}]{1 + 0
                   \cdot
                   \left( \left( \frac{\textstyle 3}
                                      {\textstyle 2}
                          \right)^{r}
                          -
                          1
                   \right)
              }
      -
      1
      -
      0
      \cdot
      \frac{\textstyle 1}
           {\textstyle 2}
 =
 0
 <
 \frac{\textstyle 1}
      {\textstyle 2}$.
So assume that
$q > 0$.
Then,
by the fact that
$\sqrt[{\textstyle r}]{1 + q
                   \cdot
                   \left( \left( \frac{\textstyle 3}
                                      {\textstyle 2}
                          \right)^{r}
                          -
                          1
                   \right)
              }$
increases monotonically in $q$
and $0 < q \leq 1$,
we get that           
$\widehat{{\mathsf{R}}}\left( 1, 1 + 2 {\mathsf{\delta}}, q
                       \right)
 \leq
 \frac{\textstyle 3}
      {\textstyle 2}
 -
 1
 -
 q \cdot
 \frac{\textstyle 1}
      {\textstyle 2}
 =
 \frac{\textstyle 1}
      {\textstyle 2}
 -
 q
 \cdot
 \frac{\textstyle 1}
      {\textstyle 2}
 <
 \frac{\textstyle 1}
      {\textstyle 2}$
since
$q > 0$.

\noindent
Combined with the choice of ${\mathsf{\Delta}}$
for {\sf (1)},
the required property
for the convex combination
${\mathsf{V}} =
 {\mathsf{\lambda}}
 ({\mathsf{E}} + {\mathsf{\gamma}} \cdot {\mathsf{Var}})
 +
 (1 - {\mathsf{\lambda}})
 {\mathsf{V}}^{{\mathsf{\nu}}}$,
where ${\mathsf{\nu}}(x) = x^{r}$ with $r \geq 2$,
holds by choosing
${\mathsf{\Delta}}$
as a rational number
no larger than
$\min \left\{ \frac{\textstyle 1}
                   {\textstyle 4}
              \cdot
              \min \left\{ \frac{\textstyle 1}
                                {\textstyle \sqrt{\gamma}},
                           1
                   \right\},
              \frac{\textstyle 1}
                   {\textstyle 4}
      \right\}             
 =                  
 \frac{\textstyle 1}
      {\textstyle 4}
 \cdot
 \min \left\{ \frac{\textstyle 1}
                   {\textstyle \sqrt{\gamma}},
              1
      \right\}$.

\end{enumerate}
\end{proof}

\noindent
We continue with Condition {\sf (2/b)}.
We prove:

\begin{lemma}
\label{the monotonicity lemma}
Fix an
$({\mathsf{E}}
  +
  {\mathsf{R}})$-valuation
${\mathsf{V}}$,  
where
{\sf (1)}
${\mathsf{R}}
 =
 {\mathsf{\gamma}}
 \cdot
 {\mathsf{Var}}$,
or
{\sf (2)}
${\mathsf{R}}
 =
 {\mathsf{\gamma}}
 \cdot
 {\mathsf{SD}}$,
or
{\sf (3)}
${\mathsf{V}}
 =
 \lambda\,
 ({\mathsf{E}} + {\mathsf{\gamma}} \cdot {\mathsf{Var}})
 +
 (1-\lambda)\,
 {\mathsf{V}}^{{\mathsf{\nu}}}$,
with $0 < \lambda \leq 1$,
where
${\mathsf{\nu}}$
is increasing and strictly convex,
and with ${\mathsf{\gamma}} > 0$.
Then,
there is a polynomial time computable
${\mathsf{\Delta}}$
with
$0 < {\mathsf{\Delta}}
   \leq
   \frac{\textstyle 1}
           {\textstyle 4}$
such that
$\widehat{{\mathsf{V}}}(1, 1 + 2 {\mathsf{\delta}}, r)
 <
 \widehat{{\mathsf{V}}}(1, 2, q)$
for all 
$q \in (0, 1)$,
with $0 \leq r \leq q$,
and
$0 \leq {\mathsf{\delta}} 
   < {\mathsf{\Delta}}$.
\end{lemma}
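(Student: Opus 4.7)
The plan is to set $F(q) := \widehat{{\mathsf{V}}}(1,2,q) - 1 = q + \widehat{{\mathsf{R}}}(1,2,q)$ and $G(r,{\mathsf{\delta}}) := \widehat{{\mathsf{V}}}(1, 1+2{\mathsf{\delta}}, r) - 1 = 2r{\mathsf{\delta}} + \widehat{{\mathsf{R}}}(1, 1+2{\mathsf{\delta}}, r)$, so the goal becomes $G(r,{\mathsf{\delta}}) < F(q)$ for all $0 \leq r \leq q$, $q \in (0,1)$, and $0 \leq {\mathsf{\delta}} < {\mathsf{\Delta}}$, and then to handle each of the three valuations separately using the explicit formula for $\widehat{{\mathsf{R}}}$ in each case. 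Note that $F(q) > 0$ for $q > 0$ by {\it Risk-Positivity}, and that $G(r,0) = 0$ uniformly in $r$, so the statement is essentially a quantitative continuity claim for $G$ at $\delta = 0$, strengthened by the constraint $r \leq q$.

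For Case {\sf (1)} (${\mathsf{R}} = {\mathsf{\gamma}}\cdot{\mathsf{Var}}$), a direct expansion gives $G(r,{\mathsf{\delta}}) = 2r{\mathsf{\delta}} + 4{\mathsf{\gamma}} r(1-r){\mathsf{\delta}}^2 \leq r\,(2{\mathsf{\delta}} + 4{\mathsf{\gamma}}{\mathsf{\delta}}^2)$, while $F(q) \geq q$. Since $r \leq q$, choosing ${\mathsf{\Delta}}$ so that $2{\mathsf{\Delta}} + 4{\mathsf{\gamma}}{\mathsf{\Delta}}^2 < 1$ yields $G(r,{\mathsf{\delta}}) < r \leq q \leq F(q)$; a concrete polynomial time computable choice is ${\mathsf{\Delta}} := \tfrac{1}{4}\min\{1, 1/\sqrt{{\mathsf{\gamma}}}\}$, possibly shrunk by a small constant factor. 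For Case {\sf (3)} the same ${\mathsf{\Delta}}$ works: I would decompose
\[
\widehat{{\mathsf{V}}}(1,2,q) - \widehat{{\mathsf{V}}}(1,1+2{\mathsf{\delta}},r) \;=\; {\mathsf{\lambda}} \cdot A + (1 - {\mathsf{\lambda}}) \cdot B,
\]
where $A$ is the Case~{\sf (1)} difference (strictly positive) and $B := \widehat{{\mathsf{V}}}^{{\mathsf{\nu}}}(1,2,q) - \widehat{{\mathsf{V}}}^{{\mathsf{\nu}}}(1,1+2{\mathsf{\delta}},r)$. Since $\widehat{{\mathsf{V}}}^{{\mathsf{\nu}}}(1,b,q) = {\mathsf{\nu}}^{-1}(q\,{\mathsf{\nu}}(b) + (1-q)\,{\mathsf{\nu}}(1))$ is weakly increasing in both $b$ and $q$ (as ${\mathsf{\nu}}$ and ${\mathsf{\nu}}^{-1}$ are increasing), for ${\mathsf{\delta}} < 1/2$ and $r \leq q$ we have $B \geq 0$, so the sum is strictly positive using ${\mathsf{\lambda}} > 0$.

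Case {\sf (2)} (${\mathsf{R}} = {\mathsf{\gamma}}\cdot{\mathsf{SD}}$) has the special structure $G(r,{\mathsf{\delta}}) = 2{\mathsf{\delta}} \cdot h(r)$ and $F(q) = h(q)$, where $h(x) := x + {\mathsf{\gamma}}\sqrt{x(1-x)}$; the inequality becomes $2{\mathsf{\delta}}\, h(r) < h(q)$. Since $h$ attains its maximum at an interior point $q^{\star} \in (1/2, 1)$, one cannot directly use $r \leq q \Rightarrow h(r) \leq h(q)$. Instead I would bound the ratio $h(r)/h(q)$ uniformly by splitting on $q$: for $q \leq 1/2$, $h(q) \geq {\mathsf{\gamma}}\sqrt{q}/\sqrt{2}$ while $h(r) \leq q + {\mathsf{\gamma}}\sqrt{q}$, giving $h(r)/h(q) \leq \sqrt{2}\,(1+1/{\mathsf{\gamma}})$; for $q > 1/2$, $h(q) \geq q \geq 1/2$ while $h(r) \leq h(q^{\star}) = (1+\sqrt{1+{\mathsf{\gamma}}^2})/2$, giving $h(r)/h(q) \leq 1 + \sqrt{1+{\mathsf{\gamma}}^2}$. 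It then suffices to take ${\mathsf{\Delta}} \leq 1/4$ with $2{\mathsf{\Delta}}$ strictly less than the reciprocal of the maximum of these two bounds, which is clearly polynomial time computable from ${\mathsf{\gamma}}$.

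The main obstacle is Case {\sf (2)}: the non-monotonicity of $h$ together with the fact that $F(q) \to 0$ as $q \to 0$ rules out any crude uniform-in-$q$ bound, and one is forced to split into subranges of $q$ to extract a workable uniform bound on the ratio $h(r)/h(q)$ for $r \leq q$. Once that ratio is controlled, the rest is routine bookkeeping, and polynomial time computability of ${\mathsf{\Delta}}$ in each case follows from the explicit formulas above.
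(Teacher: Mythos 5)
Your proposal is correct, and it diverges from the paper's proof in instructive ways. For Case {\sf (1)} the paper splits on $q \leq \frac{1}{2}$ versus $q > \frac{1}{2}$ (using $r(1-r) \leq q(1-q)$ in the first subcase and an intermediate quantity $A = 1 + {\mathsf{\delta}}(2q+{\mathsf{\gamma}})$ with ${\mathsf{\Delta}} = \min\{\frac{1}{4}, \frac{1}{2(1+{\mathsf{\gamma}})}\}$ in the second), whereas your bound $G(r,{\mathsf{\delta}}) \leq r\,(2{\mathsf{\delta}} + 4{\mathsf{\gamma}}{\mathsf{\delta}}^{2}) < r \leq q \leq F(q)$ dispenses with the case split entirely and is cleaner. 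For Case {\sf (2)} the paper reuses the same two-subcase, term-by-term comparison (with $\sqrt{r(1-r)} \leq \sqrt{q(1-q)}$ for $q \leq \frac{1}{2}$ and $\sqrt{r(1-r)} \leq \frac{1}{2}$ otherwise); your route through the ratio $h(r)/h(q)$ with $h(x) = x + {\mathsf{\gamma}}\sqrt{x(1-x)}$, using the interior maximum $h(q^{\star}) = \frac{1+\sqrt{1+{\mathsf{\gamma}}^{2}}}{2}$, is genuinely different and correctly identifies why naive monotonicity in $r$ fails; the only cost is that your ${\mathsf{\Delta}}$ involves irrational quantities like $\sqrt{1+{\mathsf{\gamma}}^{2}}$, so for the ``polynomial time computable'' clause you should explicitly pass to a rational under-approximation (e.g., bound $1+\sqrt{1+{\mathsf{\gamma}}^{2}} \leq 2 + {\mathsf{\gamma}}$), a routine step you gesture at but do not spell out. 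Case {\sf (3)} is essentially the paper's argument: the paper verifies $\widehat{{\mathsf{V}}}^{{\mathsf{\nu}}}(1, 1+2{\mathsf{\delta}}, r) < \widehat{{\mathsf{V}}}^{{\mathsf{\nu}}}(1,2,q)$ via the equivalent inequality $r({\mathsf{\nu}}(1+2{\mathsf{\delta}})-{\mathsf{\nu}}(1)) < q({\mathsf{\nu}}(2)-{\mathsf{\nu}}(1))$, which is the same monotonicity content as your $B \geq 0$, and both proofs then lean on ${\mathsf{\lambda}} > 0$ to inherit strictness from the ${\mathsf{Var}}$ part. On balance, your argument trades the paper's uniform explicit formulas for ${\mathsf{\Delta}}$ for somewhat slicker inequalities; both are sound.
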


\begin{proof}
For each valuation
${\mathsf{V}}$,
we shall choose a suitable ${\mathsf{\Delta}}$.

\noindent
\underline{Case {\sf (1):} 
           ${\mathsf{R}}
            =
            {\mathsf{\gamma}}
            \cdot
            {\mathsf{Var}}$:}
Note that
{
\small
\begin{eqnarray*}
      \widehat{{\mathsf{V}}}(1, 
                             1+2 {\mathsf{\delta}},
                             r)
& = & (1+2 {\mathsf{\delta}})
      \cdot
      r
      +
      1
      \cdot
      (1-r)
      +
      {\mathsf{\gamma}}
      \cdot
      r
      (1-r)
      \cdot
      4 {\mathsf{\delta}}^{2}                              \\
& = & 2 {\mathsf{\delta}}
      \cdot
      r    
      + 1
      +
      {\mathsf{\gamma}}
      \cdot
      r
      (1-r)
      \cdot
      4 {\mathsf{\delta}}^{2}\, ,   
\end{eqnarray*}
}
and
{
\small
\begin{eqnarray*}
      \widehat{{\mathsf{V}}}(1, 
                             2,
                             q)
& = & 2 \cdot q +
      1 \cdot (1-q)
      +
      {\mathsf{\gamma}}
      \cdot
      q (1-q)                                             \\
& = & q + 1 + 
      {\mathsf{\gamma}}
      \cdot
      q (1-q)\, .
\end{eqnarray*}
}
Note also that
{
\small
\begin{eqnarray*}
      \max_{0 \leq r \leq q}
        r (1-r)
& = & \left\{ \begin{array}{ll}
                \frac{\textstyle 1}
                     {\textstyle 4}\, ,        & \mbox{if $q \geq \frac{\textstyle 1}
                                                                       {\textstyle 2}$}        \\
                q(1-q)\, ,                     & \mbox{otherwise}
              \end{array}
      \right.\, .          
\end{eqnarray*}
}
\noindent
If
$q \leq \frac{\textstyle 1}
                   {\textstyle 2}$,
then
$r (1-r)
 \leq
 q (1-q)$.
Choosing
${\mathsf{\Delta}}
 :=
 \frac{\textstyle 1}
         {\textstyle 4}$,
this implies immediately that
$\widehat{{\mathsf{V}}}(1, 
                                         1+2 {\mathsf{\delta}},
                                         r)
 <
 \widehat{{\mathsf{V}}}(1, 
                                         2,
                                         q)$
for
$0 \leq {\mathsf{\delta}} < {\mathsf{\Delta}}$.
Consider now
$q > \frac{\textstyle 1}
          {\textstyle 2}$
and set 
${\mathsf{\Delta}}
 :=
 \min
 \left\{ \frac{\textstyle 1}
                    {\textstyle 4},
            \frac{\textstyle 1}
                    {\textstyle 2 (1+{\mathsf{\gamma}})}
 \right\}$.
Denote
$A
 := 1 + {\mathsf{\delta}}
             (2q + {\mathsf{\gamma}})$.
Then,
for ${\mathsf{\delta}} < {\mathsf{\Delta}}$,
{
\small
\begin{eqnarray*}
         \widehat{{\mathsf{V}}}(1, 
                                1+2 {\mathsf{\delta}},
                                r)
& \leq & 2 {\mathsf{\delta}}
              \cdot
              q    
             + 1
             +
             {\mathsf{\gamma}}
             \cdot
             {\mathsf{\delta}}^{2}\
             \
            <\
            \
            A\, ,
\end{eqnarray*}
}
while
{
\small
\begin{eqnarray*}
       \lefteqn{A}                                                                                                                                                \\
< & 1 + {\mathsf{\Delta}}
             \cdot
             (2q + {\mathsf{\gamma}})
    & \mbox{(since ${\mathsf{\delta}}
                               <
                               {\mathsf{\Delta}}$)}                                                                                                        \\   
= & 1 + \frac{\textstyle 2q + {\mathsf{\gamma}}}
                     {\textstyle 2 (1 + {\mathsf{\gamma}})}
    &                                                                                                                                                                   \\
< & 1 + \frac{\textstyle 2q (1 + {\mathsf{\gamma}})}
                     {\textstyle 2 (1 + {\mathsf{\gamma}})}
    & \mbox{(since $q > \frac{\textstyle 1}
                                              {\textstyle 2}$)}                                                                                                \\
= & 1 + q
    &                                                                                                                                                                   \\
< &  \widehat{{\mathsf{V}}}(1, 
                        2,
                        q)\, .
\end{eqnarray*}
}

\noindent
\underline{Case {\sf (2):} 
           ${\mathsf{R}}
            =
            {\mathsf{\gamma}}
            \cdot
            {\mathsf{SD}}$:}
Then,
{
\small
\begin{eqnarray*}
      \widehat{{\mathsf{V}}}(1, 
                             1+2 {\mathsf{\delta}},
                             r)
& = & 2 {\mathsf{\delta}}
      \cdot
      r
      +
      1
      +
      {\mathsf{\gamma}}
      \cdot
      \sqrt{r
            (1-r)}
      \cdot
      2 {\mathsf{\delta}}\, ,
\end{eqnarray*}
}
and
{
\small
\begin{eqnarray*}
      \widehat{{\mathsf{V}}}(1, 
                             2,
                             q)
& = & q +
      1
      +
      {\mathsf{\gamma}}
      \cdot
      \sqrt{q (1-q)}\, .
\end{eqnarray*}
}
For
$q \leq \frac{\textstyle 1}
                    {\textstyle 2}$,
the argument is identical
to the one for Case {\sf (1)}.
Consider now
$q > \frac{\textstyle 1}
          {\textstyle 2}$,
and set again
${\mathsf{\Delta}}
 :=
 \min
 \left\{ \frac{\textstyle 1}
                   {\textstyle 4},
            \frac{\textstyle 1}
                    {\textstyle 2 (1+{\mathsf{\gamma}})}
 \right\}$.
Denote again
$      A
 := 1 + {\mathsf{\delta}}
             (2q +  {\mathsf{\gamma}})$.
By arguments identical to those
for Case {\sf (1)},
we derive that
$\widehat{{\mathsf{V}}}(1, 
                                         1+2 {\mathsf{\delta}},
                                          r)
  <
  A$
and
$A
  \leq
  q+1
  <
  \widehat{{\mathsf{V}}}(1, 
                                          2,
                                          q)$.

\noindent
\underline{Case {\sf (3):} ${\mathsf{V}}
                            =
                            \lambda
                            \cdot
                            ({\mathsf{E}} +
                             {\mathsf{\gamma}}
                             \cdot
                             {\mathsf{Var}})
                            +
                            (1 - \lambda)
                            \cdot
                            {\mathsf{V}}^{{\mathsf{\nu}}}$,
                            ${\mathsf{\nu}}$ increasing
                            and strictly convex.}\\  
Consider first
the valuation
${\mathsf{V}}
 =
 {\mathsf{V}}^{{\mathsf{\nu}}}$.
Note that
{
\small
\begin{eqnarray*}
      \widehat{{\mathsf{V}}}^{{\mathsf{\nu}}}(1, 1+2 {\mathsf{\delta}}, r)
& = & {\mathsf{\nu}}^{-1}
      \left( {\mathsf{\nu}}(1) \cdot (1-r)
             +
             {\mathsf{\nu}}(1+ 2 {\mathsf{\delta}}) \cdot r
      \right)\, ,        
\end{eqnarray*}
}
and
{
\small
\begin{eqnarray*}
      \widehat{{\mathsf{V}}}^{{\mathsf{\nu}}}(1, 2, q)
& = & {\mathsf{\nu}}^{-1}
      \left( {\mathsf{\nu}}(1) \cdot (1-q)
             +
             {\mathsf{\nu}}(2) \cdot q
      \right)\, .
\end{eqnarray*}
}
Thus,
{
\small
\begin{eqnarray*}
      \widehat{{\mathsf{V}}}^{{\mathsf{\nu}}}(1, 1+2 {\mathsf{\delta}}, r)
& < & \widehat{{\mathsf{V}}}^{{\mathsf{\nu}}}(1, 2, q) 
\end{eqnarray*}
}
if and only if
{
\small
\begin{eqnarray*}
      r 
      \cdot
      \left( {\mathsf{\nu}}(1+ 2 {\mathsf{\delta}})
             -
             {\mathsf{\nu}}(1)
      \right)       
& < & q
      \cdot
      \left( {\mathsf{\nu}}(2) 
             -
             {\mathsf{\nu}}(1)
      \right)\, .       
\end{eqnarray*}
}
Set
${\mathsf{\Delta}}
 :=
 \frac{\textstyle 1}
      {\textstyle 4}$.
Since 
$r \leq q$,
${\mathsf{\nu}}$
is strictly increasing
and
$1 + 2 {\mathsf{\delta}} < 2$
for 
${\mathsf{\delta}}
 <
 {\mathsf{\Delta}}$,
the last inequality holds,
and we are done.

Combined with the choice of ${\mathsf{\Delta}}$
for {\sf (1)},
the required property
for the convex combination
${\mathsf{V}}
 =
 \lambda
 \cdot
 ({\mathsf{E}} +
  {\mathsf{\gamma}}
  \cdot
  {\mathsf{Var}})
 +
 (1 - \lambda)
 \cdot
 {\mathsf{V}}^{{\mathsf{\nu}}}$,
where ${\mathsf{\nu}}$
is increasing and strictly convex,
holds by choosing
${\mathsf{\Delta}}
 :=
 \min \left\{ \min   \left\{ \frac{\textstyle 1} 
                                  {\textstyle 4},
                             \frac{\textstyle 1}
                                  {\textstyle {\mathsf{\gamma}}}
                     \right\},
              \frac{\textstyle 1}
                      {\textstyle 4}
      \right\}
 =
 \min \left\{ \frac{\textstyle 1} 
                           {\textstyle 4},
                    \frac{\textstyle 1}
                            {\textstyle {\mathsf{\gamma}}}
      \right\}$.                                                         
\end{proof}

\noindent
Last,
for Condition {\sf (2/c)},
we use the
{\it Weak-Equilibrium-for-Expectation} property
to prove:

\begin{lemma} 
\label{no v equilibrium for crawford}
Fix an
$({\mathsf{E}}
  +
  {\mathsf{R}})$-valuation
${\mathsf{V}}$,  
where
{\sf (1)}
${\mathsf{R}}
 =
 {\mathsf{\gamma}}
 \cdot
 {\mathsf{Var}}$,
or
{\sf (2)}
${\mathsf{R}}
 =
 {\mathsf{\gamma}}
 \cdot
 {\mathsf{SD}}$,
or
{\sf (3)}
${\mathsf{V}}
 =
 \lambda\,
 ({\mathsf{E}} + {\mathsf{\gamma}} \cdot {\mathsf{Var}})
 +
 (1-\lambda)\,
 {\mathsf{V}}^{{\mathsf{\nu}}}$,
with $0 < {\mathsf{\lambda}} \leq 1$,
where ${\mathsf{\nu}}$
is increasing and strictly convex,
and with
${\mathsf{\gamma}} > 0$.
Then,
for any ${\mathsf{\delta}}$,
$0 < {\mathsf{\delta}} < 1$,
the Crawford game
${\mathsf{G}}_{C}(\delta)$
has no
${\mathsf{V}}$-equilibrium.
\end{lemma}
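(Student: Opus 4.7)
The plan is to show directly that the unique candidate mixed profile fails to be a $V$-equilibrium, after ruling out pure ones. First, a routine case analysis of the four pure profiles in the bimatrix shows that, since $0<{\mathsf{\delta}}<1$, in each profile some player strictly decreases her cost by switching between ${\mathsf{f}}_1$ and ${\mathsf{f}}_2$; hence no pure equilibrium exists. Next, suppose $\langle p_1,p_2\rangle$ is a mixed $V$-equilibrium, and write $p:=p_1({\mathsf{f}}_1)$, $q:=p_2({\mathsf{f}}_1)$. By Corollary~\ref{brand new}, each of the three valuations enjoys the \emph{Weak-Equilibrium-for-Expectation} property. A direct computation from the bimatrix gives ${\mathsf{E}}_1(p_1^{{\mathsf{f}}_1},p_2)=1+{\mathsf{\delta}} q$ and ${\mathsf{E}}_1(p_1^{{\mathsf{f}}_2},p_2)=1+2{\mathsf{\delta}}(1-q)$, so if player $1$ is mixed then $q=2/3$; symmetrically, if player $2$ is mixed, $p=2/3$. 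Combined with the non-existence of pure equilibria, this forces both players to be mixed with $p=q=2/3$.

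Second, I will show that at $p=q=2/3$ the two pure deviations of player $1$ yield different values of $V_1$. Both induce the same expectation $1+2{\mathsf{\delta}}/3$ (consistent with {\it Weak-Equilibrium-for-Expectation}), but $p_1^{{\mathsf{f}}_1}$ places mass $2/3$ at cost $1+{\mathsf{\delta}}$ and mass $1/3$ at $1$, while $p_1^{{\mathsf{f}}_2}$ places mass $2/3$ at $1$ and mass $1/3$ at $1+2{\mathsf{\delta}}$. A short computation yields ${\mathsf{Var}}_1(p_1^{{\mathsf{f}}_1},p_2)=2{\mathsf{\delta}}^2/9$ and ${\mathsf{Var}}_1(p_1^{{\mathsf{f}}_2},p_2)=8{\mathsf{\delta}}^2/9$, so for valuations {\sf (1)} and {\sf (2)} we immediately obtain $V_1(p_1^{{\mathsf{f}}_1},p_2)<V_1(p_1^{{\mathsf{f}}_2},p_2)$.

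For valuation {\sf (3)}, the $\lambda$-contribution is already strictly smaller under $p_1^{{\mathsf{f}}_1}$; for the $V^{{\mathsf{\nu}}}$-contribution, strict convexity of ${\mathsf{\nu}}$ applied to the identity $1+{\mathsf{\delta}}=\tfrac12(1)+\tfrac12(1+2{\mathsf{\delta}})$ gives ${\mathsf{\nu}}(1+{\mathsf{\delta}})<\tfrac12 {\mathsf{\nu}}(1)+\tfrac12 {\mathsf{\nu}}(1+2{\mathsf{\delta}})$; multiplying by $2/3$ and adding $\tfrac13{\mathsf{\nu}}(1)$ to both sides yields $\tfrac23{\mathsf{\nu}}(1+{\mathsf{\delta}})+\tfrac13{\mathsf{\nu}}(1)<\tfrac23{\mathsf{\nu}}(1)+\tfrac13{\mathsf{\nu}}(1+2{\mathsf{\delta}})$, and applying the increasing function ${\mathsf{\nu}}^{-1}$ gives $V^{{\mathsf{\nu}}}_1(p_1^{{\mathsf{f}}_1},p_2)<V^{{\mathsf{\nu}}}_1(p_1^{{\mathsf{f}}_2},p_2)$. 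Hence in all three cases $V_1(p_1^{{\mathsf{f}}_1},p_2)<V_1(p_1^{{\mathsf{f}}_2},p_2)$.

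Finally, I invoke the \emph{Optimal-Value} property (Proposition~\ref{constant value}): each of the three valuations is concave, so if $p_1$ with $\sigma(p_1)=\{{\mathsf{f}}_1,{\mathsf{f}}_2\}$ were a $V_1$-best response to $p_2$, we would need $V_1(p_1^{{\mathsf{f}}_1},p_2)=V_1(p_1^{{\mathsf{f}}_2},p_2)$, contradicting the strict inequality just established. Therefore no mixed $V$-equilibrium exists at $p=q=2/3$, and combined with the first step the game ${\mathsf{G}}_C({\mathsf{\delta}})$ admits no $V$-equilibrium. The main obstacle is handling the $V^{{\mathsf{\nu}}}$-inequality for the convex combination in {\sf (3)} uniformly across \emph{any} increasing, strictly convex ${\mathsf{\nu}}$ (not only ${\mathsf{\nu}}(x)=x^r$); the one-line Jensen-type manipulation above resolves it cleanly, so all the real work is in the algebraic verification of the variance formulas and the W-E-for-E computation that pins the candidate down to $p=q=2/3$.
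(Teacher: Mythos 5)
Your proof is correct, and it shares the paper's skeleton: rule out pure equilibria, use the \emph{Weak-Equilibrium-for-Expectation} property to pin player~2's probability to $2/3$, and derive the contradiction from the \emph{Optimal-Value} property (Proposition~\ref{constant value}). Where you genuinely diverge is in how that contradiction is extracted. The paper keeps $x=p_{1}({\mathsf{f}}_{1})$ free, writes ${\mathsf{V}}_{1}({\bf p})$ as an explicit function of $x$ on $[0,1]$, and argues it cannot be constant there: for cases {\sf (1)} and {\sf (2)} the ${\mathsf{Var}}$/${\mathsf{SD}}$ term $\frac{2{\mathsf{\delta}}^{2}}{3}\left(\frac{4}{3}-x\right)$ visibly depends on $x$, and for case {\sf (3)} constancy would force ${\mathsf{\nu}}(c_{1}+d_{1}x)=c_{2}+d_{2}x$ with $d_{1}\neq 0$, i.e.\ ${\mathsf{\nu}}$ affine on an interval, contradicting strict convexity. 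You instead evaluate only the two endpoints, i.e.\ the pure deviations $p_{1}^{{\mathsf{f}}_{1}}$ and $p_{1}^{{\mathsf{f}}_{2}}$: they share the expectation $1+\frac{2{\mathsf{\delta}}}{3}$ but have variances $\frac{2{\mathsf{\delta}}^{2}}{9}$ versus $\frac{8{\mathsf{\delta}}^{2}}{9}$ (consistent with the paper's formula at $x=1$ and $x=0$), and the ${\mathsf{V}}^{{\mathsf{\nu}}}$ contribution is handled by a one-line Jensen inequality at the midpoint $1+{\mathsf{\delta}}=\frac{1}{2}\cdot 1+\frac{1}{2}(1+2{\mathsf{\delta}})$, which is strict because ${\mathsf{\delta}}>0$. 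Your route is somewhat more elementary --- it avoids both the full formula for ${\mathsf{V}}_{1}({\bf p})$ as a function of $x$ and the affine-${\mathsf{\nu}}$ argument --- while retaining the paper's full generality over any increasing, strictly convex ${\mathsf{\nu}}$; the paper's version, in exchange, makes transparent exactly which structural feature (non-constancy of the risk term along the best-response face) kills the equilibrium. Both arguments are sound.
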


\begin{proof}
${\mathsf{G}}_{C}({\mathsf{\delta}})$
has no pure equilibrium
due to the following cycle of {\it improvement steps}:

\begin{tabular}{lllllllll}
$({\mathsf{f}}_{1}, {\mathsf{f}}_{1})$ & $\stackrel{1}{\longrightarrow}$ & 
$({\mathsf{f}}_{2}, {\mathsf{f}}_{1})$ & $\stackrel{2}{\longrightarrow}$ & 
$({\mathsf{f}}_{2}, {\mathsf{f}}_{2})$ & $\stackrel{1}{\longrightarrow}$ & 
$({\mathsf{f}}_{1}, {\mathsf{f}}_{2})$ & $\stackrel{2}{\longrightarrow}$ & 
$({\mathsf{f}}_{1}, {\mathsf{f}}_{1})$                                         \\
$\langle 1+{\mathsf{\delta}}, 1 + {\mathsf{\delta}} \rangle$ &      &
$\langle 1, 1 + 2\, {\mathsf{\delta}} \rangle$ &                    &
$\langle 1 + 2\, {\mathsf{\delta}}, 1 \rangle$ &                    &
$\langle 1+{\mathsf{\delta}}, 1 + {\mathsf{\delta}} \rangle$ &      &
                                                                               \\
\end{tabular}

\noindent
Assume,
by way of contradiction,
that
${\mathsf{G}}_{C}(\delta)$
has a mixed ${\mathsf{V}}$-equilibrium ${\bf p}$.
Denote
$p_{1}
 :=
 \langle x, 1-x
 \rangle$
and
$p_{2}
 :=
 \langle y, 1-y \rangle$.
Note that
{
\small
\begin{eqnarray*}
      {\mathsf{E}}_{1}\left( p_{1}^{{\mathsf{f}}_{1}},
                             p_{2}
                      \right)       
& = & y \cdot (1+{\mathsf{\delta}})
      +
      (1-y) \cdot 1\
      \
      =\
      \
      1 + y \cdot {\mathsf{\delta}}\, ,
\end{eqnarray*}
}
and
{
\small
\begin{eqnarray*}
      {\mathsf{E}}_{1}\left( p_{1}^{{\mathsf{f}}_{2}},
                             p_{2}
                      \right)       
& = & y \cdot 1 (1+{\mathsf{\delta}})
      +
      (1-y) \cdot (1+{\mathsf{\delta}})\
      \
      =\
      \
      1 + 2 {\mathsf{\delta}} - 2 y \cdot {\mathsf{\delta}}\, .
\end{eqnarray*}      
}
By the {\it Weak-Equilibrium-for-Expectation} property,
{
\small
\begin{eqnarray*}
{\mathsf{E}}_{1}\left( \langle p_{1}^{{\mathsf{f}}_{1}},
                               p_{2}
                        \rangle
                 \right)
& = &
 {\mathsf{E}}_{1}\left( \langle p_{1}^{{\mathsf{f}}_{2}},
                                p_{2}
                        \rangle
                 \right)\, ,
\end{eqnarray*}
}
or
$1 + y \cdot {\mathsf{\delta}}
 =
 1 + 2 {\mathsf{\delta}} - 2 y \cdot {\mathsf{\delta}}$,
yielding
$y =
 \frac{\textstyle 2}
      {\textstyle 3}$.
So,
$p_{2}
 =
 \left\langle \frac{\textstyle 2}
              {\textstyle 3},
         \frac{\textstyle 1}
              {\textstyle 3}
 \right\rangle$.

\noindent
\underline{Case {\sf (3)}:}
Clearly,
{
\small
\begin{eqnarray*}
      {\mathsf{E}}_{1}({\bf p})
& = & {\mathsf{E}}_{1}\left( \langle p_{1}^{{\mathsf{f}}_{1}},
                                     p_{2}
                             \rangle
                      \right)\
          \  
         =\
         \
         \frac{\textstyle 1}
                {\textstyle 3}
      \cdot
      (3 + 2 {\mathsf{\delta}})\, ,
\end{eqnarray*}
\begin{eqnarray*}      
      {\mathsf{Var}}_{1}({\bf p}) 
& = & x \cdot \frac{\textstyle 2}
                   {\textstyle 3} \cdot (1 + {\mathsf{\delta}})^{2}
      +
      x \cdot \frac{\textstyle 1}
                   {\textstyle 3} \cdot 1^{2}
      +
      (1-x) \cdot \frac{\textstyle 2}
                       {\textstyle 3} \cdot 1^{2}
      +
      (1-x) \cdot \frac{\textstyle 1}
                       {\textstyle 3} \cdot (1 + 2 {\mathsf{\delta}})^{2}  
      -
      \left(  \frac{\textstyle 1}
                   {\textstyle 3}
              \cdot
              (3 + 2 {\mathsf{\delta}})
      \right)^{2}                                                                                         \\
& = & \frac{\textstyle 2 {\mathsf{\delta}}^{2}}
           {\textstyle 3}
      \cdot
      \left( \frac{\textstyle 4}
                  {\textstyle 3}
             -
             x
      \right)\, ,
\end{eqnarray*}
}
and
{
\small
\begin{eqnarray*}      
       {\mathsf{V}}_{1}^{{\mathsf{\nu}}}({\bf p})
& = & {\mathsf{\nu}}^{-1}
      \left( x \cdot \frac{\textstyle 2}
                          {\textstyle 3} \cdot {\mathsf{\nu}} (1 + {\mathsf{\delta}})
             +
             (1-x) \cdot \frac{\textstyle 1}
                              {\textstyle 3} \cdot {\mathsf{\nu}} (1)
             +
             x \cdot \frac{\textstyle 2}
                          {\textstyle 3} \cdot {\mathsf{\nu}} (1) 
             +
             (1-x) \cdot \frac{\textstyle 1}
                              {\textstyle 3} \cdot {\mathsf{\nu}}(1 + 2 {\mathsf{\delta}})
      \right)\, ,
\end{eqnarray*}
}
so that
{
\small
\begin{eqnarray*}
&   & {\mathsf{V}}_{1}({\bf p})                                                          \\
& = & {\mathsf{\lambda}} \cdot \left( {\mathsf{E}}_{1}({\bf p})
                                      +
                                      {\mathsf{\gamma}}
                                      \cdot
                                      {\mathsf{Var}}_{1}({\bf p})
                    \right)
      +
      (1 - {\mathsf{\lambda}})
      \cdot
      {\mathsf{V}}_{1}^{{\mathsf{\nu}}}({\bf p})                                         \\
& = & {\mathsf{\lambda}}
      \cdot
      \left( \frac{\textstyle 1}
                  {\textstyle 3}
             \cdot
             (3 + 2 {\mathsf{\delta}})
             +
             {\mathsf{\gamma}}
             \cdot
             \frac{\textstyle 2 {\mathsf{\delta}}^{2}}
                  {\textstyle 3}
             \cdot
             \left( \frac{\textstyle 4}
                         {\textstyle 3}
                    - x
             \right)
      \right)
      +
      (1 - {\mathsf{\lambda}})
      \cdot                                                                                               \\
&   & \cdot
      {\mathsf{\nu}}^{-1}
      \left( x \cdot \frac{\textstyle 2}
                          {\textstyle 3} \cdot {\mathsf{\nu}} (1 + {\mathsf{\delta}})
             +
             (1-x) \cdot \frac{\textstyle 1}
                              {\textstyle 3} \cdot {\mathsf{\nu}} (1)
             +
             x \cdot \frac{\textstyle 2}
                          {\textstyle 3} \cdot {\mathsf{\nu}} (1) 
             +
             (1-x) \cdot \frac{\textstyle 1}
                              {\textstyle 3} \cdot {\mathsf{\nu}}(1 + 2 {\mathsf{\delta}})
      \right)\, .
\end{eqnarray*}
}
Since
{\it (i)}
$p_{1}$
is a
${\mathsf{V}}_{1}$-best-response
to
$p_{2}$
and
{\it (ii)}
${\mathsf{V}}_{1}$
is concave
in the mixed strategy $p_{1}$,
the {\it Optimal-Value} property
(Proposition~\ref{constant value})
implies that
there is a constant $A$
such that
${\mathsf{V}}_{1}({\bf p})
 =
 A$
for all
$x \in [0, 1]$.
Since ${\mathsf{\delta}} \neq 0$,
this yields a contradiction
for $\lambda = 1$.
So
assume $0 < \lambda < 1$.
Then,
for all
$x \in [0, 1]$,
{
\small
\begin{eqnarray*}
&   & A                                                                                           \\
& = & {\mathsf{\lambda}}
      \cdot
      \left( \frac{\textstyle 1}
                  {\textstyle 3}
             \cdot
             (3 + 2 {\mathsf{\delta}})
             +
             {\mathsf{\gamma}}
             \cdot
             \frac{\textstyle 2 {\mathsf{\delta}}^{2}}
                  {\textstyle 3}
             \cdot
             \left( \frac{\textstyle 4}
                         {\textstyle 3}
                    - x
             \right)
      \right)                                                                        
      +
      (1 - {\mathsf{\lambda}})
      \cdot                                                                                      \\ 
&   & \cdot
      {\mathsf{\nu}}^{-1}
      \left( x \cdot \frac{\textstyle 2}
                          {\textstyle 3} \cdot {\mathsf{\nu}} (1 + {\mathsf{\delta}})
             +
             (1-x) \cdot \frac{\textstyle 1}
                              {\textstyle 3} \cdot {\mathsf{\nu}} (1)
             +
             x \cdot \frac{\textstyle 2}
                          {\textstyle 3} \cdot {\mathsf{\nu}} (1) 
             +
             (1-x) \cdot \frac{\textstyle 1}
                              {\textstyle 3} \cdot {\mathsf{\nu}}(1 + 2 {\mathsf{\delta}})
      \right)\, .
\end{eqnarray*}
}
\noindent
Rearranging yields
${\mathsf{\nu}}(c_{1} + d_{1} x)
 =
 c_{2} + d_{2} x$,
for some constants
$c_{1}$, $d_{1}$,
$c_{2}$ and $d_{2}$
with
$d_{1}
 :=
 \frac{\textstyle {\mathsf{\lambda}}}
      {\textstyle {\mathsf{\lambda}} - 1}
 \cdot
 \frac{\textstyle 2\, \delta^{2}}
      {\textstyle 3}
 \neq
 0$,
for all
$x \in [0, 1]$.
Hence,
{
\small
\begin{eqnarray*}
{\mathsf{\nu}}(x)
& = & c_{2}
      -
      c_{1} \cdot \frac{\textstyle d_{2}}
                       {\textstyle d_{1}}
      +
      \frac{\textstyle d_{2}}
           {\textstyle d_{1}}
      \cdot
      x
\end{eqnarray*}
}
for all $x \in [0, 1]$,
so that
${\mathsf{\nu}}$
is not strictly convex.
A contradiction.

\noindent
\underline{Case {\sf (1)}:}
This is the special case
of Case {\sf (3)}
with ${\mathsf{\lambda}} = 1$.

\noindent
\underline{Case {\sf (2)}:}
Note that
{
\small
\begin{eqnarray*}
      \left( {\mathsf{E}}_{1} + {\mathsf{\gamma}}
                                \cdot {\mathsf{SD}}_{1}
      \right)
      ({\bf p})
& = & \frac{\textstyle 1}
           {\textstyle 3}
      \cdot
      (3 + 2 {\mathsf{\delta}})
      +
      {\mathsf{\gamma}}
      \cdot
      {\mathsf{\delta}}
      \cdot
      \sqrt{\frac{\textstyle 2}
                 {\textstyle 3}
            \cdot
            \left( \frac{\textstyle 4}
                        {\textstyle 3}
                   -x
            \right)}\, .
\end{eqnarray*}
}
Since
{\it (i)}
$p_{1}$
is a
$({\mathsf{E}}_{1} + {\mathsf{\gamma}} 
                     \cdot {\mathsf{SD}}_{1})$-best-response
to
$p_{2}$
and
{\it (ii)}
${\mathsf{E}}_{1} + {\mathsf{\gamma}} \cdot {\mathsf{SD}}_{1}$
is concave in the mixed strategy $p_{1}$,
the {\it Optimal-Value} property
(Proposition~\ref{constant value})
implies that
there is a constant $A$
such that
$({\mathsf{E}}_{1} + {\mathsf{\gamma}} \cdot {\mathsf{SD}}_{1})
 ({\bf p})
 =
 A$
for all
$x \in [0, 1]$.
Since ${\mathsf{\delta}} \neq 0$,
this yields a contradiction.
\end{proof}

\noindent
Now,
for Condition {\sf (2)}
in Theorem~\ref{two players complexity},
choose ${\mathsf{\delta}}$
as a rational number 
no larger than the minimum
of the ${\mathsf{\Delta}}$
from Lemma~\ref{little tiny}
and the ${\mathsf{\Delta}}$
from Lemma~\ref{the monotonicity lemma},
both of which are polynomial time computable
for the $({\mathsf{E}} +
          {\mathsf{R}})$-valuations
in Theorem~\ref{call from paderborn}.
This choice guarantees that
all Conditions
{\sf (2/a)},
{\sf (2/b)} and
{\sf (2/c)}
in Theorem~\ref{two players complexity}
are satisfied
by the chosen ${\mathsf{\delta}}$.
Hence,
it follows:

\begin{theorem}
\label{call from paderborn}
Fix an
$({\mathsf{E}}
  +
  {\mathsf{R}})$-valuation
${\mathsf{V}}$,  
where
{\sf (1)}
${\mathsf{R}}
 =
 {\mathsf{\gamma}}
 \cdot
 {\mathsf{Var}}$,
or
{\sf (2)}
${\mathsf{R}}
 =
 {\mathsf{\gamma}}
 \cdot
 {\mathsf{SD}}$,
or
{\sf (3)}
${\mathsf{V}}
 =
 \lambda\,
 ({\mathsf{E}} + 
 {\mathsf{\gamma}}
 \cdot
 {\mathsf{Var}})
 +
 (1-\lambda)\,
 {\mathsf{V}}^{{\mathsf{\nu}}}$,
with $0 < \lambda \leq 1$,
where
${\mathsf{\nu}}(x)
 =
 x^{r}$
with $r \geq 2$,
and with ${\mathsf{\gamma}} > 0$.
Then,
{\sf $\exists {\mathsf{V}}$-EQUILIBRIUM}
is strongly ${\mathcal{NP}}$-hard
for 2-players games.
\end{theorem}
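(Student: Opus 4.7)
The plan is to obtain Theorem~\ref{call from paderborn} as a direct instantiation of the general result Theorem~\ref{two players complexity}, by showing that each of the three concrete valuations listed satisfies the two abstract hypotheses required by that theorem: the \emph{Weak-Equilibrium-for-Expectation} property and the existence of a polynomial-time computable parameter ${\mathsf{\delta}}$ with $0 < {\mathsf{\delta}} \leq \frac{1}{4}$ fulfilling Conditions {\sf (2/a)}, {\sf (2/b)} and {\sf (2/c)}.

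First, for each of the three valuations, the \emph{Weak-Equilibrium-for-Expectation} property is immediate from Corollary~\ref{brand new}, since cases {\sf (1)}, {\sf (2)} and {\sf (3)} of Theorem~\ref{call from paderborn} correspond exactly to the three cases listed there (the function ${\mathsf{\nu}}(x)=x^{r}$ with $r \geq 2$ being a particular increasing and strictly convex function). So Condition {\sf (1)} of Theorem~\ref{two players complexity} holds with no further work.

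Next, for Condition {\sf (2)}, I would invoke the three technical lemmas that have already been established for precisely these valuations. Lemma~\ref{little tiny} provides, for each of the three valuations, a polynomial-time computable rational threshold ${\mathsf{\Delta}}_{a} \in (0,\tfrac{1}{4}]$ such that Condition {\sf (2/a)} is satisfied for every $0 \leq {\mathsf{\delta}} < {\mathsf{\Delta}}_{a}$; Lemma~\ref{the monotonicity lemma} likewise supplies a polynomial-time computable threshold ${\mathsf{\Delta}}_{b}$ ensuring the strict monotonicity required by {\sf (2/b)}; and Lemma~\ref{no v equilibrium for crawford} guarantees that for \emph{any} ${\mathsf{\delta}} \in (0,1)$ the Crawford game ${\mathsf{G}}_{C}({\mathsf{\delta}})$ has no ${\mathsf{V}}$-equilibrium, yielding Condition {\sf (2/c)} unconditionally. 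The final step is then to choose ${\mathsf{\delta}}$ as any positive rational number no larger than $\min\{{\mathsf{\Delta}}_{a},{\mathsf{\Delta}}_{b}\}$; this minimum is polynomial-time computable in each of the three cases because the constituent thresholds are explicit rational functions of the fixed parameters $\gamma$, $\lambda$ and $r$.

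With this choice of ${\mathsf{\delta}}$, all hypotheses of Theorem~\ref{two players complexity} are met simultaneously, and the strong ${\mathcal{NP}}$-hardness of {\sf $\exists {\mathsf{V}}$-EQUILIBRIUM} for 2-players games follows. The main conceptual obstacle has already been absorbed into the three supporting lemmas, so that the proof of Theorem~\ref{call from paderborn} itself reduces to a careful bookkeeping argument; the only subtlety worth highlighting is that the parameter ${\mathsf{\delta}}$ enters the reduction of Theorem~\ref{two players complexity} as a concrete number encoded in polynomial size, so one must verify that taking the minimum of ${\mathsf{\Delta}}_{a}$ and ${\mathsf{\Delta}}_{b}$ preserves polynomial-time computability and the upper bound $\tfrac{1}{4}$, both of which hold by inspection of the formulas given in Lemmas~\ref{little tiny} and~\ref{the monotonicity lemma}.
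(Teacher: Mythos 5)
Your proposal is correct and follows essentially the same route as the paper: the \emph{Weak-Equilibrium-for-Expectation} property is taken from Corollary~\ref{brand new}, Conditions {\sf (2/a)}--{\sf (2/c)} are discharged by Lemmas~\ref{little tiny}, \ref{the monotonicity lemma} and~\ref{no v equilibrium for crawford}, and ${\mathsf{\delta}}$ is chosen as a rational no larger than the minimum of the two polynomial-time computable thresholds before invoking Theorem~\ref{two players complexity}. This is exactly the paper's argument.
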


\section{Epilogue}
\label{epilogue}

\noindent
We embarked on a research direction
making different-from-classical assumptions
on the behavior of the players
in order to model the real-world in a more accurate way
and gain insights that explain reality better. 
We have developed a framework for games
with players minimizing an
$({\mathsf{E}}
 +
 {\mathsf{R}})$-valuation ${\mathsf{V}}$.
Our framework enabled
proving the strong ${\mathcal{NP}}$-hardness
of {\sf $\exists {\mathsf{V}}$-EQUILIBRIUM}
in the simplest cases of games
with two strategies
or two players,
respectively,
and for many significant choices of
$({\mathsf{E}}
 +
 {\mathsf{R}})$-valuations ${\mathsf{V}}$.

Besides these central results,
our study is making a number of 
additional conceptual and technical contributions
through introducing
several new analytical and combinatorial tools and techniques,
which are
of wider applicability
and interest;
we summarize here the main ones.
\begin{itemize}

\item
Our proof techniques have relied heavily
on the
{\it Weak-Equilibrium-for-Expectation} property,
which indicates its computational power.

\item
We introduced ${\mathsf{E}}$-strict concavity
as the most general known class
of valuations
with the {\it Weak-Equilibrium-for-Expectation} property.

\item
We imported
${\mathsf{\nu}}$-valuations
from 
Actuarial Risk Theory~\cite{KGDD08}
into the realm of equilibrium computation
and revealed some of their algorithmic properties;
ditto
for higher moments,
generalizing the variance
considered before in~\cite{FP10,MM12},
which were used before
as risk-modeling valuations
in Portfolio Theory~\cite{KL76}.

\item
We established the {\it Mixed-Player-Has-Pure-Neighbors} property,
which explicitly identifies
a class of games 
and a corresponding class of valuations
where mixed equilibria get ``endangered''
(cf.~\cite{C90,FP10}).

\end{itemize}
Our work opens up
a wide avenue
for future research
towards revealing
the complexity of 
{\sf $\exists {\mathsf{V}}$-EQUILIBRIUM}
for other valuations ${\mathsf{V}}$.
Most obviously,
the modularity of the proof
of Theorem~\ref{two players complexity}
may allow,
similarly to Theorem~\ref{call from paderborn},
direct derivation of new concrete complexity results
for other valuations
that will be shown to have the assumed properties.
Enhancing the class of ${\mathsf{E}}$-strictly concave valuations
may yield such new valuations,
which may be directly accomodated
into the general framework
we developed.
But new tools and techniques
may be required
for settling the complexity of
{\sf $\exists {\mathsf{V}}$-EQUILIBRIUM}
when ${\mathsf{V}}$ is not concave
or not ${\mathsf{E}}$-strictly concave,
or when it lacks
the {\it Weak-Equilibrium-for-Expectation} property.
It is not clear
whether and how
our framework could be extended to accomodate
even {\it quasiconcave} valuations.
We conclude with such an example, 
cast into the context 
of maximization games.
The {\it Sharpe ratio} valuation,
formulated as
${\mathsf{SR}}
 =
 \frac{\textstyle {\mathsf{E}}}
      {\textstyle 1 + {\mathsf{SD}}}$~\cite{S63},
is the ratio of a convex 
over a concave function;
although it is not convex,
it is {\it quasiconvex} 
as shown in~\cite{SRF07}.
Does ${\mathsf{SR}}$ 
have the
{\it Weak-Equilibrium-for-Expectation} property?
What is the complexity of 
{\sf $\exists {\mathsf{SR}}$-EQUILIBRIUM}?

\newpage

\newpage
\pagenumbering{roman}

\appendix

\section{${\mathsf{FP}}$-Strict Concavity}
\label{fp strict concavity}

\noindent
We recall the following definition from~\cite[Section 2]{FP10},
rendering their notation and vocabulary.

\begin{definition}[${\mathsf{FP}}$-Strict Convexity]
\label{fp strict convexity}
Fix a player $i \in [n]$.
The valuation ${\mathsf{V}}_{i}$
is {\it ${\mathsf{FP}}$-strictly convex}
if for any {\em fixed}
partial mixed profile
${\bf x}_{-i}$,
for any pair of payoff distributions
${\mathsf{P}}_{i}(x_{i}^{\prime}, {\bf x}_{-i})
 \neq
 {\mathsf{P}}_{i}(x_{i}^{\prime\prime}, {\bf x}_{-i})$,
where  
$x_{i}^{\prime}$
and
$x_{i}^{\prime\prime}$ are mixed strategies,
and for any $\alpha$ with
$0 < \alpha < 1$,
it holds that
\begin{eqnarray*}
      {\mathsf{V}}_{i} \left( \alpha  x_{i}^{\prime}
                              +
                              (1 - \alpha) x_{i}^{\prime\prime},
                              {\bf x}_{-i}
                       \right)
& < & \alpha
      {\mathsf{V}}_{i} \left( x_{i}^{\prime},
                             {\bf x}_{-i}
                       \right)
      +
      (1 - \alpha)
      {\mathsf{V}}_{i} \left( x_{i}^{\prime\prime},
                              {\bf x}_{-i}
                       \right)\, .
\end{eqnarray*}
\end{definition}

\noindent
We quote from~\cite[Section 2]{FP10}
that the {\it payoff distribution}
${\mathsf{P}}_{i}
 =
 {\mathsf{P}}_{i}({\bf p})
 $
is the probability distribution
induced from a mixed profile ${\bf p}$
on the range of possible payoffs
for player $i$
over all profiles;
so,
the probability that
player $i$ receives payoff $a$ is
$\sum_{{\bf s} \in S
       \mid
       {\mathsf{\mu}}_{i}({\bf s})
       =
       a}
   {\bf p}({\bf s})$.    
Recall also that
${\mathsf{V}}_{i}$
is ${\mathsf{FP}}$-strictly concave
if
$-{\mathsf{V}}_{i}$
is ${\mathsf{FP}}$-strictly convex.
We observe:

\begin{observation}
\label{fp counterexample}
The valuation
${\mathsf{V}}
 =
 {\mathsf{E}}
 -
 {\mathsf{Var}}$
is {\em not}  
${\mathsf{FP}}$-strictly convex.
\end{observation}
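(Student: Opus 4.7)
\textbf{Proof proposal for Observation~\ref{fp counterexample}.}

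The plan is to exhibit a concrete counterexample: a game, a fixed partial mixed profile, two mixed strategies inducing \emph{different} payoff distributions, and a value of $\alpha$ for which the strict-inequality requirement of ${\mathsf{FP}}$-strict convexity turns into an equality. The guiding idea is that ${\mathsf{E}}$ is linear in the mixed strategy, so ${\mathsf{E}} - {\mathsf{Var}}$ fails to be strictly convex along any segment on which ${\mathsf{Var}}$ happens to be affine. Since ${\mathsf{Var}}(x) = \sum_{\ell} x(\ell)\, ({\mathsf{\mu}}_{i}(\ell, {\bf x}_{-i}))^{2} - \left( {\mathsf{E}}_{i}(x, {\bf x}_{-i}) \right)^{2}$ consists of a linear term minus the square of the expectation, the restriction of ${\mathsf{Var}}$ to any segment on which ${\mathsf{E}}_{i}$ is \emph{constant} is affine in $x$. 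So I only need two mixed strategies with the same expectation but with different payoff distributions.

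Concretely, I would take the $2$-players game in which player $1$ has three strategies $\{ 1, 2, 3 \}$ and her cost depends only on her own strategy, with ${\mathsf{\mu}}_{1}(1, s_{2}) = 0$, ${\mathsf{\mu}}_{1}(2, s_{2}) = 1$ and ${\mathsf{\mu}}_{1}(3, s_{2}) = 2$ for every $s_{2}$ (player $2$ can be arbitrary; fix any ${\bf x}_{-1}$). Choose the mixed strategies $x_{1}^{\prime} = \left\langle \tfrac{1}{2}, 0, \tfrac{1}{2} \right\rangle$ and $x_{1}^{\prime\prime} = \langle 0, 1, 0 \rangle$. The payoff distributions differ: ${\mathsf{P}}_{1}(x_{1}^{\prime}, {\bf x}_{-1})$ puts mass $\tfrac{1}{2}$ on each of $0$ and $2$, while ${\mathsf{P}}_{1}(x_{1}^{\prime\prime}, {\bf x}_{-1})$ is a point mass at $1$.

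Next I would compute, for every $\alpha \in (0, 1)$ and $x_{1}^{\alpha} := \alpha\, x_{1}^{\prime} + (1-\alpha)\, x_{1}^{\prime\prime} = \left\langle \tfrac{\alpha}{2},\, 1-\alpha,\, \tfrac{\alpha}{2} \right\rangle$:
${\mathsf{E}}_{1}(x_{1}^{\prime}, {\bf x}_{-1}) = {\mathsf{E}}_{1}(x_{1}^{\prime\prime}, {\bf x}_{-1}) = {\mathsf{E}}_{1}(x_{1}^{\alpha}, {\bf x}_{-1}) = 1$, and ${\mathsf{Var}}_{1}(x_{1}^{\prime}, {\bf x}_{-1}) = 1$, ${\mathsf{Var}}_{1}(x_{1}^{\prime\prime}, {\bf x}_{-1}) = 0$, ${\mathsf{Var}}_{1}(x_{1}^{\alpha}, {\bf x}_{-1}) = \alpha$. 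Hence
\[
  ({\mathsf{E}} - {\mathsf{Var}})_{1}(x_{1}^{\alpha}, {\bf x}_{-1}) = 1 - \alpha = \alpha\,({\mathsf{E}} - {\mathsf{Var}})_{1}(x_{1}^{\prime}, {\bf x}_{-1}) + (1-\alpha)\, ({\mathsf{E}} - {\mathsf{Var}})_{1}(x_{1}^{\prime\prime}, {\bf x}_{-1}),
\]
contradicting the strict inequality in Definition~\ref{fp strict convexity}.

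There is essentially no obstacle here; the construction is straightforward once one notices that two supports with disjoint ``shapes'' can share the same expectation, and that equality of expectations is precisely what linearizes ${\mathsf{Var}}$ along the joining segment. The only point that would require a sentence of care is observing that the payoff distributions are genuinely distinct (so that the hypothesis of ${\mathsf{FP}}$-strict convexity is activated), which is visible from the supports alone.
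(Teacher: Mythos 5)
Your proof is correct and follows essentially the same strategy as the paper's: an explicit counterexample in which two mixed strategies induce different payoff distributions yet ${\mathsf{E}} - {\mathsf{Var}}$ satisfies the convexity inequality with equality along the joining segment, exploiting the fact that the first and second moments are linear in the mixed strategy. Your instance is in fact slightly leaner than the paper's --- the paper engineers both the expectation and the second moment to coincide at the two endpoints (so the valuation is constant on the segment), whereas you only arrange equal expectations (so the valuation is merely affine on the segment, which already defeats strictness).
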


\begin{proof}
By counterexample.
Consider the game ${\mathsf{G}}$
with two players $1$ and $2$,
with
$S_{1} = \{ s_{1}, s_{2} \}$
and
$S_{2} = \{ t_{1}, t_{2}, t_{3}, t_{4} \}$.
The utilities for player $1$
are given by
\begin{eqnarray*}
      p_{1}(s_{1}, t_{j})
& = & \left\{ \begin{array}{ll}
                \frac{\textstyle 9}
                     {\textstyle 2}\, ,   & \mbox{if $j=1$}               \\
                                          &                               \\ 
                \frac{\textstyle 7}
                     {\textstyle 2}\, ,   & \mbox{if $j = 2$}             \\
                                          &                               \\ 
                0\, ,                     & \mbox{if $j \in \{ 3, 4 \}$} 
              \end{array}  
      \right.\, ,
\end{eqnarray*} 
and
\begin{eqnarray*}
      p_{1}(s_{2}, t_{j})   
& = & \left\{ \begin{array}{ll}
                0\, ,                     & \mbox{if $j \in \{ 1, 2 \}$}  \\
                                          &                               \\ 
                5\, ,                     & \mbox{if $j = 3$}             \\
                                          &                               \\
                \frac{\textstyle 15}
                     {\textstyle 4}\, ,   & \mbox{if $j = 4$}
              \end{array}
      \right.\, .                                         
\end{eqnarray*}

\noindent
Player $2$ has chosen the
mixed strategy
$x_{2} = \left\langle \frac{\textstyle 1}
                           {\textstyle 4},
                      \frac{\textstyle 1}
                           {\textstyle 4},
                      \frac{\textstyle 1}
                           {\textstyle 10},
                      \frac{\textstyle 2}
                           {\textstyle 5}
         \right\rangle$.                                 
For player $1$,
$x_{1}^{\prime}$ is the pure strategy $s_{1}$;
$x_{1}^{\prime\prime}$ is the pure strategy $s_{2}$. 
Then,
clearly,
the induced payoff distributions
${\mathsf{P}}_{1}(x_{1}^{\prime}, x_{2})$
and
${\mathsf{P}}_{1}(x_{1}^{\prime\prime}, x_{2})$
are different.

\noindent
Note that
for every mixed strategy $x_{1}$
of player $1$,
\begin{eqnarray*}
      {\mathsf{V}}_{1}(x_{1}, x_{2})
& = & {\mathsf{E}}_{1}(x_{1}, x_{2})
      -
      {\mathsf{Var}}_{1}(x_{1}, x_{2})                                   \\
& = & {\mathsf{E}}_{1}(x_{1}, x_{2}) 
      -
      \widehat{{\mathsf{E}}}_{1}(x_{1}, x_{2}) 
      +
      ({\mathsf{E}}_{1}(x_{1}, x_{2}))^{2}\, ,    
\end{eqnarray*}
where 
$\widehat{{\mathsf{E}}}_{1}$
denotes the expectation
for the square game
${\mathsf{G}}^{2}$
(where the utilities are
the squares of the utilities
for ${\mathsf{G}}$).
It is
\begin{eqnarray*}
      {\mathsf{E}}_{1}(x_{1}^{\prime}, x_{2})  
& = & \frac{\textstyle 1}
           {\textstyle 4}
      \cdot
      \frac{\textstyle 9}
           {\textstyle 2}     
      +
      \frac{\textstyle 1}
           {\textstyle 4}
      \cdot
      \frac{\textstyle 7}
           {\textstyle 2}\ \
      =\ \
      2\, ,                                                           \\
      {\mathsf{E}}_{1}(x_{1}^{\prime\prime}, x_{2})  
& = & \frac{\textstyle 1}
           {\textstyle 10}
      \cdot
      5
      +
      \frac{\textstyle 2}
           {\textstyle 5}
      \cdot
      \frac{\textstyle 15}
           {\textstyle 4}\ \
      =\ \
      2\, ,                                 
\end{eqnarray*}
and
\begin{eqnarray*}
      \widehat{{\mathsf{E}}}_{1}(x_{1}^{\prime}, x_{2})  
& = & \frac{\textstyle 1}
           {\textstyle 4}
      \cdot
      \left( \frac{\textstyle 9}
                  {\textstyle 2}
      \right)^{2}                 
      +
      \frac{\textstyle 1}
           {\textstyle 4}
      \cdot
      \left( \frac{\textstyle 7}
                  {\textstyle 2}
      \right)^{2}\ \
      =\ \
      8 +
      \frac{\textstyle 1}
           {\textstyle 8}\, ,                                                           \\
      \widehat{{\mathsf{E}}}_{1}(x_{1}^{\prime\prime}, x_{2})  
& = & \frac{\textstyle 1}
           {\textstyle 10}
      \cdot
      5^{2}
      +
      \frac{\textstyle 2}
           {\textstyle 5}
      \cdot
      \left( \frac{\textstyle 15}
                  {\textstyle 4}
      \right)^{2}\ \
      =\ \
      8 + \frac{\textstyle 1}
               {\textstyle 8}\, .                                 
\end{eqnarray*}
Thus,
\begin{eqnarray*}
      {\mathsf{V}}_{1}(x_{1}^{\prime}, x_{2})  
& = & {\mathsf{V}}_{1}(x_{1}^{\prime\prime}, x_{2})\, .  
\end{eqnarray*}

\noindent
Now, 
set
$x_{1}
 =
 \alpha x_{1}^{\prime}
 +
 (1 - \alpha) x_{1}^{\prime\prime}$,
where
$0 < \alpha < 1$.
By linearity of expectation,
\begin{eqnarray*}
      {\mathsf{E}}_{1}(x_{1}, x_{2}) 
& = & {\mathsf{E}}_{1}(x_{1}^{\prime}, x_{2})\ \
      =\ \
      {\mathsf{E}}_{1}(x_{1}^{\prime\prime}, x_{2})
\end{eqnarray*}
and
\begin{eqnarray*}
      \widehat{{\mathsf{E}}}_{1}(x_{1}, x_{2}) 
& = & \widehat{{\mathsf{E}}}_{1}(x_{1}^{\prime}, x_{2})\ \
      =\ \
      \widehat{{\mathsf{E}}}_{1}(x_{1}^{\prime\prime}, x_{2})\, .
\end{eqnarray*}
Hence,
\begin{eqnarray*}
      {\mathsf{V}}_{1}(x_{1}, x_{2}) 
& = & {\mathsf{V}}_{1}(x_{1}^{\prime}, x_{2})\ \
      =\ \
      {\mathsf{V}}_{1}(x_{1}^{\prime\prime}, x_{2})\, .
\end{eqnarray*}
So,
${\mathsf{V}}_{1}$
is {\em not}  
${\mathsf{FP}}$-strictly convex.
\end{proof}


\end{document}